\newtheoremstyle{note}
  {\topsep/2}               
  {\topsep/2}               
  {}                      
  {\parindent}            
  {\itshape}              
  {.}                     
  {5pt plus 1pt minus 1pt}
  {}
\theoremstyle{note}
\newtheorem{theorem}{Theorem}
\newtheorem{lemma}{Lemma}
\newtheorem{corollary}{Corollary}
\newtheorem{proposition}{Proposition}
\theoremstyle{definition}
\theoremstyle{remark}
\newcommand{\tr}{\operatorname{tr}}
 \newcommand{\rmc}{\mathrm{c}}
 \newcommand{\rme}{\mathrm{e}}
 \newcommand{\rmi}{\mathrm{i}}
 \newcommand{\na}{\mathrm{NA}}
 \newcommand{\caF}{\mathcal{F}}
 \newcommand{\caH}{\mathcal{H}}
 \newcommand{\caN}{\mathcal{N}}
 \newcommand{\bbZ}{\mathbb{Z}}
 \newcommand{\nni}{\mathbb{Z}^{\geq0}}
 \newcommand{\id}{1}
 \newcommand{\bfk}{\mathbf{k}}
 \newcommand{\bfq}{\mathbf{q}}
 \newcommand{\scrS}{\mathscr{S}}
\newcommand{\be}{\begin{equation}}
\newcommand{\ee}{\end{equation}}
\newcommand{\ba}{\begin{align}}
\newcommand{\ea}{\end{align}}
\def\<{\langle}  
\def\>{\rangle}  
\newcommand{\ket}[1]{| #1\>}
\def\eqref#1{\textup{(\ref{#1})}}  
\newcommand{\eref}[1]{Eq.~\textup{(\ref{#1})}}
\newcommand{\Eref}[1]{Equation~\textup{(\ref{#1})}}
\newcommand{\esref}[1]{Eqs.~\textup{(\ref{#1})}}
\newcommand{\Esref}[1]{Equations~\textup{(\ref{#1})}}
\newcommand{\fref}[1]{Fig.~\ref{#1}}
\newcommand{\fsref}[1]{Figs.~\ref{#1}}
\newcommand{\tref}[1]{Table~\ref{#1}}
\newcommand{\sref}[1]{Sec.~\ref{#1}}
\newcommand{\Sref}[1]{Section~\ref{#1}}
\newcommand{\ssref}[1]{Secs.~\ref{#1}}
\newcommand{\thref}[1]{Theorem~\ref{#1}}
\newcommand{\Thref}[1]{Theorem~\ref{#1}}
\newcommand{\thsref}[1]{Theorems~\ref{#1}}
\newcommand{\Thsref}[1]{Theorems~\ref{#1}}
\newcommand{\lref}[1]{Lemma~\ref{#1}}
\newcommand{\Lref}[1]{Lemma~\ref{#1}}
\newcommand{\lsref}[1]{Lemmas~\ref{#1}}
\newcommand{\Lsref}[1]{Lemmas~\ref{#1}}
\newcommand{\pref}[1]{Proposition~\ref{#1}}
\newcommand{\Pref}[1]{Proposition~\ref{#1}}
\newcommand{\crref}[1]{Corollary~\ref{#1}}
\newcommand{\Crref}[1]{Corollary~\ref{#1}}
\newcommand{\cref}[1]{Conjecture~\ref{#1}}
\newcommand{\Cref}[1]{Conjecture~\ref{#1}}
\newcommand{\aref}[1]{Appendix~\ref{#1}}
\newcommand{\rcite}[1]{Ref.~\cite{#1}}
\newcommand{\rscite}[1]{Refs.~\cite{#1}}
\begin{document}
\title{General framework for verifying pure quantum states in the adversarial scenario}

\author{Huangjun Zhu}
\email{zhuhuangjun@fudan.edu.cn}

\affiliation{Department of Physics and Center for Field Theory and Particle Physics, Fudan University, Shanghai 200433, China}

\affiliation{State Key Laboratory of Surface Physics, Fudan University, Shanghai 200433, China}

\affiliation{Institute for Nanoelectronic Devices and Quantum Computing, Fudan University, Shanghai 200433, China}

\affiliation{Collaborative Innovation Center of Advanced Microstructures, Nanjing 210093, China}

\author{Masahito Hayashi}
\affiliation{Graduate School of Mathematics, Nagoya University, Nagoya, 464-8602, Japan}

\affiliation{Shenzhen Institute for Quantum Science and Engineering,
Southern University of Science and Technology,
Shenzhen,
518055, China}
\affiliation{Center for Quantum Computing, Peng Cheng Laboratory, Shenzhen 518000, China}
\affiliation{Centre for Quantum Technologies, National University of Singapore, 3 Science Drive 2, 117542, Singapore}

\begin{abstract}
Bipartite and multipartite entangled states are of central interest in quantum information processing and foundational studies.
Efficient verification of these states, especially in the adversarial scenario,  is a key to various applications,  including quantum computation, quantum simulation, and quantum networks. However, little is known about this topic in the adversarial scenario.  Here we initiate a systematic study of pure-state verification in the adversarial scenario. In particular, we  introduce a general method for determining the minimal number of tests required by a given strategy to achieve a given precision. In the case of homogeneous strategies, we can even derive an analytical formula. Furthermore, we propose a general recipe to verifying pure quantum states in  the adversarial scenario by virtue of protocols for the nonadversarial scenario. Thanks to this recipe, the resource cost for verifying an arbitrary pure state in the adversarial scenario is comparable to the counterpart for the nonadversarial scenario, and the overhead is at most three times for high-precision verification.  Our recipe can readily be applied to efficiently verify bipartite pure states, stabilizer states, hypergraph states, weighted graph states, and Dicke states in the adversarial scenario, even if  only  local projective measurements are accessible. This paper is an extended version of the companion paper Zhu and Hayashi, Phys. Rev. Lett. \textbf{123}, 260504 (2019).
\end{abstract}

\date{\today}
\maketitle

\section{Introduction}
Quantum states encode all the information about a quantum system and play a central role  in quantum information processing. For example, bipartite entangled states, especially maximally entangled states, are crucial to quantum teleportation, dense coding, and quantum cryptography \cite{HoroHHH09,GuhnT09}.  Multipartite  entangled states, such as graph states \cite{HeinEB04} and hypergraph states \cite{KrusK09,QuWLB13,RossHBM13,SteiRMG17,XionZCC18}
 are especially useful in  (blind) measurement-based quantum computation (MBQC) \cite{RausB01,RausBB03,BroaFK09,MoriF13,HayaM15,FujiH17,HayaH18,TakeMH19,
 	MillM16,MoriTH17,GachGM19}, quantum error correction \cite{Gott97the,SchlW01},  quantum networks \cite{PersLCL13,MccuPBM16,MarkK18},
 and foundational studies \cite{GreeHSZ90,ScarASA05,GuhnTHB05,GachBG16}. Another important class of 
 multipartite   states, including Dicke states \cite{Dick54,HaffHRB05},  are useful in  quantum metrology  \cite{PezzSOS18}. Furthermore, multipartite  states, such as tensor-network states,  also have extensive applications in  research areas beyond quantum information science, including condensed matter physics \cite{VersMC08,Orus14}.

To unleash the potential of multipartite quantum states in quantum information processing, it is paramount to prepare and verify these states with high precision using limited resources. To verify quantum states with traditional tomography \cite{PariR04Book}, however, the resource required  increases exponentially with the number of qubits.  Although  compressed sensing \cite{GrosLFB10} and direct fidelity estimation (DFE) \cite{FlamL11} can improve the efficiency, the exponential scaling behavior cannot be changed in general. As another alternative, self-testing \cite{HayaH18,MayeY04,SupiB19} is also quite resource consuming although it is conceptually appealing from the perspective of device independence.

Recently, a powerful approach known as quantum state verification (QSV) has attracted increasing attention \cite{HayaMT06,Haya09G,AoliGKE15,TakeM18, PallLM18}. 
It is particularly effective  in  extracting the key information---the fidelity with the target state. So far
efficient or even optimal verification protocols based on local projective  measurements have been constructed for bipartite pure states \cite{HayaMT06,Haya09G,PallLM18,ZhuH19O,LiHZ19,WangH19,YuSG19}, Greenberger-Horne-Zeilinger (GHZ) states \cite{LiHZ19O}, stabilizer states (including graph states) \cite{HayaM15,FujiH17,HayaH18, PallLM18,MarkK18,ZhuH19E}, hypergraph states \cite{ZhuH19E}, weighted graph states \cite{HayaT19}, and Dicke states \cite{LiuYSZ19}. Moreover, the efficiency of this approach has been demonstrated in experiments \cite{ZhanCPX19}.

However, the situation is much more troublesome when we turn to the adversarial scenario, in which the quantum states of interest  are controlled by an untrusted party, Eve. 
Efficient QSV in such adversarial scenario is crucial
to many applications in  quantum information processing  that require high security conditions, including blind MBQC \cite{MoriF13,HayaM15,FujiH17,HayaH18,TakeMH19} and quantum networks \cite{PersLCL13,MccuPBM16,MarkK18}. 
Unfortunately, no efficient approach is known for addressing such adversarial scenario in general.
For example, to verify the simplest nontrivial hypergraph states (say of three qubits) already requires an astronomical number of measurements \cite{MoriTH17,TakeM18}. What is worse,
 little is known about the 
resource cost of a given verification strategy to achieve a given precision \cite{TakeM18,TakeMMM19}. As a consequence, no general guideline is known for constructing an efficient verification strategy or for comparing the efficiencies of different strategies.

In this paper we  initiate a systematic study of pure-state verification in the adversarial scenario. In particular, we introduce a general method for determining the minimal number of tests required by a given verification strategy to achieve a given precision. We also introduce the concept of homogeneous strategies, which play a key role in  QSV. Thanks to their high symmetry, we can derive analytical formulas for most figures of merit of practical interest. The conditions for single-copy verification are also clarified. Furthermore,  we provide a general recipe to constructing efficient verification protocols for the adversarial scenario from verification protocols for the nonadversarial scenario.
By virtue of this recipe, we can verify
  pure quantum states  in the adversarial scenario with nearly the same efficiency as
 in the nonadversarial scenario. For high-precision verification, the overhead in the number of tests is at most three times. In this way,  pure-state verification in the adversarial scenario can be greatly simplified since it suffices to focus on the nonadversarial scenario and then apply our recipe. In addition, our study  reveals that entangling measurements are less helpful and often unnecessary in  improving the verification  efficiency  in the adversarial scenario, which is counterintuitive at first sight.

Our work is especially helpful to the verification of bipartite pure states \cite{HayaMT06,Haya09G,PallLM18,ZhuH19O,LiHZ19,WangH19,YuSG19}, GHZ states \cite{LiHZ19O}, stabilizer states (including graph states) \cite{PallLM18,ZhuH19E}, hypergraph states \cite{ZhuH19E}, weighted graph states \cite{HayaT19}, and Dicke states \cite{LiuYSZ19}, for which efficient verification protocols for the nonadversarial scenario have been constructed recently. By virtue of our recipe, all these states can  be verified in the adversarial scenario with much higher efficiencies than was possible previously; moreover, only local projective measurements are required to achieve high efficiencies. For bipartite pure states, GHZ states, and qudit stabilizer states,  even optimal protocols can be constructed using local projective measurements \cite{LiHZ19,LiHZ19O}; see \sref{sec:app}.

This paper is an extended version of the companion paper \cite{ZhuH19AdS}\footnote{This work was originally motivated by the verification of qubit and qudit hypergraph states and is contained as a part of the preprint arXiv:1806.05565 (cf.~\rcite{ZhuH19E}). However, the general framework of QSV in the adversarial scenario we developed applies to all pure states, not only to hypergraph states.
To discuss this topic comprehensively, we finally decided to present these results independently.
}.

The rest of this paper is organized as follows. In \sref{sec:SetStage}, we review the basic framework of QSV in the nonadversarial scenario. In \sref{sec:QSVre}, we clarify the limitation of previous approaches to QSV and motivate the current study. In \sref{sec:QSVadvStart}, we formulate the general ideal of QSV in the adversarial scenario and introduce the main figures of merit. In \sref{sec:CompMFM}, we introduce a general method for computing the main figures of merit in  the adversarial scenario. 
In \sref{sec:Homo}, we discuss in detail
 QSV with homogeneous strategies. In \sref{eq:SingleCopy}, we clarify the power of a single test in  QSV.
In \sref{sec:NumTestGen}, we 
determine the minimal number of tests required by a general verification strategy to achieve a given precision.
 In \sref{sec:recipe}, we 
 propose a general recipe to constructing efficient verification protocols for the adversarial scenario from  protocols devised for the nonadversarial scenario. In \sref{sec:app}, we demonstrate the power of our recipe via its applications  to many important bipartite and multipartite quantum states. In \sref{sec:comparison}, we compare QSV with a number of other approaches for estimating or verifying quantum states.
\Sref{sec:Sum} summarizes this paper. 
To streamline the presentation, most technical proofs are relegated to the appendix.

\section{\label{sec:SetStage}Setting the stage}

In this section we first review the basic framework of QSV in the nonadversarial scenario. The main results presented here were established by Pallister, Linden, and Montanaro (PLM) \cite{PallLM18}, but we have simplified the derivation. These results will serve as a benchmark for understanding pure-state verification in the adversarial scenario, which is the main focus of this paper. Then we 
discuss the connection between QSV and fidelity estimation.

\subsection{\label{sec:QSVna}Verification of pure states: Nonadversarial scenario}

Consider a device that is supposed to  produce the target state $|\Psi\>$ in the (generally multipartite) Hilbert space ${\cal H}$.
In practice, the device may actually produce $\sigma_1, \sigma_2, \ldots, \sigma_{N}$ in $N$ runs. Following \rcite{PallLM18}, here 
we assume that
 the fidelity $\<\Psi|\sigma_j|\Psi\>$ either equals 1  for all $j$ or satisfies $\<\Psi|\sigma_j|\Psi\>\leq 1- \epsilon$
 for all $j$ (the limitation of  this assumption will be analyzed in \sref{sec:QSVre}).
Now the task is to
determine which is the case.

To achieve this task we can perform $N$ tests and accept the states produced if and only if (iff) all tests are passed. Each test is specified by a two-outcome  measurement $\{E_l, 1-E_l\}$
chosen randomly from a set of  accessible measurements.  The test operator $E_l$ corresponds to passing the test and satisfies the condition $0\leq E_l\leq1$. We assume that the target state $|\Psi\>$ can always pass the test, that is, $E_l|\Psi\>=|\Psi\>$ for each $E_l$. A verification strategy is characterized 
by all the tests $E_l$ and the probabilities $\mu_l$ for performing these tests.

To determine the maximal probability of failing to reject the bad case, it is convenient to introduce  the verification operator $\Omega:=\sum_l \mu_l E_l$.
As we shall see later, most key properties of a verification strategy are determined by the verification operator $\Omega$, irrespective of how the test operators are constructed. Therefore, $\Omega$ is also referred to as a strategy when there is no danger of confusion.
By construction, the target state $|\Psi\>$ is an eigenstate of $\Omega$ with the largest  eigenvalue 1. 
Denote by $\beta(\Omega)$  the second largest eigenvalue of $\Omega$, then  $\beta(\Omega)$ is equal to the operator norm of $\Omega-|\Psi\>\<\Psi|$, that is, $\beta(\Omega)=\|\Omega-|\Psi\>\<\Psi|\|$.
Let  $\nu(\Omega):=1-\beta(\Omega)$ be the spectral gap from the largest eigenvalue. 
When $\<\Psi|\sigma_j|\Psi\>\leq 1-\epsilon$, 
the maximum probability that $\sigma_j$ can pass a test on average
is given by
\begin{equation}\label{eq:PassingProb}
\max_{\<\Psi|\sigma|\Psi\>\leq 1-\epsilon }\tr(\Omega \sigma)=1- [1-\beta(\Omega)]\epsilon=1- \nu(\Omega)\epsilon,
\end{equation}
where the maximization in the left-hand side runs over all quantum states $\sigma$ that satisfy the fidelity constraint $\<\Psi|\sigma|\Psi\>\leq 1-\epsilon$. \Eref{eq:PassingProb} was originally derived by PLM \cite{PallLM18} for strategies composed of projective tests, but their proof also applies to general strategies with nonprojective tests; see  \aref{sec:PassProb} for a simpler proof.

After $N$ runs, $\sigma_j$ in the bad case can pass all tests with probability at most $[1-\nu(\Omega)\epsilon]^N$. This is also the maximum probability that the verification strategy fails to detect the bad case. To achieve significance level $\delta$ (confidence level $1-\delta$), that is, $[1-\nu(\Omega)\epsilon]^N\leq \delta$,
the minimum number of tests is given by
\cite{PallLM18}
\begin{equation}\label{eq:NumTest}
N_\na(\epsilon,\delta,\Omega)=\biggl\lceil
\frac{\ln\delta}{\ln[1-\nu(\Omega)\epsilon]}\biggr\rceil
\leq 
\biggl\lceil
\frac{\ln\delta^{-1}}{\nu(\Omega)\epsilon}\biggr\rceil,
\end{equation}
where NA in the subscript means nonadversarial. 
This number is the main figure of merit of concern in QSV because to a large extent it determines the resource costs of implementing the verification strategy $\Omega$.
Note that a single test is sufficient if 
\begin{equation}\label{eq:OneTestCon1}
\nu(\Omega)\epsilon+\delta\geq 1. 
\end{equation}

According to \eref{eq:NumTest}, the efficiency of the strategy $\Omega$ is determined by the spectral gap $\nu(\Omega)$. 
The optimal protocol is obtained by maximizing the spectral gap $\nu(\Omega)$. If there is no restriction on the accessible measurements, then the optimal protocol is composed of the projective measurement $\{|\Psi\>\<\Psi|, \id-|\Psi\>\<\Psi|\}$, in which case we have $\Omega=|\Psi\>\<\Psi|$ and $\nu(\Omega)=1$, so that 
\begin{equation}
N_\na(\epsilon,\delta,\Omega)=\biggl\lceil
\frac{\ln\delta}{\ln(1-\epsilon)}\biggr\rceil
\leq 
\biggl\lceil
\frac{\ln\delta^{-1}}{\epsilon}\biggr\rceil.
\end{equation}
In addition,  the requirement in  \eref{eq:OneTestCon1} reduces to 
\begin{equation}\label{eq:OneTestCon2}
\epsilon+\delta\geq 1. 
\end{equation}
This efficiency cannot be improved further even if we can perform collective measurements. In particular, the scaling behaviors of $\epsilon^{-1}\ln\delta^{-1}$ with $\epsilon$ and $\delta$ are the best we can expect.

In practice, quite often the target state $|\Psi\>$ is  entangled, but it is not easy to perform   entangling measurements. It is therefore crucial to devise efficient verification  protocols based on local operations and classical communication (LOCC). Here by "efficient" we mean that the protocols can be applied in practice with reasonable resource costs, which is a much stronger requirement than what is usually understood in computer science.
 Ideally, the inverse spectral gap $1/\nu(\Omega)$ should be independent of the system size (the number of qubits say) or grow no faster than a low-order polynomial. In addition, the coefficients should be reasonably small. It turns out  many important quantum states in quantum information processing can be verified efficiently with respect to these stringent criteria.
Besides the total number $N$ of tests determined by $\epsilon, \delta$, and $\nu(\Omega)$, the number  of potential measurement settings is also of concern if it is difficult to switch measurement settings. Nevertheless, most of our results in \ssref{sec:SetStage}-\ref{sec:recipe} are independent of the specific details (including the number  of potential measurement settings) of a verification protocol   once the verification operator is fixed.

Here, we compare the approach presented above with previous works \rscite{HayaMT06,Haya09G}.
In mathematical statistics, we often discuss hypothesis testing in the framework of uniformly most powerful test
among a certain class of tests.
In this case, we fix a certain set of states ${\cal S}_0$,
and impose to our test  the condition that the probability of erroneously rejecting states in  
${\cal S}_0$ is upper bounded by a certain value $\delta'\ge 0$.
Under this condition, we maximize the probability of detecting a state $\sigma$ in ${\cal S}^\rmc$, where ${\cal S}^\rmc$ is the complement of ${\cal S}_0$ in the state space.
When a test maximizes the probability uniformly for every state $\sigma$ in ${\cal S}^\rmc$,
it is called a uniformly most powerful (UMP) test.
However,
since the detecting probability depends on the state $\sigma$,
such a test does not exist in general.
In this paper, ${\cal S}_0$ and $\delta'$  are chosen to be $\{ |\Psi\>\<\Psi|\} $ and $0$, respectively.
We consider the case in which the same strategy $\Omega$ is applied $N$ times. 
Since we support the state $|\Psi\>$ only when all our outcomes correspond to the pass eigenspace of 
$\Omega$,
our test is UMP in  this case.

When the set ${\cal S}_0$ is chosen to be $\{ \sigma\, |\, \< \Psi |\sigma |\Psi\> \ge 1-\epsilon' \}$,
and $\delta'$ is a non-zero value, the problem is more complicated.
Such a setting arises when we allow a certain amount of error. 
To resolve this problem, 
imposing a certain symmetric condition to our tests,
\rscite{HayaMT06,Haya09G} discussed several optimization problems
and investigated their asymptotic behaviors
when $|\Psi\>$ is a maximally entangled state.

\subsection{Connection with fidelity estimation}
When all states $\sigma_j$ produced by the device are identical to $\sigma$,  let $F=\<\Psi|\sigma|\Psi\>$ be the fidelity between $\sigma$ and the target state $|\Psi\>$;
then we have 
\begin{equation}\label{eq:FidelityPassingProb1}
[1-\tau(\Omega)]F+\tau(\Omega) \leq \tr(\Omega \sigma)\leq \nu(\Omega)F+\beta(\Omega),
\end{equation}
where $\tau(\Omega)$ is the smallest eigenvalue of $\Omega$. Therefore,
\begin{equation}\label{eq:FidelityPassingProb2}
 \frac{1-\tr(\Omega\sigma)}{1-\tau(\Omega)}\leq 1-F\leq\frac{1-\tr(\Omega\sigma)}{\nu(\Omega)} .
\end{equation}
So the passing probability $\tr(\Omega \sigma)$  provides  upper and lower bounds for the infidelity (and fidelity).  In general, \esref{eq:FidelityPassingProb1} and \eqref{eq:FidelityPassingProb2}  still hold if $F$ and $\tr(\Omega \sigma)$ are replaced by their averages over all $\sigma_j$. Note that the inequalities in  \esref{eq:FidelityPassingProb1} and \eqref{eq:FidelityPassingProb2} are saturated when $\tau(\Omega)=\beta(\Omega)$; 
such a strategy $\Omega$ is called \emph{homogeneous} and is discussed in more detail in \sref{sec:Homo}. 
In this case we have
\begin{equation}\label{eq:FEhom}
1-F=\frac{1-\tr(\Omega\sigma)}{\nu(\Omega)},\quad F=\frac{\tr(\Omega\sigma)-\beta(\Omega)}{\nu(\Omega)}.
\end{equation}
So the fidelity with the target state can be estimated from the passing probability.
The standard deviation of this estimation reads
\begin{equation}\label{eq:Fstd}
\Delta F=\frac{\sqrt{p(1-p)}}{\nu\sqrt{N}}=\frac{\sqrt{(1-F)(F+\nu^{-1}-1)}}{\sqrt{N}}
\leq \frac{1}{2\nu \sqrt{N}}, 
\end{equation}
where $p=\tr(\Omega\sigma)=\nu F+\beta\geq F$ and $N$ is the number of tests performed.  Note that this standard deviation decreases monotonically with $\nu$ and $N$. This conclusion is related to the testing of binomial distributions discussed in 
\rcite{Haya09G}. When $F\geq 1/2$, which is the case of most interest, we also have 
\begin{equation}
\Delta F=\frac{\sqrt{p(1-p)}}{\nu\sqrt{N}}\leq \frac{\sqrt{F(1-F)}}{\nu \sqrt{N}}
\end{equation}
given that $p\geq F$.

\section{\label{sec:QSVre}Verification of pure states: A critical reexamination}
In this section we reexamine the framework
of QSV proposed by PLM  \cite{PallLM18} as summarized in \sref{sec:QSVna} above and clarify the limitation of this framework.  In addition, we show that the limitation can be eliminated when states prepared in different runs are independent. The situation is much more complicated when these states are correlated, which motivates the study of QSV in the adversarial scenario presented in the rest of this paper. 

\subsection{What is verified in QSV?}
Consider a device that is supposed to  produce the target state $|\Psi\>$ in the Hilbert space ${\cal H}$. In practice, the device may actually produce $\sigma_1, \sigma_2, \ldots, \sigma_{N}$ in $N$ runs. In the framework of PLM, it is assumed that
the fidelity $\<\Psi|\sigma_j|\Psi\>$  either equals 1  for all $j$ or satisfies $\<\Psi|\sigma_j|\Psi\>\leq 1- \epsilon$
for all $j$ \cite{PallLM18}. In the independent and identically distributed (i.i.d.) case, all $\sigma_j$ are identical, so the PLM assumption is actually not necessary (or automatically guaranteed) to derive the conclusions presented in \sref{sec:QSVna}. If we drop the i.i.d. assumption, then the assumption of PLM is quite unnatural and difficult to guarantee. Moreover, the conclusion on QSV drawn based on this assumption is much weaker than what the word "verify" usually conveys. Suppose the test $E_l$ is performed with probability $\mu_l$ and $\Omega=\sum_l\mu_l E_l$ as in \sref{sec:QSVna}. 
After  $N$ tests are passed, we can only conclude that the probability of passing $N$ tests is at most $[1-\nu(\Omega)\epsilon]^N$ if $\<\Psi|\sigma_j|\Psi\>\leq 1- \epsilon$
for all $j$. In other words, passing these tests only confirms that $\<\Psi|\sigma_j|\Psi\>> 1- \epsilon$
for at least one run $j$ with significance level $[1-\nu(\Omega)\epsilon]^N$. Such a weak conclusion is  usually far from enough in practice. Note that the property of each run on average is more relevant if we want to make sure that the device works as expected most of the time rather than occasionally.

\subsection{\label{sec:QSVind}Independent state preparation}
Fortunately we can drop the PLM assumption
and draw a stronger conclusion as long as all states $\sigma_j$ are prepared independently of each other. Note that we do not need the i.i.d. assumption. The variation in $\sigma_j$ 
over different runs may be caused by inevitable imperfections of the device  or fluctuations in various relevant parameters for example.

\begin{proposition}\label{pro:PassProbNT}
Suppose the $N$ states $\sigma_1, \sigma_2, \ldots, \sigma_{N}$ are  independent of each other. Then the probability that they can pass all $N$ tests associated with the strategy $\Omega$ satisfies
\begin{equation}
\prod_{j=1}^N \tr(\Omega\sigma_j)\leq [1-\nu(\Omega)\bar{\epsilon}]^N, 
\end{equation}
where $\bar{\epsilon}=\sum_j\epsilon_j/N$ with $\epsilon_j=1-\<\Psi|\sigma_j |\Psi\>$ is the average infidelity. 
\end{proposition}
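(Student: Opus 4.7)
The plan is to combine the single-test bound from \eref{eq:PassingProb} with independence and then use the inequality of arithmetic and geometric means (AM--GM) to pass from a product to an $N$-th power.

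First, since the states $\sigma_1,\dots,\sigma_N$ are prepared independently, the outcomes of the $N$ tests are conditionally independent given the $\sigma_j$'s, so the joint probability of passing all tests factorizes as $\prod_{j=1}^N \tr(\Omega \sigma_j)$. This step uses no more than independence, not the i.i.d.\ assumption.

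Next, I would apply \eref{eq:PassingProb} to each factor individually. Since $\<\Psi|\sigma_j|\Psi\>=1-\epsilon_j$, the maximization in \eref{eq:PassingProb} (taken at the value $\epsilon=\epsilon_j$) gives
\begin{equation}
\tr(\Omega\sigma_j)\leq 1-\nu(\Omega)\epsilon_j
\end{equation}
for each $j$. Multiplying these bounds yields $\prod_j \tr(\Omega\sigma_j)\leq \prod_j [1-\nu(\Omega)\epsilon_j]$.

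Finally, all factors $1-\nu(\Omega)\epsilon_j$ are nonnegative (they are upper bounds on probabilities), so AM--GM applies:
\begin{equation}
\prod_{j=1}^N [1-\nu(\Omega)\epsilon_j]\;\leq\;\Bigl(\tfrac{1}{N}\sum_{j=1}^N[1-\nu(\Omega)\epsilon_j]\Bigr)^{\!N}=[1-\nu(\Omega)\bar{\epsilon}]^N,
\end{equation}
which is exactly the stated inequality. There is no real obstacle here beyond checking nonnegativity of the factors so that AM--GM is applicable; the only conceptual point worth flagging in the write-up is that the independence assumption is what allows the joint passing probability to be written as the product $\prod_j \tr(\Omega\sigma_j)$, and that in the absence of independence an analogous bound in terms of average infidelity is much more delicate (motivating the adversarial analysis in the remainder of the paper).
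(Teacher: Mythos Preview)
Your proof is correct and follows essentially the same approach as the paper: apply \eref{eq:PassingProb} to each factor and then use the AM--GM inequality to pass from the product to the $N$-th power. The paper's write-up is terser but also records the equality conditions for both steps (support of $\sigma_j$ in the top two eigenspaces of $\Omega$, and all $\epsilon_j$ equal), which you might add.
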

This proposition guarantees that the average fidelity satisfies the inequality  $\sum_j \<\Psi|\sigma_j |\Psi\>/N> 1-\epsilon$ with significance level $\delta=[1-\nu(\Omega)\epsilon]^N$ if $N$ tests are passed. In addition, to verify $|\Psi\>$ within infidelity $\epsilon$ and significance level $\delta$, which means $[1-\nu(\Omega)\epsilon]^N\leq \delta$,  the minimum number of tests reads
\begin{equation}\label{eq:NumTestAF}
N_\na(\epsilon,\delta,\Omega)=\biggl\lceil
\frac{\ln\delta}{\ln[1-\nu(\Omega)\epsilon]}\biggr\rceil
\leq 
\biggl\lceil
\frac{\ln\delta^{-1}}{\nu(\Omega)\epsilon}\biggr\rceil.
\end{equation}
This formula is identical to the one in \eref{eq:NumTest}, but it does not rely on the unnatural assumption imposed by PLM \cite{PallLM18}. Accordingly,
 the meaning of "verification" is different. Here we can verify the average fidelity of the states $\sigma_1, \sigma_2,\ldots \sigma_N$ prepared by the device rather than the maximal fidelity. 
Nevertheless, our conclusion relies on the implicit assumption that the average fidelity of states produced by the device 
after the verification procedure is the same as the average during the verification procedure. This assumption is reasonable in the nonadversarial scenario and is often 
taken for granted in practice. In case this assumption does not hold, then we have to consider QSV in the adversarial scenario, which is a main  focus of this paper.

\begin{proof}[Proof of \pref{pro:PassProbNT}]
\begin{align}
\prod_{j=1}^N \tr(\Omega\sigma_j)\leq \prod_{j=1}^N [1-\nu(\Omega)\epsilon_j]\leq [1-\nu(\Omega)\bar{\epsilon}]^N.
\end{align}
Here the first inequality follows from \eref{eq:PassingProb} and is saturated iff each $\sigma_j$ is supported in the subspace associated with the largest and second largest eigenvalues of $\Omega$. The second inequality follows from the familiar inequality between the geometric mean and arithmetic mean and is saturated iff all $\epsilon_j$ are equal to $\bar{\epsilon}$; that is, all $\sigma_j$ have the same fidelity (and infidelity) with the target state.  Note that variation in $\sigma_j$ cannot increase the passing probability once the average infidelity $\bar{\epsilon}$ is fixed. 
\end{proof}

\subsection{\label{sec:Correlated}Correlated state preparation}
Here we show that the conclusion in \ssref{sec:QSVna} and \ref{sec:QSVind} will fail if  the states $\sigma_1, \sigma_2, \ldots, \sigma_N$ are correlated. As a special example, suppose the device produces the ideal target state $(|\Psi\>\<\Psi|)^{\otimes N}$ in $N$ runs with probability $0<a<1$ and the alternative $\sigma^{\otimes N}$ with probability $1-a$, where $\<\Psi|\sigma|\Psi\>=1-\epsilon'<1$.  The reduced state of each party reads $a(|\Psi\>\<\Psi|)+(1-a)\sigma$ and its infidelity with the target state is $\epsilon=(1-a)\epsilon'$. Note that
the device can pass $N$ tests  with probability  at least $a$ no matter how large $N$ is. So it is impossible to verify the target state within infidelity $\epsilon=(1-a)\epsilon'$ and significance level $\delta<a$ using the approach presented in \sref{sec:QSVna}  or that in \sref{sec:QSVind}. This observation further reveals the limitation of the PLM framework of QSV. To overcome this difficulty, we need to consider a different framework of QSV as formulated in the next section.

\section{\label{sec:QSVadvStart}Quantum state verification  in the adversarial  scenario}	
Now we turn to the  adversarial scenario in which the device for generating quantum states is controlled by a potentially malicious adversary. In this case the device may produce  arbitrary correlated or even entangled states. Efficient verification of quantum states in such adversarial scenario is crucial to many  tasks in quantum information processing that entail  high security requirements, such as blind quantum computation \cite{MoriF13,HayaM15,FujiH17,HayaH18,TakeMH19} and quantum networks \cite{PersLCL13,MccuPBM16,MarkK18}. However, little is known about this topic in the literature. The approach of PLM does not apply as illustrated by the example of correlated state preparation in \sref{sec:Correlated}.  
Most other studies in the literature only focus on specific families of states, such as graph states \cite{HayaM15,FujiH17,HayaH18,MarkK18} and hypergraph states
\cite{MoriTH17,TakeM18}.
In addition, known protocols are too resource consuming to be applied in practice,  especially for hypergraph states, in  which case the best protocol known in the literature  requires an astronomical number of tests already for
three-qubit hypergraph states. 
The difficulty in constructing efficient verification protocols in the adversarial scenario is tied to the fact that even for a given protocol,  no efficient method is available for determining the minimal resource cost necessary to reach the target precision.

In this section we introduce a general framework of pure state verification in the adversarial scenario together with the main figures of merit. The basic ideas presented here will serve as a stepping stone for the following study.

\subsection{Formulation}
To establish a reliable and efficient framework for verifying pure states in the adversarial scenario, first note that the verification and application of a quantum state cannot be completely separated  in the adversarial scenario. Otherwise, the device may produce ideal target states in the verification stage and so can always pass the tests, but produce a garbage state in the application stage. 
To resolve this problem, suppose the device  produces an arbitrary correlated or entangled state $\rho$ on the whole system  ${\cal H}^{\otimes (N+1)}$. Our goal is to ensure that the reduced state on one  system (for application) has infidelity less than $\epsilon$ by performing $N$ tests on other systems.
We can randomly choose $N$ systems and apply a verification  strategy $\Omega$  to each   system chosen  and accept the state on the remaining system iff all $N$ tests are passed.
Since $N$ systems are chosen randomly, we  may assume that $\rho$ is permutation invariant without loss of generality.

Suppose the strategy  $\Omega$ is applied to the first $N$ systems, then  the probability that $\rho$ can   pass $N$ tests reads
\begin{equation}
p_\rho=\tr[(\Omega^{\otimes N}\otimes \id) \rho].
\end{equation}
If $N$ tests are passed, then 
the reduced state  on system $N+1$ (assuming $p_\rho>0$) is given by
\begin{equation}
\sigma'_{N+1}=p_\rho^{-1}\tr_{1,2,\ldots, N}[(\Omega^{\otimes N}\otimes \id) \rho],
\end{equation}
where $\tr_{1,2,\ldots, N}$ means the partial trace over the systems $1,2,\ldots, N$. The fidelity between $\sigma'_{N+1}$ and the target state $|\Psi\>$  reads
\begin{equation}
F_\rho=\<\Psi|\sigma'_{N+1}|\Psi\>=p_\rho^{-1} f_\rho,
\end{equation}
where
\begin{equation}
f_\rho=\tr[(\Omega^{\otimes N}\otimes |\Psi\>\<\Psi|) \rho].
\end{equation}

When $\rho=\sigma^{\otimes (N+1)}$ is a tensor power of the state $\sigma$  with  $0<\epsilon'=1-\<\Psi|\sigma|\Psi\><1$, we have $p_\rho\leq [1-\nu(\Omega)\epsilon']^N$, $\sigma'_{N+1}=\sigma$, and $F_\rho=1-\epsilon'$. These conclusions coincide with the counterpart for the nonadversarial scenario  as expected. The situation is different if $\rho$ does not have this form.
Suppose $\rho=a (|\Psi\>\<\Psi|)^{\otimes (N+1)} +(1-a)\sigma^{\otimes (N+1)}$ with $0<a<1$ for example; cf.~\sref{sec:Correlated}.
If $N$ tests are passed, then the reduced state of party $N+1$ reads
\begin{equation}
\sigma'_{N+1}=\frac{a|\Psi\>\<\Psi|+b\sigma}{a+b},
\end{equation}
where $b:=(1-a)[\tr(\Omega\sigma)]^N$ satisfies
\begin{equation}
b\leq (1-a)[1-\nu(\Omega) \epsilon']^N
\end{equation}
and decreases exponentially with $N$ unless $\tr(\Omega\sigma)=0$. Therefore, the infidelity $1-\<\Psi|\sigma'_{N+1}|\Psi\>$ approaches zero exponentially with $N$ even if $a$ is arbitrarily small. If the infidelity is bounded from below  $1-\<\Psi|\sigma'_{N+1}|\Psi\>\geq \epsilon$
for $0<\epsilon<1$, then $a$ should approach zero as $N$ increases; accordingly, the passing probability will approach zero. This observation indicates that we can verify the target state within any given infidelity $0<\epsilon<1$ and significance level $0<\delta<1$ even when the states prepared are correlated, which demonstrates the advantage of the  alternative approach presented above over the PLM approach. In the rest of this paper we will show that indeed 
it is possible to verify pure states efficiently even if the device is controlled by the  adversary and can produce arbitrary correlated or even entangled states allowed by quantum mechanics.

\subsection{Main figures of merit}
To  characterize the performance of the strategy $\Omega$ adapted to the adversarial scenario, here we introduce four figures of merit. Define
\begin{subequations}\label{eq:Ffdelta}
\begin{align}
\zeta(N,\delta,\Omega)
&:=\min_{\rho}
\big\{f_\rho
\,|\,p_\rho \ge \delta
\big\}, \quad  0\leq \delta\leq 1, \label{eq:Minfdelta}\\
\eta(N,f,\Omega)
&:=\max_{\rho}
\big\{
p_\rho
\,|\,f_\rho \le f
\big\}, \quad  0\leq f\leq 1, 
\label{eq:Maxprhof}\\
F(N,\delta,\Omega)
&:=\min_{\rho}
\big\{
p_\rho^{-1}f_\rho
\,|\,p_\rho \ge \delta
\big\}, \quad  0<\delta\leq 1,  \label{eq:MinFdelta}\\
\caF(N,f,\Omega)
&:=\min_{\rho}
\big\{
p_\rho^{-1}f_\rho
\,|\,f_\rho \ge f
\big\},\quad  0<f\leq 1, 
\label{eq:MinFf}
\end{align}
\end{subequations}
where $N\geq 1$ is the number of tests performed and the   minimization or maximization is taken over permutation-invariant quantum states $\rho$ on $\caH^{\otimes (N+1)}$. The four figures of merit are closely related to each other, as we shall see later.  
In practice $F(N,\delta,\Omega)$  is a main figure of merit of interest; it denotes the minimum fidelity of the reduced state on the remaining party (with the target state), assuming that  $\rho$ can pass $N$ tests with significance level at least $\delta$. 
By definition $F(N,\delta,\Omega)$ and $\zeta(N,\delta,\Omega)$ are nondecreasing in $\delta$, while $\caF(N,f,\Omega)$ and  $\eta(N,f,\Omega)$ are nondecreasing in $f$.
A simple upper bound for $F(N,\delta,\Omega)$ can be derived by considering quantum states $\rho$ on ${\cal H}^{\otimes (N+1)}$ that can be expressed as tensor powers in \eref{eq:MinFdelta}, which yields
\begin{equation}\label{eq:FidnaBound}
F(N,\delta,\Omega)\leq \max\biggl\{0, 1-\frac{1-\delta^{1/N}}{\nu(\Omega)} \biggr\}. 
\end{equation}

The four figures of merit defined in \eref{eq:Ffdelta} are tied to  the two-dimensional region $R_{N,\Omega}$ composed of all the points  $(p_\rho, f_\rho)$ for permutation-invariant  density matrices $\rho$ on ${\cal H}^{\otimes (N+1)}$, that is,
\begin{equation}\label{eq:ConvexHull}
\{(p_\rho, f_\rho)|   \rho \mbox{ on ${\cal H}^{\otimes (N+1)}$  are permutation invariant}\}. 
\end{equation}
This geometric picture will be very helpful to understanding QSV in the adversarial scenario. By definition the region $R_{N,\Omega}$ is convex since the state space is convex, and $p_\rho, f_\rho$ are both linear in $\rho$. What is not so obvious at the moment is that the region  $R_{N,\Omega}$ is actually  a convex polygon.

In addition to characterizing
the verification precision that is achievable for a given number $N$ of tests, it is equally  important to determine the minimum number of tests required to reach a given precision. To this end, for $0<\epsilon,\delta<1$, 
we define $N(\epsilon,\delta,\Omega)$ as the minimum value of the positive integer $N$ that satisfies the condition $F(N,\delta,\Omega)\geq 1-\epsilon$, namely
\begin{equation}\label{eq:MinNumTestDef}
N(\epsilon,\delta,\Omega):=\min\{N\geq 1 \,| \, F(N,\delta,\Omega)\geq 1-\epsilon \} . 
\end{equation}
Then  \eref{eq:FidnaBound} implies that 
\begin{equation}
N(\epsilon,\delta,\Omega)\geq \biggl\lceil
\frac{\ln\delta}{\ln[1-\nu(\Omega)\epsilon]}\biggr\rceil=N_\na(\epsilon,\delta,\Omega)
\end{equation}
as expected since it is much more difficult to verify a quantum state in the adversarial scenario than nonadversarial scenario.
How much overhead is required  in the adversarial scenario? Can we achieve the same scaling behaviors in $\epsilon$ and $\delta$?

 In general it is very difficult to derive an analytical formula for $N(\epsilon,\delta,\Omega)$ if not impossible. Therefore, it is nontrivial to determine the efficiency limit of QSV in the adversarial scenario even if there is no restriction on the accessible measurements, or even if the target state belongs to a single party, which is in sharp contrast with QSV in the nonadversarial scenario. Indeed, it took a long time and a lot of efforts to settle  this issue.

\section{\label{sec:CompMFM}Computation of the main figures of merit} 
In this section we develop a general method for computing the figures of merit defined in \eref{eq:Ffdelta}, which characterize the verification precision in the adversarial scenario.
We also clarify the properties of these figures of merit in preparation for later study. Both algebraic derivation and geometric pictures will be helpful in our analysis.

\subsection{Key observations}
Suppose the verification operator $\Omega$ for the target state $|\Psi\>\in \caH$ has spectral decomposition $\Omega=\sum_{j=1}^D \lambda_j \Pi_j $, where $\lambda_j$ are the eigenvalues of $\Omega$ arranged in decreasing order $1=\lambda_1> \lambda_2\geq\cdots \geq \lambda_D\geq 0$,  and  $\Pi_j$ are mutually orthogonal rank-1 projectors with $\Pi_1=|\Psi\>\<\Psi|$. Here the
 second largest eigenvalue $\beta:=\lambda_2$ and the smallest eigenvalue $\tau:=\lambda_D$
 deserve special attention because they determine the performance of $\Omega$ to a large extent, as we shall see later.  Suppose the adversary produces the 
state $\rho$ on the whole system  ${\cal H}^{\otimes (N+1)}$, which is  permutation invariant (cf.~\sref{sec:QSVadvStart}). Without loss of generality, we may assume that $\rho$ is diagonal in the product basis constructed from the eigenbasis of $\Omega$ (as determined by the projectors $\Pi_j$), since  $p_\rho$, $f_\rho$, and $F_\rho$  only depend on  the diagonal elements of $\rho$.

Let $\bfk=(k_1,k_2, \ldots,k_D)$ be a sequence of $D$ nonnegative integers that sum up to $N+1$, that is, $\sum_j k_j =N+1$. Let $\scrS_N$ be the set of all such sequences. For each $\bfk\in \scrS_N$, we can define a permutation-invariant diagonal density matrix $\rho_\bfk$ on $\caH^{\otimes(N+1)}$ as the uniform mixture of all permutations of $\Pi_1^{\otimes k_1}\otimes \Pi_2^{\otimes k_2}\otimes\cdots\otimes \Pi_D^{k_D}$. Then any permutation-invariant diagonal density matrix $\rho$ on $\caH^{\otimes(N+1)}$ can be
expressed  as $\rho=\sum_{\bfk\in \scrS_N} c_{\bfk} \rho_{\bfk}$,  where $c_\bfk$ form a probability distribution on $\scrS_N$. Accordingly,
\begin{align}
p_\rho&=\sum_{\bfk\in \scrS_N} c_\bfk \eta_\bfk(\bm{\lambda}),\quad
f_\rho=\sum_{\bfk\in \scrS_N} c_\bfk \zeta_\bfk(\bm{\lambda}),\label{eq:prhofrho}\\
F_\rho&=\frac{f_\rho}{p_\rho}=\frac{\sum_{\bfk\in \scrS_N} c_\bfk \zeta_\bfk(\bm{\lambda})}{\sum_{\bfk\in \scrS_N} c_\bfk \eta_\bfk(\bm{\lambda})},
\end{align}
where $\bm{\lambda}:=(\lambda_1,\lambda_2,\ldots,\lambda_D)$ and
\begin{equation}\label{eq:etazeta}
\begin{aligned}
\eta_\bfk(\bm{\lambda})&:=p_{\rho_\bfk}=
\sum_{i|k_i>0}\frac{k_i}{(N+1)}\lambda_i^{k_i-1}\prod_{j\neq i | k_j>0} \lambda_j^{k_j},\\
\zeta_\bfk(\bm{\lambda})&:=f_{\rho_\bfk}=\frac{k_1}{N+1}\prod_{i|k_i>0}\lambda_i^{k_i}.
\end{aligned}
\end{equation}
Here we set $\lambda_i^0=1$  even if $\lambda_i=0$. 

The assumption $1=\lambda_1> \lambda_2\geq\cdots \geq \lambda_D=\tau\geq 0$ implies that $\zeta_\bfk(\bm{\lambda})\leq \eta_\bfk(\bm{\lambda})\leq1 $;  the second inequality is saturated iff $\bfk=\bfk_0:=(N+1,0,\ldots, 0)$, in which case both inequalities are saturated, that is, $\zeta_{\bfk_0}(\bm{\lambda})= \eta_{\bfk_0}(\bm{\lambda})=1$. As an implication, we have $f_\rho\leq p_\rho\leq 1$, and the second inequality is saturated iff $\rho=\rho_{\bfk_0}=(|\Psi\>\<\Psi|)^{\otimes (N+1)}$, in which case $f_\rho=p_\rho=1$. This observation implies that 
\begin{align}
F(N,\delta=1,\Omega)=
\zeta(N,\delta=1,\Omega)=1,\label{eq:Fidsig1} \\
\caF(N,f=1,\Omega)=
\eta(N,f=1,\Omega)=1.
\end{align}
By contrast, $\eta_\bfk(\bm{\lambda})\geq \tau^N$, and the lower bound is saturated when $\bfk=(0,\ldots,0, N+1)$.
Accordingly, $p_\rho\geq \tau^N$, and the lower bound is saturated when $\rho=\Pi_D^{\otimes(N+1)}$.

\begin{figure}
	\includegraphics[width=8.6cm]{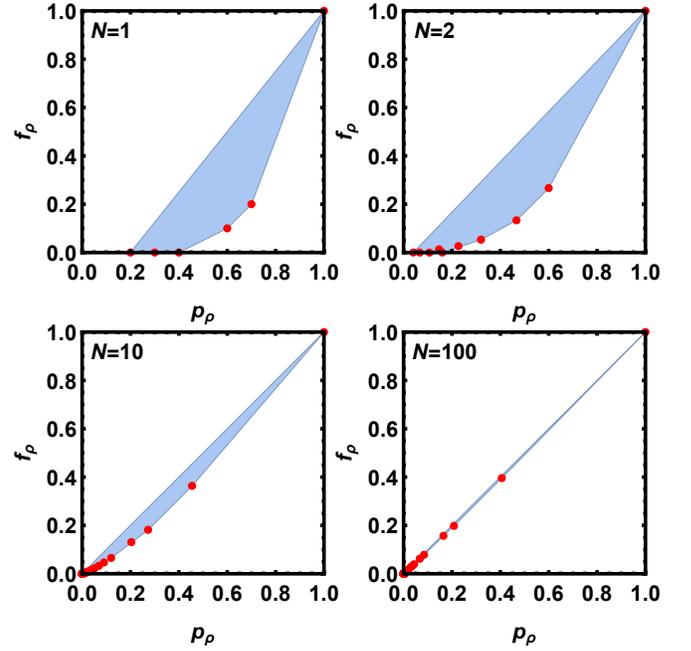}
	\caption{\label{fig:ConvexHull}(color online) The region $R_{N,\Omega}$ composed of $(p_\rho, f_\rho)$ as defined in \eref{eq:ConvexHull}. This region is the convex hull of points
	 $(\eta_\bfk(\bm{\lambda}),\zeta_\bfk(\bm{\lambda}))$ for $\bfk\in \scrS_N$, which are highlighted as red dots. Here $\Omega$ has three distinct eigenvalues, namely, $1$, $0.4$, and $0.2$. }
\end{figure}

In view of the above discussion, the region $R_{N,\Omega}$ defined in \eref{eq:ConvexHull} is the convex hull of $(\eta_\bfk(\bm{\lambda}), \zeta_\bfk(\bm{\lambda}))$ for all $\bfk\in \scrS_N$,  which is a polygon, as illustrated in \fref{fig:ConvexHull}. It should be emphasized  that $R_{N,\Omega}$ only depends on the distinct eigenvalues of $\Omega$, but not on their degeneracies (though $\lambda_1$ is not degenerate by assumption). The same conclusion also applies to the figures of merit
 $F(N,\delta,\Omega)$,
$\caF(N,f,\Omega)$, $
\zeta(N,\delta,\Omega)$, and $
\eta(N,f,\Omega)$ defined in \eref{eq:Ffdelta} given that they are completely determined by the region $R_{N,\Omega}$. 
For example, $\zeta(N,\delta,\Omega)$ corresponds to the lower boundary of the intersection of $R_{N,\Omega}$ and the vertical line $p_\rho=\delta$ as long as $\delta\geq \tau^{N}$ (cf. \lref{lem:equiTwoDef} below). 
 This geometric picture is very helpful to understanding the properties of $F(N,\delta,\Omega)$, although in general it is not easy to find an explicit analytical formula.   As $N$ increases, the region $R_{N,\Omega}$ concentrates more and more around the diagonal defined by the equation $f=p$ as illustrated in \fref{fig:ConvexHull}, which means  $F(N,\delta,\Omega)$ approaches 1 as $N$ increases.
 
Denote by $\sigma(\Omega)$ the set of distinct eigenvalues of $\Omega$. If $\Omega'$ is another verification operator for $|\Psi\>$ with $\beta(\Omega')<1$ and  $\sigma(\Omega')\subset \sigma(\Omega)$, then $R_{N,\Omega'}\subset R_{N,\Omega}$ and  $\Omega'$ is  equally  efficient or more efficient  than $\Omega$ in the sense that
\begin{equation}\label{eq:StrategyOrder}
F(N,\delta,\Omega')\geq F(N,\delta,\Omega),\; N(\epsilon,\delta,\Omega')\leq N(\epsilon,\delta,\Omega). 
\end{equation}
This observation is instructive to constructing efficient verification protocols, as we shall see in \sref{sec:Homo}.

\subsection{Computation of the verification precision}
 
Here we show that the four figures of merit $\zeta(N,\delta,\Omega)$, $
\eta(N,f,\Omega)$, $F(N,\delta,\Omega)$,
and $\caF(N,f,\Omega)$ can be computed by linear programming.  \Lsref{lem:etaf0} and \ref{lem:equiTwoDef} below are proved in \aref{sec:zetaEtaProofs}. To start with,
we  first determine $\eta(N,0,\Omega)$, the maximum of $p_\rho$ under the condition $f_\rho=0$.
 \begin{lemma}\label{lem:etaf0}$\eta(N,0,\Omega)=\delta_\rmc$, where
\begin{equation}\label{eq:deltarmc}
\delta_\rmc:=\begin{cases}
\beta^N, & \tau>0,\\
\max\{\beta^N,1/(N+1)\}, & \tau=0.
\end{cases}
\end{equation}
 \end{lemma}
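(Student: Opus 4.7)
The plan is to turn the optimization defining $\eta(N,0,\Omega)$ into a finite combinatorial maximization by invoking the decomposition $\rho=\sum_{\bfk\in\scrS_N}c_\bfk\rho_\bfk$ from the key-observations subsection. Since $p_\rho=\sum_\bfk c_\bfk\eta_\bfk(\bm{\lambda})$ and $f_\rho=\sum_\bfk c_\bfk\zeta_\bfk(\bm{\lambda})$ with every $c_\bfk$, $\eta_\bfk$, $\zeta_\bfk\geq 0$, imposing $f_\rho=0$ forces $c_\bfk=0$ whenever $\zeta_\bfk(\bm{\lambda})>0$, whence
\begin{equation*}
\eta(N,0,\Omega)=\max\bigl\{\eta_\bfk(\bm{\lambda}):\bfk\in\scrS_N,\ \zeta_\bfk(\bm{\lambda})=0\bigr\}.
\end{equation*}
Reading $\zeta_\bfk(\bm{\lambda})=(k_1/(N+1))\prod_{i:k_i>0}\lambda_i^{k_i}$, this vanishing condition is equivalent to ``$k_1=0$, or $\tau=0$ and $k_D>0$'', which is precisely the split producing the two cases in~\eqref{eq:deltarmc}.

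The central input is a swap-monotonicity lemma for $\eta_\bfk$: if $\bfk'$ is obtained from $\bfk$ by replacing $(k_2,k_i)$ with $(k_2+1,k_i-1)$ for some $i\geq 3$, then $\eta_{\bfk'}\geq\eta_\bfk$. I would prove it by writing $(N+1)\eta_\bfk(\bm{\lambda})=\sum_{r=1}^{N+1}\prod_{l\neq r}\lambda_{i_l}$ for any ordered tuple $(i_1,\dots,i_{N+1})$ realizing the multiplicities $\bfk$, and then observing that altering a single entry from $\lambda_i$ to $\lambda_2$ changes this sum by $(\lambda_2-\lambda_i)\sum_{r=1}^{N}\prod_{l\neq r,\, l\leq N}\lambda_{i_l}\geq 0$, where the inequality uses $\lambda_2=\beta\geq\lambda_i$.

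For $\tau>0$ the feasible $\bfk$ are exactly those with $k_1=0$; iterating the swap pushes all mass into $k_2$, giving $\eta_\bfk\leq\eta_{(0,N+1,0,\dots,0)}=\beta^N$ with saturation, so $\eta(N,0,\Omega)=\beta^N$. For $\tau=0$ the feasible $\bfk$ split into (i) $k_D=0$ (which forces $k_1=0$), bounded by $\beta^N$ exactly as above, and (ii) $k_D\geq 1$. In case (ii) every summand of $\eta_\bfk$ contains a factor $\lambda_D^{k_D}$ or $\lambda_D^{k_D-1}$, so $\eta_\bfk=0$ unless $k_D=1$, in which case only the discard-type-$D$ summand survives and $\eta_\bfk=(N+1)^{-1}\prod_{l<D}\lambda_l^{k_l}\leq \beta^{N-k_1}/(N+1)\leq 1/(N+1)$, saturated at $\bfk=(N,0,\dots,0,1)$. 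Combining the two bounds yields $\eta(N,0,\Omega)=\max\{\beta^N,1/(N+1)\}$; the only nontrivial step is the swap monotonicity, and the rest of the argument is bookkeeping over the two constraint families.
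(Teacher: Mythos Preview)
Your approach is essentially the paper's: reduce to $\eta(N,0,\Omega)=\max\{\eta_\bfk(\bm{\lambda}):\zeta_\bfk(\bm{\lambda})=0\}$, split the constraint into the two families, and bound each. Your swap-monotonicity lemma is a clean, explicit justification of the step $\eta_\bfk\leq\beta^N$ for $k_1=0$, which the paper merely asserts.

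There is one small gap. Your equivalence ``$\zeta_\bfk=0\iff k_1=0$, or $\tau=0$ and $k_D>0$'' tacitly assumes $\lambda_D$ is the \emph{only} eigenvalue that can vanish. In the paper's setup $D$ is the Hilbert-space dimension and several $\lambda_i$ may equal zero; then there exist feasible $\bfk$ with $k_1>0$, $k_D=0$, and $k_i>0$ for some $i<D$ with $\lambda_i=0$, which your case split misses. The fix is immediate: replace ``$k_D>0$'' by ``$k_i>0$ for some $i\geq 2$ with $\lambda_i=0$'' (as the paper does), after which your case-(ii) argument applies verbatim with that $i$ in place of $D$. Alternatively, invoke the paper's observation that all figures of merit depend only on the distinct eigenvalues of $\Omega$, so one may assume from the outset that $\lambda_D$ is the unique zero eigenvalue.
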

\Lref{lem:etaf0} has  implications for the figures of merit $F(N,\delta,\Omega)$ and $\zeta(N,\delta,\Omega)$ as well,
\begin{align}
F(N,\delta,\Omega)=\zeta(N,\delta,\Omega)=0,\quad 0< \delta\leq \delta_\rmc, \label{eq:zetazero}\\
F(N,\delta,\Omega)>0, \quad \zeta(N,\delta,\Omega)>0, \quad \delta_\rmc<\delta\leq 1. \label{eq:zetanonzero}
\end{align}
The equality $\zeta(N,\delta,\Omega)=0$ also holds when $\delta=0$.

Next, we introduce alternative definitions of the figures of merit defined in \eref{eq:Ffdelta}, which are easier to analyze and compute.
Define
\begin{subequations}\label{eq:FfdeltaA}
	\begin{align}
	\tilde{\zeta}(N,\delta,\Omega)
	&:=\!\begin{cases}\min_{\rho}
	\big\{f_\rho
	\,|\,p_\rho= 
	\delta\bigr\}, &  \delta_\rmc\leq  \delta\leq 1, \\
	0,& 0\leq \delta\leq \delta_\rmc, 
	\end{cases}   \label{eq:MinfdeltaA}\\
	\tilde{\eta}(N,f,\Omega)
	&:=\!\max_{\rho}
	\big\{
	p_\rho
	\,|\,f_\rho = f
	\big\}, \quad 0\leq f\leq 1,
	\label{eq:MaxprhofA}\\
	\tilde{F}(N,\delta,\Omega)
	&:=\delta^{-1}\tilde{\zeta}(N,\delta,\Omega), \quad  0<  \delta\leq 1, 
	\label{eq:MinFdeltaA}\\	
	\tilde{\caF}(N,f,\Omega)
	&:=[\tilde{\eta}(N,f,\Omega)]^{-1}f, \quad 0<f\leq 1.
	\label{eq:MinFfA}
	\end{align}
\end{subequations} 
Here $\delta_\rmc$ in \eref{eq:MinfdeltaA} can be replaced by $\tau^N$ given that $\min_{\rho}
\big\{f_\rho\,|\,p_\rho= \delta\bigr\}=0$ for $\tau^N\leq \delta \leq \delta_\rmc$.

\begin{lemma}\label{lem:equiTwoDef}
Suppose $N$ is a positive integer and $\Omega$ is a verification operator. Then
\begin{subequations}
 \begin{align}
 \zeta(N,\delta,\Omega)&=\tilde{\zeta}(N,\delta,\Omega), \quad 0\leq   \delta\leq 1,\\
 \eta(N,f,\Omega)&=\tilde{\eta}(N,f,\Omega),\quad  0\leq   f\leq 1,\\
F(N,\delta,\Omega)&=\tilde{F}(N,\delta,\Omega), \quad 0<  \delta\leq 1,\\
\caF(N,f,\Omega)&=\tilde{\caF}(N,f,\Omega), \quad 0<  f\leq 1.
\end{align}
\end{subequations}
\end{lemma}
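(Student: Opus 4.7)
The plan is to prove all four equalities by direct mixing arguments, exploiting the convexity of the set of permutation-invariant density matrices on $\caH^{\otimes (N+1)}$ together with two distinguished elements of $R_{N,\Omega}$: the ideal state $(|\Psi\>\<\Psi|)^{\otimes(N+1)}$ sitting at $(p,f)=(1,1)$, and a state $\rho_\rmc$ guaranteed by \lref{lem:etaf0} sitting at $(p,f)=(\delta_\rmc,0)$. Each equality asserts that tightening a one-sided inequality constraint to equality does not change the optimum. The easy direction of each is immediate because the equality-constrained feasible set is contained in the relaxed one; the work is to show the reverse direction by turning any extremal point of the relaxed problem into a feasible point of the tighter problem with at-least-as-good objective.

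For $\zeta=\tilde{\zeta}$, when $\delta\leq \delta_\rmc$ the state $\rho_\rmc$ already witnesses $\zeta=0=\tilde{\zeta}$. When $\delta>\delta_\rmc$, I take any minimizer $\rho^{*}$ of the relaxed problem; if $p_{\rho^{*}}>\delta$, the mixture $\rho_{t}=(1-t)\rho^{*}+t\rho_\rmc$ with $t=(p_{\rho^{*}}-\delta)/(p_{\rho^{*}}-\delta_\rmc)\in(0,1)$ satisfies $p_{\rho_{t}}=\delta$ and $f_{\rho_{t}}=(1-t)f_{\rho^{*}}\leq f_{\rho^{*}}$, so $\tilde{\zeta}\leq f_{\rho_{t}}\leq f_{\rho^{*}}=\zeta$. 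The equality $\eta=\tilde{\eta}$ is dual: if the maximizer $\rho^{*}$ of $\eta$ had $f_{\rho^{*}}<f$, I mix with the ideal state using $t=(f-f_{\rho^{*}})/(1-f_{\rho^{*}})\in(0,1]$; the resulting $f_{\rho_{t}}=f$ and $p_{\rho_{t}}=(1-t)p_{\rho^{*}}+t\geq p_{\rho^{*}}$ certify $\tilde{\eta}\geq\eta$.

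For $F=\tilde{F}$, the same $\rho_\rmc$-mixing does the job: starting from a minimizer $\rho^{*}$ of $F$ with $p_{\rho^{*}}>\delta>\delta_\rmc$, the constructed $\rho_{t}$ has $p_{\rho_{t}}=\delta$ and the elementary algebraic check $(1-t)/\delta\leq 1/p_{\rho^{*}}$, which reduces to $p_{\rho^{*}}\geq\delta$, gives
\begin{equation}
\frac{f_{\rho_{t}}}{p_{\rho_{t}}}=\frac{(1-t)f_{\rho^{*}}}{\delta}\leq \frac{f_{\rho^{*}}}{p_{\rho^{*}}},
\end{equation}
so $\tilde{F}\leq f_{\rho_{t}}/\delta\leq F$; the degenerate case $\delta\leq\delta_\rmc$ is handled by $F=\tilde{F}=0$ using the already-established $\zeta=\tilde{\zeta}$ together with \eref{eq:zetazero}. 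For $\caF=\tilde{\caF}$, dually: from a minimizer $\rho^{*}$ of $\caF$ with $f_{\rho^{*}}>f$, mixing with $\rho_\rmc$ at $t=(f_{\rho^{*}}-f)/f_{\rho^{*}}\in(0,1]$ enforces $f_{\rho_{t}}=f$ and yields $p_{\rho_{t}}=(f/f_{\rho^{*}})p_{\rho^{*}}+(1-f/f_{\rho^{*}})\delta_\rmc\geq(f/f_{\rho^{*}})p_{\rho^{*}}$, from which $\tilde{\caF}=f/\tilde{\eta}\leq f/p_{\rho_{t}}\leq f_{\rho^{*}}/p_{\rho^{*}}=\caF$.

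The main obstacle will be the bookkeeping in the $F$ and $\caF$ cases, where I must verify that the mixing parameter $t$ is genuinely in $(0,1)$ and that the resulting ratio inequality holds; both steps ultimately reduce to the strict inequalities $\delta>\delta_\rmc$ or $f_{\rho^{*}}>f$ together with $\delta_\rmc\geq 0$. A secondary technical point is existence of the optimizers: since the permutation-invariant states on the finite-dimensional Hilbert space $\caH^{\otimes (N+1)}$ form a compact convex set and $p_{\rho},f_{\rho}$ are continuous (linear) in $\rho$, both the minimizers and the maximizers appealed to above do exist, so the mixing constructions are not vacuous.
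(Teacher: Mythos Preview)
Your proof is correct. The mixing constructions you use—blending an optimizer with $\rho_\rmc$ to lower $p_\rho$ or $f_\rho$, and with $(|\Psi\>\<\Psi|)^{\otimes(N+1)}$ to raise $f_\rho$—are exactly the ones the paper employs. The only organizational difference is that the paper packages these same mixings into a monotonicity lemma for the tilde quantities (their \lref{lem:zetaEtaMonoConA}), from which both \lref{lem:equiTwoDef} and \lref{lem:zetaEtaMonoCon} drop out as corollaries via the layer-cake identities $\zeta=\min_{\delta'\geq\delta}\tilde{\zeta}$, $F=\min_{\delta'\geq\delta}\tilde{F}$, etc. Your route is slightly more direct for \lref{lem:equiTwoDef} alone, while the paper's detour through monotonicity buys \lref{lem:zetaEtaMonoCon} for free; since both lemmas are needed downstream, the paper's packaging is marginally more economical overall, but the underlying argument is the same.
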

For $0<\delta,f\leq 1$,  \lref{lem:equiTwoDef} implies that
\begin{subequations}
\begin{align}
F(N,\delta,\Omega)&=\delta^{-1}\tilde{\zeta}(N,\delta,\Omega)=\delta^{-1}\zeta(N,\delta,\Omega),\label{eq:Fzeta} \\  
\caF(N,f,\Omega)&=[\tilde{\eta}(N,f,\Omega)]^{-1}f=[\eta(N,f,\Omega)]^{-1}f. 
\end{align}
\end{subequations}
To compute  $F(N,\delta,\Omega)$ and $\caF(N,f,\Omega)$, it suffices to compute $\zeta(N,\delta,\Omega)$ and $\eta(N,f,\Omega)$. By virtue of \eref{eq:prhofrho}  and \lref{lem:equiTwoDef},
  $\zeta(N,\delta,\Omega)$ with $ \delta_\rmc\leq \delta\leq 1$ and $\eta(N,f,\Omega)$ with $0\leq f\leq 1$ can be computed via linear programming, 
\begin{subequations}\label{eq:zetaetaLP}
\begin{align}
\zeta(N,\delta,\Omega)&=
\min_{\{c_\bfk\}}
\Biggl\{\sum_{\bfk\in \scrS_N}
c_\bfk \zeta_\bfk(\bm{\lambda})
\Bigg|\sum_{\bfk\in \scrS_N} c_\bfk \eta_\bfk(\bm{\lambda})=\delta
\Biggr\}, \label{eq:zetaLP}  \\
\eta(N,f,\Omega)&=
\max_{\{c_\bfk\}}
\Biggl\{
\sum_{\bfk\in \scrS_N}c_\bfk \eta_\bfk(\bm{\lambda})
\Bigg|\sum_{\bfk\in \scrS_N} c_\bfk \zeta_\bfk(\bm{\lambda})=f
\Biggr\}, \label{eq:etaLP} 
\end{align}
\end{subequations}
where $c_\bfk$ form a probability distribution on $\scrS_N$.
Here the minimum in \eref{eq:zetaLP}
 can be attained at a distribution $\{c_\bfk\}$ that is supported on 
at most two points in $\scrS_N$; a similar conclusion holds for the maximum in \eref{eq:etaLP}. These conclusions are tied to the geometric fact that any boundary point of  $R_{N,\Omega}$ lies on a line segment that connects two extremal points. 
This observation can greatly simplify the computation of $F(N,\delta,\Omega)$ and $\caF(N,f,\Omega)$ as well as $\zeta(N,\delta,\Omega)$ and $\eta(N,f,\Omega)$. 
In addition to the computational value, \eref{eq:zetaetaLP}  implies that $\zeta(N,\delta,\Omega)$ and $\eta(N,f,\Omega)$ are piecewise linear functions, whose turning points correspond to the extremal points of the region $R_{N,\Omega}$ and have the form $(\eta_\bfk(\bm{\lambda}),\zeta_\bfk(\bm{\lambda}))$ for some $\bfk\in \scrS_N$; cf.~\lref{lem:ExtremalPoints} in \aref{sec:zetaEtaProofs}.

\subsection{Properties of the main figures of merit}
Next, we summarize the main  properties of the five figures of merit $\zeta(N,\delta,\Omega)$, $
\eta(N,f,\Omega)$, $F(N,\delta,\Omega)$,
 $\caF(N,f,\Omega)$, and $N(\epsilon,\delta,\Omega)$; the proofs are relegated to \aref{sec:zetaEtaProofs}.
These properties are tied to the fact that the region $R_{N,\Omega}$ is a convex polygon.

\begin{lemma}\label{lem:zetaEtaMonoCon}
The following statements hold.
	\begin{enumerate}
\item 	$\zeta(N,\delta,\Omega)$ is convex and nondecreasing in $\delta$ for $0\leq \delta\leq 1$ and  is strictly  increasing  for  $\delta_\rmc\leq \delta\leq 1$.

\item  $\eta(N,f,\Omega)$ is concave and  strictly increasing in $f$ for $0\leq f\leq 1$. 

\item 	$F(N,\delta,\Omega)$ is nondecreasing in $\delta$ for $0< \delta\leq 1$ and is strictly  increasing  for $\delta_\rmc\leq \delta\leq 1$.

\item 	$\caF(N,f,\Omega)$  is strictly  increasing  in $f$ for  $0<f\leq 1$.

\end{enumerate}
\end{lemma}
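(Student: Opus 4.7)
The strategy is to reduce all four items to elementary geometric properties of the convex polygon $R_{N,\Omega}$ in the $(p,f)$ plane, which by \lref{lem:equiTwoDef} controls every figure of merit involved. Three facts about this polygon will be used repeatedly: it contains $(1,1)$ coming from $\rho=(|\Psi\>\<\Psi|)^{\otimes(N+1)}$, it contains $(\delta_\rmc,0)$ by \lref{lem:etaf0}, and it is contained in the triangle $\{(p,f):0\le f\le p\le 1\}$. By \eref{eq:MinfdeltaA} and \eref{eq:MaxprhofA}, $\zeta(N,\cdot,\Omega)$ traces the lower boundary of $R_{N,\Omega}$ as a function of the horizontal coordinate $p=\delta$ on $[\delta_\rmc,1]$, while $\eta(N,\cdot,\Omega)$ traces its upper boundary as a function of the vertical coordinate $f\in[0,1]$. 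Convexity of $\zeta$ and concavity of $\eta$ in items 1 and 2 then follow at once from the convexity of $R_{N,\Omega}$ via the standard lower- and upper-envelope arguments.

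For monotonicity in item 1, I would note that $\zeta\equiv 0$ on $[0,\delta_\rmc]$ by \eref{eq:zetazero}, and on $[\delta_\rmc,1]$ the endpoint values $\zeta(N,\delta_\rmc,\Omega)=0$ and $\zeta(N,1,\Omega)=1$ together with convexity force strict monotonicity. Strict monotonicity of $\eta$ in item 2 requires more care, because $\eta$ is only concave. I would first show that $\eta(N,f,\Omega)=1$ forces $f=1$: if $p_\rho=1$, then $\rho$ is supported on the $+1$ eigenspace of $\Omega^{\otimes N}\otimes\id$, which equals $(|\Psi\>\<\Psi|)^{\otimes N}\otimes\caH$ since $\lambda_1=1$ is non-degenerate; permutation invariance then collapses $\rho$ to $(|\Psi\>\<\Psi|)^{\otimes(N+1)}$ and yields $f_\rho=1$. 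Combined with concavity and the fact that $\eta(N,0,\Omega)=\delta_\rmc<1$, this rules out any plateau of $\eta$ on $[0,1)$ and forces strict monotonicity.

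For items 3 and 4 I would exploit the geometric reading that $F(N,\delta,\Omega)=\zeta(N,\delta,\Omega)/\delta$ is the slope of the line from the origin to $(\delta,\zeta(N,\delta,\Omega))$, while $\caF(N,f,\Omega)=f/\eta(N,f,\Omega)$ is the slope of the line from the origin to $(\eta(N,f,\Omega),f)$. Item 3 then follows from the standard fact that $g(\delta)/\delta$ is nondecreasing whenever $g$ is convex with $g(0)=0$; strict monotonicity on $[\delta_\rmc,1]$ is visible already on the first linear piece of $\zeta$, where $\zeta(N,\delta,\Omega)/\delta=s(1-\delta_\rmc/\delta)$ is manifestly strictly increasing, and the piecewise linear structure of $\zeta$ (noted after \eref{eq:zetaetaLP}) propagates the strict increase across the subsequent pieces. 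For item 4, at any smooth point of $\eta$ the tangent-line inequality for concave functions gives $\eta(N,0,\Omega)\le \eta(N,f,\Omega)-f\,\partial_f\eta(N,f,\Omega)$, so
\begin{equation*}
\partial_f\caF(N,f,\Omega)=\frac{\eta(N,f,\Omega)-f\,\partial_f\eta(N,f,\Omega)}{\eta(N,f,\Omega)^{2}}\ge\frac{\delta_\rmc}{\eta(N,f,\Omega)^{2}}>0,
\end{equation*}
and at the finitely many kinks of the piecewise linear $\eta$ the strict monotonicity extends by continuity.

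The main obstacle I expect is the strictness in item 4, which rests essentially on $\delta_\rmc=\eta(N,0,\Omega)>0$. This is automatic when $\beta>0$, and \lref{lem:etaf0} supplies the strict positivity also in the edge case $\beta=0$ through the bound $\delta_\rmc\ge 1/(N+1)$; without this input the above tangent argument would only yield $\caF$ nondecreasing. A secondary subtlety is the uniqueness step in item 2, where the non-degeneracy of the eigenvalue $\lambda_1=1$ (guaranteed by $\lambda_2<1$) is used in an essential way to identify the $+1$ eigenspace of $\Omega^{\otimes N}$.
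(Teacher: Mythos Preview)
Your approach is sound and genuinely different from the paper's. The paper does not argue geometrically from $R_{N,\Omega}$; instead it introduces the tilde quantities $\tilde\zeta,\tilde\eta,\tilde F,\tilde\caF$ of \eref{eq:FfdeltaA} and proves an auxiliary \lref{lem:zetaEtaMonoConA} for them by explicit state constructions: convexity/concavity come from mixing two optimal states, and all monotonicity statements come from mixing an optimal state with either $\rho_\rmc$ (the state with $p_{\rho_\rmc}=\delta_\rmc$, $f_{\rho_\rmc}=0$) or $(|\Psi\>\<\Psi|)^{\otimes(N+1)}$. Lemmas~\ref{lem:equiTwoDef} and~\ref{lem:zetaEtaMonoCon} are then read off simultaneously. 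So where you invoke \lref{lem:equiTwoDef} as a black box and then run convex-analysis arguments (secant-slope monotonicity, tangent-line inequality, the uniqueness argument for $p_\rho=1$), the paper instead proves \lref{lem:equiTwoDef} and \lref{lem:zetaEtaMonoCon} together, entirely at the level of density matrices. Your route is more conceptual and shorter once the polygon picture is in hand; the paper's route is more self-contained and avoids relying on \lref{lem:equiTwoDef} as prior input.

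One genuine gap to fix in item~1: the sentence ``the endpoint values $\zeta(N,\delta_\rmc,\Omega)=0$ and $\zeta(N,1,\Omega)=1$ together with convexity force strict monotonicity'' is false as stated. A convex function on $[\delta_\rmc,1]$ with those endpoint values can stay at $0$ on an initial segment before rising (e.g.\ $\zeta(\delta)=\max\{0,2\delta-1\}$ on $[0,1]$ with $\delta_\rmc=0$). What you actually need is the additional input \eref{eq:zetanonzero}, namely $\zeta(N,\delta,\Omega)>0$ for $\delta>\delta_\rmc$, which is exactly the content of \lref{lem:etaf0}. With that in hand your argument goes through: a flat piece at level $c>0$ on $[\delta_1,\delta_2]\subset(\delta_\rmc,1]$ would give secant slope $c/(\delta_1-\delta_\rmc)>0$ on $[\delta_\rmc,\delta_1]$ and slope $0$ on $[\delta_1,\delta_2]$, contradicting convexity; and a flat piece at level $0$ is excluded by \eref{eq:zetanonzero}. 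You already flag the analogous positivity $\delta_\rmc>0$ as essential for item~4, so the fix is just to invoke it in item~1 as well.
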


\begin{lemma}\label{lem:etazetaMU}
Suppose $0\leq \delta,f\leq 1$. Then 
\begin{subequations}
\begin{align}
\eta(N,\zeta(N,\delta,\Omega),\Omega)&=\max\{\delta,\delta_\rmc\},\label{eq:etazetaMU}\\
\zeta(N,\eta(N,f,\Omega),\Omega)&=f. \label{eq:zetaetaMU}
\end{align}
\end{subequations}
\end{lemma}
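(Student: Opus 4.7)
The plan is to exploit the geometric picture already established: the region $R_{N,\Omega}$ is a compact convex polygon in $[0,1]^2$ with $f_\rho \le p_\rho$, and the functions $\zeta$ and $\eta$ trace out its lower and upper-right boundaries respectively. The two identities in the lemma then express a duality between these two boundary parametrizations. Since most of the heavy lifting has been done in \lref{lem:etaf0}, \lref{lem:equiTwoDef}, and \lref{lem:zetaEtaMonoCon}, my proof will just be a careful bookkeeping argument.

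For \eqref{eq:zetaetaMU}, I will argue both inequalities. Fix $f \in [0,1]$. Because $R_{N,\Omega}$ is compact and $\tilde{\eta}$ fixes $f_\rho = f$ exactly (\eref{eq:MaxprhofA}), there exists a state $\rho^*$ with $p_{\rho^*} = \eta(N,f,\Omega)$ and $f_{\rho^*} = f$. Plugging $\rho^*$ into the definition of $\zeta$ at $\delta = \eta(N,f,\Omega)$ yields $\zeta(N,\eta(N,f,\Omega),\Omega) \le f$. For the reverse inequality, suppose for contradiction there is a $\rho'$ with $p_{\rho'} \ge \eta(N,f,\Omega)$ and $f_{\rho'} < f$. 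Then by definition $\eta(N,f_{\rho'},\Omega) \ge p_{\rho'} \ge \eta(N,f,\Omega)$, which contradicts the strict monotonicity of $\eta(N,\cdot,\Omega)$ on $[0,1]$ (part~2 of \lref{lem:zetaEtaMonoCon}).

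For \eqref{eq:etazetaMU}, I split on whether $\delta$ lies below or above the critical value $\delta_\rmc$ of \lref{lem:etaf0}. When $0 \le \delta \le \delta_\rmc$, \eref{eq:zetazero} gives $\zeta(N,\delta,\Omega) = 0$, and then \lref{lem:etaf0} yields $\eta(N,0,\Omega) = \delta_\rmc = \max\{\delta,\delta_\rmc\}$. When $\delta_\rmc < \delta \le 1$, I use \lref{lem:equiTwoDef} (via $\tilde{\zeta}$) to find a state $\rho^*$ with $p_{\rho^*} = \delta$ and $f_{\rho^*} = \zeta(N,\delta,\Omega)$, so that $\eta(N,\zeta(N,\delta,\Omega),\Omega) \ge \delta$. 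For the opposite inequality, any $\rho'$ with $f_{\rho'} \le \zeta(N,\delta,\Omega)$ and $p_{\rho'} > \delta$ would, via its membership in the minimization defining $\zeta(N,p_{\rho'},\Omega)$, give $\zeta(N,p_{\rho'},\Omega) \le f_{\rho'} \le \zeta(N,\delta,\Omega)$, contradicting the strict monotonicity of $\zeta(N,\cdot,\Omega)$ on $[\delta_\rmc,1]$ (part~1 of \lref{lem:zetaEtaMonoCon}).

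The main (very minor) obstacle is making sure the optimizers $\rho^*$ really exist, i.e.\ that the infima in \eref{eq:MinfdeltaA} and suprema in \eref{eq:MaxprhofA} are attained with the constraint holding as equality. This follows because $R_{N,\Omega}$ is the convex hull of the finitely many points $(\eta_\bfk(\bm{\lambda}),\zeta_\bfk(\bm{\lambda}))$ (\eref{eq:etazeta}), hence a compact polygon, so the linear programs in \eref{eq:zetaetaLP} attain their optima; alternatively one simply invokes \lref{lem:equiTwoDef}, which already converted the $\ge/\le$ constraints in \eref{eq:Ffdelta} to equalities. The secondary subtlety is the $\delta < \delta_\rmc$ corner, but this is dispatched directly by \lref{lem:etaf0}.
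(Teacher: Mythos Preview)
Your proposal is correct and follows essentially the same approach as the paper's proof: both split \eqref{eq:etazetaMU} on $\delta\lessgtr\delta_\rmc$ (using \lref{lem:etaf0} for the low-$\delta$ case), and for the nontrivial cases both use existence of optimizing states (via \lref{lem:equiTwoDef}/compactness of $R_{N,\Omega}$) to get one inequality and strict monotonicity from \lref{lem:zetaEtaMonoCon} to get the other. The only cosmetic difference is that you phrase the second inequality as a proof by contradiction whereas the paper argues it directly, but the underlying witnesses and monotonicity steps are identical.
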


\begin{lemma}\label{lem:ZetaEtaMonN}
Suppose $N\geq2 $ and $0<\delta,f\leq1$. Then
\begin{subequations}
\begin{align}
\zeta(N,\delta,\Omega)&\geq \zeta(N-1,\delta,\Omega), \label{eq:ZetaMonN}\\
F(N,\delta,\Omega)&\geq F(N-1,\delta,\Omega), \label{eq:FMonN}\\
\eta(N,f,\Omega)&\leq \eta(N-1,f,\Omega), \label{eq:EtaMonN}\\
\caF(N,f,\Omega)&\geq \caF(N-1,f,\Omega). \label{eq:caFMonN}
\end{align}
\end{subequations}
The first two inequalities are saturated iff $\delta\leq \delta_\rmc$ or $\delta=1$, where $\delta_\rmc$ is given in \eref{eq:deltarmc}. 
The last two inequalities are saturated  iff $f=1$.
\end{lemma}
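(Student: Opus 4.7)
My plan is to first establish the monotonicity of $\zeta$ in \eref{eq:ZetaMonN} and then deduce the other three inequalities from it via the algebraic relations of \lref{lem:equiTwoDef} and \lref{lem:etazetaMU}.

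For \eref{eq:ZetaMonN}, I would argue geometrically by comparing $R_{N,\Omega}$ and $R_{N-1,\Omega}$ at the level of extremal points. The key claim is that every extremal point $(\eta_\bfk(\bm\lambda),\zeta_\bfk(\bm\lambda))$ of $R_{N,\Omega}$, with $\bfk=(k_1,\ldots,k_D)\in\scrS_N$, lies on or above the lower-boundary curve $\delta\mapsto \zeta(N-1,\delta,\Omega)$ of $R_{N-1,\Omega}$. When $k_1\geq 1$, I would take the shifted sequence $\bfk':=(k_1-1,k_2,\ldots,k_D)\in\scrS_{N-1}$ and verify directly from \eref{eq:etazeta} that
\[
\eta_{\bfk'}(\bm\lambda)\geq \eta_\bfk(\bm\lambda),\qquad \zeta_{\bfk'}(\bm\lambda)\leq \zeta_\bfk(\bm\lambda).
\]
Combined with the monotonicity of $\zeta(N-1,\cdot,\Omega)$ from \lref{lem:zetaEtaMonoCon}, this yields $\zeta(N-1,\eta_\bfk,\Omega)\leq \zeta(N-1,\eta_{\bfk'},\Omega)\leq \zeta_{\bfk'}(\bm\lambda)\leq \zeta_\bfk(\bm\lambda)$. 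For $\bfk$ with $k_1=0$, one has $\zeta_\bfk=0$ automatically while $\eta_\bfk\leq \beta^N\leq \delta_\rmc$, so the bound is immediate from \eref{eq:zetazero}. To extend from extremal points to arbitrary $(p,f)\in R_{N,\Omega}$, I would invoke Jensen's inequality together with the convexity of $\zeta(N-1,\cdot,\Omega)$ in \lref{lem:zetaEtaMonoCon}:
\[
f=\sum_{\bfk} c_\bfk \zeta_\bfk \geq \sum_{\bfk} c_\bfk \zeta(N-1,\eta_\bfk,\Omega) \geq \zeta(N-1,p,\Omega),
\]
from which minimization over $(p,f)\in R_{N,\Omega}$ with $p\geq \delta$ yields \eref{eq:ZetaMonN}.

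The remaining three inequalities then follow by routine algebra. The bound \eref{eq:FMonN} is immediate from $F(N,\delta,\Omega)=\delta^{-1}\zeta(N,\delta,\Omega)$ in \eref{eq:Fzeta}. The inequality \eref{eq:EtaMonN} follows by inverting the graph of $\zeta$ via \lref{lem:etazetaMU}, which identifies $\eta(N,f,\Omega)$ as the functional inverse of $\zeta(N,\cdot,\Omega)$ on its strictly increasing range, so the monotonicity of $\zeta$ in $N$ reverses to give the opposite monotonicity for $\eta$; then \eref{eq:caFMonN} follows from $\caF(N,f,\Omega)=f/\eta(N,f,\Omega)$. The saturation cases $\delta\leq \delta_\rmc$, $\delta=1$, and $f=1$ are handled by direct evaluation, since both sides then vanish or both equal $1$.

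The main obstacle I anticipate is upgrading the above inequalities to \emph{strict} inequality in the nontrivial regime $\delta_\rmc<\delta<1$ (respectively $f<1$). A direct calculation shows that the identity $\zeta_{\bfk'}-\zeta_\bfk=-(N+1-k_1)B/[N(N+1)]$, with $B:=\prod_{j\geq 2}\lambda_j^{k_j}$, is strict for every $\bfk$ satisfying $1\leq k_1\leq N$ together with at least one $k_i>0$ ($i\geq 2$ and $\lambda_i>0$); the remaining technical step is to argue that any optimal decomposition attaining $\zeta(N,\delta,\Omega)$ for $\delta<1$ must place positive weight on such a nontrivial $\bfk$, thereby ruling out the degenerate possibility that the minimizer is supported entirely on the trivial extremal point $(N+1,0,\ldots,0)$. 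Some additional care is needed in the edge case $\tau=0$, where $\delta_\rmc$ picks up the extra $1/(N+1)$ term from \eref{eq:deltarmc}, and in the case where some $\lambda_i=0$, which may force one to choose the decrement in a coordinate $i\geq 2$ rather than in the first coordinate.
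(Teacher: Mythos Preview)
Your proof of the four inequalities is essentially the paper's: the same shift $\bfk\mapsto\bfk'=(k_1{-}1,k_2,\ldots,k_D)$ to compare extremal points, the same use of monotonicity and convexity of $\zeta(N{-}1,\cdot,\Omega)$ to pass from extremal points to all $\delta$ (the paper phrases this via the turning-point structure of \lref{lem:ExtremalPoints}; your Jensen formulation is equivalent and is exactly the mechanism of \lref{lem:VScompare}), and the same route from \eref{eq:ZetaMonN} to the other three via \lref{lem:equiTwoDef} and \lref{lem:etazetaMU}.

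Your anticipated obstacle for strictness is not a real obstacle, and the worry about decrementing a coordinate $i\geq 2$ is a red herring. Your own formula $\zeta_{\bfk'}-\zeta_\bfk=-(N{+}1{-}k_1)B/[N(N{+}1)]$ already settles it: equality $\zeta_{\bfk'}=\zeta_\bfk$ forces either $k_1=N{+}1$ (hence $\zeta_\bfk=\eta_\bfk=1$) or $B=0$ (hence $\zeta_\bfk=0$, and then $\eta_\bfk\leq\delta_\rmc$ by \lref{lem:etaf0}). Thus every turning point $\eta_\bfk$ of $\zeta(N,\cdot,\Omega)$ lying in $(\delta_\rmc,1)$ satisfies the strict bound $\zeta_\bfk>\zeta(N{-}1,\eta_\bfk,\Omega)$, and the piecewise-linear/convex comparison then forces strict inequality on all of $(\delta_\rmc,1)$. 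The cases $\tau=0$ or $\lambda_i=0$ with $k_i>0$ require no separate treatment: they are precisely the configurations with $B=0$, already absorbed into $\zeta_\bfk=0$, $\eta_\bfk\leq\delta_\rmc$. This is exactly how the paper argues the saturation clause.
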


Next, we turn to the figure of merit
$N(\epsilon,\delta,\Omega)$ defined in \eref{eq:MinNumTestDef}. As an implication of \lref{lem:zetaEtaMonoCon}, $N(\epsilon,\delta,\Omega)$ increases monotonically with $1/\epsilon$ and $1/\delta$ as expected. The following lemma provides
several equivalent ways for computing $N(\epsilon,\delta,\Omega)$.
\begin{lemma}\label{lem:MinNumTestDef}
Suppose $0<\epsilon,\delta<1$. Then 
\begin{align}
\! N(\epsilon,\delta,\Omega)&=\min\{N\,| \, \zeta(N,\delta,\Omega)\geq \delta(1-\epsilon) \}  \label{eq:MinNumTestDefzeta}\\
&=\min\{N\,| \, \eta(N,\delta(1-\epsilon),\Omega)\leq \delta \}\label{eq:MinNumTestDefeta}\\
&=\min\{N\,| \, \caF(N,\delta(1-\epsilon),\Omega)\geq (1-\epsilon) \}. \label{eq:MinNumTestDefcaF} 
\end{align}
\end{lemma}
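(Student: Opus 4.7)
The plan is to prove the three equalities by reducing them to standard identities relating the quantities $\zeta$, $\eta$, $F$, and $\caF$ that were established in Lemmas \ref{lem:equiTwoDef} and \ref{lem:etazetaMU}, together with the monotonicity statements in Lemma \ref{lem:zetaEtaMonoCon}. Throughout, I will freely use the hypothesis $0<\epsilon,\delta<1$, which in particular forces $\delta(1-\epsilon)>0$.

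For \eqref{eq:MinNumTestDefzeta}, I will simply invoke the identity $F(N,\delta,\Omega)=\delta^{-1}\zeta(N,\delta,\Omega)$ from \eref{eq:Fzeta}, which is valid because $\delta>0$. Multiplying through by $\delta>0$ turns the condition $F(N,\delta,\Omega)\geq 1-\epsilon$ in the definition \eref{eq:MinNumTestDef} into $\zeta(N,\delta,\Omega)\geq\delta(1-\epsilon)$, giving the first equality on the nose.

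For \eqref{eq:MinNumTestDefeta}, I would argue equivalence of the two conditions $\zeta(N,\delta,\Omega)\geq\delta(1-\epsilon)$ and $\eta(N,\delta(1-\epsilon),\Omega)\leq\delta$ at each fixed $N$, so that the corresponding minima over $N$ coincide. For one direction, substitute $f=\delta(1-\epsilon)$ into the identity $\zeta(N,\eta(N,f,\Omega),\Omega)=f$ from \eqref{eq:zetaetaMU}, and combine with the monotonicity of $\zeta(N,\cdot,\Omega)$ (Lemma \ref{lem:zetaEtaMonoCon}) to conclude that if $\eta(N,\delta(1-\epsilon),\Omega)\leq\delta$, then $\zeta(N,\delta,\Omega)\geq\delta(1-\epsilon)$. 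For the converse, note that the hypothesis $\zeta(N,\delta,\Omega)\geq\delta(1-\epsilon)>0$ combined with \eref{eq:zetazero} forces $\delta>\delta_\rmc$; then \eqref{eq:etazetaMU} gives $\eta(N,\zeta(N,\delta,\Omega),\Omega)=\delta$, and strict monotonicity of $\eta(N,\cdot,\Omega)$ yields $\eta(N,\delta(1-\epsilon),\Omega)\leq\delta$. The main delicacy here is the regime $\delta\leq\delta_\rmc$, where $\zeta$ is not strictly increasing and the naive inversion argument breaks; the observation above about positivity of $\delta(1-\epsilon)$ dispenses with this case cleanly.

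Finally, equality \eqref{eq:MinNumTestDefcaF} follows from the formula $\caF(N,f,\Omega)=f/\eta(N,f,\Omega)$ provided by \eref{eq:Fzeta}: setting $f=\delta(1-\epsilon)$ and rearranging, $\caF(N,\delta(1-\epsilon),\Omega)\geq 1-\epsilon$ is equivalent to $\eta(N,\delta(1-\epsilon),\Omega)\leq\delta$, reducing \eqref{eq:MinNumTestDefcaF} to \eqref{eq:MinNumTestDefeta} already proved. I expect the only nontrivial step to be the boundary handling in (2), i.e., verifying that the inversion argument between $\zeta$ and $\eta$ survives outside the range where both functions are strictly monotone; once this is dispatched via the positivity observation, the rest is bookkeeping.
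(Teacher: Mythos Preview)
Your proposal is correct and follows essentially the same route as the paper: reduce the lemma to the equivalence of the four inequalities $F\geq 1-\epsilon$, $\zeta\geq\delta(1-\epsilon)$, $\eta\leq\delta$, $\caF\geq 1-\epsilon$, obtaining the outer equivalences from \lref{lem:equiTwoDef} and the middle one from \lsref{lem:zetaEtaMonoCon} and \ref{lem:etazetaMU} together with the observation that $\delta>\delta_\rmc$ whenever either middle inequality holds. Your treatment is in fact slightly more explicit than the paper's, which simply cites the relevant lemmas and notes the $\delta>\delta_\rmc$ point without writing out both directions separately.
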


Finally, we present a lemma which is useful for comparing the efficiencies of two verification operators. Let  $\tilde{\Omega}$ be another verification operator for the same target state as $\Omega$. 
\begin{lemma}\label{lem:VScompare}
Suppose $\zeta_\bfk(\bm{\lambda})\geq \zeta(N,\delta=\eta_\bfk(\bm{\lambda}),\tilde{\Omega})$ for all $\bfk\in \scrS_N$. Then
\begin{subequations}
\begin{align}
\zeta(N,\delta,\Omega)&\geq \zeta(N,\delta,\tilde{\Omega}), \quad 0\leq \delta\leq 1,\label{eq:VScompare1} \\
F(N,\delta,\Omega)&\geq F(N,\delta,\tilde{\Omega}), \quad 0< \delta\leq 1,\label{eq:VScompare2} \\
N(\epsilon,\delta,\Omega)&\leq N(\epsilon,\delta,\tilde{\Omega}),\quad 0<\epsilon,\delta<1. \label{eq:VScompare3}
\end{align}
\end{subequations}
\end{lemma}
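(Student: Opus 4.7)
The plan is to reduce all three claims to the first inequality $\zeta(N,\delta,\Omega)\ge\zeta(N,\delta,\tilde\Omega)$, which I would establish by combining the linear-programming characterization \eqref{eq:zetaLP} with the convexity of $\zeta(N,\cdot,\tilde\Omega)$ furnished by \lref{lem:zetaEtaMonoCon}. I would split the range of $\delta$ at $\delta_\rmc(\Omega)$, the threshold attached to $\Omega$ via \lref{lem:etaf0}, and treat the two regimes separately.

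For $\delta_\rmc(\Omega)\le\delta\le 1$, \eqref{eq:zetaLP} supplies a probability distribution $\{c_\bfk^*\}_{\bfk\in\scrS_N}$ with $\sum_\bfk c_\bfk^*\eta_\bfk(\bm\lambda)=\delta$ and $\zeta(N,\delta,\Omega)=\sum_\bfk c_\bfk^*\zeta_\bfk(\bm\lambda)$. Applying the hypothesis $\zeta_\bfk(\bm\lambda)\ge\zeta(N,\eta_\bfk(\bm\lambda),\tilde\Omega)$ termwise and then Jensen's inequality for the convex function $\delta\mapsto\zeta(N,\delta,\tilde\Omega)$ yields
\[
\zeta(N,\delta,\Omega)\ge\sum_{\bfk}c_\bfk^*\,\zeta(N,\eta_\bfk(\bm\lambda),\tilde\Omega)\ge\zeta\!\left(N,\sum_{\bfk}c_\bfk^*\eta_\bfk(\bm\lambda),\tilde\Omega\right)=\zeta(N,\delta,\tilde\Omega),
\]
which is \eqref{eq:VScompare1} in this range. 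For $0\le\delta\le\delta_\rmc(\Omega)$ the left side is zero by \eqref{eq:zetazero}, so I must show $\zeta(N,\delta,\tilde\Omega)=0$ as well. By inspecting the two branches of \eqref{eq:deltarmc} together with \eref{eq:etazeta}, one always finds some $\bfk\in\scrS_N$ with $\zeta_\bfk(\bm\lambda)=0$ and $\eta_\bfk(\bm\lambda)=\delta_\rmc(\Omega)$; namely $\bfk=(0,N+1,0,\ldots,0)$ when $\delta_\rmc(\Omega)=\beta^N$, and $\bfk=(1,0,\ldots,0,N)$ in the residual case $\tau=0$, $\delta_\rmc(\Omega)=1/(N+1)$. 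The hypothesis then forces $\zeta(N,\delta_\rmc(\Omega),\tilde\Omega)\le 0$, so by \eqref{eq:zetanonzero} we get $\delta_\rmc(\tilde\Omega)\ge\delta_\rmc(\Omega)$, whence $\zeta(N,\delta,\tilde\Omega)=0$ throughout $[0,\delta_\rmc(\Omega)]$.

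The remaining two inequalities follow painlessly. Dividing \eqref{eq:VScompare1} by $\delta>0$ and invoking \eqref{eq:Fzeta} delivers \eqref{eq:VScompare2}. For \eqref{eq:VScompare3} I would use the characterization \eqref{eq:MinNumTestDefzeta} from \lref{lem:MinNumTestDef}, which expresses $N(\epsilon,\delta,\cdot)$ as the smallest $N$ satisfying $\zeta(N,\delta,\cdot)\ge\delta(1-\epsilon)$; because \eqref{eq:VScompare1} holds for every $N$, whenever $\tilde\Omega$ meets this threshold $\Omega$ does too, so $N(\epsilon,\delta,\Omega)\le N(\epsilon,\delta,\tilde\Omega)$.

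The main obstacle I anticipate is not the Jensen step (which is essentially the geometric statement that the lower boundary of $R_{N,\Omega}$ lies above that of $R_{N,\tilde\Omega}$), but the careful handling of the boundary regime $\delta\le\delta_\rmc(\Omega)$: one must verify that the hypothesis genuinely transfers the $\delta_\rmc$ thresholds in the correct direction across both cases of \eqref{eq:deltarmc}. Once that is done, the rest of the argument is routine.
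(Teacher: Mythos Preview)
Your approach is essentially the same as the paper's---reduce everything to \eqref{eq:VScompare1}, then apply the hypothesis termwise followed by convexity of $\zeta(N,\cdot,\tilde\Omega)$---but your organization introduces an unnecessary case split and a small error.

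First, the error: in the residual case $\tau=0$, $\delta_\rmc(\Omega)=1/(N+1)$, the vector $\bfk=(1,0,\ldots,0,N)$ does \emph{not} satisfy $\eta_\bfk(\bm\lambda)=1/(N+1)$ for $N\ge 2$; in fact $\eta_\bfk(\bm\lambda)=0$ there. You have the entries swapped: the correct choice (as in the proof of \lref{lem:etaf0}) is $\bfk=(N,0,\ldots,0,1)$, which gives $\zeta_\bfk=0$ and $\eta_\bfk=1/(N+1)$ for all $N\ge1$.

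Second, the paper avoids your case split at $\delta_\rmc(\Omega)$ entirely by working from the original definition \eqref{eq:Minfdelta} with the constraint $p_\rho\ge\delta$ (rather than the equality-constraint LP \eqref{eq:zetaLP}). This yields
\[
\zeta(N,\delta,\Omega)=\min_{\{c_\bfk\}}\Bigl\{\textstyle\sum_\bfk c_\bfk\zeta_\bfk(\bm\lambda)\ \Big|\ \sum_\bfk c_\bfk\eta_\bfk(\bm\lambda)\ge\delta\Bigr\}
\]
uniformly for all $0\le\delta\le1$, and after the hypothesis and Jensen steps one lands on $\min_{\delta'\ge\delta}\zeta(N,\delta',\tilde\Omega)=\zeta(N,\delta,\tilde\Omega)$ by monotonicity (\lref{lem:zetaEtaMonoCon}). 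This handles the low-$\delta$ regime automatically, so no threshold comparison is needed. With the $\bfk$ fix your argument is correct, but the paper's route is cleaner.
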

\Lref{lem:VScompare} is applicable in particular when the set of distinct eigenvalues of $\Omega$ is contained in that of $\tilde{\Omega}$, that is, $\sigma(\Omega)\subset \sigma(\tilde{\Omega})$, assuming $\beta(\Omega)<1$; cf.~\eref{eq:StrategyOrder}.

\section{\label{sec:Homo}Homogeneous strategies}
A strategy (or verification operator) $\Omega$ for $|\Psi\>$ is \emph{homogeneous} if it has the form 
\begin{equation}\label{eq:HomoStrategy}
\Omega=|\Psi\>\<\Psi|+\lambda(\id-|\Psi\>\<\Psi|),
\end{equation}
where $0\leq \lambda<1$. In this case, all eigenvalues of $\Omega$ are equal to $\lambda$ except for the largest one, so  we have $\beta=\tau=\lambda$ and $\nu=1-\lambda$.  Incidentally, the homogeneous strategy $\Omega$ can always be realized by performing the test $P=|\Psi\>\<\Psi|$ with probability $1-\lambda$ and the trivial test with probability $\lambda$. By "trivial test" we mean the test  operator is equal to the identity operator. For bipartite pure states \cite{HayaMT06,Haya09G,PallLM18,ZhuH19O,LiHZ19}
and stabilizer states \cite{PallLM18}, the homogeneous strategy can also be realized by virtue of local projective measurements when $\lambda$ is sufficiently large; see \sref{sec:app}.

In the nonadversarial scenario, a smaller $\lambda$ achieves a better performance among homogeneous strategies. 
Here, we clarify what $\lambda$
is optimal  in the adversarial scenario, which turns out to be very different from the nonadversarial scenario.

Given that the homogeneous strategy $\Omega$ in \eref{eq:HomoStrategy} is determined by the parameter $\lambda$, it is more informative to express the figures of merit defined in \esref{eq:Ffdelta} and  \eqref{eq:MinNumTestDef}  as follows,
\begin{subequations}
	\begin{align}
	F(N,\delta,\lambda)&:=F(N,\delta,\Omega),\\
	\caF(N,f,\lambda)&:=\caF(N,f,\Omega),\\
	\zeta(N,\delta,\lambda)&:=\zeta(N,\delta,\Omega),\\
	\eta(N,f,\lambda)&:=\eta(N,f,\Omega),\\
	N(\epsilon,\delta,\lambda)&:=N(\epsilon,\delta,\Omega).
	\end{align}
\end{subequations}
Then \lref{lem:etaf0} implies that
\begin{equation}
\eta(N,0,\lambda)=\delta_\rmc=\begin{cases}
\lambda^N, & \lambda>0,\\
1/(N+1), & \lambda=0.
\end{cases}
\end{equation}
Suppose $\tilde{\Omega}$ is an arbitrary verification operator with  eigenvalues $1=\tilde{\lambda}_1>\tilde{\lambda}_2\geq\cdots \geq \tilde{\lambda}_D\geq 0$. Then we have $F(N,\delta,\tilde{\lambda}_j)\geq F(N,\delta,\tilde{\Omega})$ for $2\leq j\leq D$ according to \eref{eq:StrategyOrder}. Therefore, the  optimal performance  can always be achieved by a homogeneous strategy if there is no restriction on the accessible measurements. This observation reveals the importance of homogeneous strategies to QSV in the adversarial scenario.

In preparation for the following discussions, we need to introduce a few more notations.
 Denote by $\bbZ$ and $\nni$ the set of integers and  the set of nonnegative integers, respectively. 
For $k\in\nni$, define
\begin{equation}\label{eq:etazetaHomo}
\begin{aligned}
\eta_k(\lambda)&:=\frac{(N+1-k)\lambda^k+k\lambda^{k-1} }{N+1},\\
\zeta_k(\lambda)&:=\frac{(N+1-k)\lambda^k}{N+1}.
\end{aligned}
\end{equation}
We take the convention that $\lambda^0=\eta_0(\lambda)=\zeta_0(\lambda)=1$
even if $\lambda=0$.
Note that 
\begin{equation}\label{eq:etazetaHomo2}
\eta_k(\lambda)=\eta_\bfk(\bm{\lambda}),\quad 
\zeta_k(\lambda)=\zeta_\bfk(\bm{\lambda})
\end{equation}
when $k\in \{0,1,\ldots, N+1\}$, where $\bfk=(N+1-k,k)$, $\bm{\lambda}=(1,\lambda)$, and $\eta_\bfk(\bm{\lambda})$, $\zeta_\bfk(\bm{\lambda})$ are defined in \eref{eq:etazeta}. The extension of the definitions of $\eta_k(\lambda)$ and $
\zeta_k(\lambda)$ over $k$ to the set $\nni$ will be useful in proving several important  results on homogeneous strategies.

\subsection{\label{sec:SHomo}Singular homogeneous strategy}
When  $\lambda=0$,  the verification operator $\Omega=|\Psi\>\<\Psi|$ is singular (has a zero eigenvalue), and \eref{eq:etazetaHomo} reduces to 
\begin{equation}
\eta_k(\lambda)=\begin{cases}
1 & k=0,\\
(N+1)^{-1}& k=1,\\
0 &k\geq 2.
\end{cases}\quad 
\zeta_k(\lambda)=\begin{cases}
1 & k=0,\\
0 &k\geq 1.
\end{cases}
\end{equation}
By \lref{lem:equiTwoDef}, we have $F(N,\delta,\lambda=0)=\zeta(N,\delta,\lambda=0)/\delta$ for $0<\delta\leq1$, where 
\begin{align}
&\zeta(N,\delta,\lambda=0)=\max\biggl\{0, \frac{(N+1)\delta-1}{N}\biggr\}\nonumber\\
&=\begin{cases}
0,& 0\leq \delta\leq (N+1)^{-1},\\
\frac{(N+1)\delta-1}{N}, & (N+1)^{-1}\leq \delta\leq 1.
\end{cases}
\label{eq:FidelityHomo0}
\end{align}
Given $0<\epsilon, \delta<1$, the minimum number  of tests required to verify the  pure state $|\Psi\>$ within infidelity $\epsilon$ and significance level $\delta$ reads
\begin{equation}\label{eq:NumTestSingHomo}
N(\epsilon,\delta,\lambda=0)=\biggl \lceil\frac{1-\delta}{\epsilon\delta}\biggr\rceil.
\end{equation}
Here the scaling with $1/\delta$ is not satisfactory although the strategy is optimal in the nonadversarial scenario according to \esref{eq:NumTest} and \eqref{eq:NumTestAF}. Fortunately, nonsingular homogeneous strategies can achieve a better scaling behavior, as we shall see shortly.

\subsection{\label{sec:NSHomo}Nonsingular homogeneous strategies}
\subsubsection{Verification precision}
Here we assume $0<\lambda<1$, so the homogeneous strategy defined in \eref{eq:HomoStrategy} is nonsingular (which means the verification operator is positive definite). In this case, $\eta_k(\lambda)$  decreases strictly monotonically with $k$ and $\eta_k(\lambda)>0$ for $k\in \nni$; by contrast,
$\zeta_k(\lambda)$ decreases strictly monotonically with $k$ and $\zeta_k(\lambda)\geq 0$ for $k\in \{0,1,\ldots, N+1\}$, while $\zeta_k(\lambda)< 0$ for $k>N+1$. 
Define
\begin{align}
c_{k}(\delta,\lambda):=&\frac{\delta-\eta_{k+1}(\lambda)}{\eta_k(\lambda)-\eta_{k+1}(\lambda)},\label{eq:ck}\\
\zeta(N,\delta,\lambda,k):= &c_{k}(\delta,\lambda)\zeta_k(\lambda)+[1-c_{k}(\delta,\lambda)]\zeta_{k+1}(\lambda)\nonumber\\
=&	
\frac{\lambda\{\delta[1+(N-k)\nu]-\lambda^k\}}{\nu(k\nu+N\lambda)},
 \label{eq:zetadel}
\end{align}
where $\nu=1-\lambda$. The main properties of $\zeta(N,\delta,\lambda,k)$ are summarized in \lsref{lem:zetadelN} and \ref{lem:zetadel} in \aref{sec:HomoApp}. The following theorem determines the fidelity that can be achieved by a given number of tests for a given significance level; see \aref{sec:HomoProof} for a proof.
\begin{theorem}\label{thm:FidelityHomo}
	Suppose $0<\lambda <1$ and $0<\delta \leq 1$.
	Then we have $F(N,\delta,\lambda)=\zeta(N,\delta,\lambda)/\delta$ with
	\begin{equation}\label{eq:FidelityHomo}
	\zeta(N,\delta,\lambda)=\begin{cases}
	0, &\delta\leq \lambda^N,\\
\zeta(N,\delta,\lambda,k_*), &\delta> \lambda^N,
	\end{cases}
	\end{equation}
	where $k_*$ is the largest integer $k$ that satisfies $\eta_k(\lambda)\geq \delta$, that is,  $(N+1-k)\lambda^k+k\lambda^{k-1}\geq (N+1)\delta$. 
\end{theorem}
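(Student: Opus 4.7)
The plan is to exploit the geometric picture of \sref{sec:CompMFM}: since a homogeneous $\Omega$ has exactly two distinct eigenvalues, $1$ and $\lambda$, the relevant sequences in $\scrS_N$ reduce to $\bfk=(N+1-k,k)$ with $k\in\{0,1,\ldots,N+1\}$, and by \eref{eq:etazetaHomo2} the region $R_{N,\Omega}$ is the convex hull of the $N+2$ planar points $(\eta_k(\lambda),\zeta_k(\lambda))$. By \lref{lem:equiTwoDef}, computing $\zeta(N,\delta,\lambda)$ amounts to reading off the lower boundary of this polygon at abscissa $p=\delta$, and the identity $F(N,\delta,\lambda)=\zeta(N,\delta,\lambda)/\delta$ then follows from \eref{eq:Fzeta}. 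The first branch of \eref{eq:FidelityHomo} is immediate: \lref{lem:etaf0} with $\tau=\lambda>0$ yields $\delta_\rmc=\lambda^N$, and \eref{eq:zetazero} then gives $\zeta(N,\delta,\lambda)=0$ for $0<\delta\le\lambda^N$.

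The substance of the theorem lies in the range $\delta>\lambda^N=\eta_{N+1}(\lambda)$, and the crucial step---which I expect to be the main technical obstacle---is to show that the vertices $(\eta_k,\zeta_k)$ all lie on the lower boundary of $R_{N,\Omega}$ and appear in the natural order as $k$ decreases from $N+1$ to $0$. Monotonicity in $p$ follows from the strict decrease of $\eta_k$ in $k$ for $0<\lambda<1$; what remains is convexity from below of the piecewise-linear interpolant, equivalently $s_0\ge s_1\ge\cdots\ge s_N$ with $s_k:=(\zeta_k-\zeta_{k+1})/(\eta_k-\eta_{k+1})$. I would establish these inequalities by first deriving the closed-form differences
\[
(N+1)(\eta_k-\eta_{k+1})=\lambda^{k-1}\nu(k\nu+N\lambda),\quad (N+1)(\zeta_k-\zeta_{k+1})=\lambda^k[1+(N-k)\nu],
\]
which come out of \eref{eq:etazetaHomo} after pulling the factor $1-\lambda=\nu$ out of the polynomial $k+(N-2k)\lambda-(N-k)\lambda^2$. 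Substituting these into the cross-product version of $s_k\ge s_{k+1}$ and cancelling the common $\lambda^{2k}\nu$ should collapse the inequality into a strictly positive quantity of order $\nu(N+1)$, confirming strict convexity.

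Once convexity of the lower boundary is established, the remainder is routine. Given $\delta>\lambda^N$, the largest integer $k_*$ with $\eta_{k_*}(\lambda)\ge\delta$ is well defined and lies in $\{0,1,\ldots,N\}$, with $\eta_{k_*+1}<\delta\le\eta_{k_*}$. Hence the LP in \eref{eq:zetaLP} is minimized by a distribution supported on the two vertices indexed by $k_*$ and $k_*+1$, with weights $c_{k_*}(\delta,\lambda)$ and $1-c_{k_*}(\delta,\lambda)$ forced by the constraint $\sum c_k\eta_k(\lambda)=\delta$. This yields $\zeta(N,\delta,\lambda)=\zeta(N,\delta,\lambda,k_*)$, and substituting the closed-form differences back into \eref{eq:ck} produces the explicit expression of \eref{eq:zetadel}, completing the argument.
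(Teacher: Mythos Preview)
Your proposal is correct and follows essentially the same route as the paper: reduce via \lref{lem:equiTwoDef} and \lref{lem:etaf0} to the range $\delta>\lambda^N$, then show that the slopes $s_k=g_k(\lambda)$ are strictly decreasing in $k$ so that the LP minimum in \eref{eq:zetaLP} is attained on the edge between the adjacent vertices $k_*$ and $k_*+1$. The paper packages the slope monotonicity as \lref{lem:zetaEtaSlope} (proving $g_k-g_{k+1}=(N+1)\lambda/[(N\lambda+(k+1)\nu)(N\lambda+k\nu)]>0$), whereas you compute the same quantity via the cross-product form, obtaining $\nu(N+1)>0$ after cancelling $\lambda^{2k}\nu$---this is the identical calculation written differently.
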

The choice of the parameter $k_*$ in \thref{thm:FidelityHomo} guarantees that $0<c_{k_*}(\delta,\lambda)\leq 1$. 
Define 
\begin{align}
k_+:=&\lceil\log_\lambda\delta\rceil, \quad k_-:=\lfloor\log_\lambda\delta\rfloor.\label{eq:kpm} 
\end{align}
If  $\lambda^N<\delta\leq 1$, then $0\leq k_+\leq N$ and $0\leq k_-\leq N-1$. Meanwhile, we have   $\eta_{k-}(\lambda)\geq \delta$ and $\eta_{k_++1}(\lambda)<\delta$ by \eref{eq:etazetaHomo}, so $k_*$  is equal to either $k_+$ or $k_-$. In addition,
when $k\in \{0,1,\ldots, N\}$,  \thref{thm:FidelityHomo} implies that
 \begin{equation}\label{eq:FidelityHomoInt}
 F(N,\delta=\lambda^k,\lambda)=\frac{(N-k)\lambda}{k+(N-k)\lambda},
 \end{equation}
 which decreases monotonically with $k$. In particular we have $F(N,\delta=1,\lambda)=1$ as expected; cf.~\eref{eq:Fidsig1}. 
 When $\delta=\eta_k(\lambda)$ with $k\in \{0,1,\ldots, N+1\}$, we have
 \begin{equation}\label{eq:FidelityHomoExP}
 F(N,\delta=\eta_k(\lambda),\lambda)=\frac{\zeta_{k}(\lambda)}{\eta_{k}(\lambda)}=\frac{(N+1-k)\lambda}{k+(N+1-k)\lambda},
 \end{equation}
 which also decreases monotonically with $k$. The dependences of $\zeta(N,\delta,\lambda)$ and $F(N,\delta,\lambda)$ on $\delta$ and $\lambda$ are illustrated in  \fref{fig:FidHomo}.

\begin{corollary}\label{cor:FidHomo}
Suppose $0<\lambda <1$ and  $0<\delta \leq 1$. Then 
	\begin{align}
		\zeta(N,\delta,\lambda)&=\max\Bigl\{0,\max_{k\in \nni}\, \zeta(N,\delta,\lambda,k)\Bigr\} \label{eq:FidelityHomo2}\\
		&=\max\bigl\{0, \zeta(N,\delta,\lambda,k_+), \zeta(N,\delta,\lambda,k_-)\} \label{eq:FidelityHomo3}\\
&=\max\Bigl\{0,\max_{k\in\{0,1,\ldots,N\}}\, \zeta(N,\delta,\lambda,k)\Bigr\}. \label{eq:FidelityHomo4}
	\end{align}
\end{corollary}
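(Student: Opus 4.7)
My plan is to read the three identities off the convex geometry of $R_{N,\Omega}$ developed in \sref{sec:CompMFM}. For a homogeneous strategy with parameter $\lambda$, the region $R_{N,\Omega}$ is the convex hull of the $N+2$ points $P_k:=(\eta_k(\lambda),\zeta_k(\lambda))$ for $k=0,1,\ldots,N+1$, and comparing \eref{eq:ck} with \eref{eq:zetadel} shows that $\zeta(N,\delta,\lambda,k)$ is the ordinate at abscissa $\delta$ of the affine line through $P_k$ and $P_{k+1}$. With this interpretation, \thref{thm:FidelityHomo} just asserts that $\zeta(N,\delta,\lambda)$ is the $f$-coordinate of the lower boundary of $R_{N,\Omega}$ at $p=\delta$ whenever $\delta>\lambda^N$, and is zero otherwise.

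I would first prove \eref{eq:FidelityHomo4}. The lower boundary of $R_{N,\Omega}$, viewed as a function $g$ of $p\in[\lambda^N,1]$, is piecewise linear and convex, its linear pieces being precisely the chords $P_kP_{k+1}$ for $k\in\{0,\ldots,N\}$. So for $\delta\in[\lambda^N,1]$ and $k_*$ as in \thref{thm:FidelityHomo}, the chord $P_{k_*}P_{k_*+1}$ realizes the boundary, giving $\zeta(N,\delta,\lambda,k_*)=\zeta(N,\delta,\lambda)$; for any other $k\in\{0,\ldots,N\}$, the chord $P_kP_{k+1}$ is a secant of the convex function $g$ that, when extrapolated to $\delta$ outside its own $p$-interval, lies no higher than $g(\delta)$. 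For $\delta<\lambda^N$, the same convexity, together with the non-negativity of the chord slopes and the boundary anchor $P_{N+1}=(\lambda^N,0)$, forces $\zeta(N,\delta,\lambda,k)\leq 0$ for all $k\in\{0,\ldots,N\}$: each chord extrapolated to $\lambda^N$ is at most $g(\lambda^N)=0$, and then further leftward with non-negative slope only decreases the value. Taking the outer maximum with $0$ reproduces $\zeta(N,\delta,\lambda)$ in both cases.

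Equation \eref{eq:FidelityHomo3} is then immediate: the paragraph preceding the corollary records that $k_*\in\{k_+,k_-\}$ when $\delta>\lambda^N$, so one of $\zeta(N,\delta,\lambda,k_\pm)$ already saturates the bound just proved; for $\delta\leq\lambda^N$ a direct substitution into \eref{eq:zetadel} yields $\zeta(N,\delta,\lambda,k_\pm)\leq 0$. The main technical obstacle is \eref{eq:FidelityHomo2}, whose maximum ranges over all of $\nni$: for $k\geq N+1$ the formal points $P_k$ lie outside $R_{N,\Omega}$ and the slope sequence need not remain monotone, so the purely geometric argument no longer controls $\zeta(N,\delta,\lambda,k)$. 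I plan to close this remaining gap by invoking the monotonicity and sign-change properties of $\zeta(N,\delta,\lambda,k)$ in $k$ collected in \lsref{lem:zetadelN} and \ref{lem:zetadel} in \aref{sec:HomoApp}, which will imply that the supremum over $\nni$ coincides with the maximum over $\{0,\ldots,N\}$ already handled in \eref{eq:FidelityHomo4}.
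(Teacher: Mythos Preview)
Your argument is correct, but it takes a longer route than the paper's. The paper's entire proof is one sentence: ``\Crref{cor:FidHomo} follows from \thref{thm:FidelityHomo} above and \lref{lem:zetadel} in \aref{sec:HomoApp}.'' Indeed, \lref{lem:zetadel} already packages all three identities: \eref{eq:zetadelMax} gives $\max_{k\in\nni}\zeta(N,\delta,\lambda,k)=\zeta(N,\delta,\lambda,k_*)$, \eref{eq:zetadelMax2} reduces this further to $k\in\{0,\ldots,N\}$ and to $\{k_+,k_-\}$, and \esref{eq:zetadelNeg}--\eqref{eq:zetadelpos} supply the sign information needed to splice in the outer $\max\{0,\cdot\}$ via \thref{thm:FidelityHomo}. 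The proof of \lref{lem:zetadel} is purely algebraic: it computes the difference $\zeta(N,\delta,\lambda,k)-\zeta(N,\delta,\lambda,k-1)$ directly and locates the maximizer from its sign.

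Your geometric argument for \eref{eq:FidelityHomo4}---reading $\zeta(N,\delta,\lambda,k)$ as the height of the secant $P_kP_{k+1}$ at abscissa $\delta$, then using that secants of a convex function lie below the graph when extrapolated---is correct and pleasant, and it gives a conceptual reason why the lower envelope is recovered by a maximum. But since you then invoke \lref{lem:zetadel} anyway to handle \eref{eq:FidelityHomo2} (the range $k>N$ where the geometry no longer helps), the geometric detour becomes logically redundant: once \lref{lem:zetadel} is on the table, \eref{eq:FidelityHomo3} and \eref{eq:FidelityHomo4} fall out of its \eref{eq:zetadelMax2} for free. A minor quibble: your claim that ``direct substitution'' shows $\zeta(N,\delta,\lambda,k_\pm)\le 0$ for $\delta\le\lambda^N$ is not quite immediate---in that regime $k_\pm$ can exceed $N$, and the sign check needs the inequality $\lambda^N[1+(N-k)\nu]\le\lambda^k$ (this is \lref{lem:alambda} in the appendix), exactly as in the paper's proof of \eref{eq:zetadelNeg}.
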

\Crref{cor:FidHomo} follows from \thref{thm:FidelityHomo} above and \lref{lem:zetadel} in \aref{sec:HomoApp}. \Eref{eq:FidelityHomo2} provides a family of lower bounds for $\zeta(N,\delta,\lambda)$, namely,
\begin{equation}\label{eq:zetaDelLB}
\zeta(N,\delta,\lambda)\geq \zeta(N,\delta,\lambda,k)\quad \forall k\in \nni. 
\end{equation}

\begin{figure}
	\includegraphics[width=8cm]{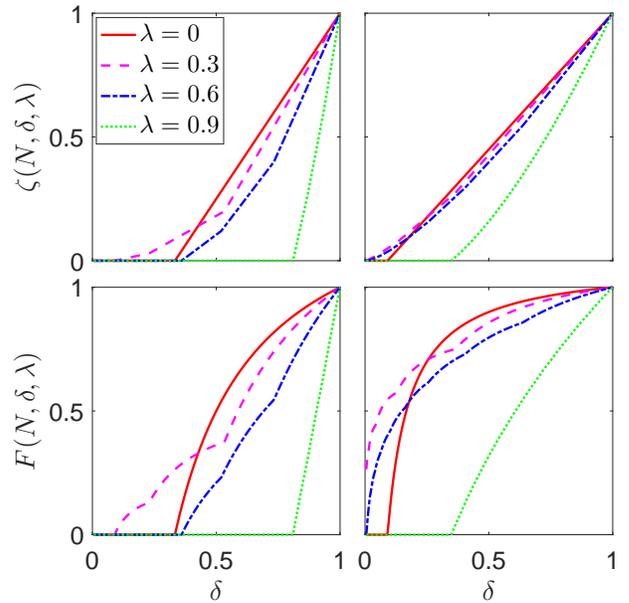}
	\caption{\label{fig:FidHomo}(color online) Variations of $\zeta(N,\delta,\lambda)$ and $F(N,\delta,\lambda)$ with $\delta$ and $\lambda$ for $N=2$ (left plots) and $N=10$ (right plots).}
\end{figure}

\begin{corollary}\label{cor:FidMono}
	Suppose  $0\leq \lambda<1$. Then  $F(N,\delta,\lambda)$ is nondecreasing in $\delta$ for $0<\delta\leq 1$ and in $N$ for $N\geq1$. 
\end{corollary}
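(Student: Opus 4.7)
The plan is to deduce both monotonicity claims as direct specialisations of structural results already established for arbitrary verification operators, since a homogeneous strategy with parameter $\lambda\in[0,1)$ is nothing but the particular verification operator $\Omega_\lambda=|\Psi\>\<\Psi|+\lambda(\id-|\Psi\>\<\Psi|)$, and the figures of merit $F(N,\delta,\lambda)$, $\zeta(N,\delta,\lambda)$ coincide with $F(N,\delta,\Omega_\lambda)$, $\zeta(N,\delta,\Omega_\lambda)$.

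For the monotonicity in $\delta$, the cleanest route is to invoke item~3 of \lref{lem:zetaEtaMonoCon} directly: $F(N,\delta,\Omega)$ is nondecreasing in $\delta$ on $(0,1]$ for every verification operator $\Omega$. Specialising $\Omega=\Omega_\lambda$ gives the first claim. As a self-contained alternative, item~1 of the same lemma says $\zeta(N,\cdot,\lambda)$ is convex in $\delta$, and the sentence immediately after \eref{eq:zetanonzero} records that $\zeta(N,0,\lambda)=0$. Any convex function that vanishes at the origin has a nondecreasing chord slope through $(0,0)$, so $\zeta(N,\delta,\lambda)/\delta$ is nondecreasing in $\delta$; by \eref{eq:Fzeta} this ratio equals $F(N,\delta,\lambda)$.

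For monotonicity in $N$, I would invoke equation~\eqref{eq:FMonN} of \lref{lem:ZetaEtaMonN}: for every verification operator $\Omega$ one has $F(N,\delta,\Omega)\geq F(N-1,\delta,\Omega)$ whenever $N\geq2$ and $0<\delta\leq1$. Applying this with $\Omega=\Omega_\lambda$ and iterating on $N$ closes the argument.

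There is no substantive obstacle; the corollary essentially tags two previously established general facts. The only small sanity check is that both lemmas remain valid at the boundary $\lambda=0$, where $\Omega_0=|\Psi\>\<\Psi|$ is singular. This is harmless because neither lemma imposes positivity of $\Omega$, only $\beta(\Omega)<1$, which manifestly holds since $\beta(\Omega_0)=0$.
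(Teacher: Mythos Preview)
Your proposal is correct and matches one of the two routes the paper itself indicates: after giving a direct argument via \thref{thm:FidelityHomo} and \lref{lem:zetadelN} (with the case $\lambda=0$ handled separately by \eref{eq:FidelityHomo0}), the paper explicitly notes that the corollary is ``alternatively\ldots an implication of \lsref{lem:zetaEtaMonoCon} and \ref{lem:ZetaEtaMonN},'' which is precisely your argument. Your additional remark that neither lemma requires $\Omega$ to be nonsingular, so the boundary case $\lambda=0$ needs no special treatment in this route, is a nice observation.
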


\begin{corollary}\label{cor:FidHomoBounds}
	Suppose $0<\lambda <1$ and $\lambda^N\leq \delta \leq 1$. Then 
\begin{equation}\label{eq:FidelityHomoBounds}
\!\frac{(N-k_+)\lambda}{k_++(N-k_+)\lambda}\leq F(N,\delta,\lambda)\leq\frac{(N-k_-)\lambda}{k_-+(N-k_-)\lambda}.
\end{equation}
\end{corollary}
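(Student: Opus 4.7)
The proof is essentially a one-step consequence of two already-established facts: the explicit formula \eqref{eq:FidelityHomoInt} for $F(N,\delta,\lambda)$ when $\delta$ is a power of $\lambda$, together with the monotonicity of $F(N,\delta,\lambda)$ in $\delta$ given by \crref{cor:FidMono}. My plan is to sandwich a generic $\delta$ between two consecutive powers $\lambda^{k_+}$ and $\lambda^{k_-}$, and then transport the inequalities through $F$ using monotonicity.

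First I would verify that under the hypothesis $\lambda^N\leq\delta\leq 1$ the indices $k_\pm$ defined in \eref{eq:kpm} lie in the range $\{0,1,\ldots,N\}$ where formula \eqref{eq:FidelityHomoInt} applies. Since $0<\lambda<1$, the function $x\mapsto\lambda^x$ is strictly decreasing, so $\log_\lambda$ is decreasing; the bounds $\lambda^N\leq\delta\leq 1=\lambda^0$ translate into $0\leq\log_\lambda\delta\leq N$, whence $k_-=\lfloor\log_\lambda\delta\rfloor$ and $k_+=\lceil\log_\lambda\delta\rceil$ are both in $\{0,1,\ldots,N\}$.

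Next, directly from the definitions of $k_\pm$ and the decreasing nature of $\lambda^x$, I would note that
\begin{equation*}
\lambda^{k_+}\leq\delta\leq\lambda^{k_-}.
\end{equation*}
Applying \crref{cor:FidMono}, which asserts that $F(N,\delta,\lambda)$ is nondecreasing in $\delta$ for $0<\delta\leq 1$, yields
\begin{equation*}
F(N,\lambda^{k_+},\lambda)\leq F(N,\delta,\lambda)\leq F(N,\lambda^{k_-},\lambda).
\end{equation*}

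Finally, I would substitute the explicit evaluation \eqref{eq:FidelityHomoInt}, which gives $F(N,\lambda^k,\lambda)=(N-k)\lambda/[k+(N-k)\lambda]$ for each $k\in\{0,1,\ldots,N\}$, to obtain exactly the claimed bounds. There is no real obstacle: the only subtlety worth mentioning is the edge case $\delta=\lambda^N$, where $k_+=k_-=N$ and both bounds collapse to $F(N,\lambda^N,\lambda)=0$, consistent with the boundary of the "$\zeta=0$" region in \thref{thm:FidelityHomo}; and the case $\delta=1$, where $k_+=k_-=0$ and both bounds give $1$, consistent with \eref{eq:Fidsig1}.
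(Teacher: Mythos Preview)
Your proof is correct and follows exactly the same route as the paper, which states that \crref{cor:FidHomoBounds} is an immediate consequence of \crref{cor:FidMono} and \eref{eq:FidelityHomoInt} given that $\lambda^{k_+}\leq \delta\leq \lambda^{k_-}$. Your additional verification that $k_\pm\in\{0,1,\ldots,N\}$ under the hypothesis $\lambda^N\leq\delta\leq 1$ simply spells out a detail the paper leaves implicit.
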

When $\lambda=0$, \crref{cor:FidMono} follows from \eref{eq:FidelityHomo0}.  When $0<\lambda<1$, \crref{cor:FidMono} follows from \thref{thm:FidelityHomo} (cf. 
\crref{cor:FidHomo} above and \lref{lem:zetadelN} in the appendix); alternatively, it is an implication of \lsref{lem:zetaEtaMonoCon} and \ref{lem:ZetaEtaMonN}. \Crref{cor:FidHomoBounds} is an immediate consequence of  \crref{cor:FidMono} and \eref{eq:FidelityHomoInt} given that $\lambda^{k_+}\leq \delta\leq \lambda^{k_-}$. 

\begin{figure}
	\includegraphics[width=7cm]{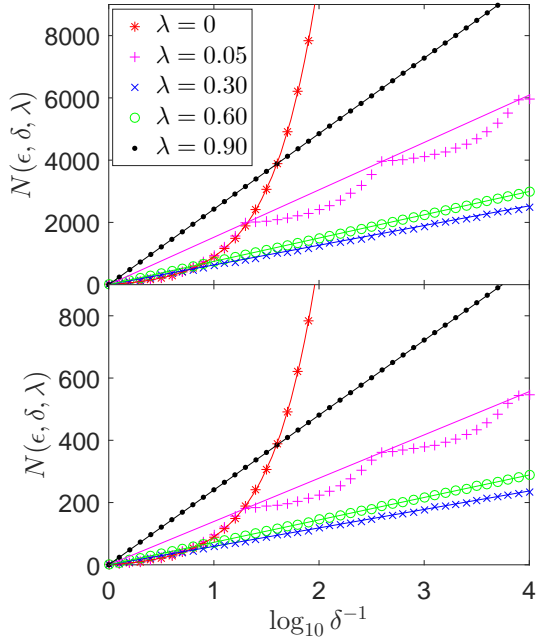}\caption{\label{fig:NumHomodel}(color online)  Minimum  numbers  of tests required to verify a pure state with five different homogeneous strategies. Here $\epsilon=0.01$ in the upper plot and $\epsilon=0.1$ in the lower plot. In each plot, the red curve  represents the approximate formula $(1-\delta)/(\epsilon\delta)$ when $\lambda=0$; cf.~\eref{eq:NumTestSingHomo}. The four lines represent the approximate  formula $(F+\lambda\epsilon)\log_{10}\delta/(\lambda\epsilon\log_{10}\lambda)$; cf.~\eref{eq:limdel}.}
\end{figure}

\subsubsection{Number of required tests}

Now, we are ready to determine the minimum number of tests required to verify the pure state $|\Psi\>$ within infidelity $\epsilon$ and significance level $\delta$ in the adversarial scenario.  \Thsref{thm:NumTestHomo} and \ref{thm:NumTestHomoBounds}  below are proved  in \aref{sec:HomoProof}. The results are illustrated in \fsref{fig:NumHomodel} and \ref{fig:NumHomolam}. 
Define
\begin{align}
\tilde{N}(\epsilon,\delta,\lambda,k)&:=\frac{k\nu^2 \delta F +\lambda^{k+1}+\lambda\delta(k\nu-1)}{\lambda\nu\delta \epsilon}, \label{eq:NumTestHomoC}\\
\tilde{N}_\pm(\epsilon,\delta,\lambda)&:=\tilde{N}(\epsilon,\delta,\lambda,k_\pm), \label{eq:Ntildepm}
\end{align}
where $F=1-\epsilon$,  $\nu=1-\lambda$, and $k_\pm$ are given in \eref{eq:kpm}. The main properties of $\tilde{N}(\epsilon,\delta,\lambda,k)$ are summarized in \lref{lem:NumTestProp} in the appendix; see also \lref{lem:NumTestmBound}. In particular, $\tilde{N}(\epsilon,\delta,\lambda,k)\leq \tilde{N}(\epsilon,\delta,\lambda,k-1)$ iff $\delta \leq \lambda^k/(F+\lambda\epsilon)$, assuming that $0<\epsilon,\delta, \lambda<1$ and $k$ is a positive integer. 
\begin{theorem}\label{thm:NumTestHomo}
	Suppose $0<\epsilon,\delta, \lambda<1$. Then we have
	\begin{align}
	N(\epsilon, \delta,\lambda)&=\Bigl\lceil\min_{k\in \nni}\!\tilde{N}(\epsilon,\delta,\lambda,k)\Bigr\rceil=\bigl\lceil\tilde{N}(\epsilon,\delta,\lambda,k^*)\bigr\rceil \label{eq:NumTestHomo1}\\
&=
\bigl\lceil\min\bigl\{\tilde{N}_+(\epsilon,\delta,\lambda),\tilde{N}_-(\epsilon,\delta,\lambda)\bigr\}\bigr\rceil   \label{eq:NumTestHomo2} \\
&=\begin{cases}
\lceil\tilde{N}_-(\epsilon,\delta,\lambda)\rceil, & \delta \geq \frac{\lambda^{k_+}}{F+\lambda \epsilon},\\
\lceil \tilde{N}_+(\epsilon,\delta,\lambda)\rceil, & \delta \leq  \frac{\lambda^{k_+}}{F+\lambda \epsilon},
\end{cases}\label{eq:NumTestHomo3}	
	\end{align}
where $k^*$ is the largest integer $k$ that satisfies the inequality $\delta\leq \lambda^{k}/(F\nu+\lambda )= \lambda^{k}/(F+\lambda \epsilon)$ and it  is equal to either $k_+$ or $k_-$. 
\end{theorem}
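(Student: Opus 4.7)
The plan is to invert the fidelity bound from \thref{thm:FidelityHomo} via its equivalent reformulation in \crref{cor:FidHomo}. By definition, $N(\epsilon,\delta,\lambda)$ is the smallest $N\geq 1$ for which $F(N,\delta,\lambda)\geq 1-\epsilon$, equivalently $\zeta(N,\delta,\lambda)\geq \delta F$ with $F=1-\epsilon$. Since \crref{cor:FidHomo} expresses $\zeta(N,\delta,\lambda)$ as $\max\{0,\max_{k\in\nni}\zeta(N,\delta,\lambda,k)\}$, the criterion reduces to asking when there exists some $k\in\nni$ satisfying $\zeta(N,\delta,\lambda,k)\geq \delta F$.

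First I would carry out a routine algebraic manipulation of the closed form in \eref{eq:zetadel} to convert the inequality $\zeta(N,\delta,\lambda,k)\geq \delta F$ into the explicit linear condition $N\geq \tilde{N}(\epsilon,\delta,\lambda,k)$. Clearing denominators uses only $\nu>0$, $\lambda>0$, and $k\nu+N\lambda>0$, all of which hold in the stated regime, and the identity $F\nu+\lambda=F+\lambda\epsilon$ cleans up the bookkeeping. This directly yields $N(\epsilon,\delta,\lambda)=\lceil \min_{k\in\nni}\tilde{N}(\epsilon,\delta,\lambda,k)\rceil$, i.e.\ the first identity in \eref{eq:NumTestHomo1}.

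Next, to identify the minimizer $k^*$ of the map $k\mapsto\tilde{N}(\epsilon,\delta,\lambda,k)$, I would invoke the monotonicity criterion recorded in \lref{lem:NumTestProp}, namely $\tilde{N}(\epsilon,\delta,\lambda,k)\leq \tilde{N}(\epsilon,\delta,\lambda,k-1)$ iff $\delta\leq \lambda^k/(F+\lambda\epsilon)$. Because the right-hand side is strictly decreasing in $k$, the sequence $k\mapsto \tilde{N}(\epsilon,\delta,\lambda,k)$ is first non-increasing and then strictly increasing, so its minimum is attained precisely at the largest $k$ for which $\delta\leq\lambda^k/(F+\lambda\epsilon)$, which is the $k^*$ named in the theorem. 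To narrow $k^*$ further, I would use the bounds $\lambda\leq F+\lambda\epsilon\leq 1$ (the upper one from $F+\epsilon=1$, the lower from $F+\lambda\epsilon-\lambda=F\nu\geq 0$). These imply $\log_\lambda(F+\lambda\epsilon)\in[0,1]$, and hence $k^*\in\{k_-,k_-+1\}\subseteq\{k_-,k_+\}$, giving $\min_k\tilde{N}=\min\{\tilde{N}_+,\tilde{N}_-\}$ as in \eref{eq:NumTestHomo2}. The case split in \eref{eq:NumTestHomo3} then records the equivalence $k^*=k_+$ iff $\delta\leq\lambda^{k_+}/(F+\lambda\epsilon)$, with the complementary inequality selecting $k^*=k_-$.

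I expect the \emph{main obstacle} to be the careful handling of boundary cases, most notably when $\log_\lambda\delta$ or $\log_\lambda(\delta(F+\lambda\epsilon))$ happens to be an integer and when the defining inequality $\delta\leq\lambda^{k_+}/(F+\lambda\epsilon)$ is tight. In each such degenerate configuration one must verify that $\tilde{N}_+=\tilde{N}_-$, so that the two branches of \eref{eq:NumTestHomo3} agree and $\min_{k\in\nni}\tilde{N}$ is unambiguous; the explicit difference formula underlying \lref{lem:NumTestProp} is what makes this verification clean, and any hidden subtlety in the proof is most likely to surface here rather than in the main algebra.
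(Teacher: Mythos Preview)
Your proposal is correct and follows essentially the same approach as the paper: reduce $F(N,\delta,\lambda)\geq 1-\epsilon$ to $\max_{k\in\nni}\zeta(N,\delta,\lambda,k)\geq F\delta$ via \crref{cor:FidHomo}, solve the linear-in-$N$ inequality to get $N\geq\tilde{N}(\epsilon,\delta,\lambda,k)$, and then invoke \lref{lem:NumTestProp} for the unimodality of $k\mapsto\tilde{N}$ and the identification of $k^*\in\{k_-,k_+\}$. The paper's proof is terser, packaging the last two steps entirely inside \lref{lem:NumTestProp}, but the logical skeleton is identical; your remark about the boundary case $\log_\lambda\delta\in\bbZ$ (where $k_+=k_-$ and $\tilde{N}_+=\tilde{N}_-$) is the only place your write-up needs a one-line adjustment, since the inclusion $\{k_-,k_-+1\}\subseteq\{k_-,k_+\}$ fails literally there even though the conclusion $k^*=k_-=k_+$ still holds.
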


\begin{corollary}\label{cor:NumTestHomoUB}
Suppose $0<\epsilon,\delta, \lambda<1$. Then
\begin{equation}\label{eq:NumTestHomoUB}
N(\epsilon, \delta,\lambda)\leq \lceil\tilde{N}(\epsilon,\delta,\lambda,k)\rceil \quad \forall k\in \nni,
\end{equation}
where the upper bound for a given $k$ is  saturated when $\lambda^{k+1}/(F+\lambda \epsilon)\leq \delta\leq \lambda^{k}/(F+\lambda \epsilon)$.
\end{corollary}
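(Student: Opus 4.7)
The plan is to read off the corollary from \thref{thm:NumTestHomo} together with the monotonicity criterion for $\tilde{N}(\epsilon,\delta,\lambda,k)$ in $k$ that is stated in the paragraph just before that theorem (itself a consequence of \lref{lem:NumTestProp}). The argument splits into two essentially independent pieces---the upper bound and the saturation claim---and neither should require any new estimate beyond what is already on the table.

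For the upper bound I would simply invoke the equality $N(\epsilon,\delta,\lambda)=\lceil\min_{j\in\nni}\tilde{N}(\epsilon,\delta,\lambda,j)\rceil$ from \eref{eq:NumTestHomo1}. Since the ceiling is nondecreasing and a minimum is dominated by any individual term,
\[
\lceil\min_{j\in\nni}\tilde{N}(\epsilon,\delta,\lambda,j)\rceil\leq \lceil\tilde{N}(\epsilon,\delta,\lambda,k)\rceil
\]
for every $k\in\nni$, which is exactly the claimed inequality. No optimization is done here beyond what \thref{thm:NumTestHomo} already performed.

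For the saturation claim I would use the criterion $\tilde{N}(\epsilon,\delta,\lambda,k)\leq \tilde{N}(\epsilon,\delta,\lambda,k-1)$ iff $\delta\leq \lambda^k/(F+\lambda\epsilon)$. This shows that $j\mapsto\tilde{N}(\epsilon,\delta,\lambda,j)$ is nonincreasing on $\{0,1,\ldots,k^*\}$ and nondecreasing on $\{k^*,k^*+1,\ldots\}$, with $k^*$ as defined in \thref{thm:NumTestHomo}; in particular its minimum over $\nni$ is attained at $j=k^*$. Thus for a fixed $k$ the bound is saturated whenever $k=k^*$, i.e.\ whenever $\lambda^{k+1}/(F+\lambda\epsilon)<\delta\leq \lambda^k/(F+\lambda\epsilon)$. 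To handle the left endpoint $\delta=\lambda^{k+1}/(F+\lambda\epsilon)$ I would apply the equality case of the same criterion, which gives $\tilde{N}(\epsilon,\delta,\lambda,k)=\tilde{N}(\epsilon,\delta,\lambda,k+1)$, so both $k$ and $k^*=k+1$ realize the minimum. Combining these two cases yields saturation throughout the closed interval $\lambda^{k+1}/(F+\lambda\epsilon)\leq \delta\leq \lambda^k/(F+\lambda\epsilon)$.

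The only subtle point is the boundary $\delta=\lambda^{k+1}/(F+\lambda\epsilon)$, and the monotonicity criterion dispatches it cleanly; apart from that, the proof is pure bookkeeping layered on \thref{thm:NumTestHomo}, so I do not anticipate any real obstacle.
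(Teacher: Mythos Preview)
Your proposal is correct and matches the paper's approach exactly: the paper states that \crref{cor:NumTestHomoUB} ``is an easy consequence of \thref{thm:NumTestHomo},'' and your argument spells out precisely that deduction---the upper bound from the minimum formula \eref{eq:NumTestHomo1}, and the saturation claim from the characterization of $k^*$ together with the monotonicity criterion for $\tilde{N}(\epsilon,\delta,\lambda,k)$ in $k$. Your treatment of the left endpoint via the equality case of that criterion is the right way to close the interval.
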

\Crref{cor:NumTestHomoUB} is an easy consequence of \thref{thm:NumTestHomo}. The two cases $k=0,1$  are of special interest,
\begin{align}
N(\epsilon,\delta,\lambda)&\leq \lceil\tilde{N}(\epsilon,\delta,\lambda,0)\rceil=\biggl \lceil\frac{1-\delta}{\nu\epsilon\delta}\biggr\rceil, \label{eq:NumTestHomoLSUB1}\\
N(\epsilon,\delta,\lambda)&\leq \lceil\tilde{N}(\epsilon,\delta,\lambda,1)\rceil=\biggl \lceil\frac{\nu^2 \delta F +\lambda^{2}-\lambda^2\delta}{\lambda\nu\delta \epsilon}\biggr\rceil. \label{eq:NumTestHomoLSUB2}
\end{align}
If $\lambda/(F+\lambda\epsilon)\leq \delta <1$, then \eref{eq:NumTestHomoLSUB1} is saturated, so we have
\begin{equation}\label{eq:NumTestHomoLS}
N(\epsilon,\delta,\lambda)=\biggl \lceil\frac{1-\delta}{\nu\epsilon\delta}\biggr\rceil.
\end{equation}
This result also holds when  $\lambda=0$ (as long as  $0<\epsilon,\delta< 1$) according to \eref{eq:NumTestSingHomo}. 
If $\lambda^2/(F+\lambda\epsilon)\leq \delta\leq \lambda/(F+\lambda\epsilon)$, then \eref{eq:NumTestHomoLSUB2} is saturated, so we have
\begin{equation}\label{eq:NumTestHomoLS2}
N(\epsilon,\delta,\lambda)=\biggl \lceil\frac{\nu^2 \delta F +\lambda^{2}-\lambda^2\delta}{\lambda\nu\delta \epsilon}\biggr\rceil\geq\frac{2\sqrt{(1-\delta)F}}{\epsilon\sqrt{\delta}},
\end{equation}
where the lower bound is proved in \aref{sec:HomoProof}. 
\Esref{eq:NumTestHomoLS} and \eqref{eq:NumTestHomoLS2} indicate that homogeneous strategies with small $\lambda$, say $\lambda\leq 0.1$, are not efficient for high-precision QSV
(say $\epsilon, \delta\leq 0.1$), as reflected in  \fref{fig:NumHomolam}.

\begin{figure}
	\includegraphics[width=7cm]{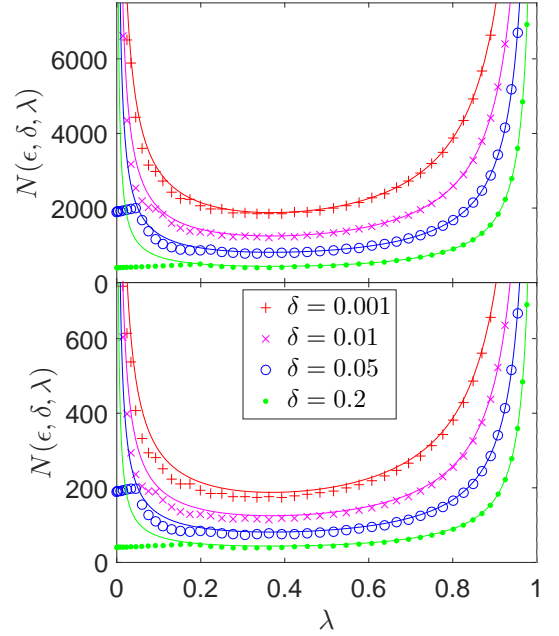}
	\caption{\label{fig:NumHomolam}(color online) Variation of $N(\epsilon,\delta,\lambda)$ with $\lambda$ and $\delta$. Here $\epsilon=0.01$ in the upper plot and $\epsilon=0.1$ in the lower plot. The four curves in each plot represent the  approximate formula $\ln\delta/(\lambda\epsilon\ln\lambda)$; cf.~\esref{eq:limepdel} and \eqref{eq:NumTestHomoApp}.} 	
\end{figure}

The following theorem provides informative bounds for $N(\epsilon,\delta,\lambda)$, which complement the analytical formulas  in \thref{thm:NumTestHomo}. 
\begin{theorem}\label{thm:NumTestHomoBounds} Suppose $0<\epsilon,\delta,\lambda <1$. Then we have
	\begin{gather}
	k_-+\biggl\lceil\frac{k_-F}{\lambda\epsilon}\biggr\rceil\leq N(\epsilon,\delta,\lambda)\leq	k_++\biggl\lceil\frac{k_+F}{\lambda\epsilon}\biggr\rceil,\label{eq:NumberBoundAdvHomo} \\
	N(\epsilon,\delta,\lambda)\leq
	\biggl\lceil \frac{\log_\lambda\delta}{\lambda\epsilon}-\frac{\nu k_-}{\lambda}\biggr\rceil =\biggl\lceil\frac{\ln\delta}{\lambda\epsilon\ln \lambda}-\frac{\nu k_-}{\lambda}\biggr\rceil 
 \label{eq:NumberBoundAdvHomo2}.
	\end{gather}
All three bounds in \esref{eq:NumberBoundAdvHomo} and \eqref{eq:NumberBoundAdvHomo2} are saturated when $\log_\lambda\delta$ is an integer. 
\end{theorem}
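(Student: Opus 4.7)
The plan is to reduce the theorem to \thref{thm:NumTestHomo} and \crref{cor:NumTestHomoUB} by evaluating $\tilde{N}(\epsilon,\delta,\lambda,k)$ at the distinguished points $\delta=\lambda^k$ and then invoking monotonicity and a single convexity estimate. My first move is the direct computation
\begin{equation*}
\tilde{N}(\epsilon,\lambda^k,\lambda,k)=\frac{k(\nu F+\lambda)}{\lambda\epsilon}=k+\frac{kF}{\lambda\epsilon},
\end{equation*}
where I used the identity $\nu F+\lambda=F+\lambda\epsilon$. Because $\log_\lambda(\lambda^k)=k$ forces $k_+=k_-=k$ at $\delta=\lambda^k$, \thref{thm:NumTestHomo} immediately delivers $N(\epsilon,\lambda^k,\lambda)=k+\lceil kF/(\lambda\epsilon)\rceil$. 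Specializing to the case when $\ell:=\log_\lambda\delta$ is itself an integer, this value makes all three bounds in \eref{eq:NumberBoundAdvHomo} and \eref{eq:NumberBoundAdvHomo2} collapse to the single expression $\ell+\lceil\ell F/(\lambda\epsilon)\rceil$, which settles the saturation claim at once.

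For the bracketing inequality \eref{eq:NumberBoundAdvHomo}, I would exploit the fact that $F(N,\delta,\lambda)$ is nondecreasing in $\delta$ by \lref{lem:zetaEtaMonoCon}, so $N(\epsilon,\delta,\lambda)$ is \emph{nonincreasing} in $\delta$. Combined with $\lambda^{k_+}\le\delta\le\lambda^{k_-}$, which is immediate from the definitions of $k_\pm$ and $0<\lambda<1$, together with the closed form just obtained, this yields
\begin{equation*}
k_-+\biggl\lceil\frac{k_-F}{\lambda\epsilon}\biggr\rceil=N(\epsilon,\lambda^{k_-},\lambda)\le N(\epsilon,\delta,\lambda)\le N(\epsilon,\lambda^{k_+},\lambda)=k_++\biggl\lceil\frac{k_+F}{\lambda\epsilon}\biggr\rceil.
\end{equation*}
The degenerate case $k_-=0$ reduces the lower bound to the trivial $N\ge0$, so no information is lost.

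The sharper upper bound \eref{eq:NumberBoundAdvHomo2} takes more care. I would start from $N(\epsilon,\delta,\lambda)\le\lceil\tilde{N}(\epsilon,\delta,\lambda,k_-)\rceil$, which \crref{cor:NumTestHomoUB} supplies, and then rewrite
\begin{equation*}
\tilde{N}(\epsilon,\delta,\lambda,k_-)=k_-+\frac{k_-F}{\lambda\epsilon}+\frac{\lambda^{k_-}-\delta}{\nu\delta\epsilon}.
\end{equation*}
Introducing $u:=\log_\lambda\delta-k_-\in[0,1)$, so that $\delta=\lambda^{k_-+u}$, the target inequality $\tilde{N}(\epsilon,\delta,\lambda,k_-)\le\log_\lambda\delta/(\lambda\epsilon)-\nu k_-/\lambda$ collapses after clearing denominators to the single analytic estimate $\lambda^{1-u}\le u+(1-u)\lambda$. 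This is exactly the chord inequality for the convex function $x\mapsto\lambda^x$ on $[0,1]$ evaluated at $x=1-u$, and it holds with equality iff $u\in\{0,1\}$, which reconfirms saturation at integer $\log_\lambda\delta$.

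The main obstacle I anticipate is the algebraic reduction in this last step: the raw expression for $\tilde{N}(\epsilon,\delta,\lambda,k_-)$ is opaque, and finding the right parametrization (introducing $u$ and exploiting $\nu F+\lambda=F+\lambda\epsilon$) is what lets the estimate collapse onto a single convexity statement. Once that parametrization is in place, the monotonicity argument for \eref{eq:NumberBoundAdvHomo} is essentially immediate, and all remaining manipulations are routine bookkeeping.
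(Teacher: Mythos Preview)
Your proposal is correct and follows essentially the same route as the paper. For \eref{eq:NumberBoundAdvHomo} the paper plugs the candidate values of $N$ into \crref{cor:FidHomoBounds} and checks the fidelity threshold, whereas you evaluate $N(\epsilon,\lambda^k,\lambda)$ via \thref{thm:NumTestHomo} and then sandwich using monotonicity of $N$ in $\delta$; these are reorganizations of the same monotonicity-plus-exact-value argument. For \eref{eq:NumberBoundAdvHomo2} your convexity estimate $\lambda^{1-u}\le u+(1-u)\lambda$ is exactly the content of \lref{lem:NumTestmBound}, which the paper proves the same way and then cites.
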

When $\delta\leq \lambda\leq 1/2$, we have $k_-\geq 1$ and $\nu k_-/\lambda\geq1$, so \eref{eq:NumberBoundAdvHomo2} implies that
\begin{equation}\label{eq:NumberBoundAdvHomo3}
N(\epsilon,\delta,\lambda)<
 \frac{\ln\delta}{\lambda\epsilon\ln \lambda}.  
\end{equation}
On the other hand, by virtue of \eref{eq:NumberBoundAdvHomo}, we can derive 
\begin{align}
\lim_{\delta\rightarrow 0} \frac{N(\epsilon,\delta,\lambda)}{\ln\delta^{-1}}&=\frac{F+\lambda\epsilon}{\lambda\epsilon \ln \lambda^{-1}},\label{eq:limdel}\\
 \frac{k_-}{\lambda}\leq \lim_{\epsilon\rightarrow 0} \epsilon N(\epsilon,\delta,\lambda)&\leq \frac{k_+}{\lambda},\\
 \lim_{\epsilon,\delta\rightarrow 0} \frac{\epsilon N(\epsilon,\delta,\lambda)}{\ln\delta^{-1}}&=\frac{1}{\lambda \ln \lambda^{-1}}.\label{eq:limepdel}
\end{align}
The exact value of $\lim_{\epsilon\rightarrow 0} \epsilon N(\epsilon,\delta,\lambda)$ can be derived by virtue of  \eref{eq:NumTestHomo3}, with the result
\begin{align}
\!\!\lim_{\epsilon\rightarrow 0} \epsilon N(\epsilon,\delta,\lambda)=\lim_{\epsilon\rightarrow 0} \epsilon \tilde{N}_-(\epsilon,\delta,\lambda)&= \frac{k_-}{\lambda}+\frac{\lambda^{k_-}-\delta}{\nu\delta}.\label{eq:limep}
\end{align}
Note that the inequality $\delta \geq \lambda^{k_+}/(F+\lambda \epsilon)$ is always satisfied in the limit $\epsilon \rightarrow 0$ if $\log_\lambda\delta $ is not an integer, while $k_+=k_-$ and  $\tilde{N}_+(\epsilon,\delta,\lambda)=\tilde{N}_-(\epsilon,\delta,\lambda)$ if $\log_\lambda\delta $ is  an integer. 

\subsection{\label{sec:HomoOpt}Optimal homogeneous strategies}

\subsubsection{Optimal  strategies in the high-precision limit $\epsilon,\delta\rightarrow 0$}
In the adversarial scenario, the optimal performance can always be achieved by a homogeneous strategy if there is no restriction on the measurements. However,  the value of $\lambda$ that minimizes $N(\epsilon,\delta,\lambda)$ depends on the target precision, as characterized by $\epsilon$ and $\delta$. We cannot find a  homogeneous strategy that is optimal for all $\epsilon$ and $\delta$, unlike the nonadversarial scenario. Here we are mostly interested in the high-precision limit, which means $\epsilon,\delta\rightarrow 0$.

According to \eref{eq:limepdel}, in the  high-precision limit, the minimum number of tests can be approximated as follows,
\begin{equation}\label{eq:NumTestHomoApp}
N(\epsilon,\delta,\lambda)\approx
(\lambda\epsilon)^{-1}
\log_\lambda\delta=
(\lambda\epsilon\ln\lambda)^{-1}
\ln\delta.
\end{equation}
To understand the condition of this approximation, 
note that $k_\pm \approx  \log_\lambda \delta$ if $\delta\ll \lambda$, which  is usually the case in high-precision verification. If in addition $\epsilon\ll 1$, then the ratio of the lower bound over   the upper bound in \eref{eq:NumberBoundAdvHomo} is close to 1, so that the  two bounds
 are nearly tight with respect to the  relative deviation. In this case, \eref{eq:NumTestHomoApp} is a good approximation. Furthermore, numerical calculation shows that \eref{eq:NumTestHomoApp} is quite accurate for most parameter range of interest,  
 as illustrated in  \fsref{fig:NumHomodel} and \ref{fig:NumHomolam}. When $\lambda$ is very small, the approximation in \eref{eq:NumTestHomoApp} is not so good. Such homogeneous strategies are not efficient when $\epsilon, \delta\leq 0.1$ as illustrated in  \fref{fig:NumHomolam} [see also \esref{eq:NumTestHomoLS} and \eqref{eq:NumTestHomoLS2}]; in addition,  they  are not so important due to the reasons  explained in \sref{sec:recipe} later.

Thanks to \thsref{thm:NumTestHomo} and \ref{thm:NumTestHomoBounds}, the number of  tests required by  any nonsingular homogeneous strategy can achieve the same scaling behaviors with $\epsilon$ and $\delta$ as the counterpart  in the nonadversarial scenario for high-precision QSV. In the  limit $\epsilon,\delta\rightarrow 0$, the  efficiency is characterized by the function $(\lambda\ln\lambda^{-1})^{-1}$. Analysis shows that the function $(\lambda\ln\lambda^{-1})^{-1}$ is convex for $0<\lambda <1$ and attains the minimum $\rme$ when $\lambda=1/\rme$, with $\rme$ being the base of the natural logarithm. It is strictly decreasing in $\lambda$ when $0< \lambda\leq 1/\rme$ and  strictly increasing when $1/\rme\leq  \lambda< 1$; cf.~\fref{fig:NumHomolam}. Therefore, 
the homogeneous strategy with  $\lambda=1/\rme$, that is,  $\nu=1-(1/\rme)$, 
is  optimal in the high-precision limit $\epsilon,\delta\rightarrow 0$ if there is no restriction on the accessible measurements. In  this case we have
\begin{equation}
N(\epsilon,\delta,\lambda=\rme^{-1})
\approx\rme\epsilon^{-1}
\ln\delta^{-1}.
\end{equation}
Compared with the counterpart $\epsilon^{-1}
\ln\delta^{-1}$ for the nonadversarial scenario,
 the overhead is only $\rme$ times.

Although we cannot find a value of $\lambda$ that is optimal for all $\epsilon$ and $\delta$,  the optimal value  usually lies in a neighborhood, say $[0.32, 0.38]$,  of $1/\rme$  for the values of $\epsilon$ and $\delta$ that are of practical interest, say $\epsilon,\delta\leq 0.1$. In addition, $N(\epsilon,\delta,\lambda)$ varies quite slowly with $\lambda$ in this neighborhood, as illustrated in \fref{fig:NumHomolam}. So the choice  $\lambda=1/\rme$ is usually nearly optimal even if it is not optimal. 

The above analysis shows that the optimal strategies for the adversarial scenario are very different from the counterpart for the nonadversarial scenario. As a consequence, entangling measurements are less helpful and often unnecessary for constructing the optimal strategies for bipartite and multipartite systems. 
In the case of bipartite pure states and GHZ states  for example, the optimal strategies for high-precision verification can  be realized using only local projective measurements \cite{HayaMT06,Haya09G,ZhuH19O,LiHZ19,LiHZ19O} (cf.~\sref{sec:app}).

\subsubsection{Optimal  strategies in the  limit  $\delta\rightarrow 0$}
Here we discuss briefly the scenario in which $\delta\rightarrow 0$, but $\epsilon$ is not necessarily so small, which is relevant to entanglement detection \cite{ZhuH19O}. According to \eref{eq:limdel}, in this case, the performance of  the homogeneous strategy $\Omega$ is characterized by 
\begin{align}\label{eq:Neplam}
\caN(\epsilon,\lambda):=\lim_{\delta\rightarrow 0} \frac{N(\epsilon,\delta,\lambda)}{\ln\delta^{-1}}&=\frac{F+\lambda\epsilon}{\lambda\epsilon \ln \lambda^{-1}},
\end{align}
where $F=1-\epsilon$. 
The partial derivative of $\caN(\epsilon,\lambda)$ over $\lambda$ reads
\begin{align}
  \frac{\partial\caN(\epsilon,\lambda)}{\partial \lambda}=\frac{F+\lambda\epsilon+F\ln \lambda}{\lambda^2\epsilon (\ln \lambda)^2}.
 \end{align}
For a given $\epsilon$, denote by $\caN_*(\epsilon)$ the minimum of $\caN(\epsilon,\lambda)$ over $\lambda$.
This minimum is attained when $\lambda=\lambda_*(\epsilon)$, where $\lambda_*(\epsilon)$ is the unique solution of the equation
\begin{equation}\label{eq:lambdaF}
F+\lambda\epsilon+F\ln \lambda=0,
\end{equation}
which amounts to the equality
\begin{equation}\label{eq:Flambda}
F=\frac{\lambda}{\ln \lambda^{-1}+\lambda-1}.
\end{equation}
It is not difficult to verify that $\lambda_*(\epsilon)=0$ when $\epsilon=1$ ($F=0$) and $\lambda_*(\epsilon)=1/\rme$ when $\epsilon=0$ ($F=1$); in addition, $\lambda_*(\epsilon)$ decreases monotonically with $\epsilon$ and is concave in $\epsilon$, as illustrated in \fref{fig:OptHomo}. Therefore, $\lambda_*(\epsilon)$ satisfies the following equation,
\begin{equation}
\rme^{-1}F\leq \lambda_*(\epsilon)\leq \rme^{-1}.
\end{equation}

\begin{figure}
	\includegraphics[width=7.5cm]{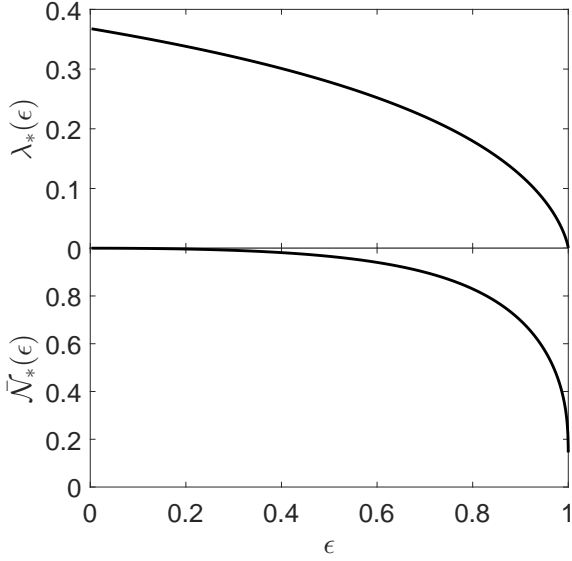}
	\caption{\label{fig:OptHomo}Optimal homogeneous strategy in the limit $\delta\rightarrow 0$. 
		Here $\lambda_*(\epsilon)$ denotes the value of $\lambda$ that minimizes  $\caN(\epsilon,\lambda)$ defined in \eref{eq:Neplam}, which determines the number of required tests.  $\bar{\caN}_*(\epsilon)$ denotes the number of required tests normalized with respect to the benchmark, as defined in \eref{eq:caNnorm}. 	
	} 	
\end{figure}

Next, we study the dependence of the efficiency on the parameter $\lambda$.  As a benchmark, we choose the homogeneous strategy with $\lambda=1/\rme$ in which case we have $\caN(\epsilon,\lambda=\rme^{-1})=(\rme F+\epsilon)/\epsilon$. 
Define
\begin{align}
\bar{\caN}(\epsilon,\lambda)&:=\frac{\caN(\epsilon,\lambda)}{\caN(\epsilon,\rme^{-1})}=\frac{F+\lambda\epsilon}{(\rme F+\epsilon)\lambda\ln\lambda^{-1}},\\ \bar{\caN}_*(\epsilon)&:=\frac{\caN_*(\epsilon)}{\caN(\epsilon,\rme^{-1})}=\frac{F+\lambda_*(\epsilon)\epsilon}{(\rme F+\epsilon)\lambda_*(\epsilon)\ln\lambda_*(\epsilon)^{-1}}. \label{eq:caNnorm}
\end{align}
When $\lambda<1/\rme$,
$\bar{\caN}(\epsilon,\lambda)$ decreases  monotonically with $\epsilon$, so  we have
\begin{align}
\frac{1}{\ln\lambda^{-1}}\leq \bar{\caN}(\epsilon,\lambda)\leq \frac{1}{\rme \lambda\ln\lambda^{-1}}.
\end{align}
The lower bound approaches zero in the limit $\lambda\rightarrow0$. 
Accordingly, a homogeneous strategy $\Omega$ with a small value of $\lambda$ could be significantly more efficient than the benchmark when $\epsilon$ is large.
When $\lambda>1/\rme$, by contrast,
$\bar{\caN}(\epsilon,\lambda)$ increases monotonically with $\epsilon$, so  we have
\begin{align}
1<\frac{1}{\rme \lambda\ln\lambda^{-1}}\leq \bar{\caN}(\epsilon,\lambda)\leq \frac{1}{\ln\lambda^{-1}}.
\end{align}
Such a homogeneous strategy is less efficient than the benchmark.

Finally, by virtue of \esref{eq:Flambda} and \eqref{eq:caNnorm} we can derive the following equality, 
\begin{align}
 \bar{\caN}_*(\epsilon):=\frac{1}{\rme\lambda_*(\epsilon)-\ln\lambda_*(\epsilon)-1}.
\end{align}
Given that  $\lambda_*(\epsilon)\leq \rme^{-1}$ and  $\lambda_*(\epsilon)$ decreases monotonically with $\epsilon$, we can deduce that $ \bar{\caN}_*(\epsilon)$ decreases monotonically with $\epsilon$; it 
approaches 1 in the limit $\epsilon\rightarrow 0$, while it approaches 0 (quite slowly) in the limit $\epsilon\rightarrow 1$, as illustrated in \fref{fig:OptHomo}. Although $ \bar{\caN}_*(\epsilon)$ could be arbitrarily small when $\epsilon$ is large, it is close to 1 when $\epsilon$ is not too large. For example, $\bar{\caN}_*(\epsilon)\geq 0.965$ when $\epsilon\leq 0.5$ and $\bar{\caN}_*(\epsilon)\geq 0.999$ when $\epsilon\leq 0.1$. Therefore, the homogeneous strategy $\Omega$ with $\beta(\Omega)=1/\rme$ is nearly optimal for most parameter range of practical interest, as pointed out earlier.

\section{\label{eq:SingleCopy}Single-copy verification}
In this section we analyze the possibility of QSV in the adversarial scenario using a single test. This problem is of intrinsic interest to single-copy entanglement detection \cite{DimiD18,ZhuH19O}.  Given a verification strategy $\Omega$, the state $|\Psi\>$ can be verified within infidelity $0<\epsilon<1$ and significance level $0<\delta<1$ using  a single test iff
\begin{equation}
F(N=1,\delta,\Omega)\geq 1-\epsilon.
\end{equation}
 Since $F(N,\delta,\Omega)=\zeta(N,\delta,\Omega)/\delta $ according to \eref{eq:Fzeta}, the above equation is equivalent to 
\begin{equation}
\zeta(N=1,\delta,\Omega)\geq \delta(1-\epsilon). 
\end{equation}
So our main task here is to determine the expression of $\zeta(N,\delta,\lambda)$ in the case $N=1$. In the rest of this section we assume $N=1$ except when stated otherwise.  Note that $\zeta(N,\delta=0,\Omega)=0$ and that the range of $\delta$ of practical interest usually satisfies $0<\delta\leq 1/2$.

\subsection{Single-copy verification with  homogeneous strategies}

First, let us consider the homogeneous strategy $\Omega$ defined in \eref{eq:HomoStrategy}.
\begin{proposition}\label{pro:SingleCopyHomo}
	Suppose $N=1$ and $0\leq \lambda<1$; then  
	\begin{align}
\zeta(N,\delta,\lambda)&=\max\left\{ 0, 
\frac{\lambda(\delta-\lambda)}{1-\lambda}, 
\frac{\delta(2-\lambda)-1}{1-\lambda}\right\}\nonumber\\	
&=\begin{cases}
	0, & 0\leq\delta\leq \lambda,\\[0.2ex]
	\frac{\lambda(\delta-\lambda)}{1-\lambda}, & \lambda\leq \delta\leq \frac{1+\lambda}{2}, \\[0.6ex]
	\frac{\delta(2-\lambda)-1}{1-\lambda}, &  \frac{1+\lambda}{2}\leq \delta\leq 1.
	\end{cases} \label{eq:SingleCopyAdvHomo}
	\end{align}
\end{proposition}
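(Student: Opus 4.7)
The plan is to read off the claim as a direct specialization of \thref{thm:FidelityHomo} (and \crref{cor:FidHomo}) to the single-test case $N=1$. Geometrically, the region $R_{1,\Omega}$ is the convex hull of just three extremal points, and the assertion will follow once I locate them and compute the two line segments forming its lower boundary.

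First I would instantiate \eref{eq:etazetaHomo} at $N=1$: the points $(\eta_k(\lambda),\zeta_k(\lambda))$ for $k=0,1,2$ are
\begin{equation}
(1,1),\qquad \Bigl(\tfrac{1+\lambda}{2},\tfrac{\lambda}{2}\Bigr),\qquad (\lambda,0),
\end{equation}
so that $\delta_{\rmc}=\lambda^{N}=\lambda$ (when $\lambda>0$; the $\lambda=0$ case is handled separately below). These are exactly the candidate vertices in \crref{cor:FidHomo} for $N=1$, since $k\in\{0,1,\ldots,N\}=\{0,1\}$.

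Next, I would compute the two nontrivial linear pieces by plugging $N=1$ and $k=0,1$ into \eref{eq:zetadel}:
\begin{align}
\zeta(1,\delta,\lambda,1)&=\frac{\lambda(\delta-\lambda)}{1-\lambda},\\
\zeta(1,\delta,\lambda,0)&=\frac{\delta(2-\lambda)-1}{1-\lambda},
\end{align}
verifying that both lines pass through $((1+\lambda)/2,\lambda/2)$ and hit the boundary vertices $(\lambda,0)$ and $(1,1)$, respectively. \Crref{cor:FidHomo} then immediately yields the max form
\begin{equation}
\zeta(1,\delta,\lambda)=\max\bigl\{0,\zeta(1,\delta,\lambda,1),\zeta(1,\delta,\lambda,0)\bigr\}.
\end{equation}
A short algebraic check (solving $\zeta(1,\delta,\lambda,1)\ge\zeta(1,\delta,\lambda,0)$ reduces to $\delta\le(1+\lambda)/2$, while $\zeta(1,\delta,\lambda,1)\ge 0$ is equivalent to $\delta\ge\lambda$) turns the max into the claimed three-piece formula. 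Equivalently, \thref{thm:FidelityHomo} tells us $k_{*}=1$ on $\lambda<\delta\le(1+\lambda)/2$ and $k_{*}=0$ on $(1+\lambda)/2\le\delta\le 1$, giving the same conclusion without invoking the corollary.

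The calculation is essentially routine once \thref{thm:FidelityHomo} is in hand, so there is no real obstacle. The only mildly delicate point is the case $\lambda=0$, where $\delta_{\rmc}=\max\{\lambda^{N},1/(N+1)\}=1/2$ rather than $\lambda^{N}=0$. I would handle this either by noting that both sides of \eref{eq:SingleCopyAdvHomo} reduce to $\max\{0,2\delta-1\}$, matching \eref{eq:FidelityHomo0} at $N=1$, or by taking the limit $\lambda\to 0^{+}$ in the general formula and invoking continuity of $\zeta(N,\delta,\lambda)$ in $\lambda$, which is clear from the linear-programming characterization \eref{eq:zetaetaLP}.
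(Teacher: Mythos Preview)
Your proposal is correct and follows essentially the same approach as the paper, which simply states that the proposition follows from \eref{eq:FidelityHomo0} when $\lambda=0$ and from \thref{thm:FidelityHomo} together with \crref{cor:FidHomo} when $0<\lambda<1$. Your write-up merely makes the specialization to $N=1$ explicit, and all the intermediate computations you give are correct.
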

\Pref{pro:SingleCopyHomo}  follows from \eref{eq:FidelityHomo0} when $\lambda=0$ and follows from \thref{thm:FidelityHomo} and \crref{cor:FidHomo} when  $0<\lambda<1$. As an implication, we can derive
\begin{align}
\max_{\lambda}\zeta(N,\delta,\lambda)&=\max\{2-2\sqrt{1-\delta}-\delta,\; 2\delta-1\}\nonumber \\
&=\begin{cases}
2-2\sqrt{1-\delta}-\delta, & 0\leq\delta\leq \frac{5}{9},\\[0.2ex]
2\delta-1, & \frac{5}{9}\leq \delta \leq 1. \\[0.6ex]
\end{cases}\label{eq:SingleCopyAdvHomoMax}
\end{align}
Here the maximum is attained at
\begin{equation}\label{eq:SingleCopyLamOpt}
\lambda=\begin{cases}
1-\sqrt{1-\delta}, & 0\leq \delta \leq \frac{5}{9},\\
0,& \frac{5}{9}\leq \delta\leq 1. 
\end{cases}
\end{equation}
In addition, the optimal solution $\lambda$ is unique for $0<\delta<1$ except when $\delta=5/9$, in which case there are two optimal solutions, namely, $\lambda=0$ and $\lambda=1/3$. This observation implies the following corollary given that  the optimal strategy can always be chosen to be homogeneous if there is no restriction on the measurements.
\begin{corollary}\label{cor:SingleCopyAdvCon1}
	The target state can be verified within infidelity $0<\epsilon<1$ and significance level $0<\delta <1$ in the adversarial scenario using a single test iff $\delta$ and $\epsilon$  satisfy the condition
\begin{equation}\label{eq:SingleCopyAdvCon1}
\delta(1-\epsilon)\leq 
\max\{2-2\sqrt{1-\delta}-\delta,\; 2\delta-1\}, 
\end{equation}
or, equivalently, the condition
\begin{align}\label{eq:SingleCopyAdvCon2}
	\!\!\delta\geq \min\left\{ \frac{4(1-\epsilon)}{(2-\epsilon)^2}, \frac{1}{1+\epsilon}\right\}
	\!	=\!\begin{cases}
	\frac{1}{1+\epsilon}, & 0< \epsilon\leq  \frac{4}{5}, \\[0.4ex]
	\frac{4(1-\epsilon)}{(2-\epsilon)^2}, & \frac{4}{5}\leq \epsilon< 1. 
	\end{cases}
	\end{align}
\end{corollary}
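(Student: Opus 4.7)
The plan is to reduce the single-copy verification question to an optimization of $\zeta(1,\delta,\Omega)$ over verification operators $\Omega$, invoke the explicit maximum already computed in \eqref{eq:SingleCopyAdvHomoMax}, and then algebraically invert the resulting inequality to extract condition \eqref{eq:SingleCopyAdvCon2}. First, by \eqref{eq:Fzeta}, the defining requirement $F(1,\delta,\Omega)\geq 1-\epsilon$ is equivalent to $\zeta(1,\delta,\Omega)\geq \delta(1-\epsilon)$. Hence single-copy verifiability in the adversarial scenario is equivalent to the existence of some verification operator $\Omega$ achieving this bound, i.e.\ to $\sup_\Omega \zeta(1,\delta,\Omega)\geq \delta(1-\epsilon)$.

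Second, by the remark preceding the corollary (formalized by \eqref{eq:StrategyOrder}), with no restriction on the accessible measurements we may without loss of generality take $\Omega$ to be homogeneous, so $\sup_\Omega \zeta(1,\delta,\Omega)=\max_\lambda \zeta(1,\delta,\lambda)$. The value of this maximum is already computed in \eqref{eq:SingleCopyAdvHomoMax}, and substituting it directly yields criterion \eqref{eq:SingleCopyAdvCon1}.

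It remains to show the equivalence of \eqref{eq:SingleCopyAdvCon1} with \eqref{eq:SingleCopyAdvCon2}. Since \eqref{eq:SingleCopyAdvCon1} demands that at least one of $\delta(1-\epsilon)\leq 2\delta-1$ and $\delta(1-\epsilon)\leq 2-2\sqrt{1-\delta}-\delta$ hold, I will solve each for $\delta$. The first inequality reduces to $\delta\geq 1/(1+\epsilon)$ by elementary manipulation. The second rearranges to $\sqrt{1-\delta}\leq 1-\delta(2-\epsilon)/2$; the right-hand side is bounded below by $\epsilon/2>0$ (using $\delta\leq 1$), so squaring is an equivalence and produces $\delta(2-\epsilon)^2\geq 4(1-\epsilon)$, i.e.\ $\delta\geq 4(1-\epsilon)/(2-\epsilon)^2$. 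Taking the union gives $\delta\geq \min\bigl\{1/(1+\epsilon),\,4(1-\epsilon)/(2-\epsilon)^2\bigr\}$.

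To obtain the piecewise description, I compare the two candidate bounds: $1/(1+\epsilon)\leq 4(1-\epsilon)/(2-\epsilon)^2$ is equivalent to $(2-\epsilon)^2\leq 4(1-\epsilon^2)$, which simplifies to $5\epsilon^2\leq 4\epsilon$, i.e.\ $\epsilon\leq 4/5$. Hence the minimum equals $1/(1+\epsilon)$ for $0<\epsilon\leq 4/5$ and equals $4(1-\epsilon)/(2-\epsilon)^2$ for $4/5\leq \epsilon<1$, completing the proof. The only non-mechanical step is verifying that the squaring in the second case is an equivalence, which is a brief sign check; the rest is essentially a bookkeeping exercise built on \eqref{eq:SingleCopyAdvHomoMax}.
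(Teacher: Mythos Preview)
Your proof is correct and follows essentially the same route as the paper: reduce to homogeneous strategies, invoke the explicit maximum \eqref{eq:SingleCopyAdvHomoMax}, and then invert algebraically. The paper states \eqref{eq:SingleCopyAdvCon2} as an equivalent form without spelling out the algebra, so your detailed verification of the squaring step and the $\epsilon=4/5$ crossover actually fills in what the paper leaves implicit.
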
 
\begin{figure}
	\includegraphics[width=7cm]{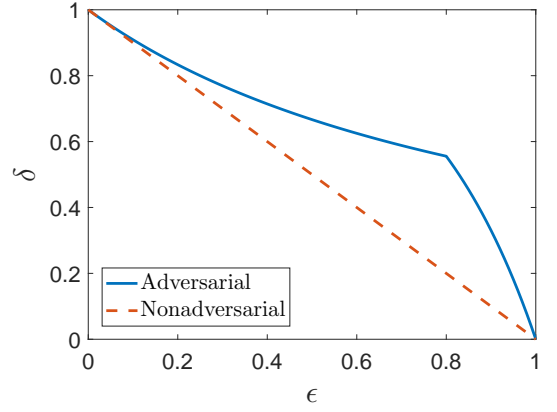}
	\caption{\label{fig:SingleCopy}(color online) Single-copy verification in the adversarial scenario and nonadversarial scenario. The target state can be verified within infidelity $\epsilon$ and significance  level $\delta$ in the adversarial (nonadversarial) scenario using a single test if the value of $\delta$ lies above the blue solid curve (red dashed line);  cf.~\esref{eq:SingleCopyAdvCon2} and \eqref{eq:OneTestCon2}. 
	}
\end{figure}

The parameter range of single-copy verification characterized by \crref{cor:SingleCopyAdvCon1}
is illustrated in \fref{fig:SingleCopy} in contrast with the counterpart   for the nonadversarial scenario in \eref{eq:OneTestCon2}. 
\Eref{eq:SingleCopyAdvCon2} determines the smallest significance level that can be achieved by  a single test to verify the target state within infidelity $\epsilon$. Note that the lower bound is monotonically decreasing in $\epsilon$ for $0<\epsilon<1$ as expected. 
To achieve  significance level $\delta \leq 1/2$, the infidelity must satisfy the condition $\epsilon\geq 2(\sqrt{2}-1)$.
When the  bound in \eref{eq:SingleCopyAdvCon1} or that in \eref{eq:SingleCopyAdvCon2} is saturated, the target state can be verified within infidelity $0<\epsilon<1$ and significance level $0<\delta<1$ 	
by a strategy $\Omega$ iff $\Omega$ is homogeneous and $\beta(\Omega)$ is given by \eref{eq:SingleCopyLamOpt} with $0<\delta<1$ or, equivalently,
\begin{equation}\label{eq:SingleCopyLamOpt2}
\beta(\Omega)=\lambda=\begin{cases}
0, & 0<\epsilon \leq \frac{4}{5},\\
\frac{2-2\epsilon}{2-\epsilon}, & \frac{4}{5}\leq \epsilon <1.
\end{cases}
\end{equation} 
When $\delta\neq 5/9$  (that is, $\epsilon\neq 4/5$), the optimal strategy~$\Omega$  is unique
as shown \esref{eq:SingleCopyLamOpt}  and \eqref{eq:SingleCopyLamOpt2}. 
When $\delta= 5/9$ ($\epsilon=4/5$),  by contrast, there are two optimal strategies, both of which are homogeneous, and $\beta(\Omega)$ can take on two possible values, namely, $\beta(\Omega)=0$ and $\beta(\Omega)=1/3$ (cf.~\thref{thm:SingleCopy} below).

\begin{corollary}\label{cor:SingleCopyAdvHomo}
Given a homogeneous strategy $\Omega$ with $\beta(\Omega)=\lambda$, the target state can be verified  within infidelity $0<\epsilon<1$ and significance level $0<\delta\leq 1/2$ in the adversarial scenario  using a single test
iff 
\begin{equation}\label{eq:OneTestConAdvHomo}
	\frac{\lambda(\delta-\lambda)}{1-\lambda}\geq \delta(1-\epsilon). 
\end{equation}
This requirement is equivalent to the
following conditions,
\begin{align}
\delta \geq \frac{4(1-\epsilon)}{(2-\epsilon)^2},\label{eq:deltaep}\\
\lambda_-\leq \lambda\leq \lambda_+,
\end{align}
where 
\begin{align}\label{eq:lambdapm}
\lambda_{\pm}:=\frac{(2-\epsilon)\delta\pm\sqrt{(2-\epsilon)^2\delta^2-4(1-\epsilon)\delta}}{2}.
\end{align}
\end{corollary}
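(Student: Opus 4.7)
The plan is to reduce the corollary directly to \pref{pro:SingleCopyHomo}, which already provides the closed-form expression of $\zeta(N=1,\delta,\lambda)$ as a three-piece piecewise function of $\delta$. By \eref{eq:Fzeta}, the single-test verification condition $F(N=1,\delta,\Omega)\geq 1-\epsilon$ is equivalent to $\zeta(N=1,\delta,\lambda)\geq \delta(1-\epsilon)$, so the whole task is to translate this inequality into constraints on $\lambda$ and $\delta$.

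First I would use the hypothesis $0<\delta\leq 1/2$ to eliminate the branches of \eref{eq:SingleCopyAdvHomo} that are incompatible with the desired bound. The branch $\delta\leq \lambda$ gives $\zeta=0$, which cannot dominate the positive quantity $\delta(1-\epsilon)$; hence single-copy verification forces $\lambda<\delta$. The branch $\delta\geq (1+\lambda)/2$ requires $\lambda\leq 2\delta-1\leq 0$, so with $\delta\leq 1/2$ it collapses to $\lambda=0$, already covered as a limit of the middle branch. Therefore the only relevant case is the middle branch, and \eref{eq:OneTestConAdvHomo} is exactly the restatement of $\zeta(N=1,\delta,\lambda)\geq \delta(1-\epsilon)$ in this regime.

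Next I would clear denominators in \eref{eq:OneTestConAdvHomo}: since $0\leq \lambda<1$, multiplying by $1-\lambda>0$ preserves the inequality and yields $\lambda(\delta-\lambda)\geq \delta(1-\epsilon)(1-\lambda)$, which rearranges into the quadratic inequality
\begin{equation*}
\lambda^{2}-\delta(2-\epsilon)\lambda+\delta(1-\epsilon)\leq 0.
\end{equation*}
The quadratic formula gives the two roots $\lambda_{\pm}$ as in \eref{eq:lambdapm}, and the quadratic (opening upward) is nonpositive precisely on $[\lambda_-,\lambda_+]$, yielding the condition $\lambda_-\leq \lambda\leq \lambda_+$. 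For this interval to be nonempty we need the discriminant $(2-\epsilon)^{2}\delta^{2}-4(1-\epsilon)\delta\geq 0$, i.e. $\delta\geq 4(1-\epsilon)/(2-\epsilon)^{2}$, which is \eref{eq:deltaep}.

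The only subtlety worth double-checking is that the solution interval $[\lambda_-,\lambda_+]$ lies within the allowed range $0\leq \lambda<\delta$ implied by the middle branch; this should follow from inspecting $\lambda_+<\delta$ (which rewrites, using the product of roots $\lambda_-\lambda_+=\delta(1-\epsilon)$ and the sum $\lambda_-+\lambda_+=\delta(2-\epsilon)$, to an elementary inequality under $\delta\leq 1/2$) and from $\lambda_-\geq 0$ (immediate from the positivity of both sum and product of roots). This boundary verification is the main bookkeeping step of the proof; once it is in place, the equivalence between \eref{eq:OneTestConAdvHomo} and the pair of conditions \eref{eq:deltaep} plus $\lambda_-\leq \lambda\leq \lambda_+$ is a direct consequence of the quadratic formula.
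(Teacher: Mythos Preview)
Your argument is correct and matches the paper's intended route: the corollary is a direct consequence of \pref{pro:SingleCopyHomo} together with \eref{eq:Fzeta}, and the paper does not supply a separate proof beyond the remarks following the statement. Your reduction to the middle branch under $\delta\le 1/2$, the rearrangement into the quadratic $\lambda^{2}-\delta(2-\epsilon)\lambda+\delta(1-\epsilon)\le 0$, and the discriminant analysis all check out; the range check $0<\lambda_-\le\lambda_+<\delta$ that you flag as the remaining bookkeeping is exactly what the paper records immediately after the corollary (using $\lambda_-\lambda_+=(1-\epsilon)\delta<\delta^2$ and $\lambda_-+\lambda_+=(2-\epsilon)\delta$, or equivalently evaluating the quadratic at $0$ and at $\delta$).
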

\Eref{eq:OneTestConAdvHomo}  implies that $0< \lambda<\delta$. So any homogeneous strategy $\Omega$ with $\beta(\Omega)=0$ or $\beta(\Omega)\geq 1/2$ cannot verify the target state within infidelity $0<\epsilon<1$ and significance level $0<\delta\leq 1/2$ using a single test. This conclusion actually applies to an arbitrary strategy, not necessarily homogeneous; see \crref{cor:SingleCopyAdv} below.
Thanks to  the inequality $4(1-\epsilon)\delta>4(1-\epsilon)\delta^2$, $\lambda_{\pm}$ defined in \eref{eq:lambdapm} satisfy the following equation,
\begin{equation}
(1-\epsilon)\delta< \lambda_-\leq \lambda_+< \delta. 
\end{equation}
By computing the derivatives over $\delta$ and $\epsilon$, it is easy to verify that $\lambda_+$ ($\lambda_-$) increases (decreases) monotonically with $\delta$ and $\epsilon$ as expected.
If $\delta \leq 1/2$, then we  have
\begin{equation}
\frac{2-\epsilon-\sqrt{\epsilon^2+4\epsilon-4}}{4} \!\leq \!\lambda_-\!\leq \!\lambda_+\!\leq\! \frac{2-\epsilon+\sqrt{\epsilon^2+4\epsilon-4}}{4}. 
\end{equation}

\subsection{Single-copy verification with  general strategies}

Next, we generalize \pref{pro:SingleCopyHomo} to an arbitrary verification operator $\Omega$.  The following theorem shows that the efficiency of $\Omega$ is determined by $\beta$ and $\tau$, where  $\beta$ and $\tau$ denote the second largest and  smallest eigenvalues of $\Omega$, respectively. See \aref{sec:SingleCopyProof} for a proof. 
\begin{theorem}\label{thm:SingleCopy}
	Suppose $N=1$. 
If $\beta\geq 1/2$, then 
\begin{equation}\label{eq:SingleCopyAdvGen1}
\zeta(N,\delta,\Omega)=\begin{cases}
0, &0\leq \delta\leq \beta,\\[0.2ex]
\frac{\beta(\delta-\beta)}{1-\beta}, &  \beta\leq \delta\leq \frac{1+\beta}{2}, \\[0.6ex]
\frac{\delta(2-\beta)-1}{1-\beta}, &  \frac{1+\beta}{2}\leq \delta\leq 1.
\end{cases}
\end{equation}
If $\beta< 1/2$, then 
\begin{equation}\label{eq:SingleCopyAdvGen2}
\zeta(N,\delta,\Omega)=\begin{cases}
0, & 0\leq\delta\leq \beta,\\[0.2ex]
\frac{\tau(\delta-\beta)}{1+\tau-2\beta}, & \beta\leq \delta\leq \frac{1+\tau}{2}, \\[0.6ex]
\delta-\frac{1}{2}, &\frac{1+\tau}{2}\leq \delta\leq \frac{1+\beta}{2}, \\[0.6ex]
\frac{\delta(2-\beta)-1}{1-\beta}, &  \frac{1+\beta}{2}\leq \delta\leq 1.
\end{cases}
\end{equation}
\end{theorem}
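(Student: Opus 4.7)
The plan is to exploit the geometric picture of the region $R_{1,\Omega}$ developed in Section~\ref{sec:CompMFM}. For $N=1$ every $\bfk\in\scrS_1$ satisfies $\sum_i k_i=2$, so I would start by enumerating all extremal points $(\eta_\bfk(\bm{\lambda}),\zeta_\bfk(\bm{\lambda}))$ using \eref{eq:etazeta}. They fall into three families: the apex $(1,1)$ from $\bfk=(2,0,\ldots,0)$; a collection of zero-$\zeta$ points $(\lambda_i,0)$ and $((\lambda_i+\lambda_j)/2,0)$ with $i,j\ge 2$, whose convex hull is the segment $[\tau,\beta]\times\{0\}$; and, crucially, a one-parameter family $((1+\lambda_j)/2,\lambda_j/2)$ for $j\ge 2$, which all lie on the single straight line $\zeta=\eta-\tfrac{1}{2}$. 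Convex mixtures of the latter family fill out the segment of this line over $\eta\in[(1+\tau)/2,(1+\beta)/2]$.

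By \lref{lem:equiTwoDef} and the LP in \eref{eq:zetaetaLP}, $\zeta(1,\delta,\Omega)$ is the lower boundary of this 2D convex polygon, and since the polygon is two-dimensional it is achieved at each $\delta$ by a convex combination of at most two extremal points. I would restrict attention to $\delta\in[\beta,1]$ (the answer is $0$ otherwise by \lref{lem:etaf0}) and isolate three candidate families of mixtures that can contribute to the lower envelope: (a) $(\beta,0)$ with $((1+\lambda)/2,\lambda/2)$, (b) two points on the line $\zeta=\eta-\tfrac{1}{2}$, and (c) $((1+\lambda)/2,\lambda/2)$ with $(1,1)$. Direct computation gives closed forms
\[
\zeta_{(a)}=\frac{\lambda(\delta-\beta)}{1+\lambda-2\beta},\quad \zeta_{(b)}=\delta-\tfrac{1}{2},\quad \zeta_{(c)}=\frac{\delta(2-\lambda)-1}{1-\lambda}.
\]

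Next I would optimize each family over the free parameter $\lambda\in[\tau,\beta]$. The derivative of $\zeta_{(a)}$ with respect to $\lambda$ has the sign of $1-2\beta$, which is precisely where the dichotomy $\beta\gtreqless\tfrac{1}{2}$ enters: candidate~(a) is minimized at $\lambda=\beta$ if $\beta\ge\tfrac{1}{2}$ but at $\lambda=\tau$ if $\beta<\tfrac{1}{2}$. Meanwhile $\zeta_{(c)}$ is monotone decreasing in $\lambda$, so it is always minimized at $\lambda=\beta$, producing the endpoint $((1+\beta)/2,\beta/2)$ common to both regimes. In the regime $\beta\ge\tfrac{1}{2}$, only the point $((1+\beta)/2,\beta/2)$ survives as a nontrivial vertex, and the two winning segments meeting there between $(\beta,0)$ and $(1,1)$ give exactly \eref{eq:SingleCopyAdvGen1}. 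In the regime $\beta<\tfrac{1}{2}$, both endpoints $((1+\tau)/2,\tau/2)$ and $((1+\beta)/2,\beta/2)$ of the line segment on $\zeta=\eta-\tfrac{1}{2}$ become exposed vertices of the lower boundary, yielding the additional middle branch $\zeta=\delta-\tfrac{1}{2}$ of \eref{eq:SingleCopyAdvGen2}.

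The main obstacle is verifying that no other chord can undercut the piecewise envelope above—for instance the direct chord from $(\beta,0)$ to $(1,1)$, or an (a)-type mixture using $\lambda=\tau$ when $\beta\ge\tfrac{1}{2}$. Each such comparison reduces to a one-line sign check on the numerators, with the critical differences factoring as $(1-2\beta)[(1+\tau)/2-\delta]$ and $(2\beta-1)[\delta-(1+\beta)/2]$ up to positive denominators; these sign patterns are consistent with the stated breakpoints and immediately eliminate the competing chords in each $\delta$-subinterval. Finally I would check that the four candidate formulas match at the breakpoints $\delta=\beta,(1+\tau)/2,(1+\beta)/2,1$, which confirms continuity and completes the identification of the lower boundary with \thref{thm:SingleCopy}.
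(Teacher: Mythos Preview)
Your approach is correct and takes a genuinely different route from the paper. Both proofs start from the same key observation that every $\bfk\in\scrS_1$ with $k_1=1$ lies on the line $\zeta=\eta-\tfrac{1}{2}$, but then diverge. You compute the lower boundary of $R_{1,\Omega}$ directly: you enumerate the three families of candidate chords, optimize each over the free parameter $\lambda\in[\tau,\beta]$, and eliminate competing chords by sign checks---a self-contained convex-geometry argument. The paper instead proceeds by \emph{reduction}: for $\beta\ge 1/2$ it shows that $\zeta(1,\delta,\Omega)=\zeta(1,\delta,\beta)$ coincides with the homogeneous strategy of parameter $\beta$, and for $\beta<1/2$ it shows that $\zeta(1,\delta,\Omega)=\zeta(1,\delta,\tilde\Omega)$ for an auxiliary operator $\tilde\Omega$ with exactly the three eigenvalues $1,\beta,\tau$. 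Both reductions are obtained by invoking \lref{lem:VScompare}, checking the pointwise hypothesis $\zeta_\bfk(\bm\lambda)\ge\zeta(1,\eta_\bfk(\bm\lambda),\cdot)$ for every $\bfk\in\scrS_1$; the formulas then follow from \pref{pro:SingleCopyHomo} or a direct three-eigenvalue computation. Your approach is more elementary in that it avoids \lref{lem:VScompare} and the detour through auxiliary operators, at the price of slightly more bookkeeping over chord families. The paper's approach, by contrast, packages the comparison once and for all in \lref{lem:VScompare}, which makes the $N=1$ case a clean corollary of machinery already set up for the general theory.
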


\begin{corollary}\label{cor:SingleCopyAdv}
The target state can be verified by the strategy $\Omega$  within infidelity $0<\epsilon<1$ and significance level $0<\delta\leq 1/2$ using a single test
	iff 
	\begin{equation}\label{eq:OneTestConAdv}
0<\beta<\delta,\quad \frac{\tau(\delta-\beta)}{1+\tau-2\beta}\geq \delta(1-\epsilon). 
	\end{equation}
\end{corollary}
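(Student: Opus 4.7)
The plan is to apply \thref{thm:SingleCopy} directly. By \eref{eq:Fzeta}, single-copy verification within infidelity $\epsilon$ and significance level $\delta$ is equivalent to the inequality $\zeta(1,\delta,\Omega)\geq \delta(1-\epsilon)$, so the task reduces to selecting the active branch of the piecewise formula for $\zeta(1,\delta,\Omega)$ under the hypotheses $0<\epsilon<1$ and $0<\delta\leq 1/2$.

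First I would rule out the vanishing branch. In both \eref{eq:SingleCopyAdvGen1} and \eref{eq:SingleCopyAdvGen2} one has $\zeta(1,\delta,\Omega)=0$ whenever $\delta\leq \beta$, so such an $\Omega$ cannot satisfy $\zeta(1,\delta,\Omega)\geq \delta(1-\epsilon)>0$. This forces $\beta<\delta\leq 1/2$, in particular $\beta<1/2$, and so only \eref{eq:SingleCopyAdvGen2} of \thref{thm:SingleCopy} is relevant.

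Next I would identify which of the three remaining subregions of \eref{eq:SingleCopyAdvGen2} contains $\delta$. Since $\tau\geq 0$ gives $(1+\tau)/2\geq 1/2\geq \delta$, the only subregion compatible with $\beta<\delta\leq 1/2$ is the second one, yielding $\zeta(1,\delta,\Omega)=\tau(\delta-\beta)/(1+\tau-2\beta)$, with positive denominator thanks to $\beta<1/2$. Substituting this expression into $\zeta(1,\delta,\Omega)\geq \delta(1-\epsilon)$ produces the stated fractional inequality, completing both directions of the equivalence. The remaining assertion $\beta>0$ then comes for free: the fractional inequality requires $\tau(\delta-\beta)>0$, hence $\tau>0$, and so $\beta\geq \tau>0$.

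The argument is essentially bookkeeping once \thref{thm:SingleCopy} is in hand. The only step that requires mild care is the subregion selection, where one must use $\delta\leq 1/2$ to guarantee that we land in the unique branch of \eref{eq:SingleCopyAdvGen2} for which $\zeta(1,\delta,\Omega)$ takes the claimed form; no genuine technical obstacle arises.
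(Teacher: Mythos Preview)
Your proposal is correct and follows essentially the same approach as the paper, which treats the corollary as an immediate consequence of \thref{thm:SingleCopy} via the identification $F(1,\delta,\Omega)\geq 1-\epsilon \Leftrightarrow \zeta(1,\delta,\Omega)\geq \delta(1-\epsilon)$ and the observation that $\delta\leq 1/2$ forces the relevant branch of \eref{eq:SingleCopyAdvGen2}. Your derivation of $\beta>0$ from the fractional inequality (via $\tau>0$) is a clean way to recover that part of the condition.
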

Note that the target state cannot be  verified within infidelity $0<\epsilon<1$ and significance level $0<\delta\leq 1/2$ using a single test if $\beta=0$ or $\beta\geq 1/2$. 
When $0<\beta<1/2$ and $\beta\leq \delta\leq (1+\tau)/2$, we have
\begin{equation}
\frac{\tau(\delta-\beta)}{1+\tau-2\beta}\leq \min\left\{\frac{\beta(\delta-\beta)}{1-\beta},\frac{\tau(\delta-\tau)}{1-\tau}\right\}. 
\end{equation}
So \eref{eq:OneTestConAdv} implies \eref{eq:OneTestConAdvHomo} with $\lambda=\beta$ or $\lambda=\tau$, which in turn implies 
\eref{eq:deltaep} and the sequence of  inequalities $\lambda_-\leq \tau\leq \beta\leq\lambda_+$, where $\lambda_{\pm}$ are defined in  \eref{eq:lambdapm}. This conclusion is expected given that $\zeta(N,\delta,\Omega)\leq \zeta(N,\delta,\beta)$ and $\zeta(N,\delta,\Omega)\leq \zeta(N,\delta,\tau)$.

\section{\label{sec:NumTestGen}Efficiencies of general verification strategies }
In this section we present our main results on the efficiencies of general verification strategies.  As we shall see shortly, the efficiency of a general verification operator $\Omega$ of a pure state $|\Psi\>$ is mainly determined by its  second largest eigenvalue $\beta$ (or equivalently $\nu=1-\beta$) and the smallest eigenvalue  $\tau$. 

\subsection{Singular verification strategies}
The efficiency of a singular verification strategy is characterized by \Lref{lem:MinFidelityUB} and \thref{thm:NIID} below, which  are proved in \aref{sec:NIIDproof}. Note that \esref{eq:MinFidelityBound} and \eqref{eq:MinFidelityBound2}  in \thref{thm:NIID} actually apply to all verification strategies, although these bounds could be quite loose for nonsingular strategies. Define
\begin{equation}
\delta^*:=\frac{1+ N\beta}{N+1}=\frac{1+ N(1-\nu)}{N+1}.
\end{equation}
\begin{lemma}\label{lem:MinFidelityUB}
	Suppose $\Omega$ is a singular verification operator and  $1/(N+1)\leq \delta\leq \delta^*$.  Then
	\begin{equation}\label{eq:MinFub}
	F(N,\delta,\Omega)\leq
	1-\frac{1}{ (N+1)\delta}.
	\end{equation}
\end{lemma}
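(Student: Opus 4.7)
The plan is to reduce the fidelity bound to a bound on $\zeta$ via the identity $F(N,\delta,\Omega) = \zeta(N,\delta,\Omega)/\delta$ from \eref{eq:Fzeta}, and then exhibit an explicit feasible permutation-invariant state $\rho$ that certifies $\zeta(N,\delta,\Omega) \leq \delta - 1/(N+1)$. Since $\zeta(N,\delta,\Omega)$ is a minimum of $f_\rho$ over states with $p_\rho \geq \delta$, any single such $\rho$ provides an upper bound, and by the geometric picture of $R_{N,\Omega}$ as a convex polygon (\fref{fig:ConvexHull}) I expect the tight construction to be a convex combination of two carefully chosen extremal points of the form $(\eta_{\bfk}(\bm{\lambda}),\zeta_{\bfk}(\bm{\lambda}))$.

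Both extremal points I would use crucially exploit the assumption that $\Omega$ is singular ($\tau=\lambda_D=0$). Take $\bfk_- := (N,0,\ldots,0,1)$, so $\rho_{\bfk_-}$ is the symmetrization of $(|\Psi\>\<\Psi|)^{\otimes N}\otimes \Pi_D$; plugging $\lambda_D=0$ into \eref{eq:etazeta} yields $(\eta_{\bfk_-},\zeta_{\bfk_-})=(1/(N+1),0)$. Take $\bfk_+ := (N,1,0,\ldots,0)$, so $\rho_{\bfk_+}$ is the symmetrization of $(|\Psi\>\<\Psi|)^{\otimes N}\otimes \Pi_2$; using $\lambda_2=\beta$ a short computation gives $(\eta_{\bfk_+},\zeta_{\bfk_+})=(\delta^*,N\beta/(N+1))$. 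Both points lie on the slope-one line $\zeta=\eta-1/(N+1)$, so for any $\delta\in[1/(N+1),\delta^*]$ the convex combination
\begin{equation*}
\rho = p\,\rho_{\bfk_-} + (1-p)\,\rho_{\bfk_+},\qquad p=\frac{1+N\beta-(N+1)\delta}{N\beta},
\end{equation*}
has $p\in[0,1]$ and, by \eref{eq:prhofrho}, satisfies $p_\rho=\delta$ and $f_\rho = (1-p)\cdot N\beta/(N+1)=\delta-1/(N+1)$. This immediately yields $\zeta(N,\delta,\Omega)\leq \delta-1/(N+1)$ and hence \eref{eq:MinFub}.

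The boundary case $\beta=0$ (which forces $\Omega=|\Psi\>\<\Psi|$ and $\delta^*=1/(N+1)$, collapsing the interval to a single point) is handled trivially by $\rho_{\bfk_-}$ alone, giving $\zeta=0$ in agreement with \eref{eq:FidelityHomo0}. The main conceptual obstacle is not the algebra but spotting, among the many candidate extremal points generated by the singular spectrum, the particular pair whose connecting segment has the clean slope-one equation $\zeta=\eta-1/(N+1)$; once this segment is identified the proof reduces to the one-line convex combination together with the formulas in \eref{eq:etazeta}. Note that this construction also makes the bound transparently tight: equality holds in \eref{eq:MinFub} along the entire interval $[1/(N+1),\delta^*]$ precisely because the segment lies on the lower boundary of $R_{N,\Omega}$ in that range.
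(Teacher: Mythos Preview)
Your proof is correct and takes essentially the same approach as the paper: both identify the same two extremal points $\bfk_-=(N,0,\ldots,0,1)$ and $\bfk_+=(N,1,0,\ldots,0)$, observe that they lie on the line $\zeta=\eta-1/(N+1)$, and use their convex combination to produce a state with $p_\rho=\delta$ and $f_\rho=\delta-1/(N+1)$. Your version is slightly more explicit (writing out the mixing coefficient $p$) and also treats the degenerate case $\beta=0$, but the core idea is identical.
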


\begin{theorem}\label{thm:NIID}
Suppose $0<\delta\leq 1$ and $0<\nu\leq1 $. Then
	\begin{align}
	F(N,\delta,\Omega) \ge
	1-\frac{1- \delta}{ N\nu\delta},
	\label{eq:MinFidelityBound}
	\end{align}
and	the inequality  is saturated when $\delta^*\leq \delta \leq 1 $. If in addition $\nu\geq 1/2$, then
	\begin{align}\label{eq:MinFidelityBound2}
	F(N,\delta,\Omega)\ge
	1-\frac{1}{ (N+1)\delta},
	\end{align}
and	the inequality  is saturated when $\Omega$ is singular and $\delta$ satisfies $1/(N+1)\leq \delta\leq \delta^*$.
\end{theorem}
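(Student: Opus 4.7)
The plan is to reduce both bounds to $\delta$-independent linear inequalities on the convex polygon $R_{N,\Omega}$, verify these at the extremal generators $(\eta_\bfk(\bm\lambda),\zeta_\bfk(\bm\lambda))$ for $\bfk\in\scrS_N$, and then collapse the verification to a scalar estimate for a homogeneous strategy via a coordinate-wise monotonicity argument.

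First I would recast each fidelity bound as an affine inequality. The bound $F(N,\delta,\Omega)\geq 1-(1-\delta)/(N\nu\delta)$ is implied by the universal inequality $(N\nu+1)\,p_\rho-N\nu\, f_\rho\leq 1$ on $R_{N,\Omega}$, which, combined with $p_\rho\geq \delta$, yields $f_\rho\geq \delta-(1-\delta)/(N\nu)$ and hence the desired lower bound on $F=f_\rho/p_\rho$. The second bound $F\geq 1-1/[(N+1)\delta]$ under $\nu\geq 1/2$ is implied by $p_\rho-f_\rho\leq 1/(N+1)$. Because $R_{N,\Omega}$ is the convex hull of $\{(\eta_\bfk,\zeta_\bfk)\}_{\bfk\in\scrS_N}$, it suffices to check each inequality at these generators, which I parameterize (assuming for now $\tau>0$) by $M=N+1-k_1$, $\mu=\prod_{i\geq 2}\lambda_i^{k_i}$, and $S=\sum_{i\geq 2}k_i/\lambda_i$, so that $\zeta_\bfk=k_1\mu/(N+1)$ and $\eta_\bfk-\zeta_\bfk=\mu S/(N+1)$.

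A direct partial-derivative calculation shows that for every $j\geq 2$ with $k_j\geq 1$, both relevant expressions $\mu[k_1+S(1+N\nu)]$ and $\mu S$ are nondecreasing in $\lambda_j$ on $[0,\beta]$. Hence each inequality reduces to its value at the corner $\lambda_i=\beta$ for all $i\geq 2$, which is the homogeneous strategy with eigenvalue $\beta$ and $M$ failure slots. There the estimates become the scalar inequalities $\beta^M+M\nu\beta^{M-1}\leq 1$ and, for $\beta\leq 1/2$, $M\beta^{M-1}\leq 1$. The first follows from $1-\beta^M=\nu\sum_{j=0}^{M-1}\beta^j\geq \nu M\beta^{M-1}$; the second from the fact that $M\beta^{M-1}$ attains its maximum over positive integers at $M\in\{1,2\}$ when $\beta=1/2$, with value exactly $1$, and is strictly smaller for $\beta<1/2$. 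Saturation of the first bound on $[\delta^*,1]$ is realized by taking $\Omega$ homogeneous with eigenvalue $\beta$, since \thref{thm:FidelityHomo} then yields $k_*=0$ and $\zeta(N,\delta,\beta)=\delta-(1-\delta)/(N\nu)$. Saturation of the second bound on $[1/(N+1),\delta^*]$ follows by sandwiching the present lower bound with the complementary upper bound of \lref{lem:MinFidelityUB} for singular $\Omega$.

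The main obstacle is the multivariable monotonicity step, where conventional symmetric inequalities (AM--GM, power means) do not cleanly reduce the generator estimate to the single homogeneous one; the coordinate-wise partial-derivative argument sidesteps this by exploiting that both expressions are individually monotone on the box $[0,\beta]^{D-1}$. A secondary technical issue is the singular case $\tau=0$, in which the formal manipulations involving $S=\sum k_i/\lambda_i$ become ill-defined; I would handle this by perturbing $\Omega$ to a nonsingular $\Omega_\epsilon$ with $\tau(\Omega_\epsilon)=\epsilon>0$ and the same $\beta$, verifying the (polynomial) inequalities for $\Omega_\epsilon$, and passing to the limit $\epsilon\to 0^+$ using continuity of $(\eta_\bfk,\zeta_\bfk)$ in $\bm\lambda$.
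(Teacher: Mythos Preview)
Your proposal is correct and follows essentially the same route as the paper: recast each fidelity bound as an affine inequality on $R_{N,\Omega}$, check it at the generators $(\eta_\bfk,\zeta_\bfk)$, and reduce to the homogeneous case via coordinate-wise monotonicity in the $\lambda_j$. The paper packages the first bound as the ratio estimate $(1-\eta_\bfk)/(1-\zeta_\bfk)\geq N\nu/(N\nu+1)$ (its \lref{lem:slope}, proved via the monotonicity \lref{lem:hkmono}) rather than your equivalent linear form $(N\nu+1)\eta_\bfk-N\nu\zeta_\bfk\leq 1$, and handles the second bound via the direct estimate of \lref{lem:etazetadif}; both arrive at precisely your scalar inequalities $\beta^M+M\nu\beta^{M-1}\leq 1$ and $M\beta^{M-1}\leq 1$.

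One small gap: your saturation argument for \eref{eq:MinFidelityBound} only treats homogeneous $\Omega$, whereas the theorem asserts saturation for \emph{every} $\Omega$ with the given $\nu$ when $\delta\in[\delta^*,1]$. The fix is immediate: either invoke the comparison $F(N,\delta,\Omega)\leq F(N,\delta,\beta)$ from \eref{eq:StrategyOrder} (since $\{1,\beta\}\subset\sigma(\Omega)$) to sandwich, or, as the paper does, observe that $\bfk_0=(N{+}1,0,\ldots,0)$ and $\bfk_1=(N,1,0,\ldots,0)$ both yield equality in $(N\nu+1)\eta_\bfk-N\nu\zeta_\bfk=1$ for the general $\Omega$, so convex combinations of $\rho_{\bfk_0}$ and $\rho_{\bfk_1}$ realize every $p_\rho\in[\delta^*,1]$ with equality in \eref{eq:frhoBound}.
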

The   bound in \eref{eq:MinFidelityBound} is positive and thus nontrivial if $\delta> 1/(N\nu+1)$, while the one in \eref{eq:MinFidelityBound2} is positive  if $\delta> 1/(N+1)$. The first bound is saturated and thus optimal when
$\delta\geq \delta^*$, while the second bound is better when $\delta< \delta^*$. The two bounds coincide when $\delta=\delta^*$. 
The  bound in \eref{eq:MinFidelityBound2}
under the condition
$\nu \ge 1/2 $
was also given in \rcite{HayaM15} under a slightly different situation.
According to \lref{lem:MinFidelityUB} and \thref{thm:NIID}, if  $\Omega$ is singular,  then 
\begin{align}\label{eq:MinFidelityBound3}
F(N,\delta,\Omega)\leq  \max\biggl\{0,
1-\frac{1- \delta}{ N\nu\delta},
1-\frac{1}{ (N+1)\delta}\biggr\}.
\end{align}
If  $\nu\geq 1/2$, by contrast, then the above inequality is reversed,
\begin{align}\label{eq:MinFidelityBound4}
F(N,\delta,\Omega)\geq \max\biggl\{0,
1-\frac{1- \delta}{ N\nu\delta},
1-\frac{1}{ (N+1)\delta}\biggr\}.
\end{align}
If  $\Omega$ is singular and meanwhile $\nu\geq 1/2$, 
then the inequalities in \esref{eq:MinFidelityBound3} and \eqref{eq:MinFidelityBound4} are  saturated. 

\begin{corollary}\label{cor:NumTestAdvGen}Suppose $0<\epsilon,\delta<1$ and $0<\nu\leq 1$. Then 
\begin{equation}\label{eq:NumTestAdvUB}
N(\epsilon,\delta,\Omega)\leq\biggl\lceil
\frac{1- \delta}{\nu\delta \epsilon}\biggr\rceil.
\end{equation}
If $\Omega$ is singular, then 
\begin{equation}\label{eq:NumTestAdvSingularLB}
N(\epsilon,\delta,\Omega)\geq \min\left\{
\biggl\lceil\frac{1- \delta}{\nu \delta \epsilon}\biggr\rceil,\; \biggl\lceil\frac{1}{\delta\epsilon}-1\biggr\rceil\right\}.
\end{equation}
If  $\nu\geq 1/2$, then 
\begin{equation}\label{eq:NumTestAdvLvUB}
N(\epsilon,\delta,\Omega)\leq \min\left\{
\biggl\lceil\frac{1- \delta}{\nu \delta \epsilon}\biggr\rceil,\; \biggl\lceil\frac{1}{\delta\epsilon}-1\biggr\rceil\right\}.
\end{equation}
\end{corollary}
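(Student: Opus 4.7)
The plan is to deduce all three bounds directly from the definition $N(\epsilon,\delta,\Omega)=\min\{N\geq 1\,|\,F(N,\delta,\Omega)\geq 1-\epsilon\}$ by combining it with the fidelity bounds in \thref{thm:NIID} and \lref{lem:MinFidelityUB}, which were established just above this corollary. The general principle is that any lower bound on $F(N,\delta,\Omega)$ translates into an upper bound on $N(\epsilon,\delta,\Omega)$, while any upper bound on $F$ translates into a lower bound on $N$. Since both the lower bound \eqref{eq:MinFidelityBound4} and the singular upper bound \eqref{eq:MinFidelityBound3} involve a maximum of the same two explicit expressions, the resulting $N$-bounds will involve the minimum of the corresponding thresholds.

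First, for the unconditional upper bound \eqref{eq:NumTestAdvUB}, I would use \eref{eq:MinFidelityBound} from \thref{thm:NIID}, which gives $F(N,\delta,\Omega)\geq 1-(1-\delta)/(N\nu\delta)$ for every verification operator $\Omega$. Setting the right-hand side $\geq 1-\epsilon$ and solving for $N$ yields the sufficient condition $N\geq (1-\delta)/(\nu\delta\epsilon)$, hence the ceiling on the right of \eqref{eq:NumTestAdvUB}. This step is straightforward algebra.

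For the singular lower bound \eqref{eq:NumTestAdvSingularLB}, I would invoke \eqref{eq:MinFidelityBound3} from \lref{lem:MinFidelityUB}, which states that when $\Omega$ is singular, $F(N,\delta,\Omega)$ is bounded above by the maximum of $0$, $1-(1-\delta)/(N\nu\delta)$, and $1-1/[(N+1)\delta]$. For $F(N,\delta,\Omega)\geq 1-\epsilon$ to have any chance of holding, the maximum must be at least $1-\epsilon$, so at least one of the two nontrivial terms must exceed $1-\epsilon$. The first gives $N\geq (1-\delta)/(\nu\delta\epsilon)$ and the second gives $N\geq 1/(\delta\epsilon)-1$. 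Therefore, $N(\epsilon,\delta,\Omega)$ is at least the smaller of the two ceilings, as claimed. Symmetrically, for the upper bound \eqref{eq:NumTestAdvLvUB} under $\nu\geq 1/2$, I would use \eqref{eq:MinFidelityBound4}, which gives $F(N,\delta,\Omega)$ \emph{bounded below} by the same maximum. Now it suffices that \emph{either} term reach $1-\epsilon$, i.e.\ that $N$ meet \emph{either} threshold, so $N(\epsilon,\delta,\Omega)$ is at most the smaller of the two ceilings.

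No step requires new ideas beyond the two fidelity bounds already proved, and there is no real obstacle: the only mild subtlety is keeping straight the direction of implication (max upper bound on $F$ forces \emph{both} conditions to fail, giving a lower bound on $N$ via the min of thresholds; max lower bound on $F$ is achieved once \emph{one} condition holds, giving an upper bound on $N$ via the same min). A brief remark should also confirm that the ceilings are well defined and that the inequalities remain correct when one of the nontrivial terms in the maximum is negative (which can only strengthen the lower bound in \eqref{eq:NumTestAdvSingularLB} and is automatically absorbed into the $\min$).
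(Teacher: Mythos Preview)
Your proposal is correct and follows essentially the same route as the paper, which states that the corollary ``is an easy consequence of \thref{thm:NIID} and \esref{eq:MinFidelityBound3}, \eqref{eq:MinFidelityBound4}.'' One minor attribution slip: \eref{eq:MinFidelityBound3} is not part of \lref{lem:MinFidelityUB} alone but is derived in the text from \lref{lem:MinFidelityUB} together with \thref{thm:NIID}; this does not affect the argument.
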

\Crref{cor:NumTestAdvGen} is an easy consequence of \thref{thm:NIID} and \esref{eq:MinFidelityBound3}, \eqref{eq:MinFidelityBound4}. If  $\Omega$ is singular and  $\nu\geq 1/2$, 
then the inequalities in \esref{eq:NumTestAdvSingularLB} and \eqref{eq:NumTestAdvLvUB} are  saturated, so we have
\begin{equation}
N(\epsilon,\delta,\Omega)= \min\left\{
\biggl\lceil\frac{1- \delta}{\nu \delta \epsilon}\biggr\rceil,\; \biggl\lceil\frac{1}{\delta\epsilon}-1\biggr\rceil\right\},
\end{equation}
which generalizes \eref{eq:NumTestSingHomo}.  The number of tests characterized by the  upper bound in \eref{eq:NumTestAdvUB} is much smaller than what can be achieved by previous approaches that are based on the quantum de Finetti theorem \cite{MoriTH17,TakeM18}. Nevertheless, the scaling with $1/\delta$ is still not satisfactory compared with the counterpart for the nonadversarial scenario.

\subsection{Nonsingular verification strategies}

Next, we  provide an even better bound on the number of tests when $\Omega$ is nonsingular.  \Lref{lem:Fboundf} and \thref{thm:NumTestBounds} below are proved in \aref{sec:NumberTesOptProof}. 
\begin{lemma}\label{lem:Fboundf}
	Suppose $0<\delta,f\leq 1$ and  $\Omega$ is a positive-definite verification operator  with $0<\tau\leq \beta<1$. Then
\begin{align}
F(N,\delta,\Omega)&\geq \frac{N+1-(\ln \beta)^{-1}\ln (\tau\delta)}{N+1-(\ln \beta)^{-1}\ln (\tau\delta)-h\ln (\tau\delta)},  \label{eq:Fbounddel}\\
\caF(N,f,\Omega)&\geq \frac{N+1-(\ln \beta)^{-1}\ln f}{N+1-(\ln \beta)^{-1}\ln f-h\ln f}, \label{eq:Fboundf}
\end{align}
where
\begin{align}\label{eq:Overhead}
h&=h(\Omega):=\max_{j\geq 2}\bigl(\lambda_j \ln \lambda_j^{-1}\bigr)^{-1}\nonumber\\
&=\bigl[\min\{\beta \ln \beta^{-1}, \tau \ln \tau^{-1}\}\bigr]^{-1}.
\end{align}
\end{lemma}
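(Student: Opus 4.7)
The plan is to reduce \eref{eq:Fbounddel} to a pointwise inequality at the vertices $(\eta_\bfk(\bm\lambda),\zeta_\bfk(\bm\lambda))$ of the convex polygon $R_{N,\Omega}$ and then lift it to the whole region by Jensen's inequality applied to the linear-programming representation \eref{eq:zetaLP}. The starting observation is a purely scalar fact: the map $\lambda\mapsto\lambda\ln\lambda^{-1}$ is concave on $(0,1]$ with unique maximum $e^{-1}$ at $\lambda=e^{-1}$, so its minimum over the interval $[\tau,\beta]$ is attained at an endpoint. This yields $\lambda\ln\lambda^{-1}\ge h^{-1}$ for every $\lambda\in[\tau,\beta]$, or equivalently
\begin{equation*}
\frac{1}{\lambda}\le h\ln\lambda^{-1}\quad\text{for all }\lambda\in[\tau,\beta].
\end{equation*}

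Fix any $\bfk\in\scrS_N$ and set $a:=k_1$ and $P:=\prod_{i\ge 2}\lambda_i^{k_i}$. Since each $\lambda_i$ with $i\ge 2$ lies in $[\tau,\beta]$, multiplying the scalar inequality by $k_i$ and summing over $i\ge 2$ gives $\sum_{i\ge 2}k_i/\lambda_i\le -h\ln P$. Substituting this into the explicit formulas $\zeta_\bfk(\bm\lambda)=aP/(N+1)$ and $\eta_\bfk(\bm\lambda)=P\bigl[a+\sum_{i\ge 2}k_i/\lambda_i\bigr]/(N+1)$ from \eref{eq:etazeta} yields the clean pointwise bound
\begin{equation*}
(N+1)\bigl[\eta_\bfk(\bm\lambda)-\zeta_\bfk(\bm\lambda)\bigr]\le -hP\ln P.
\end{equation*}
Combining this with the auxiliary spectral bound $P\le\beta^{\,N+1-a}$ (which follows directly from $\lambda_i\le\beta$ for $i\ge 2$) and eliminating the implicit vertex parameters $P$ and $a$ yields, after some algebra, a closed-form inequality $\zeta_\bfk(\bm\lambda)\ge g(\eta_\bfk(\bm\lambda))$, where $g(\delta)$ equals $\delta$ times the RHS of \eref{eq:Fbounddel}.

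For the lifting step, one verifies that $g$ is convex in $\delta$ on its nontrivial range (where the numerator in \eref{eq:Fbounddel} is positive, i.e., $\delta\ge \beta^{N+1}/\tau$). Since by \lref{lem:equiTwoDef} and \eref{eq:zetaLP} any feasible $\rho$ contributes $\sum_\bfk c_\bfk\zeta_\bfk$ subject to $\sum_\bfk c_\bfk\eta_\bfk=\delta$, Jensen's inequality gives
\begin{equation*}
\sum_\bfk c_\bfk\zeta_\bfk\ge\sum_\bfk c_\bfk g(\eta_\bfk)\ge g\Bigl(\sum_\bfk c_\bfk\eta_\bfk\Bigr)=g(\delta),
\end{equation*}
which upon dividing by $\delta$ is exactly \eref{eq:Fbounddel}. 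The companion bound \eref{eq:Fboundf} on $\caF(N,f,\Omega)=f/\tilde\eta(N,f,\Omega)$ (cf.~\eref{eq:MinFfA}) follows by the dual argument, maximizing over the LP \eref{eq:etaLP} and using the concavity of $g^{-1}$ (guaranteed by convexity and monotonicity of $g$).

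The main obstacle is the closed-form extraction in Step~2: the inequality $(N+1)(\eta_\bfk-\zeta_\bfk)\le -hP\ln P$ couples $\zeta_\bfk$, $\eta_\bfk$, and $P$ implicitly, and one must use the $\beta$-dependent spectral estimate $P\le\beta^{N+1-a}$ to disentangle them and produce an explicit bound depending only on $\eta_\bfk$, $\beta$, $\tau$, and $N$. Indeed, the factor $(\ln\beta)^{-1}\ln(\tau\delta)$ in the RHS of \eref{eq:Fbounddel} originates precisely from this $\beta$-bound, whereas the factor $h\ln(\tau\delta)$ originates from the $h$-inequality covering both endpoints of $[\tau,\beta]$. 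Verifying convexity of the resulting expression in $\delta$ for the Jensen step is a routine but nontrivial calculation.
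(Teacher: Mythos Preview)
Your route for \eref{eq:Fbounddel}---pointwise vertex inequality plus Jensen---is genuinely different from the paper's and, once the missing step is filled in, it works. The paper instead passes to the continuous relaxation $\bar{\scrS_N}$ of the integer simplex and proves (in a separate lemma) that $\zeta(N,\delta,\Omega)\ge\bar\zeta(N,\delta,\Omega)$ via a delicate analysis of the parametric curve $t\mapsto(\eta_{\bfk(t)},\zeta_{\bfk(t)})$; the bound is then read off at a single real-valued $\bfk$. Your argument avoids that auxiliary lemma entirely. The piece you are vague about---``eliminating $P$ and $a$''---is exactly the lower bound $P\ge\tau\eta_\bfk$, which follows from $\lambda_i\ge\tau$ via $\tau\sum_{i\ge2}k_i/\lambda_i\le N+1-k_1$; this is where the $\tau$ in $\ln(\tau\delta)$ actually enters (not the $\beta$-bound as you suggest). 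With that in hand, $k_1\ge N+1-\ln P/\ln\beta\ge N+1-\ln(\tau\eta_\bfk)/\ln\beta$ and $\sum_{i\ge2}k_i/\lambda_i\le -h\ln P\le -h\ln(\tau\eta_\bfk)$, which gives the pointwise bound $\zeta_\bfk\ge g(\eta_\bfk)$ directly (trivially when the numerator is nonpositive). Convexity of $g$ on $[\tau^N,1]$ is a short computation, and Jensen finishes \eref{eq:Fbounddel}.

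The ``dual argument via $g^{-1}$'' for \eref{eq:Fboundf}, however, does not go through. First, $g$ is \emph{not} monotone on the full range $[\tau^N,1]$ of $\eta_\bfk$: it is convex with a strict interior minimum whenever $\tau<\beta$ (e.g.\ take $\beta=1/2$, $\tau=1/10$, $N=2$ and check $g'(\tau^N)<0<g'(1)$), so $g^{-1}$ is not well defined. Second, even on the branch where $g$ is invertible, $f/g^{-1}(f)$ is not the right-hand side of \eref{eq:Fboundf}: the latter has $\ln f$, not $\ln(\tau\delta)$, and the two are not interchangeable. Third, the natural direct analogue---a pointwise bound $\eta_\bfk\le\tilde g(\zeta_\bfk)$ with $\tilde g(f)=f\,[\tilde A(f)+\tilde B(f)]/\tilde A(f)$ and Jensen---also fails: for vertices with $0<\zeta_\bfk<\beta^{N+1}$ (which exist whenever $\tau<\beta$) one has $\tilde A(\zeta_\bfk)<0$ and hence $\tilde g(\zeta_\bfk)<0<\eta_\bfk$, so the pointwise inequality is false and the Jensen step cannot be salvaged on the full LP. This is precisely why the paper's proof of \eref{eq:Fboundf} goes through the continuous relaxation: with real-valued $\bfk$ one can impose $\zeta_\bfk=f$ exactly and bound the ratio $\zeta_\bfk/\eta_\bfk$ at that single point, bypassing any need for a global concave envelope.
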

Define
\begin{equation}\label{eq:betatilde}
\tilde{\beta}:=\begin{cases}
\beta, \quad  \beta\ln \beta^{-1}\leq \tau\ln \tau^{-1},\\
\tau, \quad \beta\ln \beta^{-1}> \tau\ln \tau^{-1}.
\end{cases}
\end{equation}
Then we have $h=(\tilde{\beta}\ln \tilde{\beta}^{-1})^{-1}$.
Note that $h>1/|\ln \beta|$ and $-h\ln (\tau\delta)>(\ln \beta)^{-1}\ln (\tau\delta)$, so the denominator in \eref{eq:Fbounddel} is positive, and so is the denominator in \eref{eq:Fboundf}. In addition, the lower bounds in \esref{eq:Fbounddel} and \eqref{eq:Fboundf} increase monotonically with $N$, which is expected in view of \lref{lem:ZetaEtaMonN}.

By virtue of \lref{lem:Fboundf} we can derive upper bounds for	$N(\epsilon,\delta,\Omega)$ which are tight in the high-precision limit. Meanwhile, we can derive lower bounds for $N(\epsilon,\delta,\Omega)$ based on the fact that $	N(\epsilon,\delta,\Omega)\geq N(\epsilon,\delta,\lambda_j)$ for $ j=2,3,\ldots, D$,   where $\lambda_j$ are the eigenvalues of $\Omega$ arranged in decreasing order $1=\lambda_1>\lambda_2\geq \lambda_3\geq \cdots \geq \lambda_D>0$. The main results are summarized in the following theorem.
\begin{theorem}\label{thm:NumTestBounds}
Suppose $0<\epsilon,\delta<1$ and  $\Omega$ is a positive-definite verification operator with $0<\tau\leq \beta<1$. Then
	\begin{widetext}
\begin{gather}
		N(\epsilon,\delta,\Omega)\geq N(\epsilon,\delta,\lambda_j)\geq k_-(\lambda_j)+\biggl\lceil\frac{k_-(\lambda_j)F}{\lambda_j\epsilon}\biggr\rceil,\quad j=2,3,\ldots, D,  \label{eq:NumTestLB}\\
		k_-(\tilde{\beta})+\biggl\lceil\frac{k_-(\tilde{\beta})F}{\tilde{\beta}\epsilon}\biggr\rceil\leq 
		N(\epsilon,\delta,\Omega)\leq \left\lceil \frac{hF\ln(F\delta)^{-1}}{\epsilon}  +\frac{\ln(F\delta)}{\ln\beta}-1\right\rceil< \frac{h\ln(F\delta)^{-1}}{\epsilon} , 	\label{eq:NumTestLUB}\\		
		N(\epsilon,\delta,\Omega)\leq \left\lceil \frac{hF\ln(\tau\delta)^{-1}}{\epsilon}  +\frac{\ln(\tau\delta)}{\ln\beta}-1\right\rceil< \frac{h\ln(\tau\delta)^{-1}}{\epsilon}, 	\label{eq:NumTestUB2}	
		\end{gather}
	\end{widetext}
where we have $F=1-\epsilon$,  $k_-(\lambda_j)=\lfloor (\ln\delta)/\ln\lambda_j\rfloor$, and $k_-(\tilde{\beta})=\lfloor (\ln\delta)/\ln\tilde{\beta}\rfloor$.
\end{theorem}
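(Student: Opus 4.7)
The plan is to split Theorem~\ref{thm:NumTestBounds} into a lower-bound part and an upper-bound part and to handle each by reducing it to a result already proved in the paper. The lower bounds will come from the monotonicity relation \eref{eq:StrategyOrder} combined with the homogeneous-strategy bound in Theorem~\ref{thm:NumTestHomoBounds}, while the upper bounds will be obtained by inverting the fidelity lower bounds of Lemma~\ref{lem:Fboundf}. The main computational work has already been absorbed into those two lemmas, so what remains is mostly bookkeeping, with one mildly non-routine inequality involving $h$ and $\ln\beta^{-1}$ to be verified at the end.

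For the lower bound \eref{eq:NumTestLB}, I would note that for each eigenvalue $\lambda_j$ of $\Omega$ with $j\geq 2$, the homogeneous verification operator $\Omega_j:=|\Psi\rangle\langle\Psi|+\lambda_j(\id-|\Psi\rangle\langle\Psi|)$ has spectrum $\{1,\lambda_j\}\subset\sigma(\Omega)$ and satisfies $\beta(\Omega_j)=\lambda_j<1$. By \eref{eq:StrategyOrder}, $N(\epsilon,\delta,\lambda_j)=N(\epsilon,\delta,\Omega_j)\leq N(\epsilon,\delta,\Omega)$, and the second inequality in \eref{eq:NumTestLB} is precisely Theorem~\ref{thm:NumTestHomoBounds} applied to $\Omega_j$. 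The lower bound in \eref{eq:NumTestLUB} then follows by specializing to $\lambda_j=\tilde{\beta}$, which is $\beta$ or $\tau$ by definition and hence a genuine eigenvalue of $\Omega$.

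For the upper bounds I would use Lemma~\ref{lem:MinNumTestDef} together with Lemma~\ref{lem:Fboundf}. To derive \eref{eq:NumTestUB2}, it suffices by \eref{eq:MinNumTestDef} to find the smallest $N$ making the right-hand side of \eref{eq:Fbounddel} at least $F=1-\epsilon$. Writing $X:=N+1-(\ln\beta)^{-1}\ln(\tau\delta)$ and $Y:=-h\ln(\tau\delta)>0$, the inequality $X/(X+Y)\geq F$ rearranges to $X\geq FY/(1-F)=FY/\epsilon$, which solves to $N\geq hF\ln(\tau\delta)^{-1}/\epsilon+\ln(\tau\delta)/\ln\beta-1$; taking the ceiling yields \eref{eq:NumTestUB2}. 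The first upper bound in \eref{eq:NumTestLUB} is obtained identically, but starting from the characterization \eref{eq:MinNumTestDefcaF} of Lemma~\ref{lem:MinNumTestDef} and substituting $f=F\delta$ into \eref{eq:Fboundf}.

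The last thing to check is the strict inequality $N(\epsilon,\delta,\Omega)<h\ln(F\delta)^{-1}/\epsilon$ (and its $\tau\delta$-analogue). Setting $L:=\ln(F\delta)^{-1}>0$, $A:=hFL/\epsilon+\ln(F\delta)/\ln\beta-1$, and $B:=hL/\epsilon$, a short calculation using $1-F=\epsilon$ gives $B-A=L(h-1/\ln\beta^{-1})+1$. The only slightly delicate step, and what I expect to be the main obstacle, is verifying $h>1/\ln\beta^{-1}$, which I would do by case analysis on $\tilde{\beta}$: if $\tilde{\beta}=\beta$, then $h=1/(\beta\ln\beta^{-1})>1/\ln\beta^{-1}$ because $\beta<1$; if $\tilde{\beta}=\tau$, then by the defining inequality $\tau\ln\tau^{-1}<\beta\ln\beta^{-1}<\ln\beta^{-1}$, so again $h>1/\ln\beta^{-1}$. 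Consequently $B-A>1$, which combined with $\lceil A\rceil<A+1$ gives $\lceil A\rceil<B$, establishing the strict inequality in \eref{eq:NumTestLUB}. The same reasoning with $\tau\delta$ in place of $F\delta$ disposes of the strict inequality in \eref{eq:NumTestUB2}.
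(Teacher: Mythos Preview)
Your approach is essentially identical to the paper's: the lower bounds come from \eref{eq:StrategyOrder} together with Theorem~\ref{thm:NumTestHomoBounds}, the upper bounds from inverting Lemma~\ref{lem:Fboundf} via Lemma~\ref{lem:MinNumTestDef}, and the strict inequalities from the observation $h>1/\ln\beta^{-1}$, proved exactly as you do.

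The one step you omit is checking that the ceiling $\bigl\lceil hF\ln(F\delta)^{-1}/\epsilon+\ln(F\delta)/\ln\beta-1\bigr\rceil$ is at least $1$. This is not automatic: your computation only gives $A>-1$, and if $\lceil A\rceil\leq 0$ the claimed upper bound would be false (since $N(\epsilon,\delta,\Omega)\geq 1$ always) and Lemma~\ref{lem:Fboundf} could not be invoked. The paper handles this with a short but not entirely trivial argument, showing that $hF\ln f^{-1}/\epsilon+\ln f/\ln\beta>1$ by bounding below by the $\delta\to 1$ case and reducing to the inequality $F\ln F+\epsilon\beta\ln F-\epsilon\beta\ln\beta<0$, which is then verified by optimizing over $\beta$. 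The same check is needed for \eref{eq:NumTestUB2}.
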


The upper bounds in \eref{eq:NumTestLUB} are worse than those in \eref{eq:NumTestUB2} if $F<\tau=\tau(\Omega)$, while they are better if $F>\tau$, which is usually the case for high-precision verification. 
Suppose $\tau$ is bounded from below by a positive constant. 
Then  the ratio of the lower bound over   the upper bound in \eref{eq:NumTestLUB} approaches 1 in the high-precision limit $\epsilon,\delta\rightarrow 0$, so the two bounds are nearly tight, as in the case of homogeneous strategies. As a consequence, we have
\begin{align}\label{eq:limepdelGen}
\lim_{\epsilon,\delta\rightarrow 0} \frac{\epsilon N(\epsilon,\delta,\Omega)}{\ln\delta^{-1}}=h=\frac{1}{\tilde{\beta}\ln\tilde{\beta}^{-1}}.
\end{align}
When $\epsilon,\delta\ll 1$, accordingly, $N(\epsilon,\delta,\Omega)$ can be approximated as follows,
\begin{equation}\label{eq:NumberMeasOpt2}
N(\epsilon,\delta,\Omega)\approx  \frac{h\ln\delta^{-1}}{\epsilon}=\frac{\ln\delta}{\epsilon\tilde{\beta}\ln\tilde{\beta}}. 
\end{equation}
The number of tests has the same scaling behaviors with $\epsilon^{-1}$ and $\delta^{-1}$ as the counterpart for  the nonadversarial scenario presented in \esref{eq:NumTest} and \eqref{eq:NumTestAF},  except for an overhead characterized by $\nu h$. However, $\Omega$ is not efficient when $\tau$ is too small according to \eref{eq:NumTestLB} as well as \esref{eq:NumTestHomoLS} and \eqref{eq:NumTestHomoLS2}. In addition,
the scaling behavior with $\delta^{-1}$ would be worse if $\Omega$ were singular according to \eref{eq:NumTestAdvSingularLB}.

The above analysis can be extended to the scenario in which we want to
verify whether the support of the resultant state
belongs to a certain subspace ${\cal K}$.
In this case,
we need to replace
the  projector $|\Psi\>\<\Psi|$ by the projector $P$ onto
the subspace ${\cal K}$, impose the condition $ E_l P=P$, and redefine $ f_\rho$ as $\tr[(\Omega^{\otimes N}\otimes P ) \rho]$. Such an extension is useful when we want to verify whether
the resultant state is correctable in a  fault-tolerant way \cite{FujiH17}.

\section{\label{sec:recipe}General recipe to verifying pure states in the adversarial scenario}
According to  \sref{sec:NumTestGen}, the number $N(\epsilon,\delta,\Omega)$ of tests  required to verify a pure state in the adversarial scenario has the same scaling behavior with $\epsilon^{-1}$ and $\delta^{-1}$ as the counterpart for  the nonadversarial scenario  as long as the verification operator  $\Omega$ is nonsingular, and its smallest eigenvalue $\tau$ is bounded from below by a positive constant.  
However, the scaling behavior of $N(\epsilon,\delta,\Omega)$ with $\delta$ is suboptimal when $\Omega$ is singular, that is,  $\tau=0$. Similarly, the efficiency is limited when $\tau$ is nonzero, but very small. To address this problem, here we provide a simple recipe to  reducing  the number of tests significantly, so that pure states can be verified  in the adversarial scenario with high precision and with nearly  the same efficiency as in the nonadversarial scenario. Surprisingly, all we need to do is to perform the trivial test with a suitable probability. By  "trivial test" we mean the test whose test operator $E$  is equal to the identity operator, that is $E=1$, so that  all the states can pass the test with certainty.

\subsection{The recipe}
Suppose $\Omega$ is a  verification operator for the pure state $|\Psi\>$.  Based on $\Omega$, we can construct a new verification operator as follows,
\begin{equation}
\Omega_p=(1-p)\Omega+p,\quad 0\le p<1,
\end{equation}
which means  the trivial test is performed with probability $p$ and  $\Omega$ is performed with probability $1-p$.
Denote by  $\beta_p$ and $\tau_p$ the second largest eigenvalue and  smallest eigenvalue  of $\Omega_p$, respectively. Then 
\begin{equation}
\beta_p=(1-p)\beta+p=1-\nu+p\nu,\quad \tau_p=(1-p)\tau+p,
\end{equation}
where $\beta$ and $\tau$ are the second largest eigenvalue and  smallest eigenvalue  of $\Omega$,  which satisfy the inequality $\tau\leq \beta$. 
Here we view  $\beta_p$ as  a function of $\nu=1-\beta$ and $p$. The spectral gap of $\Omega_p$ reads
\begin{equation}
\nu_p=1-\beta_p=(1-p)\nu.
\end{equation}

 According to \ssref{sec:SetStage} and \ref{sec:QSVre}, the trivial test can only decrease the efficiency in the nonadversarial scenario. In high-precision verification for example, the number of tests required by $\Omega_p$ is about $1/(1-p)$ times the number required by $\Omega$ according to \esref{eq:NumTest} and \eqref{eq:NumTestAF}. 
In sharp contrast, 
the trivial test can increase the efficiency in the adversarial scenario by hedging the influence of small eigenvalues of $\Omega$. Therefore, $\Omega_p$ is called a \emph{hedged verification operator} of $\Omega$.

Thanks to  \eref{eq:NumTestLUB}, to verify the target state $|\Psi\>$ within infidelity $\epsilon$ and significance level $\delta$ in the adversarial scenario, the number of tests required by the strategy $\Omega_p$ (assuming $\tau_p>0$) is upper bounded as follows,
\begin{equation}\label{eq:NumTestAdvRev}
N(\epsilon,\delta,\Omega_p)< \frac{h(p,\nu,\tau) \ln(F\delta)^{-1}}{\epsilon}, 
\end{equation}
where $F=1-\epsilon$ and 
\begin{align} h(p,\nu,\tau)&=h(\Omega_p)=\bigl[\min\bigl\{\beta_p \ln \beta_p^{-1}, \tau_p \ln \tau_p^{-1}\bigr\}\bigr]^{-1}. 
\end{align}	
In comparison with the number  in \eref{eq:NumTest} or \eqref{eq:NumTestAF} for the nonadversarial scenario, the overhead satisfies
\begin{equation}\label{eq:AdvNaRatio1}
\frac{N(\epsilon,\delta,\Omega_{p})}{N_{\na}(\epsilon,\delta,\Omega)}<
\nu h(p,\nu,\tau)	\frac{[\ln (1-\nu\epsilon)^{-1}] \ln(F\delta)}{\nu\epsilon\ln \delta}. 
\end{equation}
It is straightforward to verify that  this bound decreases monotonically with  $1/\epsilon$ and $1/\delta$. It turns out that the bound also decreases monotonically with  $1/\nu$ according to \lsref{lem:hpnmin} and \ref{lem:hpnt} below.
When $\epsilon$ and $\delta$ approach  zero, the bound in \eref{eq:NumTestAdvRev} becomes tight (with respect to the relative deviation) according to  \esref{eq:NumTestLUB} and \eqref{eq:limepdelGen}, so 
we have
\begin{equation}
\lim_{\epsilon,\delta\rightarrow 0}
\frac{N(\epsilon,\delta,\Omega_{p})}{N_{\na}(\epsilon,\delta,\Omega)}= \nu h(p,\nu,\tau).
\end{equation}
This equation corroborates the significance of the  function $\nu h(p,\nu,\tau)$ for  characterizing the overhead of high-precision QSV in the adversarial scenario.

To construct an efficient hedged verification strategy, we need to choose a suitable value of $p$ so as to minimize $h(p,\nu,\tau)$. To this end, it is instructive to recall that the function $x\ln x^{-1}$ is concave in the interval $0\leq x\leq 1$ and is strictly increasing in $x$ when $0\leq x\leq 1/\rme$, while it is strictly decreasing when $1/\rme\leq  x\leq 1$; it attains the maximum $1/\rme$ when $x=1/\rme$. Given the value of $\nu=1-\beta$ and $\tau$ with $\nu+\tau\leq1$, 
the minimum of $h(p,\nu,\tau)$ over $p$ is denoted by $h_*(\nu,\tau)$; the  unique minimizer in $p$ is denoted by $p_*(\nu,\tau)$ or $p_*$  for simplicity; cf.~\fref{fig:p*}. By definition we have
\begin{equation}\label{eq:phstardef}
h_*(\nu,\tau):=\min_{0\leq p<1}h(p,\nu,\tau)=h(p_*,\nu,\tau).
\end{equation}
In addition, it is straightforward to verify that
\begin{equation}
p_*=\min\bigl\{p\geq 0| \beta_p\geq \rme^{-1}\;\& \;\tau_p \ln \tau_p^{-1}\geq \beta_p \ln \beta_p^{-1} \bigr\}. \label{eq:pstardef}
\end{equation}
Here the condition $\beta_p\geq \rme^{-1}$ is
required  when $\tau=\beta$ (so that $\Omega$ is a homogeneous strategy), but is redundant when $\tau<\beta$.  \Eref{eq:pstardef} implies that   $\beta_{p_*}\geq 1/\rme$; by contrast,  $\tau_{p_*}\leq 1/\rme$ if $\tau\leq 1/\rme$.

When the strategy $\Omega$ is homogeneous, that is, when
$\tau=\beta=1-\nu$, we have
\begin{align}
p_*(\nu,1-\nu)&=\begin{cases}
0, &0<\nu\leq 1-\frac{1}{\rme}, \\
\frac{\rme \nu-\rme+1}{\rme \nu},& 1-\frac{1}{\rme}\leq \nu\leq 1;
\end{cases}\label{eq:pstartaumax}\\
h_*(\nu,1-\nu)&=\begin{cases}
(\beta\ln \beta^{-1})^{-1}, &0<\nu\leq 1-\frac{1}{\rme}, \\
\rme, & 1-\frac{1}{\rme}\leq \nu\leq 1.
\end{cases}\label{eq:hstartaumax}
\end{align}
In this case $\Omega_p$ is also homogeneous, so the results presented in \sref{sec:Homo} can be applied directly. In general, it is not easy to derive an analytical formula for $p_*$, but it is very easy to determine $p_*$ numerically.

\begin{figure}
	\includegraphics[width=7cm]{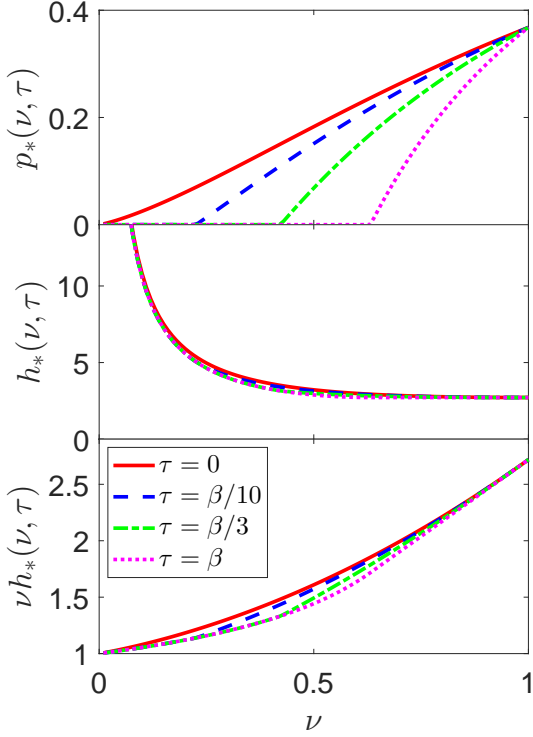}
	\caption{\label{fig:p*}(color online) The optimal probability $p_*(\nu,\tau)$ for performing the trivial test
		(upper plot), 	
		the prefactor $h_*(\nu,\tau)$ (middle plot), and the  overhead $\nu h_*(\nu,\tau)$ (lower plot) for high-precision QSV in the adversarial scenario. In the legend $\beta=1-\nu$.}
\end{figure}

\subsection{Properties of hedged verification strategies}
To determine the overhead of QSV in the adversarial scenario, we need to clarify  the properties of $h(p,\nu,\tau)$, $h_*(\nu,\tau)$,  and $p_*(\nu,\tau)$, which determine the performances of the  hedged verification strategies $\Omega_p$ and $\Omega_{p_*}$. 
By virtue of the properties of the function $x\ln x^{-1}$ we can derive a tight lower bound for $h(p,\nu,\tau)$, namely,
\begin{equation}\label{eq:hpvtLBe}
h(p,\nu,\tau)\geq \rme,
\end{equation}
and the  bound is saturated iff
$\tau_p=\beta_p=1/\rme$, that is, 
$\tau=1-\nu\leq 1/\rme$ and $p=(\rme \nu-\rme+1)/(\rme\nu)$; cf.~\esref{eq:pstartaumax} and \eqref{eq:hstartaumax}.

\begin{lemma}\label{lem:hpnmin}
	Suppose   $0<\nu\leq1$. 
	Then $p_*(\nu,1-\nu)$ is nondecreasing in $\nu$, 
	$h_*(\nu,1-\nu)$ is nonincreasing in  $\nu$, and $\nu h_*(\nu,1-\nu)$ is strictly increasing in $\nu$. Meanwhile,  $\nu h_*(\nu,1-\nu)>1$ and
	$\lim_{\nu\rightarrow 0} \nu h_*(\nu,1-\nu)=1$. If in addition $0\leq p<1$  and $\beta_p=1-\nu+p\nu>0$, 
	then  $\nu h(p,\nu,1-\nu)$  is strictly increasing in $\nu$.  
\end{lemma}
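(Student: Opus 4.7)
The plan is to exploit the explicit piecewise formulas for $p_*(\nu,1-\nu)$ and $h_*(\nu,1-\nu)$ given in \eqref{eq:pstartaumax} and \eqref{eq:hstartaumax}, and to reduce every monotonicity claim to a single calculus fact about the auxiliary function
\begin{equation}
g(\beta):=\frac{1-\beta}{\beta\ln\beta^{-1}},\qquad 0<\beta<1.
\end{equation}
Parts 1 and 2 are immediate from the explicit formulas: on $(0,1-1/\rme]$ one has $p_*=0$ and $h_*=(\beta\ln\beta^{-1})^{-1}$ with $\beta=1-\nu$; on $[1-1/\rme,1]$ one has $p_*=1-(\rme-1)/(\rme\nu)$ and $h_*=\rme$. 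Concavity of $x\ln x^{-1}$ on $(0,1]$ with unique maximum at $x=1/\rme$ shows $h_*$ is nonincreasing in $\nu$ (decreasing on the first interval, constant on the second), while $p_*$ is visibly nondecreasing.

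The core of the argument is part 3, the strict monotonicity of $\nu h_*(\nu,1-\nu)$. On $[1-1/\rme,1]$ we simply have $\nu h_*=\rme\nu$, trivially strictly increasing. On $(0,1-1/\rme]$ we have $\nu h_*(\nu,1-\nu)=g(1-\nu)$, so the task reduces to showing that $g$ is strictly decreasing on $(0,1)$. A short computation gives
\begin{equation}
g'(\beta)=\frac{\ln\beta+1-\beta}{(\beta\ln\beta)^2},
\end{equation}
and the elementary inequality $\ln\beta\le\beta-1$, strict unless $\beta=1$, forces the numerator to be strictly negative on $(0,1)$. Continuity at the junction $\nu=1-1/\rme$ is clear because both branches take the common value $\rme-1$ there, so the global strict monotonicity follows by patching.

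Parts 4 and 5 then drop out: the Taylor expansion $\ln\beta=-(1-\beta)+O((1-\beta)^2)$ shows $g(\beta)\to 1$ as $\beta\to 1^-$, which yields $\lim_{\nu\to 0}\nu h_*(\nu,1-\nu)=1$; combining this with the strict monotonicity of $g$ just established gives $g(\beta)>1$ for all $\beta\in(0,1)$, hence $\nu h_*(\nu,1-\nu)>1$ on $(0,1-1/\rme]$, and $\rme\nu\ge\rme-1>1$ on the remaining interval. For the final statement with $p$ fixed, the homogeneous assumption $\tau=\beta=1-\nu$ gives $\tau_p=\beta_p=1-(1-p)\nu$, so $h(p,\nu,1-\nu)=(\beta_p\ln\beta_p^{-1})^{-1}$ and a one-line manipulation produces
\begin{equation}
\nu\, h(p,\nu,1-\nu)=\frac{1}{1-p}\,g(\beta_p),
\end{equation}
with $\beta_p=1-(1-p)\nu$ strictly decreasing in $\nu$ (using $0\le p<1$) and $g$ strictly decreasing in $\beta$; the composition is therefore strictly increasing in $\nu$ on the domain where $\beta_p>0$, which is exactly the hypothesis. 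The only step requiring real work is the sign analysis of $g'$, and that is handled by the inequality $\ln\beta\le\beta-1$; no subtler obstacle is anticipated.
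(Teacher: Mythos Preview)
Your proof is correct and follows essentially the same approach as the paper: both use the explicit piecewise formulas \eqref{eq:pstartaumax}--\eqref{eq:hstartaumax} for the first two claims, handle the right interval $[1-\rme^{-1},1]$ trivially via $\nu h_*=\rme\nu$, and reduce the left interval and the final statement to the sign of a derivative governed by the inequality $\ln\beta<\beta-1$. Your introduction of the auxiliary function $g(\beta)=(1-\beta)/(\beta\ln\beta^{-1})$ and the factorization $\nu h(p,\nu,1-\nu)=(1-p)^{-1}g(\beta_p)$ is a tidy repackaging of the paper's direct derivative computation \eqref{eq:hpnutauDer}, but the underlying calculus is identical.
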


\begin{lemma}\label{lem:hpnt}	
 	Suppose  $\nu$ and $\tau$ satisfy the following conditions $0<\nu\leq1$, $0\leq \tau<1$, and  $\nu+\tau\leq 1$. Then 
	\begin{enumerate}
		\item $p_*(\nu,\tau)$ is nondecreasing in $\nu$ and nonincreasing in~$\tau$.
		
	\item 	$h_*(\nu,\tau)$ is nonincreasing in both $\nu$ and $\tau$.
	
	\item $\nu h_*(\nu,\tau)>1$.
	
	\item 
	$\lim_{\nu\rightarrow 0} \nu h_*(\nu,\tau)=1$.
	
   \item 
  $\nu h_*(\nu,\tau)$ is strictly increasing in $\nu$.  
  
\end{enumerate}  
If in addition 	 $0\leq p<1$ and $\tau_p =(1-p)\tau +p >0$, then 
\begin{enumerate}
\setcounter{enumi}{5}	
	\item $h(p,\nu,\tau)$ is nonincreasing  in both $\nu$ and $\tau$.
	
	\item 
	$\nu h(p,\nu,\tau)$ is  strictly increasing in $\nu$.  
	
	\end{enumerate}
\end{lemma}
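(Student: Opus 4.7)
The plan is to reduce every claim to elementary properties of the concave function $g(x):=x\ln x^{-1}$ on $[0,1]$: $g(0)=g(1)=0$, $g$ is strictly increasing on $[0,\rme^{-1}]$ and strictly decreasing on $[\rme^{-1},1]$, with unique maximum $\rme^{-1}$ at $x=\rme^{-1}$. Recall
\[ h(p,\nu,\tau)=\frac{1}{\min\{g(\beta_p),g(\tau_p)\}},\quad \beta_p=1-(1-p)\nu,\quad \tau_p=p+(1-p)\tau, \]
and the hypothesis $\nu+\tau\leq 1$ enforces $\tau_p\leq\beta_p$ for every $p\in[0,1)$, an ordering that anchors the case analysis throughout.

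For parts~6 and 7 (fixed $p$) I would write $h=\max\{g(\beta_p)^{-1},g(\tau_p)^{-1}\}$ and split into the three regimes $\beta_p\geq\tau_p\geq\rme^{-1}$, $\rme^{-1}\geq\beta_p\geq\tau_p$, and $\beta_p\geq\rme^{-1}\geq\tau_p$. In each regime the monotonicity of $g$ on $[0,\rme^{-1}]$ and $[\rme^{-1},1]$ shows that when $\nu$ grows ($\beta_p$ falls, $\tau_p$ fixed) or $\tau$ grows ($\tau_p$ rises, $\beta_p$ fixed) the active term of the max moves in the direction consistent with $h$ being nonincreasing; continuity of $g$ at $\rme^{-1}$ glues the regimes together. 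For part~7 the key rewrite is $\nu/g(\beta_p)=\phi(\beta_p)/(1-p)$ via $\nu=(1-\beta_p)/(1-p)$, where $\phi(\beta):=(1-\beta)/(\beta\ln\beta^{-1})$. A short calculation gives $\phi'(\beta)=[\ln\beta+1-\beta]/(\beta\ln\beta)^2<0$ on $(0,1)$ (since $\ln\beta+1-\beta$ is increasing with value $0$ at $\beta=1$), so $\nu/g(\beta_p)$ is strictly increasing in $\nu$; trivially so is $\nu/g(\tau_p)$, and their max inherits the strict monotonicity.

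Parts~2--5 then follow rapidly. A standard envelope argument with $p_i:=p_*(\nu_i,\tau)$ and $\nu_1\leq\nu_2$ gives
\[ h_*(\nu_1,\tau)=h(p_1,\nu_1,\tau)\geq h(p_1,\nu_2,\tau)\geq h_*(\nu_2,\tau) \]
via part~6, yielding part~2; similarly, $\nu_1 h_*(\nu_1,\tau)\leq\nu_1 h(p_2,\nu_1,\tau)<\nu_2 h(p_2,\nu_2,\tau)=\nu_2 h_*(\nu_2,\tau)$ via part~7, yielding part~5. Part~3 drops out of $\phi>1$ on $(0,1)$ together with $p_*<1$ (else $h=\infty$): $\nu h_*\geq\nu/g(\beta_{p_*})=\phi(\beta_{p_*})/(1-p_*)>1$. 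For part~4 with $\tau>0$, once $\nu$ is small enough that $1-\nu>\rme^{-1}$ and $g(1-\nu)<g(\tau)$, \eqref{eq:pstardef} forces $p_*=0$, so $\nu h_*=\phi(1-\nu)\to 1$; for $\tau=0$ the same conclusion follows from $p_*\to 0$ (forced by the balancing $g(p_*)=g(\beta_{p_*})\sim\nu$) and hence $\beta_{p_*}\to 1$, giving $\nu h_*=\phi(\beta_{p_*})/(1-p_*)\to 1$.

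The main obstacle is part~1 (monotonicity of $p_*$), because the two defining constraints in \eqref{eq:pstardef} must be tracked simultaneously. I would argue contrapositively: for any $p<p_*(\nu,\tau)$, either $\beta_p(\nu)<\rme^{-1}$ or $g(\tau_p)<g(\beta_p(\nu))$. In the first case, for $\nu'>\nu$ we have $\beta_p(\nu')<\beta_p(\nu)<\rme^{-1}$, so constraint~1 fails at $\nu'$. In the second case, either constraint~1 fails at $\nu'$, or $\rme^{-1}\leq\beta_p(\nu')\leq\beta_p(\nu)$, and strict decreasingness of $g$ on $[\rme^{-1},1]$ gives $g(\beta_p(\nu'))\geq g(\beta_p(\nu))>g(\tau_p)$, so constraint~2 fails. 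Either way $p<p_*(\nu',\tau)$, proving nondecreasingness in $\nu$. For monotonicity in $\tau$, let $p:=p_*(\nu,\tau)$ and $\tau'\geq\tau$; constraint~1 is independent of $\tau$, and constraint~2 is preserved because either $\tau_p(\tau')\leq\rme^{-1}$ and monotonicity on $[0,\rme^{-1}]$ yields $g(\tau_p(\tau'))\geq g(\tau_p(\tau))\geq g(\beta_p)$, or $\tau_p(\tau')>\rme^{-1}$ and $\tau_p(\tau')\leq\beta_p$ combined with $g$ decreasing on $[\rme^{-1},1]$ yields the same. Hence $p_*(\nu,\tau')\leq p_*(\nu,\tau)$.
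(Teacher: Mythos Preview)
Your proof is correct and follows essentially the same route as the paper: both hinge on the unimodality of $g(x)=x\ln x^{-1}$, the derivative computation showing $\nu/g(\beta_p)$ is strictly increasing in $\nu$ (your $\phi'<0$ is exactly the paper's \eref{eq:hpnutauDer}), and envelope arguments to pass from parts~6--7 to parts~2 and~5. The differences are organizational. For part~6 the paper replaces your three-regime case split by a one-line observation: since $\tau_p\le\beta_p\le\beta'_p$ and $g$ is unimodal, $g(\beta_p)\ge\min\{g(\beta'_p),g(\tau_p)\}$, hence $h(p,\nu',\tau)\ge h(p,\nu,\tau)$; this is tidier but equivalent to your analysis. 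For parts~3--4 the paper sandwiches $\nu h_*(\nu,\tau)$ between $\nu h_*(\nu,1-\nu)$ (from \lref{lem:hpnmin}) and $\nu h(\nu,\nu,\tau)$, while you argue directly via $\phi>1$ and $p_*\to 0$; your route is more self-contained since it does not invoke \lref{lem:hpnmin}. For part~1 the paper proves the same implication in the direct rather than contrapositive direction, evaluating the constraints at $q=p_*(\nu,\tau)$ and using $\beta_q\ge\rme^{-1}$ from \eref{eq:pstardef}; both arguments are equally short.
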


\Lsref{lem:hpnmin} and \ref{lem:hpnt} are proved in \aref{sec:PTTproofs}. \Lref{lem:hpnmin}  is tailored to the scenario in which $\Omega$ is homogeneous. In \lref{lem:hpnt} we assume that $\nu$ and $\tau$ can vary independently, which means the Hilbert space $\caH$ on which $\Omega$ acts has dimension at least 3. If $\caH$ has dimension 2, then 
 $\Omega$ is always homogeneous and $\tau=1-\nu$, so \lref{lem:hpnt} is redundant given \lref{lem:hpnmin}. \Lsref{lem:hpnmin} and \ref{lem:hpnt}
 summarize the main properties of $p_*(\nu,\tau)$,  $h(p,\nu,\tau)$, and $h_*(\nu,\tau)$  as illustrated in \fref{fig:p*}, which are very instructive to understanding QSV in the adversarial scenario. In particular \lref{lem:hpnt} 
 reveals that  the overhead   $\nu h_*(\nu,\tau)$ in the number of tests becomes negligible when $\nu$ approaches 0. To be concrete,  calculation shows that 
 $\nu h_*(\nu,\tau)\leq 1.09,1.19,1.31,1.45,1.61$ when $\nu\leq 0.1,0.2,0.3,0.4,0.5$, respectively.

When $p_*(\nu,\tau)\leq p\leq p_*(\nu):=p_*(\nu,0)$, \lref{lem:hpnt} implies that
\begin{equation}\label{eq:hpntBounds}
h_*(\nu,1-\nu)\!\leq\! h_*(\nu,\tau)\!\leq\! h(p,\nu,\tau)
\!\leq\! h(p_*(\nu),\nu,\tau)\!=\! h_*(\nu), 
\end{equation}
where $h_*(\nu):=h_*(\nu,0)$. 
Note that $h(p,\nu,\tau)$ increases monotonically with $p$ when $p\geq p_*(\nu,\tau)$. 
\Lref{lem:hpnt} and \eref{eq:pstartaumax} together yield a lower bound and an upper bound for $ p_*(\nu,\tau)$,
\begin{equation}
 p_*(\nu,1-\nu)\leq  p_*(\nu,\tau)\leq  p_*(1-\tau,\tau)\leq 1/\rme. 
\end{equation}
Here the third inequality is saturated  iff $\tau=0$; in that case, the second inequality is saturated iff $\nu=1$ (cf.~\lref{lem:OverheadPG} below). Therefore, $p_*(\nu,\tau)$ can attain the upper bound $1/\rme$ iff $\nu=1$ and $\tau=0$, in which case the verification operator is homogeneous and singular. As a corollary, 
we have $1/[1-p_*(\nu,\tau)]\leq \rme/(\rme-1)<1.6$, so  the number of tests required by $\Omega_{p_*}$ is at most $60\%$ more than the number required by $\Omega$ for high-precision verification in the nonadversarial scenario although here we are mainly interested in the adversarial scenario. 
By contrast, \lref{lem:hpnt}  and \eref{eq:hstartaumax} yield a lower bound for $h_*(\nu,\tau)$, 
\begin{align}
&h_*(\nu,\tau)\geq
\begin{cases}
(\beta\ln \beta^{-1})^{-1}, &0<\nu\leq 1-\frac{1}{\rme}, \\
\rme, & 1-\frac{1}{\rme}\leq \nu\leq 1,
\end{cases}
\end{align}
where $\beta=1-\nu$.

When $0<\tau<\beta$ and $\tau\ln \tau^{-1}\geq\beta\ln \beta^{-1} $, \eref{eq:pstardef} implies that
\begin{align}
p_*(\nu,\tau)&=0,\quad 
h_*(\nu,\tau)=
(\beta\ln \beta^{-1})^{-1}.
\end{align}
So there is no need to perform the trivial test.
When  $\tau\ln \tau^{-1}<\beta\ln \beta^{-1}$ ( which implies that $\tau<1/\rme$, including the case $\tau=0$), the probability $p_*(\nu,\tau)$ happens to be the unique solution
of the equation
\begin{equation}\label{eq:balance}
\beta_p\ln \beta_p=\tau_p\ln \tau_p,\quad  0<p<1.
\end{equation}
In this case, it is beneficial to perform the trivial test with a suitable probability. The inequality $\tau\ln \tau^{-1}<\beta\ln \beta^{-1}$ is thus an indication that $\tau$ is too small.

\begin{figure}
	\includegraphics[width=7cm]{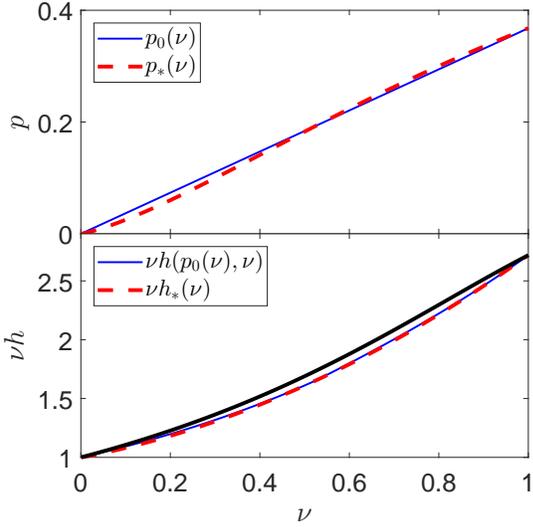}
	\caption{\label{fig:p0}(color online) The optimal probability $p_*(\nu)$ for performing the trivial test in high-precision QSV and a pretty-good approximation $p_0(\nu)=\nu/\rme$ (upper plot).
		Variations  of $\nu h_*(\nu)$ and its upper bound $\nu h(p_0(\nu),\nu)$  with $\nu$ (lower plot). The black solid curve in the lower plot represents the first upper bound for $\nu h(p_0(\nu),\nu)$
		presented in \eref{eq:OverheadPG2}. }
\end{figure}

In view of \lref{lem:hpnt}, a singular verification operator $\Omega$
with $\tau=0$ is of special interest because the overhead $\nu h_*(\nu,\tau)$ for a given $\nu$ is maximized when $\tau=0$.  
In this case, $\tau_p=p$  and the optimal probability in \eref{eq:pstardef} reduces to 
\begin{equation}
p_*=\min\bigl\{p> 0| \beta_p\geq \rme^{-1}\;\& \; p\ln p= \beta_p \ln \beta_p \bigr\}. \label{eq:pstarSingular}
\end{equation}
The requirement $\beta_p\geq \rme^{-1}$ is redundant  when $\beta>0$, in which case $p_*$ is also the unique solution of the equation $ p\ln p= \beta_p \ln \beta_p$ for $0<p<1$. 
In general, we have $h_*(\nu)=(p_*\ln p_*^{-1})^{-1}$. Furthermore, $p_*(\nu)=p_*(\nu,0)$ can be approximated by
\begin{equation}
p_0=p_0(\nu)=\frac{\nu}{\rme}=\frac{1-\beta}{\rme}, 
\end{equation}
which is exact when $\nu=1$, as illustrated in  \fref{fig:p0}. Let  $h(p,\nu):=h(p,\nu,0)$; then  $h(p_0(\nu),\nu)=h(\rme^{-1}\nu,\nu)=h(\rme^{-1}\nu,\nu,0)$. 

\begin{lemma}\label{lem:OverheadPG} Suppose $0<\nu\leq 1$. Then  $p_*(\nu)$, $\nu h_*(\nu)$, and $\nu h(\rme^{-1}\nu,\nu)$ are strictly  increasing in  $\nu$,  while $h_*(\nu)$ and  $h(\rme^{-1}\nu,\nu)$ are strictly decreasing in  $\nu$. In addition,
	\begin{align}
	\nu h_*(\nu)&\leq \nu h(\rme^{-1}\nu,\nu)\leq
	(1-\nu +\rme^{-1}\nu^2)^{-1}\nonumber\\
	&\leq 1+(\rme-1)\nu\leq \rme. \label{eq:OverheadPG2}
	\end{align}
\end{lemma}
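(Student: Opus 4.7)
The plan is to specialize the general machinery (especially \lref{lem:hpnt}) to the singular case $\tau=0$, where $\beta_p=1-\nu(1-p)$ and $\tau_p=p$, and to close the remaining gaps with direct computation on the explicit closed-form expressions. In this regime the defining equation \eqref{eq:pstarSingular} becomes $\psi(p_*)=\psi(\beta_{p_*})$ with $\psi(x):=x\ln x$; since $\psi$ is strictly convex with minimum $-\rme^{-1}$ at $x=\rme^{-1}$, any admissible solution forces $p_*<\rme^{-1}<\beta_{p_*}$. Implicit differentiation of this relation, together with $d\beta_p=\nu\,dp-(1-p)\,d\nu$, yields
\begin{equation*}
\frac{dp_*}{d\nu}=\frac{(1-p_*)\psi'(\beta_{p_*})}{\nu\psi'(\beta_{p_*})-\psi'(p_*)},
\end{equation*}
and since $\psi'(p_*)=\ln p_*+1<0$ while $\psi'(\beta_{p_*})=\ln\beta_{p_*}+1>0$, both numerator and denominator are strictly positive, so $p_*(\nu)$ is strictly increasing. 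Strict decrease of $h_*(\nu)=(p_*\ln p_*^{-1})^{-1}$ then follows because $p_*\in(0,\rme^{-1})$ and $x\mapsto x\ln x^{-1}$ is strictly increasing on $(0,\rme^{-1})$, while strict increase of $\nu h_*(\nu)$ is already delivered by \lref{lem:hpnt}(5) at $\tau=0$.

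For the hedged choice $p=\rme^{-1}\nu$ I use the closed forms $\beta_p=1-\nu+\rme^{-1}\nu^2$ and $\tau_p=\rme^{-1}\nu$; on $[0,1]$ the former decreases strictly from $1$ to $\rme^{-1}$ (its critical point $\rme/2$ lies outside $[0,1]$) while the latter increases strictly from $0$ to $\rme^{-1}$. Consequently $\beta_p\ln\beta_p^{-1}$ is strictly increasing in $\nu$ (since $\beta_p\in[\rme^{-1},1]$ and $x\ln x^{-1}$ decreases on this interval) and so is $\tau_p\ln\tau_p^{-1}$ (since $\tau_p\in[0,\rme^{-1}]$ and $x\ln x^{-1}$ increases there), so $h(\rme^{-1}\nu,\nu)$, the reciprocal of their minimum, is strictly decreasing. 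Writing
\begin{equation*}
\nu h(\rme^{-1}\nu,\nu)=\max\left\{\frac{\nu}{\beta_p\ln\beta_p^{-1}},\;\frac{\nu}{\tau_p\ln\tau_p^{-1}}\right\},
\end{equation*}
the second entry simplifies to $\rme/(1-\ln\nu)$, manifestly strictly increasing, while strict increase of the first follows from a short logarithmic-derivative calculation; since the max of two strictly increasing functions is strictly increasing, $\nu h(\rme^{-1}\nu,\nu)$ is strictly increasing.

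For the chain of inequalities in \eqref{eq:OverheadPG2}, the first is immediate from $h_*(\nu)=\min_p h(p,\nu)$. The crucial second, $\nu h(\rme^{-1}\nu,\nu)\leq \beta_p^{-1}=(1-\nu+\rme^{-1}\nu^2)^{-1}$, rearranges to $\nu\beta_p\leq\min\{\beta_p\ln\beta_p^{-1},\tau_p\ln\tau_p^{-1}\}$ and splits into two scalar inequalities on $[0,1]$: (i) $\rme^{-\nu}\geq 1-\nu+\rme^{-1}\nu^2$, and (ii) $g(\nu):=\rme-1-\rme\nu+\nu^2+\ln\nu\leq 0$. For (i) I study $f(\nu):=\rme^{-\nu}-(1-\nu+\rme^{-1}\nu^2)$, which satisfies $f(0)=f(1)=0$, $f'(0)=0$, and has $f''(\nu)=\rme^{-\nu}-2\rme^{-1}$ changing sign exactly once on $[0,1]$ at $\nu=1-\ln 2$; hence $f$ is convex on $[0,1-\ln 2]$ and concave on $[1-\ln 2,1]$, with nonnegativity on the convex part forced by tangency at $\nu=0$ and on the concave part by the chord-below-graph property between $f(1-\ln 2)>0$ and $f(1)=0$. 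For (ii) I note $g(1)=0$ and $g'(\nu)=2\nu+1/\nu-\rme\geq 2\sqrt{2}-\rme>0$ by AM-GM, so $g$ is strictly increasing on $(0,\infty)$ and $g\leq 0$ on $(0,1]$. The third inequality, after cross-multiplying and subtracting $1$, reduces to $\nu(1-\rme^{-1})(\nu-1)(\nu-r)\geq 0$ on $[0,1]$, where the second root is $r=\rme(\rme-2)/(\rme-1)>1$; the sign tally $\nu\geq 0$, $(\nu-1)\leq 0$, $(\nu-r)<0$ together with the positive leading coefficient settles it. The final inequality is the trivial substitution $\nu\leq 1$. The main obstacle is (i): its two sides match at both endpoints with equal first derivatives at $\nu=0$ but unequal second derivatives there, so neither tangent-line bounds nor a monotonicity argument settles it directly, and the convex-then-concave analysis of $f$ above appears essential.
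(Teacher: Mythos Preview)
Your proof is correct and follows the same overall architecture as the paper: specialize to $\tau=0$, invoke \lref{lem:hpnt} for $\nu h_*(\nu)$, analyze the two branches of $h(\rme^{-1}\nu,\nu)$ explicitly, and reduce the chain \eqref{eq:OverheadPG2} to the two scalar inequalities you call (i) and (ii). The differences are tactical rather than structural.

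For the strict monotonicity of $p_*(\nu)$, you use implicit differentiation of $\psi(p_*)=\psi(\beta_{p_*})$; the paper instead argues by direct comparison: if $\nu'<\nu$ and $q=p_*(\nu)$, then $\beta'_q>\beta_q\geq\rme^{-1}$ forces $\beta'_q\ln(\beta'_q)^{-1}<q\ln q^{-1}$, hence $p_*(\nu')<q$. The comparison argument is slightly cleaner at the endpoint $\nu=1$, where your derivative formula becomes $0/0$ (though this is easily patched by continuity, since you have already established $p_*<\rme^{-1}$ for $\nu<1$ while $p_*(1)=\rme^{-1}$).

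For inequality (i), which is exactly the paper's \eref{eq:bpineq}, your convex-then-concave analysis of $f(\nu)=\rme^{-\nu}-(1-\nu+\rme^{-1}\nu^2)$ is valid but more elaborate than needed. The paper observes that the equivalent form $\phi(\nu):=1-\nu-\ln(\rme-\rme\nu+\nu^2)$ has $\phi'(\nu)=\nu(\rme-2-\nu)/(\rme-\rme\nu+\nu^2)$, which changes sign exactly once on $(0,1)$ at $\nu=\rme-2$; since $\phi(0)=\phi(1)=0$, a single-critical-point argument settles it in one line. The paper also uses this same inequality to show the first branch of $\nu h(\rme^{-1}\nu,\nu)$ is strictly increasing, which is the content of the ``short logarithmic-derivative calculation'' you defer.

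For inequality (ii), the paper merely says the bound is ``straightforward to verify''; your AM-GM argument $g'(\nu)=2\nu+1/\nu-\rme\geq 2\sqrt{2}-\rme>0$ is a nice explicit justification. For the third inequality, the paper expands the product directly, while you factor the cubic $\nu(1-\rme^{-1})(\nu-1)(\nu-r)$ with $r=\rme(\rme-2)/(\rme-1)>1$; both reach the same conclusion.
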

\Lref{lem:OverheadPG} is proved in \aref{sec:PTTproofs}. Calculation shows that the difference between $\nu h(\rme^{-1}\nu,\nu)$ and $\nu h_*(\nu)$ is less than $2\%$ (cf.~\fref{fig:p0}); therefore, $p_0=\nu/\rme$ is indeed a good approximation of $p_*(\nu)$. When $p_*(\nu,\tau)\leq p\leq p_*(\nu)$, \lref{lem:OverheadPG} and \eref{eq:hpntBounds} imply that
\begin{align}
& \nu h_*(\nu,\tau)\!\leq\! \nu h(p,\nu,\tau)
\!\leq\! \nu h_*(\nu)\leq  \nu h(\rme^{-1}\nu,\nu)\nonumber\\
&\leq
(1-\nu +\rme^{-1}\nu^2)^{-1}\leq  1+\rme\nu-\nu\leq \rme. \label{eq:OverheadPG3}
\end{align}
In addition, we have $h(\rme^{-1}\nu,\nu,\tau)\leq h(\rme^{-1}\nu,\nu)$ according to \lref{lem:hpnt}. So  \lref{lem:OverheadPG} has implications for all verification operators, not necessarily singular.


\subsection{Overhead of QSV in the adversarial scenario}
The overhead of QSV in the adversarial scenario  compared with the nonadversarial scenario is of fundamental interest.
The following theorem is a key to clarifying 
this issue. It follows from  \lref{lem:hpnt} as well as   \esref{eq:NumTestAdvRev} and \eqref{eq:OverheadPG3}. 
\begin{widetext}
\begin{theorem}\label{thm:AdvOHGen}Suppose $\Omega$ is a verification operator for $|\Psi\>$, $\nu=\nu(\Omega)$, and $\tau=\tau(\Omega)$. If $p=\nu/\rme$, then 
\begin{align}
&N(\epsilon,\delta,\Omega_p)< \frac{h(\rme^{-1}\nu,\nu,\tau) \ln(F\delta)^{-1}}{\epsilon}\leq \frac{h(\rme^{-1}\nu,\nu) \ln(F\delta)^{-1}}{\epsilon}\leq  \frac{\ln(F\delta)^{-1}}{(1-\nu+\rme^{-1}\nu^2)\nu\epsilon}\leq \frac{(1+\rme\nu-\nu)\ln(F\delta)^{-1}}{\nu\epsilon},\label{eq:NumTestGenA}
\end{align}	
where  $F=1-\epsilon$.		
 If $p_*(\nu,\tau)\leq p\leq p_*(\nu)$, then 	
\begin{align}
&N(\epsilon,\delta,\Omega_p)< \frac{h(p,\nu,\tau) \ln(F\delta)^{-1}}{\epsilon}\leq \frac{h_*(\nu) \ln(F\delta)^{-1}}{\epsilon}\leq \frac{h(\rme^{-1}\nu,\nu) \ln(F\delta)^{-1}}{\epsilon}\leq  \frac{\ln(F\delta)^{-1}}{(1-\nu+\rme^{-1}\nu^2)\nu\epsilon}.
\end{align}
\end{theorem}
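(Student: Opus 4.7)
The plan is to assemble Theorem~\ref{thm:AdvOHGen} as a chain of inequalities, reading off the leftmost (hardest) bound from \eqref{eq:NumTestAdvRev} and then softening it successively using the monotonicity properties of $h(p,\nu,\tau)$ from \lref{lem:hpnt} together with the explicit estimates collected in \lref{lem:OverheadPG} and \eref{eq:OverheadPG3}. Since all the analytic work has already been done in those lemmas, no new optimization or asymptotic argument is needed here; I just need to choose the right specializations.

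For the first assertion, I would substitute $p=\nu/\rme$ directly into \eref{eq:NumTestAdvRev} to obtain
\begin{equation*}
N(\epsilon,\delta,\Omega_p)<\frac{h(\rme^{-1}\nu,\nu,\tau)\ln(F\delta)^{-1}}{\epsilon},
\end{equation*}
which is the leftmost inequality. The second inequality follows from \lref{lem:hpnt}, item~6, which states that $h(p,\nu,\tau)$ is nonincreasing in $\tau$; setting $\tau\mapsto 0$ gives $h(\rme^{-1}\nu,\nu,\tau)\le h(\rme^{-1}\nu,\nu,0)=h(\rme^{-1}\nu,\nu)$. The final two inequalities are precisely the second and third bounds in \eref{eq:OverheadPG2} after dividing by $\nu$, namely $\nu h(\rme^{-1}\nu,\nu)\le(1-\nu+\rme^{-1}\nu^2)^{-1}\le 1+(\rme-1)\nu$. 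Threading these bounds together yields the chain in the statement.

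For the second assertion, with $p_*(\nu,\tau)\le p\le p_*(\nu)$, I again begin with \eref{eq:NumTestAdvRev}. To obtain the second inequality I invoke \eref{eq:hpntBounds}, whose final step reads $h(p,\nu,\tau)\le h(p_*(\nu),\nu,\tau)=h_*(\nu)$ in this range of $p$ (the equality uses that $h$ is monotone in $\tau$ and $p_*(\nu)$ is defined as $p_*(\nu,0)$). The third inequality follows from the definition of $h_*(\nu)$ as a minimum over $p$: $h_*(\nu)=\min_p h(p,\nu,0)\le h(\rme^{-1}\nu,\nu)$. The last inequality is again \eref{eq:OverheadPG2}. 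Nothing in this second chain requires the specific value $p=\nu/\rme$, only that $p$ lies in the window where $h(p,\nu,\tau)$ is controlled by $h_*(\nu)$.

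The only thing resembling an obstacle is ensuring that the window $p_*(\nu,\tau)\le p\le p_*(\nu)$ in the second assertion is nonempty and that the estimate from \eref{eq:hpntBounds} really is valid on all of it; this amounts to recalling that $h(p,\nu,\tau)$ is unimodal in $p$ with minimum at $p_*(\nu,\tau)$, so on $[p_*(\nu,\tau),p_*(\nu)]$ the function is nondecreasing in $p$ and hence bounded above by its value at the right endpoint $p_*(\nu)$. Once this is noted, the theorem is just the composition of three earlier results.
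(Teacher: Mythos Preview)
Your proposal is correct and follows essentially the same route as the paper, which simply cites \lref{lem:hpnt} together with \esref{eq:NumTestAdvRev} and \eqref{eq:OverheadPG3}; you have accurately unpacked how those pieces chain together. One very minor remark: the equality $h(p_*(\nu),\nu,\tau)=h_*(\nu)$ in \eref{eq:hpntBounds} is not a consequence of monotonicity in $\tau$ alone (that gives only $\leq$), but for the theorem you only need the inequality, so your argument is fine as written.
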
	
In conjunction with \eref{eq:NumTestAF} [see also \esref{eq:AdvNaRatio1} and \eqref{eq:OverheadPG3}], \thref{thm:AdvOHGen} sets a general upper bound  on the overhead of QSV in the adversarial scenario. If $p=\nu/\rme$ or $p_*(\nu,\tau)\leq p\leq p_*(\nu)$ for example,  then 	
\begin{equation}\label{eq:AdvNaRatio}
\frac{N(\epsilon,\delta,\Omega_{p})}{N_{\na}(\epsilon,\delta,\Omega)}<
	\nu h(\rme^{-1}\nu,\nu)	\frac{[\ln(1-\nu\epsilon)^{-1}]\ln(F\delta) }{\nu\epsilon\ln \delta}\leq 	
	\frac{[\ln (1-\nu\epsilon)^{-1}] \ln(F\delta)}{(1-\nu+\rme^{-1}\nu^2)\nu\epsilon\ln \delta}\leq 
	\frac{(1+\rme\nu-\nu)[\ln (1-\nu\epsilon)^{-1}] \ln(F\delta)}{\nu\epsilon\ln \delta}. 	
	\end{equation}
\end{widetext}

\begin{figure}
	\includegraphics[width=7cm]{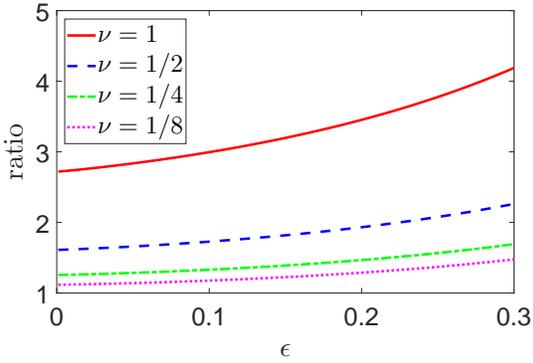}
	\caption{\label{fig:ratio}(color online) Upper bound on the ratio of  $N(\epsilon,\delta,\Omega_p)$ over $N_{\na}(\epsilon,\delta,\Omega)$ according to  the first bound in \eref{eq:AdvNaRatio} with $\delta=\epsilon$, where $p=\nu/\rme$ or $p_*(\nu,\tau)\leq p\leq p_*(\nu)$. This ratio
 characterizes the overhead of QSV in  the adversarial scenario. }
\end{figure}
By virtue of \lsref{lem:hpnmin} and \ref{lem:hpnt},
it is easy to verify that all three bounds in \eref{eq:AdvNaRatio} decrease monotonically with   $1/\epsilon$, $1/\delta$, and $1/\nu$, as illustrated in \fsref{fig:p*}-\ref{fig:ratio}. \Thref{thm:AdvOHGen} has  profound implications for QSV in the adversarial scenario. With the help of the trivial test, the number of required tests
can achieve 
 the same scaling behaviors with $\epsilon^{-1}$ and $\delta^{-1}$ as the   counterpart for  the nonadversarial scenario presented in \eref{eq:NumTest} and \eref{eq:NumTestAF}.  
 The overhead is at most four times  when $\epsilon,\delta\leq 1/4$ and three times when $\epsilon,\delta\leq 1/10$; furthermore, the overhead  becomes negligible when $\nu,\epsilon,\delta$ approach zero. 
It should be emphasized that our recipe for addressing the adversarial scenario is independent of the specific  construction of the verification protocol once the verification operator is fixed. This fact means that our general results can be applied in various  contexts with different constraints on measurements.
Moreover, the protocol for the adversarial scenario requires the same measurement settings (except for the trivial test) as employed for the nonadversarial scenario, which is the best we can hope for. 
 Therefore, pure states can be verified in the adversarial scenario  with nearly the same efficiency as in the nonadversarial scenario with respect to not only the total  number of tests, but also the number of measurement settings.

Although  the performance of $\Omega$ is very  sensitive to  the smallest eigenvalue $\tau$, surprisingly,
the performance of $\Omega_{p_*}$ is not sensitive to $\tau$ at all. According to \lref{lem:hpnt},
the difference between $h_*(\nu,\tau_1)$ and 
$h_*(\nu,\tau_2)$ for a given $\nu$ is maximized when $\tau_1=0$ [in which case $h_*(\nu,\tau_1)=h_*(\nu)$, cf.~\eref{eq:OverheadPG2}] and $\tau_2=1-\nu$ [cf.~\eref{eq:hstartaumax}].  Calculation shows that the difference between $h_*(\nu)$ and 
$h_*(\nu,1-\nu)$ is less than $12\%$, and it is even smaller when $\nu$ is close to zero or close to 1, as illustrated in \fref{fig:p*}. Therefore, the influence of $\tau$ on the performance of $\Omega_{p_*}$  can be neglected to a large extent.
Moreover, the probability $p$ for performing the trivial test can be chosen without even knowing the value of $\tau$, while achieving nearly optimal performance. 
Actually,  both the choices $p=p_*(\nu)$ and $p=p_0(\nu)=\nu/\rme$ are nearly optimal.
These  observations are very helpful to  constructing efficient verification protocols for the adversarial scenario because we can focus on $\nu$ without worrying about the impact of $\tau$ or even knowing the value of $\tau$. Suppose $\Omega$ is a verification operator with the largest possible $\nu$ (under given conditions), then
 $\Omega_{p}$ is guaranteed to be nearly optimal, where $p$ can be chosen to be $p_*(\nu,\tau)$, $p_*(\nu)$, or $p_0(\nu)=\nu/\rme$. 
Without this insight, it would be much more difficult to devise efficient verification protocols.

\section{\label{sec:app}Applications}
Our recipe presented in \sref{sec:recipe}  can be applied to verifying any pure state in the adversarial scenario as long as we can construct a verification strategy for the nonadversarial scenario.  In this section we discuss the applications of this recipe to verifying many important quantum states, some of which have already been published or appeared on arXiv \cite{ZhuH19O,ZhuH19E,LiHZ19,LiHZ19O,HayaT19}.  The main results are summarized in \tref{tab:NumTest}.
All verification strategies considered here are based on (adaptive) local projective measurements together with classical communication, which are most convenient for practical applications, although our general recipe for the adversarial scenario is independent of how the verification strategy is constructed. The results presented here are also very useful  to verifying quantum gates \cite{ZhuZ19,LiuSYZ19}.

\subsection{Minimum measurement settings for verifying multipartite pure states}
Before considering specific quantum states,  it is instructive to clarify the limitation of local measurements in general. 
As a first step towards this goal, we determine  the minimum number of  measurement settings for each party required to verify a general multipartite pure state that is genuinely multipartite  entangled (GME). Recall that a multipartite pure state is GME if it cannot be expressed as a tensor product of two pure states \cite{GuhnT09}. The following proposition sets a fundamental lower bound for the number of measurement settings required by each party; see \aref{sec:MiniMeas} for a proof.

\begin{proposition}\label{pro:MiniMeas}	
	To verify a multipartite pure state  with adaptive local projective measurements, each party needs at least two  measurement settings, unless the party is not entangled with other parties.
\end{proposition}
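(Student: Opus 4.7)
The plan is to prove the contrapositive: if party $A$ employs only one projective measurement setting, then $|\Psi\>$ must factorize as $|\Psi\>=|\phi\>_A\otimes |\psi\>_{\bar A}$, where $\caH=\caH_A\otimes \caH_{\bar A}$ splits the total Hilbert space into $A$'s subsystem and the rest. Thus whenever $A$ is entangled with the other parties, at least two settings are necessary.

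First, I would pin down the structural form of the verification operator. Let $\{P_i^A\}_i$ denote the unique PVM that party $A$ performs, irrespective of the adaptive history. In every adaptive branch the test operator is a product of local projectors, one per party; since local projectors on disjoint subsystems commute, the operator for a branch in which $A$'s outcome is $i_l$ factors as $E_l=P_{i_l}^A\otimes E_l^{\bar A}$, and branches in which $A$ does not measure can be rewritten as $\id_A\otimes E_l^{\bar A}=\sum_i P_i^A\otimes E_l^{\bar A}$. Summing over branches produces the block form
\begin{equation*}
\Omega=\sum_i P_i^A\otimes \Omega_i,\qquad \Omega_i:=\sum_{l:\,i_l=i}\mu_l\, E_l^{\bar A}.
\end{equation*}

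Next, I would exploit the resulting symmetry. Let $V_i:=\mathrm{range}(P_i^A)$ and let $G_A$ denote the group of unitaries on $\caH_A$ that are block-diagonal with respect to $\{V_i\}$, i.e., $U_A=\bigoplus_i U_i$ with each $U_i$ unitary on $V_i$. Every such $U_A$ commutes with every $P_i^A$, hence $(U_A\otimes\id)\,\Omega\,(U_A^\dagger\otimes\id)=\Omega$. Because a valid verification strategy satisfies $\beta(\Omega)<1$, the $1$-eigenspace of $\Omega$ is spanned by $|\Psi\>$, and its invariance under $G_A$ forces $(U_A\otimes\id)|\Psi\>=c(U_A)\,|\Psi\>$ for some phase $c(U_A)$ and every $U_A\in G_A$. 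Decomposing $|\Psi\>=\sum_i|\Psi_i\>$ with $|\Psi_i\>\in V_i\otimes \caH_{\bar A}$ yields the component equations $(U_i\otimes\id)|\Psi_i\>=c(U_A)|\Psi_i\>$. If two indices $i_0\neq i_1$ both satisfy $|\Psi_{i_0}\>,|\Psi_{i_1}\>\neq 0$, choosing $U_A$ with $U_j=\id$ for $j\neq i_0$ and $U_{i_0}$ an arbitrary unitary on $V_{i_0}$ forces $c(U_A)=1$ via the $i_1$-equation, whereupon the $i_0$-equation demands $(U_{i_0}\otimes\id)|\Psi_{i_0}\>=|\Psi_{i_0}\>$ for every $U_{i_0}$, contradicting $|\Psi_{i_0}\>\neq 0$. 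Hence exactly one component $|\Psi_{i_*}\>$ is nonzero, and varying $U_{i_*}$ alone then forces the Schmidt rank of $|\Psi_{i_*}\>$ across $V_{i_*}:\caH_{\bar A}$ to be $1$, yielding the desired factorization $|\Psi\>=|\phi\>_A\otimes |\psi\>_{\bar A}$.

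The principal technical obstacle is the first step: justifying the decomposition $\Omega=\sum_i P_i^A\otimes \Omega_i$ in full generality for adaptive protocols, where both the PVMs chosen at subsequent rounds on $\bar A$ and the acceptance criterion may depend on $A$'s outcome together with earlier outcomes at other parties. Once one notes that $A$ employs the same PVM in every branch, that local projectors on disjoint subsystems commute, and that branches not measuring $A$ can be inserted trivially via $\id_A=\sum_i P_i^A$, this bookkeeping is clean and the symmetry arguments in the remaining steps become essentially routine.
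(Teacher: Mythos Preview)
Your proof is correct and takes a genuinely different route from the paper's. Both arguments rest on the block form of $\Omega$ with respect to $A$'s single PVM; the paper obtains this implicitly after first reducing, without loss of generality, to a rank-one projective measurement for $A$. From there the paper proceeds more directly: expanding $|\Psi\>=\sum_k\sqrt{p_k}\,|\varphi_k\>\otimes|\psi_k\>$ in Alice's measurement basis, the pass condition $\<\Psi|E|\Psi\>=1$ on each test operator $E=\sum_k P_k\otimes Q_k$ forces every $|\varphi_k\>\otimes|\psi_k\>$ with $p_k>0$ into the pass eigenspace of $E$, hence of $\Omega$. Since the number of such terms is at least the Schmidt rank across the $A\!:\!\bar A$ cut, the pass eigenspace has dimension at least $2$ whenever that cut is entangled, contradicting $\beta(\Omega)<1$. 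Your symmetry argument via the block-diagonal unitary group $G_A$ is more abstract; it avoids the rank-one reduction and exhibits a transferable principle (any symmetry of $\Omega$ must fix the line $\bbC|\Psi\>$), whereas the paper's route is shorter and yields the sharper byproduct that the pass-eigenspace dimension is bounded below by the Schmidt rank.
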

Here we do not assume that the test operators are projectors. In general many different test operators can be constructed from a given measurement setting using different data-processing methods. If a party is not entangled with other parties, then its reduced state  is a pure state and the party needs to perform only one projective measurement with the pure state as a basis state.

As an implication of \pref{pro:MiniMeas}, each party needs at least two  measurement settings when the state is GME. It turns out two measurement settings for each party are also sufficient for verifying many important quantum states, such as  bipartite maximally entangled states  \cite{ZhuH19O},  stabilizer states (including graph states) \cite{PallLM18,ZhuH19E}, hypergraph states \cite{ZhuH19E}, and Dicke states \cite{LiuYSZ19}. Nevertheless, more measurement settings can often improve the efficiency with respect to the total number of tests.

\begin{table*}
	\caption{\label{tab:NumTest}
Verification of  bipartite and multipartite quantum states using local projective measurements. The second column shows spectral gaps of efficient verification strategies (not necessarily optimal) for the nonadversarial scenario.  The third column indicates whether  homogeneous strategies with  given spectral gaps can be constructed. The last two columns show the 
numbers of tests required to verify these states within infidelity $\epsilon$ and significance level $\delta$ in the nonadversarial scenario ($N_\na$) and adversarial scenario ($N$), respectively.  Strategies for the adversarial scenario can be constructed using the recipe presented in \sref{sec:recipe}. Here $d$ is the local dimension, $n$ is the number of parties, and $\chi(G)$ is the chromatic number of the hypergraph or weighted graph $G$. 
For bipartite pure states and stabilizer states, the table only shows the results in the worst case. 
	}	
	\begin{math}
	\begin{array}{c|cccc}
	\hline\hline
\mbox{Quantum states}	&\nu(\Omega) &\mbox{homogeneous} &N_\na & N\\[0.5ex]
	\hline
	\mbox{Maximally  entangled states}& \frac{d}{d+1} &\mbox{yes} &\lceil\frac{d+1}{d}\epsilon^{-1}\ln\delta^{-1}\rceil & \lceil\rme \epsilon^{-1}\ln\delta^{-1}\rceil\\[0.5ex]
	\mbox{Bipartite pure states} & \frac{2}{3}&\mbox{yes} &\lceil\frac{3}{2}\epsilon^{-1}\ln\delta^{-1}\rceil & \lceil\rme \epsilon^{-1}\ln\delta^{-1}\rceil \\[0.5ex]
	\mbox{GHZ states}& \frac{d}{d+1}&\mbox{yes} &\lceil\frac{d+1}{d}\epsilon^{-1}\ln\delta^{-1}\rceil & \lceil\rme \epsilon^{-1}\ln\delta^{-1}\rceil\\[0.5ex]
	\mbox{Qubit stabilizer states}& \frac{1}{2} &\mbox{yes} &	
	\lceil 2\epsilon^{-1}\ln\delta^{-1} \rceil& \lceil 2(\ln 2)^{-1}\epsilon^{-1}\ln\delta^{-1}\rceil \\[0.5ex]
		\mbox{Qudit stabilizer states ($d$ odd prime)}& \frac{d-1}{d} &\mbox{yes} &	
	\lceil \frac{d}{d-1}\epsilon^{-1}\ln\delta^{-1} \rceil& \lceil \rme\epsilon^{-1}\ln\delta^{-1}\rceil \\[0.5ex]
	\mbox{Hypergraph state } |G\> & \chi(G)^{-1}& \mbox{no}& \lceil\chi(G)\epsilon^{-1}\ln\delta^{-1}\rceil & \lfloor [\chi(G)+\rme-1]\epsilon^{-1}\ln\delta^{-1}\rfloor\\[0.5ex]
	\mbox{Weighted graph state } |G\> & \chi(G)^{-1}& \mbox{no}& \lceil\chi(G)\epsilon^{-1}\ln\delta^{-1}\rceil & \lfloor [\chi(G)+\rme-1]\epsilon^{-1}\ln\delta^{-1}\rfloor\\[0.5ex]
	\mbox{Dicke states } (n=3) & \frac{1}{3} &\mbox{no} & \lceil 3\epsilon^{-1}\ln\delta^{-1}\rceil & \lfloor 4.1\epsilon^{-1}\ln\delta^{-1}\rfloor\\[0.5ex]	
	\mbox{Dicke states } (n\geq 4) &(n-1)^{-1}&\mbox{no} & \lceil (n-1)\epsilon^{-1}\ln\delta^{-1}\rceil & \lfloor(n+\rme-2)\epsilon^{-1}\ln\delta^{-1}\rfloor\\[0.5ex]
	\hline\hline
	\end{array}	
	\end{math}	
\end{table*}

\subsection{Maximally entangled states and GHZ states}
First, consider bipartite maximally entangled states in dimension $d\times d$, which have the form
\begin{equation}
|\Phi\>=\frac{1}{\sqrt{d}}\sum_{j=0}^{d-1} |jj\>
\end{equation}
up to some local unitary transformations. According to \rscite{HayaMT06,ZhuH19O}, the maximum spectral gap  of any verification strategy $\Omega$ based on LOCC or  separable measurements is 
\begin{equation}\label{eq:vMEs}
\nu(\Omega)=\frac{d}{d+1}. 
\end{equation}
Thanks to \eref{eq:NumTestAF}, the minimum number of tests required to verify $|\Phi\>$ within infidelity $\epsilon$ and significance level $\delta$ in the nonadversarial scenario reads 
\begin{equation}
N_\na=\biggl\lceil\frac{ \ln \delta}{\ln[1-d(d+1)^{-1}\epsilon]}\biggr\rceil\leq \left\lceil \frac{d+1}{d\epsilon}\ln\delta^{-1}\right\rceil. 
\end{equation}
Here the upper bound is nearly tight when $\epsilon$ is small, so we will neglect such small difference in favor of a simpler expression in the following discussions. 
In addition, the verification operator $\Omega$ is necessarily homogeneous when $\nu(\Omega)$ attains the upper bound $d/(d+1)$. So the strategy can be employed for fidelity estimation by \eref{eq:FEhom}.  According to \eref{eq:Fstd}, the standard deviation of this estimation reads
\begin{equation}\label{eq:stdMes}
\Delta F=\frac{\sqrt{p(1-p)}}{\nu(\Omega)\sqrt{N}}
= \frac{\sqrt{(1-F)(F+d^{-1})}}{\sqrt{N}}, 
\end{equation}
where $p=\tr(\Omega\sigma)=\nu(\Omega) F+\beta(\Omega)$.

 By adding the trivial test with a suitable probability, any homogeneous strategy $\Omega$ with $\nu(\Omega)\leq d/(d+1)$ [that is, $\beta(\Omega)\geq 1/(d+1)$] can be constructed using LOCC. In particular, we can construct a homogeneous strategy $\Omega$ with $\beta(\Omega)=1/\rme$, which is 
optimal for high-precision verification in the adversarial scenario according to \sref{sec:Homo}.  Then the number of required tests satisfies
\begin{equation}\label{eq:MESadv}
N \leq  \left\lceil\rme \epsilon^{-1}\ln\delta^{-1}\right\rceil  
\end{equation}
by \thref{thm:NumTestHomoBounds}. When $\delta \leq 1/\rme$, the above bound can be strengthened by  \eref{eq:NumberBoundAdvHomo3}, which yields  $N <  \rme \epsilon^{-1}\ln\delta^{-1}$.  This bound is nearly tight in the high-precision limit.

\Esref{eq:vMEs}-\eqref{eq:MESadv} above also apply to the $n$-qudit GHZ state for $n\geq 3$ as shown in 
\rcite{LiHZ19O}.

\subsection{Bipartite pure states}

Next, consider a general bipartite pure state of the form $\ket{\Psi} = \sum_{j=0}^{d-1} s_j \ket{jj}$, where the Schmidt coefficients $s_j$ are arranged in decreasing order and satisfy the condition $\sum_{j=0}^{d-1} s_j^2=1$. When $d=2$, by virtue of adaptive measurements with two-way communication, one can construct a verification operator $\Omega$ with
 spectral gap $(1+s_0s_1)^{-1}$, which attains the maximum over separable measurements \cite{WangH19}. 
For a general bipartite pure state,
the  spectral gap achievable so far is \cite{LiHZ19,YuSG19}
\begin{equation}
\nu(\Omega)=\frac{2}{2+s_0^2+s_1^2}\geq \frac{2}{3}. 
\end{equation}
With this strategy, the number of tests required for the nonadversarial scenario reads
\begin{equation}
N_\na=\left\lceil \frac{2+s_0^2+s_1^2}{2\epsilon}\ln\delta^{-1}\right\rceil\leq \left\lceil \frac{3}{2\epsilon}\ln\delta^{-1}\right\rceil. 
\end{equation}
Moreover, this strategy can be turned into a homogeneous strategy with the same spectral gap \cite{LiHZ19}, which is useful for fidelity estimation
by \esref{eq:FEhom} and  \eqref{eq:Fstd}.
The standard deviation of this estimation satisfies
\begin{equation}
\Delta F=\frac{\sqrt{p(1-p)}}{\nu(\Omega)\sqrt{N}}
\leq  \frac{\sqrt{(1-F)(F+2^{-1})}}{\sqrt{N}}, 
\end{equation}
where $p=\tr(\Omega\sigma)=\nu(\Omega) F+\beta(\Omega)$ and the inequality follows from the inequality $\nu(\Omega)\geq 2/3$, given that the standard deviation decreases monotonically with $\nu(\Omega)$.

By adding the trivial test with a suitable probability, any homogeneous strategy $\Omega$  with  $\nu(\Omega)\leq 2/(2+s_0^2+s_1^2)$ can be constructed using LOCC \cite{LiHZ19}.   In particular, we can construct a homogeneous strategy $\Omega$ with $\beta(\Omega)=1/\rme$ [that is, $\nu(\Omega)=1-(1/\rme)$], which is 
optimal for high-precision verification in the adversarial scenario,  so \eref{eq:MESadv} also applies to general bipartite pure states. Despite the simplicity of bipartite pure states, we are not aware of any other protocol for verifying them in the adversarial scenario that does not rely on our result. Note that self-testing can only verify a pure state up to some local isometry \cite{MayeY04,ColaTS17,SupiB19}, which is different from what we consider here.

\subsection{\label{sec:Stabilizer}Stabilizer states}
For stabilizer  states, which are equivalent  to graph states under local Clifford transformations \cite{Schl02,GrasKR02}, several verification protocols are known in the literature \cite{HayaM15,FujiH17,HayaH18, PallLM18,MarkK18}. If  the total number of tests is the main figure of merit, then the protocol introduced by PLM \cite{PallLM18} is an ideal choice. Recall that each $n$-qubit stabilizer state $|G\>$ is uniquely determined by $n$ commuting stabilizer generators in the Pauli group, which generate the stabilizer group of order $2^n$. The PLM protocol is composed of  $2^n-1$ projective tests associated with  $2^n-1$ nontrivial stabilizer operators of $|G\>$. 
The corresponding verification operator  reads \cite{PallLM18} 
\begin{align}\label{eq:OmegaPLM}
\Omega_{\rm PLM} = |G\>\<G|+ \frac{2^{n-1}-1}{2^n-1}(\id-|G\>\<G|),
\end{align}
which is homogeneous with
\begin{equation}\label{eq:vPLM}
\beta(\Omega_{\rm PLM})=  \frac{2^{n-1}-1}{2^n-1}\leq \frac{1}{2},\quad   \nu(\Omega_{\rm PLM})= \frac{2^{n-1}}{2^n-1}\geq\frac{1}{2}.
\end{equation}
To verify $|G\>$ within infidelity $\epsilon$ and significance level $\delta$, the number of tests required by  this protocol is 
\begin{equation}
\lceil2^{1-n}(2^n-1)\epsilon^{-1}\ln\delta^{-1}\rceil\leq\lceil 2\epsilon^{-1}\ln\delta^{-1}\rceil,
\end{equation}
which is almost independent of the number $n$ of qubits especially when $n$ is large. Since the strategy  in \eref{eq:OmegaPLM} is homogeneous, it can also be applied for fidelity estimation by virtue of \esref{eq:FEhom} and  \eqref{eq:Fstd}. The standard deviation of this estimation satisfies
\begin{equation}
\Delta F=\frac{\sqrt{p(1-p)}}{\nu\sqrt{N}}\leq \frac{\sqrt{1-F^2}}{\sqrt{N}},
\end{equation}
where $p=\tr(\Omega\sigma)=\nu F+\beta$,  $\nu=\nu(\Omega_{\rm PLM})\geq 1/2$ and $\beta=\beta(\Omega_{\rm PLM})\leq 1/2$.

When adapted to the adversarial scenario,  the strategy in \eref{eq:OmegaPLM} is nearly optimal thanks to 
\thref{thm:NumTestHomoBounds} and \eref{eq:limepdel};  the number of required tests satisfies
\begin{align}	N \leq \biggl\lceil\frac{\ln\delta}{(\beta\ln \beta)\epsilon}\biggr\rceil\leq 
 \biggl\lceil\frac{2\ln\delta^{-1}}{(\ln2)\epsilon}\biggr\rceil<\biggl\lceil\frac{2.89\ln\delta^{-1}}{\epsilon}\biggr\rceil.
\end{align}
Here the latter two upper bounds are independent of the number of qubits and the specific stabilizer state (or graph state). Moreover, the scaling behaviors in $\epsilon$ and $\delta$ are both optimal. Such a high efficiency in the adversarial scenario is achieved for the first time. Previously, the best protocol for the adversarial scenario (without using our recipe)  required   $\lceil m^3/(\delta\epsilon)\rceil$ tests ($\lceil n^3/(\delta\epsilon)\rceil$ tests in the worst case) when $|G\>$ is a graph state whose underlying graph $G$  is $m$-colorable \cite{HayaH18,ZhuH19E}.

\subsection{\label{sec:StabilizerQudit}Qudit stabilizer states}
Here we introduce an efficient  protocol for verifying qudit stabilizer states (including qudit graph states), assuming that the local dimension $d$ is a prime. 
Our protocol reduces to the PLM protocol \cite{PallLM18} for qubit stabilizer states ($d=2$). Let $|G\>$ be a stabilizer state of $n$-qudits. The stabilizer group $S$ of $|G\>$ is composed of all qudit Pauli operators that stabilize $|G\>$ and is isomorphic to the group $\bbZ_d^{n}$, where $\bbZ_d$ is the field of integers modulo $d$. Note that $\bbZ_d^{n}$ is also an $n$ dimensional vector space over $\bbZ_d$. 
The stabilizer group  can be generated by $n$ commuting Pauli operators, say,  $K_1, K_2,\ldots, K_n$, which satisfy $K_r^d=1$ for  $r=1,2,\ldots,n$.  Each stabilizer operator in $S$ has the form $\prod_{r=1}^n K_r^{k_r}$ with $\bfk:=(k_1,k_2, \ldots, k_n)\in \bbZ_d^n$.  If $\bfk= (0,0,\ldots,0)$, then this stabilizer operator  is equal to the identity operator; otherwise, it has $d$ distinct eigenvalues $\omega^j$ for $j=0,1,\ldots,d-1$, where $\omega=\rme^{2\pi\rmi/d}$ is a primitive $d$th root of unity.

For each nonzero element $\bfk$ in $\bbZ_d^n$
we can construct a test for $|G\>$ by measuring the stabilizer operator $\prod_{r=1}^n K_r^{k_r}$:  each party performs a Pauli measurement determined by the decomposition of $\prod_{r=1}^n K_r^{k_r}$ in terms of local Pauli operators. The test is passed if the outcome corresponds to the eigenspace of $\prod_{r=1}^n K_r^{k_r}$ with eigenvalue 1. The corresponding test projector reads 
\begin{align} P_\bfk=\frac{1}{d}\sum_{j=0}^{d-1}\bigg(\prod_{r=1}^n K_r^{k_r}\bigg)^j. 
\end{align}
Note that  $j\bfk$ for $j\in \bbZ_d$ will lead to the same measurement and test operator. Moreover, $P_{\bfk'}=P_\bfk$ iff $\bfk'=j\bfk$ for some $j\in \bbZ_d$ with $j\neq0$ (this conclusion may fail if $d$ is not a prime, and that is why we assume that  $d$ is a prime). So each test corresponds to a line in $\bbZ_d^n$  that passes through the origin, and vice versa. In total $(d^n-1)/(d-1)$ distinct tests can be constructed in this way.

A verification protocol for $|G\>$ can be constructed by performing all distinct tests $P_\bfk$ randomly each with  probability $(d-1)/(d^n-1)$. The resulting verification operator reads
\begin{align}\label{eq:OmegaHomoQudit}
\Omega&= \frac{1}{d^n-1}\sum_{\bfk\in \bbZ_d^n, \; \bfk\neq (0,0,\ldots,0)} P_\bfk\nonumber\\
&= |G\>\<G|+ \frac{d^{n-1}-1}{d^n-1}(\id-|G\>\<G|),
\end{align}
which is homogeneous with
\begin{equation}
\beta(\Omega)=\frac{d^{n-1}-1}{d^n-1}\leq \frac{1}{d},\quad
\nu(\Omega)=\frac{d^{n}-d^{n-1}}{d^n-1}\geq \frac{d-1}{d}.    
\end{equation}
The number of tests required by  this protocol is
\begin{equation}
\biggl\lceil\frac{d^n-1}{d^{n}-d^{n-1}}\epsilon^{-1}\ln\delta^{-1}\biggr\rceil\leq\Bigl\lceil \frac{d}{d-1}\epsilon^{-1}\ln\delta^{-1}\Bigr\rceil,
\end{equation}
which decreases monotonically with the local dimension~$d$. Surprisingly, qudit stabilizer states with $d>2$ (assuming $d$ is a prime) can be verified more efficiently than qubit stabilizer states.

Similar to the qubit case, the above protocol can be applied  for fidelity estimation.  According to \eref{eq:Fstd}, the standard deviation of this estimation satisfies
\begin{equation}
\Delta F=\frac{\sqrt{p(1-p)}}{\nu\sqrt{N}}\leq \frac{\sqrt{(1-F)[F+(d-1)^{-1}]}}{\sqrt{N}}
\end{equation}
given that $\nu\geq (d-1)/d$, where $p=\tr(\Omega\sigma)=\nu F+\beta$.

By adding the trivial test with a suitable probability we can construct any homogeneous verification operator $\Omega$ for $|G\>$ with $\frac{d^{n-1}-1}{d^n-1}\leq\beta(\Omega)<1$ using LOCC. When $d$ is an odd prime, 
we can construct a homogeneous verification operator $\Omega$ with $\beta(\Omega)=1/\rme$,
which is optimal 
 for the adversarial scenario in the high-precision limit. Then the number of required tests  satisfies $N \leq  \left\lceil\rme \epsilon^{-1}\ln\delta^{-1}\right\rceil$ as in \eref{eq:MESadv}.

The verification protocol presented above is also highly efficient for certifying GME. Suppose $|G\>$ is a qudit graph state associated with a connected graph, where the local dimension $d$ is a prime. Then $|G\>$ is GME; in addition,
$\rho$ is GME if its  fidelity with $|G\>$ is  larger than $1/d$. 
In general, to certify the GME of the graph state $|G\>$ with significance level $\delta$, we need to guarantee $\<G|\rho|G\> > 1/d$ with significance level $\delta$. 
Given a verification strategy $\Omega$, then 
it suffices to perform
\begin{equation}\label{eq:NumTestGME}
N=\biggl\lceil
\frac{\ln \delta}{\ln[1-(d-1)\nu(\Omega)/d]}\biggr\rceil
\end{equation}
tests according to \eref{eq:NumTestAF} with $\epsilon=(d-1)/d$.
For the strategy in \eref{eq:OmegaHomoQudit}, we have $\nu(\Omega)\geq(d-1)/d$, so the minimum number of tests satisfy
\begin{equation}
N\leq \biggl\lceil
\frac{\ln \delta}{\ln[1-(d-1)^2/d^2]}\biggr\rceil=\biggl\lceil
\frac{\ln \delta}{\ln[(2d-1)/d^2]}\biggr\rceil. 
\end{equation}
Surprisingly, only one test is required to certify the GME of $|G\>$ when $\delta \geq (2d-1)/d^2$, that is, $d\geq (1+\sqrt{1-\delta})/\delta$. 

In the adversarial scenario, we can construct a  homogeneous strategy $\Omega$ with $\beta(\Omega)=2/(d+1)$ using local projective measurements according to the above analysis. Thanks to \crref{cor:SingleCopyAdvHomo} with $\epsilon=(d-1)/d$, then the GME of $|G\>$ can be certified  using only one test as long as the significance level satisfies $\delta\geq 4d/(d+1)^2$, that is, $d\geq(2+2\sqrt{1-\delta}-\delta)/\delta$ (cf.~Theorem~3 in \rcite{ZhuH19O}). According to \crref{cor:SingleCopyAdvCon1}, the lower bound for $\delta$ cannot be decreased if $d\geq 5$ and if we can perform only one test.  Therefore, the GME of a connected graph state can be certified with any given significance level using only one test as long as the local dimension $d$ is large enough, assuming $d$ is a prime. Previously, a similar result was known only for GHZ states~\cite{LiHZ19O}.

\subsection{Hypergraph states}
A hypergraph $G=(V,E)$ is characterized by a set $V$ of vertices and a set $E$ of hyperedges \cite{QuWLB13,RossHBM13}. For each hypergraph $G$, one can construct a hypergraph state by preparing the state $|+\>=(|0\>+|1\>)/\sqrt{2}$ for each vertex of $G$ and then applying the generalized controlled-$Z$ operation  on the vertices of each hyperedge $e\in E$ \cite{QuWLB13,RossHBM13,ZhuH19E}. As a generalization of graph states, hypergraph states are very useful to quantum computation and foundational studies. 

Recently, the authors proposed 
an efficient protocol---the cover protocol---for verifying  general hypergraph states, which requires only Pauli $X$ and $Z$ measurements for each party \cite{ZhuH19E}. As a special case, a coloring protocol can be constructed for each coloring of the hypergraph $G$.  Suppose $G$ has chromatic number  $\chi(G)$; then the optimal coloring protocol requires only $\chi(G)$ distinct measurement settings and can achieve a  spectral gap of
\begin{equation}\label{eq:GapHypergraph}
\nu(\Omega)=\chi(G)^{-1}\geq [\Delta(G)+1]^{-1}\geq n^{-1},
\end{equation}
where $\Delta(G)$ is the degree of $G$ and $n$ is the number of qubits. Accordingly, the number of required tests reads
\begin{equation}
N_\na=\left\lceil \chi(G)\epsilon^{-1}\ln\delta^{-1}\right\rceil\leq \left\lceil n\epsilon^{-1}\ln\delta^{-1}\right\rceil. 
\end{equation}
This performance is nearly optimal if the chromatic number $\chi(G)$ is small. 
For example, Union Jack states \cite{MillM16} can be verified with a very high efficiency since the chromatic number of the underlying Union Jack lattice  is only~3. These states are particularly interesting because they can realize universal quantum computation under  Pauli measurements \cite{MillM16}.

By virtue of the general recipe presented in \sref{sec:recipe}, we can construct a hedged coloring protocol as characterized by the verification operator $\Omega_p$ with $p=\nu/\rme$ \cite{ZhuH19E}. In the adversarial scenario, the number of tests  required by  $\Omega_p$ satisfies
\begin{align}\label{eq:NumThypergraph}
N&\leq  \frac{[\chi(G)+\rme-1]\ln(F\delta)^{-1}}{\epsilon}\leq \frac{(n+\rme-1)\ln(F\delta)^{-1}}{\epsilon}, 
\end{align}
where $F=1-\epsilon$. The bound is comparable to the counterpart for the nonadversarial scenario especially when $n$ is large. 
The hedged coloring protocol is dramatically more efficient than previous protocols for verifying hypergraph states as proposed in  \rscite{MoriTH17,TakeM18}. For example, the protocol of \rcite{TakeM18} (which improves over \rcite{MoriTH17}) requires more than 
$(2\ln 2) n^3\epsilon^{-18}$ tests when  $\delta=\epsilon$ and $4n\epsilon\leq 1$ (the number of required tests was derived only for  a restricted parameter range) \cite{ZhuH19E}. This number is astronomical even when $n=3$ and $\epsilon=\delta=0.05$. In addition,  the protocol of \rcite{TakeM18} requires adaptive stabilizer tests with $n$ measurement settings. By contrast, the hedged coloring protocol requires at most $\Delta(G)+1$ settings without adaption [the number of settings can be reduced to $\chi(G)$ if an optimal coloring can be found; here we do not count the setting corresponding to the trivial test]. The hedged coloring protocol  is instrumental to realizing verifiable blind MBQC and quantum supremacy. Its high efficiency 
demonstrates the power of our general recipe to constructing efficient verification protocols for the adversarial scenario.

Incidentally, 
the above  results also apply to qudit hypergraph states, including qudit graph states in particular \cite{ZhuH19E}.  For graph states, the hedged coloring protocol is less efficient than the PLM protocol \cite{PallLM18} adapted for the adversarial scenario as discussed in \sref{sec:Stabilizer} and its generalization in \sref{sec:StabilizerQudit}, but requires much fewer measurement settings.

\subsection{Weighted graph states}
Next, consider  weighted graph states \cite{HartCDB07}.
Recently, Hayashi and Takeuchi introduced several  efficient  protocols for verifying the weighted graph state $|G\>$ associated with any weighted graph $G$  \cite{HayaT19}. One of their protocols is based on a coloring of $G$ and adaptive local projective measurements. It can achieve the same spectral gap as in  \eref{eq:GapHypergraph}, that is, $\nu(\Omega)=\chi(G)^{-1}\geq n^{-1}$,
where $\chi(G)$ now refers to the chromatic number of the weighted graph $G$. As in the case of hypergraph states, we can construct a hedged coloring protocol characterized by the verification operator $\Omega_p$ with $p=\nu/\rme$. Then  the number of tests  required by  $\Omega_p$ to verify $|G\>$
in the adversarial scenario satisfies
\begin{align}\label{eq:NumTweightedgraph}
N&\leq  \frac{[\chi(G)+\rme-1]\ln(F\delta)^{-1}}{\epsilon}\leq \frac{(n+\rme-1)\ln(F\delta)^{-1}}{\epsilon} 
\end{align}
as in \eref{eq:NumThypergraph}.
So weighted graph states can be verified with the same efficiency as hypergraph states.

It should be pointed out that the original protocol in \rcite{HayaT19} is based on an earlier version of this paper for dealing with the adversarial scenario (arXiv:1806.05565), so  the scaling behavior of $N$ with the significance level
is suboptimal. The latest results developed in our study as presented in  \sref{sec:recipe} are required  to achieve the optimal scaling behavior shown  in \eref{eq:NumTweightedgraph}. We are not aware of any other protocol for verifying weighted graph states in the adversarial scenario.

\subsection{Dicke states}
Dicke states are another important class of multipartite quantum states which are useful for quantum metrology. 
The $n$-qubit Dicke state with $k$ excitations reads
\begin{equation}\label{eq:Dicke}
\ket{D_n^k}=\binom{n}{k}^{-1/2}\sum_{x\in B_{n,k}} |x\>,
\end{equation}
where $B_{n,k}$ denotes the set of strings in $\{0,1\}^n$ with Hamming weight $k$. To avoid trivial cases, here we assume that $n\geq 3$ and $1\leq k\leq n-1$.  The Dicke state reduces to a $W$ state when $k=1$. 
Recently, Liu et al. \cite{LiuYSZ19} proposed an efficient protocol for verifying the Dicke state,  which can achieve a spectral gap of 
\begin{equation}
\nu(\Omega)=\begin{cases}
\frac{1}{3}, &n=3,\\
\frac{1}{n-1}, & n\geq 4.
\end{cases}
\end{equation}
To verify the Dicke state within infidelity $\epsilon$ and significance level $\delta$, the number of required tests reads
\begin{equation}
N_\na=\begin{cases}
\left\lceil 3\epsilon^{-1}\ln\delta^{-1}\right\rceil, &n=3,\\[0.5ex]
\left\lceil (n-1)\epsilon^{-1}\ln\delta^{-1}\right\rceil, & n\geq 4. 
\end{cases} 
\end{equation}

In the adversarial scenario, we can construct a hedged verification strategy $\Omega_p$ with $p=\nu/\rme$ according to the recipe in  \sref{sec:recipe}. 
Thanks to  \thref{thm:AdvOHGen},  the number of  tests required by $\Omega_p$ satisfies
\begin{align}
N\leq \begin{cases}
 4.1\epsilon^{-1}\ln\delta^{-1}, &n=3,\\[0.5ex]
 (n+\rme-2)\epsilon^{-1}\ln\delta^{-1}, & n\geq 4.  
\end{cases} 
\end{align}
This number is comparable to the counterpart for the nonadversarial scenario. To the best of our knowledge, no  protocol  is known previously for verifying general Dicke states in the adversarial scenario, although there are several works on self-testing Dicke states \cite{SupiCAA18, Fade17}.

To summarize the above discussions, by virtue of our recipe presented in \sref{sec:recipe}, optimal verification protocols for the adversarial scenario can be constructed using local projective measurements for all bipartite pure states, GHZ states, and qudit stabilizer states whose local dimension is an odd prime. Nearly optimal protocols can be constructed for qubit stabilizer states and those hypergraph states with small chromatic numbers, including Union Jack states. For general hypergraph states, weighted graph states, and Dicke states, the number of required tests
is only about $n\epsilon^{-1}\ln\delta^{-1}$ as shown in \tref{tab:NumTest}, which is dramatically smaller than what is required by previous verification protocols (whenever such protocols are available).

\section{\label{sec:comparison}Comparison with other approaches}
Before concluding this paper, it is instructive to compare QSV with other approaches for estimating or verifying quantum states, such as  (traditional) quantum state tomography \cite{PariR04Book}, compressed sensing \cite{GrosLFB10}, direct fidelity estimation (DFE) \cite{FlamL11}, and self-testing \cite{MayeY04,SupiB19}.
In this way we hope to put QSV in a wide context, 
but we do not intend to be exhaustive. Here we are mainly interested in the efficiencies of these approaches with respect to the total number of tests, measurements, or copies of the state required to reach a given precision. Before such a comparison, it should be pointed out that different approaches rely on different assumptions and address different problems. So it is impossible to make a completely fair comparison.

In quantum state tomography, compressed sensing, and DFE, we usually assume that  the states prepared in different runs are independent and identical and that the measurement devices are trustworthy.
In addition, many protocols only require local projective measurements or even Pauli measurements, which are usually much easier to implement than other more complicated operations. In QSV,   the measurement devices are still trustworthy, but the states for different runs may be different as long as they are independent (cf.~\sref{sec:QSVre}). In the adversarial scenario, arbitrary correlated or entangled state preparation is allowed. In self-testing, even the measurement devices are not trusted \cite{MayeY04,SupiB19}. The different strengths  of assumptions underlying these approaches are illustrated in \fref{fig:vp}. 

\begin{figure}
	\includegraphics[width=8cm]{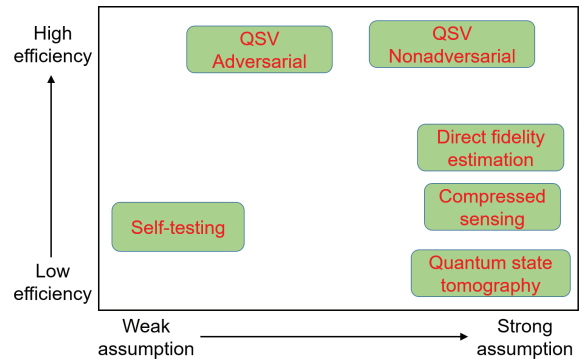}
	\caption{\label{fig:vp}(color online) Qualitative comparison among various approaches for estimating or verifying quantum states with respect to the efficiency and the strength of assumptions. Thanks to the recipe proposed in \sref{sec:recipe},  QSV in the adversarial scenario can achieve nearly the same efficiency as QSV in the nonadversarial scenario, although the underlying assumptions are much weaker.}
\end{figure}

In addition, different approaches address different questions. Quantum state tomography  aims to address the following question: What is the state? To answer this question amounts to reconstructing the density matrix, so the number of parameters to be determined increases exponentially with the number of qubits (here we assume that each subsystem is a qubit for simplicity; the general situation is similar). That is why the resource overhead of tomography increases exponentially with the number of qubits. Compressed sensing addresses a similar question, and so cannot avoid the exponential scaling of resource costs. Nevertheless, it can  reduce the resource overhead significantly by exploiting the structure of quantum states of low ranks \cite{GrosLFB10}.

DFE, QSV, and self-testing address a different type of questions: Is the state identical to the target state, or how close is it? Here the target state is usually a pure state, and the closeness is usually quantified by fidelity or infidelity. Quite often answering these questions is sufficient for many applications in quantum information processing, so it is of fundamental interest to extract such key information efficiently without full tomography. 
 DFE aims to determine the fidelity (infidelity) between the state prepared and the target state \cite{FlamL11}. QSV tries to decide whether the fidelity (infidelity) is larger (smaller) than a given threshold, which is  usually easier than fidelity estimation \cite{HayaMT06,Haya09G, PallLM18}. Self-testing can only provide a lower bound for the fidelity up to some local isometry because the measurement devices are not trustworthy, and the conclusion is solely based on the observed probabilities \cite{MayeY04,SupiB19}.

Suppose we can optimize measurement settings and data-processing procedures, then the efficiency of an approach is mainly determined by the strength of the underlying assumptions and the amount of information it extracts. However, in general it is very difficult to determine the efficiency limit of a given approach because it is very difficult to perform such optimization. In addition, it is highly nontrivial to determine the impacts  of various assumptions.

Although DFE is much more efficient than quantum state tomography,   the resource cost still increases exponentially with the number of  qubits, except for some special families of states, such as stabilizer states.
The DFE protocol originally proposed in \rcite{FlamL11} only requires Pauli measurements; it is not clear whether we can avoid the exponential scaling behavior if more general local measurements are taken into account. 
In the case of self-testing,  there are already numerous research works (see the review paper \rcite{SupiB19}); however,  little is known about the resource cost to reach a given precision, especially in the multipartite setting. 
A few known protocols for self-testing multipartite states are highly resource consuming and hardly practical for systems of more than ten qubits. For example, the resource required to self-test  Dicke states increases exponentially with the number of qubits \cite{SupiCAA18, Fade17}. It is still not clear whether this inefficiency is fundamentally inevitable or is due to our lack of imagination.

 In QSV in the nonadversarial scenario, we have shown in \sref{sec:QSVind} that the variation in states prepared in different runs does not incur any resource overhead as long as these states are independent of each other. In other words, as far as the efficiency is concerned, we can assume that these states are identical and independent as assumed in quantum state tomography, compressed sensing, and DFE.
 Moreover, thanks to our recipe presented in \sref{sec:recipe}, pure states can be verified in the adversarial scenario with nearly the same efficiency as in the nonadversarial scenario.
 In many cases, we can even construct optimal protocols, which are quite rare for other approaches. Therefore, we can expect that QSV even in the adversarial scenario is more efficient than DFE and self-testing, as illustrated in \fref{fig:vp}. This is indeed the case 
 for all states for which verification protocols have been found, such as
 bipartite pure states, GHZ states, stabilizer states (including graph states), hypergraph states, weighted graph states, and Dicke states. For example, Dicke states, hypergraph states, and weighted graph states can be verified efficiently in the adversarial scenario, although no efficient DFE or self-testing protocols are available. In the case of general hypergraph states  and weighted graph states, actually, no self-testing protocols are known at all.

As pointed out earlier, it would be unfair to compare QSV with self-testing directly,  but so far the former is the only practical choice for intermediate and large quantum systems especially in the adversarial scenario. Although self-testing has been studied more intensively in the literature \cite{SupiB19}, it is still very difficult to construct efficient self-testing protocols for multipartite states because the measurement devices are not trustworthy. Insight from QSV may be helpful to studying self-testing, and vice versa.
The relations between QSV and self-testing are worth further exploration in the future. In particular, it would be desirable to combine the merits of the two approaches. We hope that our work can stimulate further progresses along this direction.

\section{\label{sec:Sum}Summary}

We presented a comprehensive study of pure-state verification in the adversarial scenario. Notably, we introduced a general method for computing the main figures of merit pertinent to QSV in the adversarial scenario, such as the fidelity   and the number of required tests. 
In addition, we introduced  homogeneous strategies and derived analytical formulas for the main figures of merit of practical interest. The conditions for single-copy verification are also clarified, which are instructive to understanding single-copy entanglement detection.
Moreover,  we proposed a simple, but 
powerful recipe to constructing efficient verification protocols for the adversarial scenario from the counterpart for the  nonadversarial scenario. Thanks to this recipe, any pure state can be verified in the adversarial scenario with nearly the same efficiency as in the  nonadversarial scenario. Therefore, to   verify a pure quantum state efficiently in the adversarial  scenario, it remains to find an efficient protocol for the nonadversarial scenario, which is usually much easier.

Our recipe can readily be applied to the verification of many important quantum states in quantum information processing, including
bipartite pure states, GHZ states, stabilizer states, hypergraph states, weighted graph states, and Dicke states. Recently, efficient protocols based on local projective measurements have been constructed  for verifying these  states in the nonadversarial scenario. 
By virtue of our recipe, all these states can be verified efficiently in the adversarial scenario using local projective measurements. 
These results are instrumental to many
applications in quantum information processing that demand high security requirements, such as blind MBQC and quantum networks.
The potential of our study is to be  unleashed further in the future.

\section*{Acknowledgments}
This work is  supported by  the National Natural Science Foundation of China (Grant No. 11875110). MH is supported in part by
 Fund for the Promotion of Joint International Research (Fostering Joint International Research) Grant No. 15KK0007,
 Japan Society for the Promotion of Science (JSPS) Grant-in-Aid for Scientific Research (A) No. 17H01280, (B) No. 16KT0017, and Kayamori Foundation of Informational Science Advancement.

\appendix
\section*{Appendix}

In this Appendix, we prove many results presented in the main text, including \thsref{thm:FidelityHomo}-\ref{thm:NumTestBounds}, \lsref{lem:etaf0}-\ref{lem:OverheadPG}, and \pref{pro:MiniMeas}. We also present a simpler proof of \eref{eq:PassingProb}, which was originally proved in  \rcite{PallLM18}.

\section{\label{sec:PassProb}Proof of Eq.~(\ref{eq:PassingProb})}
Here we present a simpler proof of \eref{eq:PassingProb}, which was originally proved in  \rcite{PallLM18}. 

\begin{proof}
Suppose the verification operator $\Omega$ has spectral decomposition $\Omega=\sum_{j=1}^D \lambda_j \Pi_j $, where $D$ is the dimension of the Hilbert space $\caH$, $\lambda_j$ are the eigenvalues of $\Omega$ arranged in decreasing order $1=\lambda_1> \lambda_2\geq\cdots \geq \lambda_D\geq 0$,  and  $\Pi_j$ are mutually orthogonal rank-1 projectors with $\Pi_1=|\Psi\>\<\Psi|$. 
Without loss of generality, we may assume that $\sigma$ is diagonal in the eigenbasis of $\Omega$ because both $\tr(\Omega \sigma)$ and $\<\Psi|\sigma|\Psi\>$ only depend on the diagonal elements of $\sigma$ in this basis. Suppose $\sigma=\sum_{j=1}^D x_j \Pi_j$ with $x_j\geq0$ and $\sum_j x_j=1$. Then
\begin{equation}
\<\Psi|\sigma|\Psi\>=x_1,\quad \tr(\Omega \sigma)=\sum_j \lambda_j x_j.
\end{equation}
Therefore,
\begin{align}
&\max_{\<\Psi|\sigma|\Psi\>\leq 1-\epsilon }\tr(\Omega \sigma)=\max_{x_j\geq0,\, \sum_j x_j=1,\,x_1\leq 1-\epsilon} \sum_j \lambda_j x_j\nonumber\\
&= \max_{0\leq x_1\leq 1-\epsilon} x_1+\lambda_2(1-x_1)= 1- \nu(\Omega)\epsilon, 
\end{align}
where  $\nu(\Omega):=1-\beta(\Omega)=1-\lambda_2$. The maximum can be attained when $\sigma=(1-\epsilon)(|\Psi\>\<\Psi|)+\epsilon \Pi_2$.
\end{proof}

\section{\label{sec:zetaEtaProofs}Proofs of \lsref{lem:etaf0} to \ref{lem:MinNumTestDef}}
In this Appendix we prove \lsref{lem:etaf0} to \ref{lem:MinNumTestDef}  in \sref{sec:CompMFM}.

\subsection{Proofs of \lsref{lem:etaf0} to \ref{lem:zetaEtaMonoCon}}
\begin{proof}[Proof of \lref{lem:etaf0}]
	Let  $\rho$ be an arbitrary  permutation-invariant diagonal density matrix on $\caH^{\otimes (N+1)}$ with decomposition  $\rho=\sum_{\bfk\in \scrS_N} c_{\bfk} \rho_{\bfk}$,  where $c_\bfk$ form a probability distribution on $\scrS_N$. Recall that $\scrS_N$ is  the set of  all sequences  $\bfk=(k_1,k_2, \ldots,k_D)$ of $D$ nonnegative integers that sum up to $N+1$, that is, $\sum_j k_j =N+1$.
 If $f_\rho=0$, then $\zeta_\bfk(\bm{\lambda})=0$ whenever $c_\bfk>0$. Therefore,
	\begin{equation}
\eta(N,0,\Omega)=\max_{\bfk\in \scrS_N}\{\eta_\bfk(\bm{\lambda}) \,|\, \zeta_\bfk(\bm{\lambda})=0\}.
	\end{equation}

To compute $\eta(N,0,\Omega)$,
	we need to determine those $\bfk\in \scrS_N$
	at which $\zeta_\bfk(\bm{\lambda})=0$. By \eref{eq:etazeta}, this condition is satisfied  iff  $k_1=0$, or $\lambda_i =0$ and $k_i\geq 1$ for some $2\leq i\leq D$.  In the first case, we have $\eta_\bfk(\bm{\lambda})\leq \beta^N $, and the inequality is saturated when $\bfk=(0,N+1,0,\ldots,0)$. 
	In the second case, we have
	\begin{equation}
		\eta_\bfk(\bm{\lambda})=\frac{k_i\lambda_i^{k_i-1}}{N+1}\prod_{j\neq i, k_j>0}\lambda_j^{k_j}\leq \frac{1}{N+1},
	\end{equation}
	and the inequality is saturated when $\bfk=(N,0,\ldots,0,1)$.
If $\tau>0$, then only the first case can occur, so we have	$\eta(N,0,\Omega)=\beta^N$. If $\tau=0$, then both cases can occur, so $\eta(N,0,\Omega)=\max\{\beta^N,1/(N+1)\}$. 
In conclusion, we have $\eta(N,0,\Omega)=\delta_\rmc$, which confirms \lref{lem:etaf0}. 
\end{proof}

Next, consider the proofs of \lsref{lem:equiTwoDef} and \ref{lem:zetaEtaMonoCon}. 
From the definitions  in    \esref{eq:Ffdelta} and \eqref{eq:FfdeltaA} together with the results in \esref{eq:zetazero} and \eqref{eq:zetanonzero} we can deduce the following relations.
\begin{subequations}
	\begin{align}
\zeta(N,\delta,\Omega)
&=\min_{\delta'\geq \delta }
\tilde{\zeta}(N,\delta',\Omega)\leq \tilde{\zeta}(N,\delta,\Omega),\\
\eta(N,f,\Omega)
&=\max_{f'\leq f}
\tilde{\eta}(N,f',\Omega)\geq \tilde{\eta}(N,f,\Omega),\\
	F(N,\delta,\Omega)
	&=\min_{\delta'\geq \delta }  \tilde{F}(N,\delta',\Omega)\leq \tilde{F}(N,\delta,\Omega), \\
	\caF(N,f,\Omega)
	&=\min_{f'\geq f}\tilde{\caF}(N,f',\Omega)\leq \tilde{\caF}(N,f,\Omega).
	\end{align}	
\end{subequations}
Therefore, \lsref{lem:equiTwoDef} and \ref{lem:zetaEtaMonoCon} are immediate consequences of  \lref{lem:zetaEtaMonoConA} below.
\begin{lemma}\label{lem:zetaEtaMonoConA}The following statements hold.
\begin{enumerate}
		\item 	$\tilde{\zeta}(N,\delta,\Omega)$ is convex  and nondecreasing in  $\delta$ for $0\leq \delta\leq 1$ and  is strictly  increasing  for $\delta_\rmc\leq \delta\leq 1$.
		
		\item  $\tilde{\eta}(N,f,\Omega)$ is concave and  strictly increasing in $f$ for $0\leq f\leq 1$. 
		
		\item 	$\tilde{F}(N,\delta,\Omega)$ is  nondecreasing in  $\delta$ for $0<\delta\leq 1$ and is strictly  increasing  for  $\delta_\rmc\leq \delta\leq 1$.
		
		\item 	$\tilde{\caF}(N,f,\Omega)$  is strictly  increasing  in $f$ for  $0<f\leq 1$.
\end{enumerate}

\end{lemma}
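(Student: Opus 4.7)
My plan is to exploit the convex region $R_{N,\Omega}\subset[0,1]^2$ of attainable pairs $(p_\rho,f_\rho)$ over permutation-invariant $\rho$ on $\caH^{\otimes(N+1)}$; its convexity is inherited from the state space since $p_\rho$ and $f_\rho$ are linear in $\rho$. Two reference points of $R_{N,\Omega}$ drive everything: the perfect point $(1,1)$, attained uniquely by $\rho_\star=(|\Psi\>\<\Psi|)^{\otimes(N+1)}$, and a point $(\delta_\rmc,0)$ attained by some $\sigma_0$ via \lref{lem:etaf0}, which moreover guarantees that $p_\rho>\delta_\rmc$ forces $f_\rho>0$ and that $\delta_\rmc\ge\min\{\beta^N,1/(N+1)\}>0$. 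Convexity of $\tilde\zeta$ (statement~1) and concavity of $\tilde\eta$ (statement~2) follow from a one-line mixing argument: given optimizers $\rho_1,\rho_2$ for the defining programs, $\alpha\rho_1+(1-\alpha)\rho_2$ is admissible at the convex combination of the abscissae and realizes the convex combination of the objective values. Convexity of $\tilde\zeta$ extends from $[\delta_\rmc,1]$ to $[0,1]$ because $\tilde\zeta\equiv 0$ on $[0,\delta_\rmc]$ with matching value at $\delta_\rmc$; a short chord check across $\delta_\rmc$ finishes the extension.

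For the strict monotonicity of $\tilde\zeta$ (and hence of $\tilde F$), I would use a hedging mix with $\sigma_0$: for $\delta_\rmc\le\delta_1\le\delta_2\le1$ and an optimizer $\rho_2$ at $\delta_2$, the state $\rho'=\alpha\rho_2+(1-\alpha)\sigma_0$ with $\alpha=(\delta_1-\delta_\rmc)/(\delta_2-\delta_\rmc)$ realizes $p_{\rho'}=\delta_1$ and $f_{\rho'}=\alpha\,\tilde\zeta(N,\delta_2,\Omega)$, yielding
\[
\tilde\zeta(N,\delta_1,\Omega)\;\le\;\frac{\delta_1-\delta_\rmc}{\delta_2-\delta_\rmc}\,\tilde\zeta(N,\delta_2,\Omega).
\]
This is $\le\tilde\zeta(N,\delta_2,\Omega)$ and, dividing through by $\delta_1$, also gives $\tilde F(N,\delta_1,\Omega)\le\tilde F(N,\delta_2,\Omega)$ via the elementary identity $(\delta_1-\delta_\rmc)/[\delta_1(\delta_2-\delta_\rmc)]\le 1/\delta_2$. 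Strictness in both holds whenever $\delta_1<\delta_2$ and $\tilde\zeta(N,\delta_2,\Omega)>0$, which by \lref{lem:etaf0} is exactly the condition $\delta_2>\delta_\rmc$; this covers statements~1 and~3.

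Statements~2 and~4 are handled dually. Mixing an optimizer $\rho_2$ at $f_2$ with $\sigma_0$ through $\alpha=f_1/f_2$ gives the quantitative bound
\[
f_2\,\tilde\eta(N,f_1,\Omega)\;\ge\; f_1\,\tilde\eta(N,f_2,\Omega)+(f_2-f_1)\delta_\rmc
\]
for $0<f_1<f_2\le1$, which, since $\delta_\rmc>0$, yields strict monotonicity of $\tilde\caF(N,f,\Omega)=f/\tilde\eta(N,f,\Omega)$ on $(0,1]$. Strict monotonicity of $\tilde\eta$ itself follows from the mirror mix with $\rho_\star$, giving $1-\tilde\eta(N,f_2,\Omega)\le\tfrac{1-f_2}{1-f_1}\bigl(1-\tilde\eta(N,f_1,\Omega)\bigr)$, which is strict unless $\tilde\eta(N,f_1,\Omega)=1$; but the latter forces $\rho_1=\rho_\star$ and hence $f_1=1$, ruling out the tie. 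The main obstacle is not conceptual but bookkeeping: tracking strict versus weak inequalities, keeping the intervals of strict monotonicity sharp (carefully distinguishing $\delta\in[0,\delta_\rmc]$ from $\delta\in[\delta_\rmc,1]$), and verifying the boundary cases $f\in\{0,1\}$ cleanly---no new ingredient beyond \lref{lem:etaf0} is required.
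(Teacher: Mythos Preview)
Your proposal is correct and follows essentially the same approach as the paper: both arguments hinge on mixing an optimizer with the two reference states $\rho_\star=(|\Psi\>\<\Psi|)^{\otimes(N+1)}$ (yielding $(1,1)$) and $\sigma_0$ (the paper's $\rho_\rmc$, yielding $(\delta_\rmc,0)$), using the former for strict monotonicity of $\tilde\eta$ and the latter for $\tilde\zeta$, $\tilde F$, and $\tilde\caF$. The paper breaks the convexity extension into three explicit cases rather than a single chord check, and it phrases the $\tilde F$ and $\tilde\caF$ strictness directly via the mixed state rather than through your quantitative inequalities, but these are only stylistic differences; the underlying construction is identical.
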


Here $\delta_\rmc$ is defined in \eref{eq:deltarmc}. The convexity of $\tilde{\zeta}(N,\delta,\Omega)$ means 
\begin{align}\label{eq:zetaConvexity}
\tilde{\zeta}(N,\delta,\Omega)\leq (1-s)\tilde{\zeta}(N,\delta_1,\Omega)+s\tilde{\zeta}(N,\delta_2,\Omega)
\end{align}
for $\delta=(1-s)\delta_1+s\delta_2$ and $0\leq s,\delta_1,\delta_2\leq 1$. Note that this inequality is trivial when $\delta_1=\delta_2$ or $s=0,1$. The concavity  of $\tilde{\eta}(N,f,\Omega)$ means 
\begin{equation}
\tilde{\eta}(N,f,\Omega)\geq (1-s)\tilde{\eta}(N,f_1,\Omega)+s\tilde{\eta}(N,f_2,\Omega)
\end{equation}
for $f=(1-s) f_1+s f_2$ and $0\leq s,f_1,f_2\leq 1$.

\begin{proof}[Proof of \lref{lem:zetaEtaMonoConA}]	The convexity of $\tilde{\zeta}(N,\delta,\Omega)$ in $\delta$ can be proved by virtue of the definition in \eref{eq:FfdeltaA}. Suppose $0\leq \delta_1<\delta_2\leq 1$ and $0<s<1$; let $\delta=(1-s)\delta_1+s\delta_2$.  
If  $\delta_1>\delta_\rmc$,  then there exist two quantum states $\rho_1$ and $\rho_2$ that satisfy
\begin{equation}
	\begin{aligned}
	p_{\rho_1}&=\delta_1,&\quad  f_{\rho_1}&=\tilde{\zeta}(N,\delta_1,\Omega),\\
	p_{\rho_2}&=\delta_2,&\quad  f_{\rho_2}&=\tilde{\zeta}(N,\delta_2,\Omega).
	\end{aligned}
	\end{equation}
	Let $\rho=(1-s)\rho_1+s\rho_2$;  then 
	\begin{equation}
	p_\rho=(1-s)\delta_1+s\delta_2=\delta,
	\end{equation}
	so that
	\begin{align}
\tilde{\zeta}(N,\delta,\Omega)&\leq f_\rho=(1-s)\tilde{\zeta}(N,\delta_1,\Omega)+s\tilde{\zeta}(N,\delta_2,\Omega),\label{eq:zetaConvexityProof}
	\end{align}	
which confirms \eref{eq:zetaConvexity}.

If $\delta_1\leq\delta_\rmc$ and $\delta\leq \delta_\rmc$, then $\tilde{\zeta}(N,\delta,\Omega)=\tilde{\zeta}(N,\delta_1,\Omega)=0$, while $\tilde{\zeta}(N,\delta_2,\Omega)\geq 0$, so  \eref{eq:zetaConvexity} holds.

	If $\delta_1\leq \delta_\rmc$ and $\delta> \delta_\rmc$, then $\tilde{\zeta}(N,\delta_1,\Omega)=0$. Let $\rho_\rmc$ be a quantum state that satisfies $p_{\rho_\rmc}=\delta_\rmc$ and $f_{\rho_\rmc}=0$. 
	Let $s'$ be the solution of the equation $\delta=(1-s')\delta_\rmc+s'\delta_2$, which satisfies $0\leq s'\leq s$. 
	Let $\rho=(1-s')\rho_\rmc+s'\rho_2$. 	Then $p_\rho=\delta$,
	so that
	\begin{align}
	\tilde{\zeta}(N,\delta,\Omega)&\leq
	f_\rho=s'\tilde{\zeta}(N,\delta_2,\Omega)\leq s\tilde{\zeta}(N,\delta_2,\Omega)\nonumber\\
	&=(1-s)\tilde{\zeta}(N,\delta_1,\Omega) + s\tilde{\zeta}(N,\delta_2,\Omega),
	\end{align}		
	which confirms \eref{eq:zetaConvexity} again.	Therefore,
	$\tilde{\zeta}(N,\delta,\Omega)$ is convex in  $\delta$ for $0\leq \delta\leq 1$.

To prove the monotonicity   of $\tilde{\zeta}(N,\delta,\Omega)$ with $\delta$, let $\delta_1, \delta_2$ be  real numbers that satisfy $\delta_\rmc\leq \delta_1<\delta_2\leq 1$. Then 
there  exists a quantum state $\rho_2$ such that $p_{\rho_2}=\delta_2$ and $f_{\rho_2}=\tilde{\zeta}(N,\delta_2,\Omega)>0$. Let $s$ be the solution to the equation $\delta_1=(1-s)\delta_\rmc+s\delta_2$; then  $0\leq  s<1$. Let $\rho=(1-s)\rho_\rmc+s\rho_2$; then $p_\rho=\delta_1$, so that 
	\begin{align}
	\tilde{\zeta}(N,\delta_1,\Omega)&\leq f_\rho=s\tilde{\zeta}(N,\delta_2,\Omega)<\tilde{\zeta}(N,\delta_2,\Omega).
	\end{align}
	It follows that	$\tilde{\zeta}(N,\delta,\Omega)$  is strictly  increasing  in $\delta$ when $\delta_\rmc\leq \delta\leq 1$. As a corollary, $\tilde{\zeta}(N,\delta,\Omega)$  is  nondecreasing  in $\delta$ for  $0\leq \delta\leq 1$ given that 	$\tilde{\zeta}(N,\delta,\Omega)=0$ for $0\leq \delta\leq \delta_\rmc$.

	Next, consider statement~2 in \lref{lem:zetaEtaMonoConA}. The concavity of $\tilde{\eta}(N,f,\Omega)$ follows from a similar reasoning that leads to \eref{eq:zetaConvexityProof}.

	To prove the monotonicity of $\tilde{\eta}(N,f,\Omega)$ over $f$, choose $0\leq f_1< f_2\leq 1$.
	Then there exists a quantum state $\rho_1$ such that $f_{\rho_1}=f_1$ and $p_{\rho_1}=\tilde{\eta}(N,f_1,\Omega)<1$. Choose $\varrho=(|\Psi\>\<\Psi|)^{\otimes (N+1)}$; then $f_\varrho=p_\varrho=1$.
	Let  $s$ be the solution to the equation $f_2=(1-s) f_1+s$; note that $0<s\leq 1$ because of the assumption $f_1<f_2\leq 1$. Let $\rho_2=(1-s)\rho_1+s\varrho$; then $f_{\rho_2}=f_2$,  so that
	\begin{align}
	\tilde{\eta}(N,f_2,\Omega)&\geq p_{\rho_2}=(1-s)\tilde{\eta}(N,f_1,\Omega)+s>\tilde{\eta}(N,f_1,\Omega).
	\end{align}
	Here the second inequality follows from the facts that $0< s\leq 1$ and that $\tilde{\eta}(N,f_1,\Omega)<1$.

	Next, consider statement~3 in \lref{lem:zetaEtaMonoConA}.
	Suppose $\delta_1, \delta_2$ are real numbers that satisfy	
	$\delta_\rmc\leq \delta_1< \delta_2\leq 1$. Then $\tilde{F}(N,\delta_2,\Omega)\geq F(N,\delta_2,\Omega)>0$
	and  there is a quantum state $\rho_2$ such that $p_{\rho_2}=\delta_2$ and $f_{\rho_2}=\delta_2 \tilde{F}(N,\delta_2,\Omega)$. By assumption, $\delta_1$ can be expressed as a convex sum of $\delta_2$ and $\delta_\rmc$, that is, $\delta_1=s \delta_2+(1-s) \delta_\rmc$ with $0\leq s< 1$. Let $\rho_1=s\rho_2+(1-s)\rho_\rmc$, then
	\begin{equation}
	p_{\rho_1}=s\delta_2+(1-s)\delta_\rmc=\delta_1,\quad
	f_{\rho_1}=s f_{\rho_2}=
	s\delta_2 \tilde{F}(N,\delta_2,\Omega),
	\end{equation}
	so that
	\begin{equation}
	\tilde{F}(N,\delta_1,\Omega)\leq\frac{f_{\rho_1}}{p_{\rho_1}}= \frac{s\delta_2 \tilde{F}(N,\delta_2,\Omega)}{s\delta_2+(1-s)\delta_\rmc}< \tilde{F}(N,\delta_2,\Omega).
	\end{equation}
	Therefore,  $\tilde{F}(N,\delta,\Omega)$ is strictly increasing  in $\delta$ whenever $\delta_\rmc\leq \delta\leq 1$. As a corollary, $\tilde{F}(N,\delta,\Omega)$  is  nondecreasing  in $\delta$ for  $0< \delta\leq 1$ given that 	$\tilde{F}(N,\delta,\Omega)=0$ for $0< \delta\leq \delta_\rmc$.

	Finally, consider statement~4 in \lref{lem:zetaEtaMonoConA}.	
	Suppose $f_1$ and $f_2$ are real numbers that satisfy	
	$0< f_1< f_2\leq 1$ and let   $s=f_1/f_2$. Then  $0<s<1$ and
there exists a quantum state $\rho_2$ such that
	$f_{\rho_2}=f_2$ and  $p_{\rho_2}=f_2/\tilde{\caF}(N,f_2,\Omega)$. Let 
	$\rho_1=s\rho_2+(1-s)\rho_\rmc$, where  $\rho_\rmc$ is a quantum state that satisfies $p_{\rho_\rmc}=\delta_\rmc$ and $f_{\rho_\rmc}=0$. Then we have
	\begin{equation}
	f_{\rho_1}=sf_2 =f_1,\quad 
	p_{\rho_1}=sp_{\rho_2}+(1-s)\delta_\rmc,
	\end{equation}
	so that 
	\begin{equation}
	\tilde{\caF}(N,f_1,\Omega)\leq\frac{sf_2}{sp_{\rho_2}+(1-s)\delta_\rmc}< \frac{f_2}{p_{\rho_2}}=\tilde{\caF}(N,f_2,\Omega).
	\end{equation}
	Therefore,  $\tilde{\caF}(N,f,\Omega)$ increases strictly monotonically with   $f$ for $0<f\leq 1$.
\end{proof}

\subsection{Proofs of \lsref{lem:etazetaMU} to \ref{lem:VScompare}}

\begin{proof}[Proof of \lref{lem:etazetaMU}] To prove \eref{eq:etazetaMU} in the lemma, 
	let $f_1=\zeta(N,\delta,\Omega)$ and $\delta_1=\eta(N,f_1,\Omega)$. If $\delta$ satisfies the condition $0\leq \delta\leq \delta_\rmc$, then $f_1=0$ and $\delta_1=\delta_\rmc$ according to \lref{lem:etaf0}, which confirms \eref{eq:etazetaMU}. 
	
	Now suppose $\delta_\rmc<\delta\leq 1$; then  $\max\{\delta,\delta_\rmc\}=\delta$.   In addition, there exists a quantum state $\rho$  on $\caH^{\otimes(N+1)}$ such that $p_\rho=\delta$ and $f_\rho=f_1$, which implies that 
	$\delta_1=\eta(N,f_1,\Omega)\geq \delta$. Meanwhile, there exists a state $\rho'$ such that $f_{\rho'}=f_1$ and $p_{\rho'}=\delta_1$, which implies that $\zeta(N,\delta_1,\Omega)\leq f_1=\zeta(N,\delta,\Omega)$. 
	Since $\zeta(N,\delta,\Omega)$ is strictly  increasing  in $\delta$ for  $\delta_\rmc\leq \delta\leq 1$ according to \lref{lem:zetaEtaMonoCon}, we conclude that  $\delta_1\leq \delta$. This observation  implies that $\delta_1=\delta$ and confirms \eref{eq:etazetaMU} given the opposite inequality derived above.

	Next, consider \eref{eq:zetaetaMU}. 
	Let $\delta_1=\eta(N,f,\Omega)$ and $f_1=\zeta(N,\delta_1,\Omega)$. 
	Then $\delta_1\geq \delta_\rmc$ and	
	there exists a quantum state $\rho$  on $\caH^{\otimes(N+1)}$ such that $f_\rho=f$ and $p_\rho=\delta_1$, which implies that 
	$f_1=\zeta(N,\delta_1,\Omega)\leq f$. Meanwhile, there exists a state $\rho'$ such that $p_{\rho'}=\delta_1$ and $f_{\rho'}=f_1$, which implies that $\eta(N,f_1,\Omega)\geq \delta_1=\eta(N,f,\Omega)$. 
	Since $\eta(N,\delta,\Omega)$ is strictly  increasing  in $f$ for  $0\leq f \leq 1$ according to \lref{lem:zetaEtaMonoCon}, we conclude that  $f_1\geq f$. This observation  implies that $f_1=f$ and confirms \eref{eq:zetaetaMU} given the opposite inequality derived above. 
\end{proof}

\begin{proof}[Proof of \lref{lem:ZetaEtaMonN}] Recall that 	$\zeta(N,\delta,\Omega)$ is convex and nondecreasing in $\delta$  according to \lref{lem:zetaEtaMonoCon}. In addition, $\zeta(N,\delta,\Omega)$ is a piecewise-linear function of $\delta$, and each turning point is equal to $\eta_\bfk$ for some $\bfk\in \scrS_N$ at which $\zeta(N,\delta=\eta_\bfk,\Omega)=\zeta_\bfk$ (cf.~\lref{lem:ExtremalPoints} below). Here  $\eta_\bfk$ and $\zeta_\bfk$ are  shorthands for $ \eta_\bfk(\bm{\lambda})$ and $\zeta_\bfk(\bm{\lambda})$, respectively, which are defined in \eref{eq:etazeta}. 
	To prove \eref{eq:ZetaMonN}, it suffices to prove the inequality $\zeta_\bfk\geq\zeta(N-1,\eta_\bfk,\Omega)$ for each  turning point. 
	
If $k_1=0$, then $\zeta_\bfk=0$, which implies that $\eta_\bfk\leq \delta_\rmc$ according to \lref{lem:etaf0}, so that $\zeta(N-1,\eta_\bfk,\Omega)=0\leq \zeta_\bfk$. 
	
	If $k_1\geq1$, let $\bfk'=(k_1-1, k_2,\ldots, k_D)$. Then 
	\begin{align}
	\eta_{\bfk',N-1}\geq \eta_{\bfk},\quad 
	\zeta_{\bfk',N-1}\leq \zeta_{\bfk},
	\end{align}
	where $\eta_{\bfk',N-1}$ and $\zeta_{\bfk',N-1}$ are given  in \eref{eq:etazeta} with $N$ replaced by $N-1$ and $\bfk$ replaced by $\bfk'$. In conjunction with \lref{lem:zetaEtaMonoCon}  we conclude that
	\begin{align}\label{eq:ZetaMonNProof}
	\zeta(N-1,\eta_\bfk,\Omega)&\leq \zeta(N-1,\eta_{\bfk',N-1},\Omega)\leq \zeta_{\bfk',N-1}\leq \zeta_{\bfk},
	\end{align}
	which implies \eref{eq:ZetaMonN} as desired.

	If  $\delta\leq \delta_\rmc$ then we have $\zeta(N,\delta,\Omega)= \zeta(N-1,\delta,\Omega)=0$. If 
	$\delta=1$ by contrast,  then $\zeta(N,\delta,\Omega)= \zeta(N-1,\delta,\Omega)=1$. So the inequality in \eref{eq:ZetaMonN} is saturated in both cases.

	If the upper bound in \eref{eq:ZetaMonNProof} is saturated, then 
	$\zeta_{\bfk',N-1}= \zeta_{\bfk}$, which implies that $\zeta_{\bfk}=0$ (which means $\eta_{\bfk}\leq \delta_\rmc$) or $\zeta_{\bfk}=1$ (which means $\eta_{\bfk}=1$). So  the upper bound in \eref{eq:ZetaMonNProof} cannot be saturated whenever the turning point satisfies $\delta_\rmc<\eta_\bfk<1$. 
	In conjunction with \esref{eq:zetazero} and \eqref{eq:zetanonzero}, this observation implies that the inequality in \eref{eq:ZetaMonN} 
	is saturated iff $\delta\leq \delta_\rmc$ or $\delta=1$. 
	According to \lref{lem:equiTwoDef},
	\eref{eq:FMonN} and \eref{eq:ZetaMonN} are equivalent, so 
	the same conclusion also applies to \eref{eq:FMonN}.
	
\Eref{eq:EtaMonN} and the equality condition can be derived using a similar reasoning as presented above. \Esref{eq:caFMonN} and \eqref{eq:EtaMonN} are equivalent according to \lref{lem:equiTwoDef}. 
	
Alternatively, \eref{eq:EtaMonN} can be derived from \lsref{lem:etaf0}, \ref{lem:zetaEtaMonoCon}, \ref{lem:etazetaMU}, and \eref{eq:ZetaMonN}. To be specific,
if $f=0$, then $\eta(N,f,\Omega)< \eta(N-1,f,\Omega)$ according to 
 \lref{lem:etaf0}, so \eref{eq:EtaMonN} holds with strict inequality. If $f>0$, then
\begin{gather}
\eta(N,f,\Omega)>\eta(N,0,\Omega)=\delta_\rmc, \\
\eta(N-1,f,\Omega)> \eta(N-1,0,\Omega)>\delta_\rmc,
\end{gather}
according to \lsref{lem:etaf0} and \ref{lem:zetaEtaMonoCon},
where $\delta_\rmc$ is given in \eref{eq:deltarmc}. 
In addition,
\eref{eq:ZetaMonN} and \lref{lem:etazetaMU} imply that 
\begin{align}
&\zeta(N,\eta(N-1,f,\Omega),\Omega)\geq\zeta(N-1,\eta(N-1,f,\Omega),\Omega)\nonumber \\
&=f=
\zeta(N,\eta(N,f,\Omega),\Omega).\label{eq:EtaMonNproof}
\end{align}
In conjunction with \lref{lem:zetaEtaMonoCon}, this equation implies that 
\begin{align}
\eta(N,f,\Omega)\leq \eta(N-1,f,\Omega)
\end{align}
and confirms \eref{eq:EtaMonN}. If the inequality in  \eref{eq:EtaMonN}  is saturated, then the inequality in \eref{eq:EtaMonNproof} is saturated, so that $\eta(N-1,f,\Omega)\leq \delta_\rmc$ or $\eta(N-1,f,\Omega)=1$. The first case cannot happen, while the second case holds iff $f=1$. Therefore,  the inequality in  \eref{eq:EtaMonN} is saturated iff $f=1$. 
\end{proof}

\begin{proof}[Proof of \lref{lem:MinNumTestDef}]
	\Lref{lem:MinNumTestDef}	follows from the definition of $N(\epsilon,\delta,\Omega)$  in \eref{eq:MinNumTestDef} and the fact that the following four inequalities are equivalent,
	\begin{align}
	F(N,\delta,\Omega)&\geq 1-\epsilon,\\
	\zeta(N,\delta,\Omega)&\geq \delta(1-\epsilon),\\
	\eta(N,\delta(1-\epsilon),\Omega)&\leq \delta,\\
	\caF(N,\delta(1-\epsilon),\Omega)&\geq (1-\epsilon).
	\end{align}	
	Here the equivalence of the first two inequalities is a corollary of  \lref{lem:equiTwoDef}; so is the  equivalence of the last two inequalities.  The equivalence of the middle two inequalities follows from \lsref{lem:zetaEtaMonoCon} and \ref{lem:etazetaMU},  note that $\delta> \delta_\rmc$ if either inequality is satisfied.
\end{proof}

\begin{proof}[Proof of \lref{lem:VScompare}] \Eref{eq:VScompare2} is an immediate consequence of \esref{eq:Fzeta} and \eqref{eq:VScompare1}; \eref{eq:VScompare3} is an immediate consequence of \esref{eq:MinNumTestDef} and \eqref{eq:VScompare2}. So to prove \lref{lem:VScompare}, it suffices to prove \eref{eq:VScompare1}.

	By the definition in \eref{eq:Minfdelta} and \eref{eq:prhofrho} we have
	\begin{align}
	&\zeta(N,\delta,\Omega)=\min_{\{c_\bfk\}}\Biggl\{\sum_{\bfk\in \scrS_N} c_\bfk\zeta_\bfk(\bm{\lambda})\Bigg| \sum_{\bfk\in \scrS_N} c_\bfk\eta_\bfk(\bm{\lambda}) \geq \delta\Biggr\}\nonumber\\
	&\geq \min_{\{c_\bfk\}}\Biggl\{\sum_{\bfk\in \scrS_N} c_\bfk\zeta(N,\delta=\eta_\bfk(\bm{\lambda}),\tilde{\Omega})\Bigg| \sum_{\bfk\in \scrS_N} c_\bfk\eta_\bfk(\bm{\lambda}) \geq \delta\Biggr\}\nonumber\\
	&\geq \min_{\{c_\bfk\}}\Biggl\{\zeta\Biggl(N,\sum_{\bfk\in \scrS_N} c_\bfk\eta_\bfk(\bm{\lambda}),\tilde{\Omega}\Biggr)\Bigg| \sum_{\bfk\in \scrS_N} c_\bfk\eta_\bfk(\bm{\lambda}) \geq \delta\Biggr\}\nonumber\\
	&= \min_{\delta' \geq \delta}\zeta(N,\delta',\tilde{\Omega})=\zeta(N,\delta,\tilde{\Omega}), \label{eq:VScompareProof}
	\end{align}	
	which confirms \eref{eq:VScompare1}. 
	Here $\{c_\bfk\}$ is a probability distribution on $\scrS_N$; the first inequality in \eref{eq:VScompareProof} follows from the assumption $\zeta_\bfk(\bm{\lambda})\geq \zeta(N,\delta=\eta_\bfk(\bm{\lambda}),\tilde{\Omega})$ for all $\bfk\in \scrS_N$, and the second inequality follows from the convexity of $\zeta(N,\delta',\tilde{\Omega})$ in $\delta'$ (cf.~\lref{lem:zetaEtaMonoCon}); the last equality follows from the monotonicity of $\zeta(N,\delta',\tilde{\Omega})$ in $\delta'$. 
\end{proof}

By  \eref{eq:zetaetaLP} in the main text,  $\zeta(N,\delta,\Omega)$ and $\eta(N,f,\Omega)$ are piecewise linear functions, whose turning points correspond to  the extremal points of the region $R_{N,\Omega}$, which have the form $(\eta_\bfk(\bm{\lambda}),\zeta_\bfk(\bm{\lambda}))$ for certain $\bfk\in \scrS_N$. In conjunction with the monotonicity and convexity (concavity) of $\zeta(N,\delta,\Omega)$  ($\eta(N,f,\Omega)$) stated in \lref{lem:zetaEtaMonoCon} (see also \lref{lem:etazetaMU}), we can deduce  the following conclusion. Here $\delta_\rmc$ is defined in \eref{eq:deltarmc}. 
\begin{lemma}\label{lem:ExtremalPoints}
	$\zeta(N,\delta,\Omega)$ for $\delta_\rmc\leq \delta\leq 1$ and $\eta(N,f,\Omega)$ for $0\leq f\leq 1$ can be expressed as follows,
	\begin{align}
	\zeta(N,\delta,\Omega)&=\frac{a_{j+1}-\delta}{a_{j+1}-a_j}b_j +\frac{\delta-a_{j}}{a_{j+1}-a_j}b_{j+1},\\
	\eta(N,f,\Omega)&=\frac{b_{l+1}-f}{b_{l+1}-b_l}a_l +\frac{f-b_{l}}{b_{l+1}-b_l}a_{l+1},
	\end{align}
	where $j$ and $l$ are chosen so that $a_j\leq \delta\leq a_{j+1}$ and $b_l\leq f\leq b_{l+1}$. Here $a_j=\eta_{\bfk^{(j)}}(\bm{\lambda})$ and  $b_j=\zeta_{\bfk^{(j)}}(\bm{\lambda})$ with $\bfk^{(j)}\in \scrS_N$ for $j=0,1,\ldots, m$, which satisfy the following conditions
	\begin{gather}
	\delta_\rmc=a_0<a_1<\ldots <a_{m-1}<a_m=1,\\
	0=b_0<b_1<\ldots <b_{m-1}<b_m=1,\\
	0=\frac{b_0}{a_0}<\frac{b_1}{a_1}<\ldots <\frac{b_{m-1}}{a_{m-1}}<\frac{b_m}{a_m}=1.
	\end{gather}	
\end{lemma}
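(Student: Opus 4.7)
The plan is to exploit the fact, already established in the discussion preceding \lref{lem:ExtremalPoints}, that the region $R_{N,\Omega}$ is a convex polygon whose extreme points form a subset of the finite set $\{(\eta_\bfk(\bm{\lambda}), \zeta_\bfk(\bm{\lambda})): \bfk \in \scrS_N\}$. By \lref{lem:equiTwoDef}, the function $\zeta(N,\delta,\Omega)$ coincides with $\tilde\zeta(N,\delta,\Omega)$ for $\delta_\rmc\leq \delta\leq 1$, so its graph traces the lower boundary of $R_{N,\Omega}$ parametrized by the $p$-coordinate. Similarly, $\eta(N,f,\Omega)=\tilde\eta(N,f,\Omega)$ and its graph traces the right boundary of $R_{N,\Omega}$ parametrized by the $f$-coordinate. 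I would then observe that the lower-right boundary of a convex polygon is a piecewise-linear arc whose vertices are precisely those extreme points of the polygon that face the lower-right direction.

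The next step is to collect these extreme points in order. Let $(a_0,b_0),(a_1,b_1),\ldots,(a_m,b_m)$ be the extreme points of $R_{N,\Omega}$ lying on this lower-right arc, indexed in order of increasing $a_j$. By \lref{lem:etaf0} the unique extreme point with $b=0$ of largest $p$-coordinate is $(\delta_\rmc,0)$, so $(a_0,b_0)=(\delta_\rmc,0)$; the unique extreme point with maximal $p$-coordinate is $(\eta_{\bfk_0}(\bm{\lambda}),\zeta_{\bfk_0}(\bm{\lambda}))=(1,1)$ corresponding to $\bfk_0=(N+1,0,\ldots,0)$, so $(a_m,b_m)=(1,1)$. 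On each segment $[a_j,a_{j+1}]$ the lower boundary is affine, which directly yields the linear-interpolation formula for $\zeta(N,\delta,\Omega)$; the formula for $\eta(N,f,\Omega)$ is obtained by re-parametrizing the same boundary by the $f$-coordinate, using that $\zeta$ being strictly increasing on $[\delta_\rmc,1]$ (\lref{lem:zetaEtaMonoCon}) ensures that the ordering by $a_j$ coincides with the ordering by $b_j$.

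For the three chains of strict inequalities, I would reason as follows. The relations $a_0<a_1<\cdots<a_m$ hold because the listed points are distinct extreme points ordered by $p$-coordinate. The relations $b_0<b_1<\cdots<b_m$ follow from the strict monotonicity of $\zeta(N,\delta,\Omega)$ on $[\delta_\rmc,1]$ in \lref{lem:zetaEtaMonoCon}, evaluated at $\delta=a_j$. The ratio chain $b_0/a_0<b_1/a_1<\cdots<b_m/a_m$ is the content of the strict monotonicity of $F(N,\delta,\Omega)=\zeta(N,\delta,\Omega)/\delta$ on $[\delta_\rmc,1]$ (\lref{lem:zetaEtaMonoCon} item~3) applied at $\delta=a_j$, with the boundary value $b_0/a_0=0$ and $b_m/a_m=1$.

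The main obstacle I expect is justifying that the same index set parametrizes both the $\zeta$ and the $\eta$ descriptions, i.e., that the lower-right boundary is simultaneously the graph of $\tilde\zeta$ (viewed as a function of $\delta$) and of the inverse of $\tilde\eta$ (viewed as a function of $f$). This reduces to checking that along this arc the $p$- and $f$-coordinates are both strictly increasing, which is ultimately a consequence of the two monotonicity statements in \lref{lem:zetaEtaMonoCon} together with the concavity/convexity of $\tilde\eta$ and $\tilde\zeta$. A minor additional verification is that no extreme point on the lower arc is skipped when one writes the interpolation formula, which follows from the definition of extreme points: any point $(\eta_\bfk(\bm{\lambda}),\zeta_\bfk(\bm{\lambda}))$ lying strictly inside a segment $[(a_j,b_j),(a_{j+1},b_{j+1})]$ is not extreme and hence is not listed among the $(a_l,b_l)$, while every vertex of the lower-right arc is of the form $(\eta_\bfk(\bm{\lambda}),\zeta_\bfk(\bm{\lambda}))$ because the polygon's vertex set is contained in this finite family.
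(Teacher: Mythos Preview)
Your proposal is correct and follows essentially the same route as the paper. The paper's justification is a brief paragraph preceding \lref{lem:ExtremalPoints}: it cites the linear-programming characterization in \eref{eq:zetaetaLP} for piecewise linearity, then invokes \lref{lem:zetaEtaMonoCon} (monotonicity and convexity/concavity) together with \lref{lem:etazetaMU} to obtain the ordered chains; your argument fleshes out the same ingredients more explicitly, replacing the appeal to \lref{lem:etazetaMU} by the equivalent observation that the lower-right boundary arc is simultaneously parametrized by $p$ and by $f$ because both coordinates increase strictly along it.
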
 
Note that we can choose strict inequalities $\delta>a_j$ and $f>b_l$ in \lref{lem:ExtremalPoints} if $\delta>\delta_\rmc$ and $f>0$. If  $\Omega$ is a nonsingular  homogeneous strategy  defined in \eref{eq:HomoStrategy} for example, then we have $\delta_\rmc=\lambda^N$, $m=N+1$, $a_j=\eta_{N+1-j}(\lambda)$, and $b_j=\zeta_{N+1-j}(\lambda)$; cf.~\thref{thm:FidelityHomo} in the main text.

\Lref{lem:ExtremalPoints} is very helpful to understanding the properties of $\zeta(N,\delta,\Omega)$ and $\eta(N,f,\Omega)$, although, in general, it is not easy to determine the values of $m$, $\bfk^{(j)}$, $a_j$, and $b_j$. Geometrically, $(a_j, b_j)$ happen to be the extremal points of the region $R_{N,\Omega}$. When $\delta_\rmc=\tau^N$, which can happen iff $\tau=\beta>0$, $R_{N,\Omega}$ has no other extremal point; when $\delta_\rmc>\tau^N$, $R_{N,\Omega}$ has only one additional extremal point, namely $(\tau^N,0)$, as illustrated in \fref{fig:ConvexHull}. This conclusion is tied to \lref{lem:slope} presented in \aref{sec:NIIDproof}.

\section{\label{sec:HomoApp}Homogeneous strategies}

\subsection{Auxiliary results on  homogeneous strategies}
Before proving the results on  homogeneous strategies presented in the main text, we need to introduce a few auxiliary results. For $j,k\in \nni$ and $0<\lambda<1$, define
\begin{align}
g_{jk}(\lambda)=g_{kj}(\lambda)&:=\frac{\zeta_k(\lambda)-\zeta_{j}(\lambda)}{\eta_k(\lambda)-\eta_{j}(\lambda)}, \quad j\neq k, \label{eq:gkj}\\ 
g_{k}(\lambda)&:= g_{k(k+1)}(\lambda)=\frac{\zeta_k(\lambda)-\zeta_{k+1}(\lambda)}{\eta_k(\lambda)-\eta_{k+1}(\lambda)},\label{eq:gk}
\end{align}
where $\eta_k(\lambda)$ and $\zeta_k(\lambda)$ are defined in \eref{eq:etazetaHomo}, assuming that $N$ is a positive integer. 
To simplify the notations, we shall use  $\eta_k$, $\zeta_k$, $g_k$, $g_{kj}$ as shorthands for $\eta_k(\lambda)$, $\zeta_k(\lambda)$, $g_{k}(\lambda)$, $g_{kj}(\lambda)$ if there is no danger of confusions. Geometrically, $g_{jk}$ and $g_{kj}$ denote the slope of the line passing through the two points $(\eta_j, \zeta_j)$ and $(\eta_k,\zeta_k)$. 
\begin{lemma}\label{lem:zetaEtaSlope}
Suppose  $0<\lambda<1$ and  $j,k\in \nni$ with $k<j$. Then
	$g_{k}(\lambda)$ decreases strictly monotonically with $k$, and   $g_{kj}(\lambda)$ decreases strictly monotonically with $j,k$. 
\end{lemma}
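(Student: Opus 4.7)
My plan is to establish the strict monotonicity of $g_k(\lambda)$ first by a direct algebraic computation, and then bootstrap to the general statement about $g_{kj}(\lambda)$ via a standard convexity-style interpolation identity.

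First, I would obtain an explicit formula for $g_k(\lambda)$. From \eref{eq:etazetaHomo} one computes
\begin{align*}
\eta_k-\eta_{k+1} &= \tfrac{\lambda^{k-1}}{N+1}\bigl[(N-k)\lambda\nu+k\bigr],\\
\zeta_k-\zeta_{k+1} &= \tfrac{\lambda^{k}}{N+1}\bigl[(N-k)\nu+1\bigr],
\end{align*}
where $\nu=1-\lambda$, so
\begin{equation*}
g_k(\lambda)=\frac{\lambda\bigl[(N-k)\nu+1\bigr]}{(N-k)\lambda\nu+k}.
\end{equation*}
Both denominators $(N-k)\lambda\nu+k$ and $(N-k-1)\lambda\nu+k+1$ are strictly positive for $0<\lambda<1$ and $k\in\nni$, so the strict inequality $g_k(\lambda)>g_{k+1}(\lambda)$ is equivalent to a polynomial inequality after cross-multiplication. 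Expanding both sides, using $\lambda+\nu=1$ (which makes a cross-term $a\lambda\nu(\nu+\lambda)=a\lambda\nu$), and collecting terms, the difference $\bigl[(a\nu+1)((a-1)\lambda\nu+N-a+1)\bigr]-\bigl[(a\nu+\lambda)(a\lambda\nu+N-a)\bigr]$ (with $a=N-k$) collapses dramatically to $\nu(N-\lambda)+1$, which is manifestly positive for $N\ge 1$ and $0<\lambda<1$. This yields the strict monotonicity of $g_k(\lambda)$ in $k$.

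Next, I would extend to arbitrary $g_{kj}(\lambda)$ using the elementary interpolation identity: for integers $k<l<j$,
\begin{equation*}
g_{kj}(\lambda)=\frac{(\eta_k-\eta_l)\,g_{kl}(\lambda)+(\eta_l-\eta_j)\,g_{lj}(\lambda)}{(\eta_k-\eta_l)+(\eta_l-\eta_j)},
\end{equation*}
which expresses $g_{kj}$ as a convex combination of $g_{kl}$ and $g_{lj}$ with positive weights since $\eta_k>\eta_l>\eta_j$. Applying this iteratively with $l=k+1,k+2,\dots,j-1$ shows that $g_{kj}(\lambda)$ is a convex combination of the consecutive-pair slopes $g_k(\lambda),g_{k+1}(\lambda),\ldots,g_{j-1}(\lambda)$, and hence is strictly sandwiched: $g_{j-1}(\lambda)<g_{kj}(\lambda)<g_k(\lambda)$ by the monotonicity established in the previous step. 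From the same identity applied to the triple $(k,j,j+1)$, $g_{k,j+1}(\lambda)$ is a convex combination of $g_{kj}(\lambda)$ and $g_j(\lambda)$; since $g_j(\lambda)<g_{kj}(\lambda)$, this gives $g_{k,j+1}(\lambda)<g_{kj}(\lambda)$, proving strict monotonicity in $j$. Symmetrically, applying the identity to $(k,k+1,j)$ and using $g_{k+1,j}(\lambda)<g_k(\lambda)$ yields $g_{k+1,j}(\lambda)<g_{kj}(\lambda)$, proving strict monotonicity in $k$.

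The main obstacle is the cross-multiplication in step one: the expanded polynomial looks unwieldy, and one must carefully exploit the relation $\lambda+\nu=1$ to see the massive cancellation down to $\nu(N-\lambda)+1$. Once past that calculation, step two is entirely formal and follows from the convex-combination identity, which is just rearrangement of differences of $\zeta_k$ and $\eta_k$. A geometric way to view the whole argument is that the points $(\eta_k,\zeta_k)$ lie on a strictly convex curve (when parametrized by $-\eta_k$); step one establishes convexity through consecutive slopes, and step two is the familiar fact that for a strictly convex curve, secant slopes $g_{kj}$ are strictly monotone in each endpoint.
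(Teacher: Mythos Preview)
Your strategy is exactly the paper's: establish $g_k>g_{k+1}$ by direct computation, then bootstrap to $g_{kj}$ via the convex-combination identity. The second half of your argument (interpolation step) is correct and matches the paper verbatim.

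However, your formula for $\eta_k-\eta_{k+1}$ contains an algebraic slip. The correct expression is
\[
\eta_k-\eta_{k+1}=\frac{\lambda^{k-1}}{N+1}\,\nu\bigl(N\lambda+k\nu\bigr),
\]
not $\tfrac{\lambda^{k-1}}{N+1}\bigl[(N-k)\lambda\nu+k\bigr]$; the two bracketed quantities differ by $k\lambda$ (check $k=1$: the true value is $\nu(N\lambda+\nu)=1+(N-2)\lambda-(N-1)\lambda^2$, while your formula gives $1+(N-1)\lambda-(N-1)\lambda^2$). Consequently your closed form for $g_k$ and the collapsed cross-difference $\nu(N-\lambda)+1$ are both artifacts of this error. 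With the correct denominator one obtains
\[
g_k(\lambda)=\frac{\lambda\bigl[1+(N-k)\nu\bigr]}{\nu(N\lambda+k\nu)},\qquad
g_k(\lambda)-g_{k+1}(\lambda)=\frac{(N+1)\lambda}{(N\lambda+k\nu)\bigl(N\lambda+(k+1)\nu\bigr)}>0,
\]
which is precisely the paper's computation. Once you fix the algebra in this one place, your proof goes through and is identical to the paper's.
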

\begin{lemma}\label{lem:11slope}
Let $0\leq \lambda<1$ and $k\in\{1,2,\ldots, N+1	\}$. Then 
	\begin{equation}
	\frac{1}{1-\lambda^N}\leq 	\frac{1-\zeta_k(\lambda)}{1-\eta_k(\lambda)}\leq\frac{1+N(1-\lambda)}{N(1-\lambda)}=
	\frac{1+N\nu}{N\nu}.
	\end{equation}
The first inequality is saturated iff $k=N+1$,  or $k\geq 2$ and 	$\lambda=0$; the second inequality is saturated iff $k=1$. 
\end{lemma}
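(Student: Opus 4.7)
The plan is to interpret the quantity $\frac{1-\zeta_k(\lambda)}{1-\eta_k(\lambda)}$ geometrically as the chord slope $g_{0k}(\lambda)$ between $(\eta_0,\zeta_0)=(1,1)$ and $(\eta_k(\lambda),\zeta_k(\lambda))$ in the $(\eta,\zeta)$-plane (this uses $\eta_0(\lambda)=\zeta_0(\lambda)=1$ and the definition in \eref{eq:gkj} with $j=0$). Under this identification the double inequality simply asserts that $g_{0k}(\lambda)$ lies between the two extreme chord slopes $g_{0,N+1}(\lambda)$ and $g_{01}(\lambda)$, which is precisely the kind of statement \lref{lem:zetaEtaSlope} is designed to deliver.

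First I would do the two boundary evaluations to identify the numerical bounds with these extreme chord slopes. Using \eref{eq:etazetaHomo}, the case $k=N+1$ gives $\eta_{N+1}(\lambda)=\lambda^N$ and $\zeta_{N+1}(\lambda)=0$, hence $g_{0,N+1}(\lambda)=1/(1-\lambda^N)$; the case $k=1$ gives $1-\eta_1(\lambda)=N\nu/(N+1)$ and $1-\zeta_1(\lambda)=(1+N\nu)/(N+1)$, hence $g_{01}(\lambda)=(1+N\nu)/(N\nu)$. These match the claimed lower and upper bounds, respectively.

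For $0<\lambda<1$ the whole statement, together with the strictness of both inequalities whenever $1<k<N+1$ and the stated equality cases, then follows at once from \lref{lem:zetaEtaSlope}, which tells us that $g_{0k}(\lambda)$ is strictly decreasing in $k$ over $k=1,2,\ldots,N+1$.

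The only remaining work, and the only place where any real care is needed, is the boundary case $\lambda=0$, since \lref{lem:zetaEtaSlope} is stated only for $0<\lambda<1$. I would handle this by direct inspection using the convention $\lambda^0=1$: at $k=1$ one reads off $\eta_1(0)=1/(N+1)$, $\zeta_1(0)=0$, giving ratio $(N+1)/N$, which coincides with the upper bound $(1+N\nu)/(N\nu)$ at $\nu=1$; and for $2\le k\le N+1$ one has $\eta_k(0)=\zeta_k(0)=0$, giving ratio $1=1/(1-0^N)$, which coincides with the lower bound. This pins down the equality cases at $\lambda=0$ and completes the proof. There is essentially no obstacle here: the content of the lemma is really just the monotonicity of chord slopes from the corner $(1,1)$, which is inherited from \lref{lem:zetaEtaSlope}; the $\lambda=0$ edge case is a short, routine verification.
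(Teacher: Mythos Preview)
Your proposal is correct and follows essentially the same approach as the paper: both identify $(1-\zeta_k)/(1-\eta_k)$ with the chord slope $g_{0k}(\lambda)$, invoke \lref{lem:zetaEtaSlope} for the strict monotonicity in $k$ when $0<\lambda<1$, compute the extreme values at $k=1$ and $k=N+1$, and then handle $\lambda=0$ by direct verification. There is nothing to add.
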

When $0\leq \lambda<1$ (and $N$ is a positive integer), \lref{lem:11slope}  implies that 
\begin{equation}\label{eq:lambdaN}
\frac{1}{1-\lambda^N}<
\frac{1+N\nu}{N\nu},\quad \lambda^N<
\frac{1}{N\nu+1}. 
\end{equation}
The two inequalities actually hold for a wider parameter range according to \lref{lem:alambda} below.
\begin{lemma}\label{lem:alambda}
Suppose $0<\lambda \leq 1$, $\nu=1-\lambda$,  and $a$ is a real number.  Then 
\begin{align}
\lambda^{-a}-a\nu-1&\geq 0 \quad \mbox{if}\; a\geq 0 \mbox{ or } a\leq -1, \label{eq:alambda} \\
\lambda^{-a}-a\nu-1&\leq 0\quad  \mbox{if}\; -1\leq a\leq 0. \label{eq:alambda2}
\end{align}
If $a\neq -1,0$, then the inequality in \eref{eq:alambda} is saturated
iff $\lambda=1$; the same holds for the inequality in \eref{eq:alambda2}. 
\end{lemma}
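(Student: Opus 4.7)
The plan is to introduce the function
\[
f(\lambda):=\lambda^{-a}-a\nu-1=\lambda^{-a}+a\lambda-a-1,\qquad 0<\lambda\leq 1,
\]
and analyze its sign via convexity. First I would observe that $f(1)=1+a-a-1=0$, so the claimed inequalities already hold with equality at the endpoint $\lambda=1$, and it suffices to show strict one-sided behavior on $(0,1)$ depending on the sign regime of $a$.

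Next, I would compute
\[
f'(\lambda)=-a\lambda^{-a-1}+a=a\bigl(1-\lambda^{-a-1}\bigr),\qquad f''(\lambda)=a(a+1)\lambda^{-a-2}.
\]
Since $\lambda^{-a-2}>0$ on $(0,1]$, the sign of $f''$ is exactly the sign of $a(a+1)$. Thus $f$ is strictly convex on $(0,1)$ when $a>0$ or $a<-1$, and strictly concave on $(0,1)$ when $-1<a<0$. The two boundary cases $a=0$ and $a=-1$ are degenerate: $f(\lambda)\equiv 0$ identically (for $a=0$, $f=1-0-1=0$; for $a=-1$, $f=\lambda+(1-\lambda)-1=0$), which accounts for the exception $a\neq -1,0$ in the saturation statement.

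The main step is then to combine $f(1)=0$ and $f'(1)=a(1-1)=0$ with the convexity/concavity. In the convex case ($a>0$ or $a<-1$), since $f'$ is strictly increasing on $(0,1]$ and vanishes at $\lambda=1$, we have $f'(\lambda)<0$ on $(0,1)$, so $f$ is strictly decreasing on $(0,1)$ and hence $f(\lambda)>f(1)=0$ for every $\lambda\in(0,1)$, which is precisely \eqref{eq:alambda} with equality iff $\lambda=1$. The concave case ($-1<a<0$) is symmetric: $f'$ is strictly decreasing on $(0,1]$ with $f'(1)=0$, so $f'>0$ on $(0,1)$, $f$ is strictly increasing, and $f(\lambda)<0$ on $(0,1)$, giving \eqref{eq:alambda2} with equality iff $\lambda=1$.

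There is no real obstacle here; the only care needed is the bookkeeping of signs in $f''$ and ensuring the boundary cases $a\in\{-1,0\}$ are separated out since the convexity argument degenerates to $f\equiv 0$. The resulting proof is only a few lines once the decomposition $f''=a(a+1)\lambda^{-a-2}$ is in place.
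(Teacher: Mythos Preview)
Your proof is correct and follows essentially the same approach as the paper: both arguments observe that the expression vanishes at $\lambda=1$, compute the first derivative $f'(\lambda)=a(1-\lambda^{-a-1})$, and use its sign on $(0,1)$ together with $f(1)=0$ to conclude. The only minor difference is that the paper reads off the sign of $f'$ directly by case analysis on the exponent $-a-1$, whereas you obtain it one step removed via the sign of $f''(\lambda)=a(a+1)\lambda^{-a-2}$ and the fact that $f'(1)=0$; both routes are one-line observations and yield the same conclusion, including the handling of the degenerate cases $a\in\{-1,0\}$.
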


\begin{lemma}\label{lem:zetadelN}
Let $0<\lambda <1$,   $0\leq \delta \leq 1$, and $k\in \nni$. Then $\zeta(N,\delta,\lambda,k)$ increases strictly monotonically with $\delta$ when $k\leq N+1$. Also, $\zeta(N,\delta,\lambda,k)$ increases strictly monotonically with $N$ except when $\delta=1$ and $k=0$.
\end{lemma}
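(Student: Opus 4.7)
The plan is to work directly with the closed-form expression
\begin{equation*}
\zeta(N,\delta,\lambda,k)=\frac{\lambda\{\delta[1+(N-k)\nu]-\lambda^k\}}{\nu(k\nu+N\lambda)}
\end{equation*}
from \eqref{eq:zetadel}, verifying each monotonicity claim by inspecting signs of explicit algebraic expressions. Throughout, the denominator $\nu(k\nu+N\lambda)$ is strictly positive, since $0<\lambda<1$ forces $\nu>0$ and at least one of $k$ and $N$ is positive in every case of interest.

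For the monotonicity in $\delta$, I will simply note that the formula is affine in $\delta$ with coefficient $\lambda[1+(N-k)\nu]/[\nu(k\nu+N\lambda)]$. When $k\leq N+1$ the factor $1+(N-k)\nu\geq 1-\nu=\lambda>0$, so this coefficient is strictly positive and $\zeta(N,\delta,\lambda,k)$ increases strictly with $\delta$.

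For the monotonicity in $N$, I will form the discrete difference $\zeta(N,\delta,\lambda,k)-\zeta(N-1,\delta,\lambda,k)$. Setting $A(N):=\delta[1+(N-k)\nu]-\lambda^k$ and $B(N):=\nu(k\nu+N\lambda)$, so that $A(N-1)=A(N)-\delta\nu$ and $B(N-1)=B(N)-\nu\lambda$, a short cross-multiplication yields
\begin{equation*}
\zeta(N,\delta,\lambda,k)-\zeta(N-1,\delta,\lambda,k)=\frac{\lambda\nu\,[\delta(k\nu-\lambda)+\lambda^{k+1}]}{B(N)\,B(N-1)}.
\end{equation*}
Since $B(N)B(N-1)>0$, the sign is governed by $\Delta_k:=\delta(k\nu-\lambda)+\lambda^{k+1}$. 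For $k=0$, $\Delta_0=\lambda(1-\delta)$, which is strictly positive unless $\delta=1$, precisely the exception stated in the lemma. For $k\geq 1$, I will invoke the geometric series identity $\lambda-\lambda^{k+1}=\lambda\nu(1+\lambda+\cdots+\lambda^{k-1})$: since each term in $\lambda+\lambda^2+\cdots+\lambda^k$ is strictly less than $1$, the sum is strictly less than $k$, whence $\lambda-\lambda^{k+1}<k\nu$, i.e., $\lambda^{k+1}+k\nu-\lambda>0$. Splitting into the cases $k\nu-\lambda\geq 0$ (where $\Delta_k\geq\lambda^{k+1}>0$) and $k\nu-\lambda<0$ (where $\delta\leq 1$ gives $\delta(k\nu-\lambda)\geq k\nu-\lambda$, hence $\Delta_k\geq\lambda^{k+1}+k\nu-\lambda>0$) completes the argument.

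The main obstacle I anticipate is the sign analysis of $\Delta_k$ in the nontrivial case $k\geq 1$ with $k\nu<\lambda$, where $\delta$ multiplies a negative quantity and one might worry that taking $\delta$ close to $1$ could spoil positivity. The resolution is the geometric-series estimate above, which yields a positive lower bound on $\lambda^{k+1}-(\lambda-k\nu)$ uniform in $\delta\in[0,1]$; everything else reduces to careful bookkeeping in the cross-multiplication, and no deeper machinery from \sref{sec:CompMFM} or from \lref{lem:alambda} is needed.
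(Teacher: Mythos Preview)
Your proof is correct and follows essentially the same route as the paper: both compute the $N$-difference of $\zeta(N,\delta,\lambda,k)$ and reduce the question to the nonnegativity of $\lambda^{k+1}+\delta(k\nu-\lambda)$, with equality iff $\delta=1$ and $k=0$. The only difference is tactical: the paper uses linearity in $\delta$ to reduce to the endpoints $\delta=0,1$ and then invokes \lref{lem:alambda} (with $a=-(k+1)$) for the inequality $\lambda^{k+1}+k\nu-\lambda\geq 0$, whereas you establish the same inequality directly via the geometric-series bound $\lambda-\lambda^{k+1}=\nu(\lambda+\cdots+\lambda^k)<k\nu$, which is a slightly more self-contained variant.
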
 
Here $\zeta(N,\delta,\lambda,k)$ is defined in \eref{eq:zetadel} in the main text. Note that $\zeta(N,\delta,\lambda,k)=1$ is independent of $N$ and $\lambda$  when $\delta=1$ and $k=0$.

\begin{lemma}\label{lem:zetadel}
Suppose $0<\lambda <1$ and  $0<\delta \leq 1$. Then 
\begin{align}
&\max_{k\in\nni}\zeta(N,\delta, \lambda,k)=\zeta(N,\delta, \lambda,k_*),\label{eq:zetadelMax} \\
&\max \{0,\zeta(N,\delta, \lambda,k_*)\}\!=\!\max\Bigl\{0,\max_{k\in\{0,1,\ldots, N\}}\zeta(N,\delta, \lambda,k)\Bigr\}\nonumber  \\&=\max\{0,\zeta(N,\delta, \lambda,k_+),\zeta(N,\delta, \lambda,k_-)\}, \label{eq:zetadelMax2}
\end{align}
where $k_*$ is the largest integer $k$ that satisfies $\eta_k\geq \delta$, $k_+=\lceil\log_\lambda \delta\rceil$, and $k_-=\lfloor\log_\lambda \delta\rfloor$. In addition,
\begin{align}
\zeta(N,\delta, \lambda,k_*)&\leq 0,\quad  0\leq\delta \leq \lambda^N, \label{eq:zetadelNeg} \\
\zeta(N,\delta, \lambda,k_*)&> 0,\quad  \lambda^N<\delta \leq 1. \label{eq:zetadelpos}
 \end{align}
\end{lemma}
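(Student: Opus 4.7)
The plan is to exploit the geometric interpretation that $\zeta(N,\delta,\lambda,k)$ is the value at $\delta$ of the affine line $L_k$ through the points $(\eta_{k+1}(\lambda),\zeta_{k+1}(\lambda))$ and $(\eta_k(\lambda),\zeta_k(\lambda))$, keeping in mind that $\eta_k(\lambda)$ is strictly decreasing in $k$ throughout $\nni$. First I will note that consecutive lines $L_{k-1}$ and $L_k$ share the common point $(\eta_k(\lambda),\zeta_k(\lambda))$, while their slopes $g_{k-1}(\lambda)$ and $g_k(\lambda)$ satisfy $g_{k-1}(\lambda)>g_k(\lambda)$ by \lref{lem:zetaEtaSlope}. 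A pairwise comparison therefore yields $L_{k-1}(\delta)>L_k(\delta)$ for $\delta>\eta_k(\lambda)$ and $L_{k-1}(\delta)<L_k(\delta)$ for $\delta<\eta_k(\lambda)$; chaining these comparisons across all indices gives the upper-envelope statement that whenever $\delta\in[\eta_{k_*+1}(\lambda),\eta_{k_*}(\lambda)]$ the line $L_{k_*}$ dominates every other $L_k$, which establishes \eref{eq:zetadelMax}.

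Next I will identify the two admissible values of $k_*$. The elementary inequalities $\eta_{k+1}(\lambda)\leq\lambda^k\leq\eta_k(\lambda)$, which I verify by clearing denominators in \eref{eq:etazetaHomo} and using $\lambda\leq 1$, imply that the defining relation $\eta_{k_*+1}(\lambda)<\delta\leq\eta_{k_*}(\lambda)$ pins down $k_*\in\{k_-,k_+\}$ with $k_\pm$ as in \eref{eq:kpm}. Combined with the envelope conclusion this supplies the second equality in \eref{eq:zetadelMax2}. To handle the restriction $k\in\{0,\ldots,N\}$ in the first equality I will split on the sign of $\delta-\lambda^N$. When $\delta>\lambda^N$, the envelope index satisfies $k_*\leq N$, so restricting the maximum to $\{0,\ldots,N\}$ does not change it. When $\delta\leq\lambda^N$, I will use the explicit form \eref{eq:zetadel} to reduce $\zeta(N,\delta,\lambda,k)\leq 0$ to the inequality $\delta\leq\lambda^k/[1+(N-k)\nu]$, then apply \lref{lem:alambda} with $a=N-k\geq 0$ to obtain $\lambda^N\leq\lambda^k/[1+(N-k)\nu]$ for every $k\in\{0,\ldots,N\}$; this will make every term of the restricted maximum nonpositive and collapse both sides of the first equality in \eref{eq:zetadelMax2} to $0$.

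Finally, \eref{eq:zetadelNeg} and \eref{eq:zetadelpos} fall out of the same case analysis. For $\delta\leq\lambda^N=\eta_{N+1}(\lambda)$ one has $k_*\geq N+1$, and $\zeta(N,\delta,\lambda,k_*)$ is a convex combination of $\zeta_{k_*}(\lambda)$ and $\zeta_{k_*+1}(\lambda)$, which are $\leq 0$ and $<0$ respectively once $k_*\geq N+1$; the combination is therefore $\leq 0$, with equality exactly at $\delta=\lambda^N$ by direct substitution. For $\delta>\lambda^N$ one has $k_*\leq N$, and now $\zeta_{k_*}(\lambda)>0$ together with $\zeta_{k_*+1}(\lambda)\geq 0$ makes the convex combination strictly positive. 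The main obstacle will be the envelope argument itself: the orientation of the picture is reversed from the standard concave-hull setup, since the slopes $g_k$ and the abscissae $\eta_k(\lambda)$ are both decreasing in $k$, so the sign bookkeeping in chaining the pairwise inequalities into a global arg-max statement will require some care. Once that is settled, the remaining ingredients reduce to the elementary inequality $\lambda^{-a}\geq 1+a\nu$ for $a\geq 0$ supplied by \lref{lem:alambda}.
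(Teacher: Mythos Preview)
Your approach is essentially the same as the paper's: the geometric envelope argument via \lref{lem:zetaEtaSlope} is equivalent to the paper's direct computation of $\zeta(N,\delta,\lambda,k)-\zeta(N,\delta,\lambda,k-1)$, the convex-combination argument for \eref{eq:zetadelNeg}--\eref{eq:zetadelpos} is identical, and your use of \lref{lem:alambda} for the $\delta\le\lambda^N$ branch matches the paper's alternative proof of \eref{eq:zetadelNeg}.

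One point to tighten: the inequality $\eta_{k+1}(\lambda)\le\lambda^k$ that you ``verify by clearing denominators'' is equivalent to $(N-k)(1-\lambda)\ge 0$ and thus holds only for $k\le N$; consequently your conclusion $k_*\in\{k_-,k_+\}$ is valid only when $\delta>\lambda^N$ (so that $k_+\le N$), and in fact fails for small $\delta$ (e.g.\ $\delta=\lambda^{N+2}$ gives $k_*\ge N+3$). This does not break your proof, since for $\delta\le\lambda^N$ the envelope statement \eref{eq:zetadelMax} together with your sign result \eref{eq:zetadelNeg} already forces every $\zeta(N,\delta,\lambda,k)\le 0$, collapsing both equalities of \eref{eq:zetadelMax2} to $0$; but you should route the second equality through that case split rather than through the unrestricted claim $k_*\in\{k_-,k_+\}$. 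The paper handles this the same way, invoking $k_*\in\{k_-,k_+\}$ only in the regime $\lambda^N<\delta\le 1$.
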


\begin{lemma}\label{lem:NumTestProp}
	Suppose $0<\epsilon,\delta,\lambda< 1$ and $k\in \nni$. Then $\tilde{N}(\epsilon,\delta,\lambda,k)>0$ and it
	decreases strictly monotonically with $\epsilon$ and $\delta$. If  $\delta\leq  \lambda^{k}/(F+\lambda \epsilon)$, then $\tilde{N}(\epsilon,\delta,\lambda,k)>k-1$. 
Given $k\geq 1$,  then  $\tilde{N}(\epsilon,\delta,\lambda,k)\leq \tilde{N}(\epsilon,\delta,\lambda,k-1)$ iff $\delta \leq \lambda^k/(F+\lambda\epsilon)$.  In addition,
\begin{align}
&\min_{k\in \nni}\tilde{N}(\epsilon,\delta,\lambda,k)=\tilde{N}(\epsilon,\delta,\lambda,k^*)\label{eq:NtildeMin}\\
&=\min\bigl\{\tilde{N}_+(\epsilon,\delta,\lambda),\tilde{N}_-(\epsilon,\delta,\lambda)\bigr\}\label{eq:NtildeMin2} \\
&=\begin{cases}
\tilde{N}_-(\epsilon,\delta,\lambda), & \delta \geq \frac{\lambda^{k_+}}{F+\lambda \epsilon},\\
 \tilde{N}_+(\epsilon,\delta,\lambda), & \delta \leq  \frac{\lambda^{k_+}}{F+\lambda \epsilon},
\end{cases}\label{eq:NtildeMin3}
\end{align}
where $k^*$ is the largest integer $k$ that satisfies the inequality $\delta\leq \lambda^{k}/(F+\lambda \epsilon)$ and $\tilde{N}_\pm(\epsilon,\delta,\lambda)=\tilde{N}(\epsilon,\delta,\lambda,k_\pm)$ with $k_+=\lceil \log_\lambda \delta\rceil$ and $k_-=\lfloor \log_\lambda \delta\rfloor$. 		
\end{lemma}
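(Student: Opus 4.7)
\section*{Proof proposal for \lref{lem:NumTestProp}}

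The plan is to reduce everything to a transparent rewriting of $\tilde{N}$. First I would simplify the expression in \eref{eq:NumTestHomoC} by noting that $\nu F+\lambda=1-\nu\epsilon$, obtaining the equivalent form
\begin{equation}
\tilde{N}(\epsilon,\delta,\lambda,k)=\frac{1}{\nu\epsilon}\!\left[\frac{k\nu(1-\nu\epsilon)}{\lambda}+\frac{\lambda^k}{\delta}-1\right]=\frac{k}{\lambda\epsilon}-\frac{k\nu}{\lambda}+\frac{\lambda^k}{\nu\epsilon\delta}-\frac{1}{\nu\epsilon}.
\end{equation}
In this form, only the first and third terms depend on $\epsilon$, and only the third depends on $\delta$, so the strict monotonicity in $\epsilon$ and $\delta$ is essentially immediate once one verifies that $\frac{k\nu}{\lambda}+\frac{\lambda^k}{\delta}>1$; for $k=0$ this is $1/\delta>1$, and for $k\geq 1$ it follows from the algebraic identity $\lambda+\lambda^2+\cdots+\lambda^k=\lambda(1-\lambda^k)/\nu<k$ (valid because $\lambda<1$), which rearranges exactly into $k\nu/\lambda+\lambda^k>1$. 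Positivity of $\tilde{N}$ then follows from the same inequality, combined with $\delta\leq 1$.

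Next, for statement~3, I would substitute the hypothesis $\delta\leq \lambda^k/(F+\lambda\epsilon)=\lambda^k/(1-\nu\epsilon)$ directly into the bracketed expression: it gives $\lambda^k/\delta\geq 1-\nu\epsilon$, so
\begin{equation}
\tilde{N}(\epsilon,\delta,\lambda,k)\geq\frac{1}{\nu\epsilon}\!\left[\frac{k\nu(1-\nu\epsilon)}{\lambda}-\nu\epsilon\right]=\frac{k(1-\nu\epsilon)}{\lambda\epsilon}-1>k-1,
\end{equation}
where the last strict inequality is just $(1-\nu\epsilon)/(\lambda\epsilon)>1$, i.e., $\epsilon<1$.

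For statement~4, the key computation is the telescoping difference
\begin{equation}
\tilde{N}(\epsilon,\delta,\lambda,k)-\tilde{N}(\epsilon,\delta,\lambda,k-1)=\frac{1-\nu\epsilon}{\lambda\epsilon}-\frac{\lambda^{k-1}}{\epsilon\delta}=\frac{\delta(1-\nu\epsilon)-\lambda^k}{\lambda\epsilon\delta},
\end{equation}
which has the sign of $\delta(F+\lambda\epsilon)-\lambda^k$. Hence $\tilde{N}(k)\leq\tilde{N}(k-1)$ iff $\delta\leq\lambda^k/(F+\lambda\epsilon)$, as claimed. Finally, statement~5 is a direct consequence: the threshold $\lambda^k/(F+\lambda\epsilon)$ is strictly decreasing in $k$, so the sequence $k\mapsto\tilde{N}(\epsilon,\delta,\lambda,k)$ is unimodal and its minimizer is exactly the largest $k$ for which $\delta\leq\lambda^k/(F+\lambda\epsilon)$, namely $k^*$. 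To pin down $k^*$ in terms of $k_\pm$ I would use $\lambda^{k_+}\leq\delta\leq\lambda^{k_-}$ and the bound $1-\nu\epsilon>1-\nu=\lambda$: the latter gives $\lambda^{k_++1}\leq\lambda\delta<\delta(1-\nu\epsilon)$, ruling out $k^*\geq k_++1$, while $\lambda^{k_-}\geq\delta>\delta(1-\nu\epsilon)$ forces $k^*\geq k_-$. Thus $k^*\in\{k_-,k_+\}$, and the split in \eref{eq:NtildeMin3} follows by comparing $\delta$ with the threshold at $k=k_+$.

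No step looks genuinely hard; the only place one has to think is verifying $k\nu/\lambda+\lambda^k>1$ for $k\geq 1$, but this is a routine Bernoulli-type manipulation. The rest is mechanical algebra that unfolds cleanly once $\tilde{N}$ is rewritten using $\nu F+\lambda=1-\nu\epsilon$.
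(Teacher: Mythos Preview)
Your approach is essentially the same as the paper's: both compute the $\epsilon$- and $\delta$-derivatives (or differences in $k$) directly and reduce the required signs to elementary geometric-sum inequalities; your upfront rewriting via $\nu F+\lambda=1-\nu\epsilon$ is a clean streamlining of what the paper does by quoting \lref{lem:alambda} and \eref{eq:zetadelNproof}. Two small imprecisions to tighten: positivity needs the variant $k\nu+\lambda^k\geq 1$ (i.e., $1+\lambda+\cdots+\lambda^{k-1}\leq k$), not literally the inequality $k\nu/\lambda+\lambda^k>1$ you established for $\epsilon$-monotonicity, since the bracket in your first display carries the factor $(1-\nu\epsilon)<1$; and for $k=0$ your chain in statement~3 gives only $-1>-1$, so that case should be dispatched by the already-proved positivity (as the paper also does).
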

Here $\tilde{N}(\epsilon,\delta,\lambda,k)$ is defined in \eref{eq:NumTestHomoC}.

\begin{lemma}\label{lem:NumTestmBound}
	Suppose $0<\epsilon,\delta,\lambda <1$. Then 
	\begin{equation}\label{eq:NumTestmBound}
	\tilde{N}_-(\epsilon,\delta,\lambda)\leq \frac{F\nu+\lambda}{\lambda\epsilon}k_-+ \frac{\log_\lambda\delta-k_-}{\lambda\epsilon}=\frac{\log_\lambda\delta}{\lambda\epsilon}-\frac{\nu k_-}{\lambda},
	\end{equation}
	where $F=1-\epsilon$, $\nu=1-\lambda$, and $k_-=\lfloor\log_\lambda\delta\rfloor$. The inequality is saturated iff $\log_\lambda\delta$ is an integer.
\end{lemma}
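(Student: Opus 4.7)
\textbf{Proof proposal for Lemma~\ref{lem:NumTestmBound}.} First, the equality in the statement is just an algebraic simplification using $F\nu + \lambda = (1-\epsilon)(1-\lambda) + \lambda = 1 - \epsilon\nu$, so that $(F\nu+\lambda)k_- + (\log_\lambda\delta - k_-) = \log_\lambda\delta - \epsilon\nu k_-$; dividing by $\lambda\epsilon$ gives the right-hand form. The substantive content is the inequality, which I would prove by rewriting $\tilde{N}_-$ in a convenient parameterization and then invoking Lemma~\ref{lem:alambda}.

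The plan is to introduce the slack variable $a := k_- - \log_\lambda\delta$, which lies in $[-1,0]$ because $\lambda^{k_-} \ge \delta \ge \lambda^{k_-+1}$ by the defining property of $k_- = \lfloor\log_\lambda\delta\rfloor$ and $0<\lambda<1$. Substituting $\lambda^{k_-} = \delta\lambda^{a}$ into the definition of $\tilde{N}(\epsilon,\delta,\lambda,k_-)$ in \eref{eq:NumTestHomoC}, I would simplify
\begin{equation}
\tilde{N}_-(\epsilon,\delta,\lambda)=\frac{k_-(\nu F+\lambda)}{\lambda\epsilon}+\frac{\lambda^{a}-1}{\nu\epsilon}.
\end{equation}
The right-hand side of the inequality to be established equals $\frac{(F\nu+\lambda)k_-}{\lambda\epsilon}+\frac{-a}{\lambda\epsilon}$, so after cancelling the common term the claim reduces to
\begin{equation}
\frac{\lambda^{a}-1}{\nu\epsilon}\le \frac{-a}{\lambda\epsilon},\quad\text{i.e.},\quad \lambda^{a+1}-\lambda+a\nu\le 0\quad\text{for }a\in[-1,0].
\end{equation}

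Setting $\tilde a=-a\in[0,1]$, this inequality becomes $\lambda^{1-\tilde a}\le \lambda+\tilde a\nu-\nu+\nu=\lambda+(1-\tilde a)\nu$, which after rearranging is exactly $\lambda^{-a}\le 1+a\nu$ for $a\in[-1,0]$. This is precisely \eref{eq:alambda2} of Lemma~\ref{lem:alambda} (equivalently, it is a direct consequence of the convexity of $x\mapsto\lambda^x$ on $[0,1]$ together with $\lambda^0=1$ and $\lambda^1=\lambda$). By the equality clause of Lemma~\ref{lem:alambda}, equality holds iff $a\in\{-1,0\}$, which translates to $\log_\lambda\delta$ being an integer (in which case $k_-=\log_\lambda\delta$).

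No step should present a serious obstacle once the reparameterization is made; the only mildly delicate point is the algebraic simplification of $\tilde{N}(\epsilon,\delta,\lambda,k_-)$, since the formula in \eref{eq:NumTestHomoC} mixes several factors of $\lambda$, $\nu$, and $\delta$, and it is important to keep track of the cancellation $\lambda^{k_-+1}/(\lambda\nu\delta\epsilon)=\lambda^{a}/(\nu\epsilon)$. Once this rewriting is in place the bound follows immediately from a single one-variable convexity estimate, which is why I would organize the proof around reducing to Lemma~\ref{lem:alambda} rather than trying to manipulate $\tilde{N}_-$ directly.
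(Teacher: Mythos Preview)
Your approach is essentially the paper's: rewrite $\tilde N_-$ via $\lambda^{k_-}=\delta\lambda^{a}$ (the paper uses $b=-a=\log_\lambda\delta-k_-$), isolate the common term $\frac{(F\nu+\lambda)k_-}{\lambda\epsilon}$, reduce to a single-variable inequality, and invoke \lref{lem:alambda} (equivalently, convexity of $x\mapsto\lambda^x$ on $[0,1]$). One cosmetic slip: the chain ``$\lambda+\tilde a\nu-\nu+\nu=\lambda+(1-\tilde a)\nu$'' and the subsequent ``rearranging'' to $\lambda^{-a}\le1+a\nu$ are not correct as written---the needed inequality $\lambda^{1-\tilde a}\le\lambda+\tilde a\nu$ is \lref{lem:alambda} with parameter $\tilde a-1\in[-1,0]$, not with $-\tilde a$---but your parenthetical convexity argument is valid on its own and completes the proof, including the equality condition.
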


\begin{proof}[Proof of \lref{lem:zetaEtaSlope}]
According to \esref{eq:gkj} and \eqref{eq:gk} as well as 
 the definitions of $\eta_k(\lambda)$ and $ \zeta_k(\lambda)$ in \eref{eq:etazetaHomo},  we have
\begin{align}
	g_k(\lambda)&=\frac{\lambda[1+(N-k)\nu]}{\nu(N\lambda+k\nu)},\\
g_k(\lambda)-g_{k+1}(\lambda)&=\frac{(N+1)\lambda}{[N\lambda+(k+1)\nu](N\lambda+k\nu)}>0,
\end{align}	
where $\nu=1-\lambda$. So 
 $g_{k}(\lambda)$ decreases strictly monotonically with $k$ for $k\in \nni$. 
	
	Simple analysis shows that $g_{kj}(\lambda)$ can be expressed as a weighted average of  $g_{m}(\lambda)$ for $m=k,k+1,\ldots,j-1$, namely, 
	\begin{equation}
	g_{kj}(\lambda)=\sum_{m=k}^{j-1} \frac{\eta_m(\lambda)-\eta_{m+1}(\lambda)}{\eta_k(\lambda)-\eta_{j}(\lambda)}g_m(\lambda).
	\end{equation}
Here the weight for each $g_m(\lambda)$ is strictly positive given that $\eta_m(\lambda)$  decreases strictly monotonically with $m$ for $m\in \nni$. So  $g_{j}(\lambda)< g_{j-1}(\lambda)< g_{kj}(\lambda)<g_{k}(\lambda)$ when $k+1<j$. In addition,  $g_{k(j+1)}(\lambda)$ is a convex sum  of  $g_{kj}(\lambda)$ and  $g_{j}(\lambda)$, that is, 
	\begin{equation}
	g_{k(j+1)}=\frac{(\eta_k-\eta_{j})g_{kj} +(\eta_j-\eta_{j+1})g_{j}}{\eta_k-\eta_{j+1}},
	\end{equation}
	which implies that  $g_{k(j+1)}(\lambda)<g_{kj}(\lambda)$; by the same token we can prove $g_{(k+1)j}(\lambda)<g_{kj}(\lambda)$ when $k+1<j$. Therefore, $g_{kj}(\lambda)$ decreases strictly monotonically with $k$ and $j$. 
\end{proof}

 \begin{proof}[Proof of \lref{lem:11slope}]
 	When $0<\lambda<1$,
 	\lref{lem:11slope} is an immediate consequence of \lref{lem:zetaEtaSlope} given that 
 	\begin{gather}
 \!\!	\eta_0(\lambda)=\zeta_0(\lambda)=1,\;
 	\eta_{N+1}(\lambda)=\lambda^N,\; \zeta_{N+1}(\lambda)=0,\\
 	\eta_1(\lambda)=\frac{1+N\lambda}{N+1},\quad \zeta_1(\lambda)=\frac{N\lambda}{N+1},
 	\end{gather}
 	so that 
 	\begin{align}
 	g_{0k}(\lambda)&=\frac{1-\zeta_{k}(\lambda)}{1-\eta_{k}(\lambda)}=
 	\begin{cases}
 	\frac{1+N(1-\lambda)}{N(1-\lambda)}, & k=1,\\
 	\frac{1}{1-\lambda^N}, & k=N+1.
 	\end{cases}
 	\end{align}

 	When $\lambda=0$, we have $\zeta_0=\eta_0=1$, $\eta_1=1/(N+1)$, $\eta_k=0$ for $k=2,3,\ldots, N+1$, and  $\zeta_k=0$ for  $k=1,2,\ldots, N+1$, in which case \lref{lem:11slope} can be verified explicitly. 
 \end{proof}

\begin{proof}[Proof of \lref{lem:alambda}]
	Note that $\lambda^{-a}-a\nu-1= 0$ when $\lambda=1$, or $a=0$, or $a=-1$. The derivative of $\lambda^{-a}-a\nu-1$  over $\lambda$ reads $a(1-\lambda^{-a-1})$, and it satisfies
\begin{align}
a(1-\lambda^{-a-1})&\leq 0 \quad \mbox{if}\; a\geq 0 \mbox{ or } a\leq -1, \label{eq:alambdaProof}\\
a(1-\lambda^{-a-1})&\geq 0\quad  \mbox{if}\; -1\leq a\leq 0,\label{eq:alambda2Proof}
\end{align}	
which imply the inequalities in \esref{eq:alambda} and \eqref{eq:alambda2}  given that  $\lambda^{-a}-a\nu-1= 0$ when $\lambda=1$.
If $a\neq -1,0$, then the inequality in \eref{eq:alambdaProof} is saturated iff $\lambda=1$, and the same holds for the inequality in \eref{eq:alambda2Proof}. Therefore, both \eref{eq:alambda} and \eref{eq:alambda2} are saturated
iff $\lambda=1$, which completes the proof of \lref{lem:alambda}. 
\end{proof}

\begin{proof}[Proof of \lref{lem:zetadelN}]
	The monotonicity of $\zeta(N,\delta,\lambda,k)$ with $\delta$ follows from the facts that $\zeta(N,\delta,\lambda,k)$ is linear in $\delta$ and that $1+(N-k)\nu>0$ when $k\leq N+1$. 
	
	According to the following equation
	\begin{align}
	&\zeta(N+1,\delta,\lambda,k)-\zeta(N,\delta,\lambda,k)\nonumber\\
	&=\frac{\lambda[\lambda^{k+1}+\delta(k\nu-\lambda)]}{\nu(N\lambda+k\nu)[(N+1)\lambda+k\nu]},
	\end{align}
	to prove the monotonicity of $\zeta(N,\delta,\lambda,k)$ with $N$, it suffices to prove the inequality
	\begin{equation}\label{eq:zetadelNproof}
	\lambda^{k+1}+\delta(k\nu-\lambda)\geq 0, 
	\end{equation}
which  is saturated iff $\delta=1$ and $k=0$. To this end, 	
 it suffices to consider the two special cases $\delta=0$ and $\delta=1$ 	since the left-hand side in \eref{eq:zetadelNproof} is linear in $\delta$. In the first case, the inequality is strict. In the second case, according to \lref{lem:alambda} with $a=-(k+1)$, we have
	\begin{equation}
	\lambda^{k+1}+k\nu-\lambda\geq -(k+1)\nu+1+k\nu-\lambda=0,
	\end{equation}
	and the inequality is saturated iff $k=0$. This observation confirms 
the inequality in \eref{eq:zetadelNproof} and the saturation condition, which in turn confirms \lref{lem:zetadelN}. 
\end{proof}

\begin{proof}[Proof of \lref{lem:zetadel}]
When $k-1\in \nni$, 	by the definition of $\zeta(N,\delta, \lambda,k)$ in \eref{eq:zetadel}, we can derive
	\begin{align}
	&\zeta(N,\delta, \lambda,k)-\zeta(N,\delta, \lambda,k-1)\nonumber\\
	&
	=\frac{\lambda^k[k+(N+1-k)\lambda]-(N+1)\lambda\delta}{(k\nu+N\lambda) [(k-1)\nu+N\lambda]}. 
	\end{align}
	So  $\zeta(N,\delta, \lambda,k)\geq \zeta(N,\delta, \lambda,k-1)$ iff $\delta\leq  \eta_k$ and the inequality is saturated only when $\delta= \eta_k$. Therefore, the maximum of $\zeta(N,\delta, \lambda,k)$ over $k\in\nni$ is attained when $k$ is the largest integer that satisfies $\eta_k\geq \delta$, that is, $k=k_*$, which confirms \eref{eq:zetadelMax}. 
	
	Before proving \eref{eq:zetadelMax2}, we first prove \esref{eq:zetadelNeg} and \eqref{eq:zetadelpos}.  According to \eref{eq:zetadel} in the main text and the definition of $k_*$,  $\zeta(N,\delta, \lambda,k_*)$ is a convex sum of $\zeta_{k_*}(\lambda)$ and $\zeta_{k_*+1}(\lambda)$ in which the weight of $\zeta_{k_*}(\lambda)$ is nonzero. If $0< \delta\leq \lambda^N$, then we have $k_*\geq N+1$, which implies that $\zeta_{k_*}(\lambda)\leq 0$ and $\zeta_{k_*+1}(\lambda)<0$. Therefore, $\zeta(N,\delta, \lambda,k_*)\leq 0$, which confirms \eref{eq:zetadelNeg}. Conversely,
	if $\lambda^N<\delta\leq 1$, then $k_*\leq N$, which implies that $\zeta_{k_*}(\lambda)> 0$ and $\zeta_{k_*+1}(\lambda)\geq 0$. So  $\zeta(N,\delta, \lambda,k_*)> 0$, which confirms \eref{eq:zetadelpos}. 
	
	Alternatively, to prove \eref{eq:zetadelNeg}, we can prove that 
	$\zeta(N,\delta, \lambda,k)\leq 0$ for $k\in \nni$. Given that  $\zeta(N,\delta, \lambda,k)$ is a linear function of $\delta$, it suffices to prove the result when $\delta =0$ and $\delta=\lambda^N$. According to \eref{eq:zetadel}, we have
	\begin{align}
\zeta(N,\delta=0, \lambda,k)&=-\frac{\lambda^{k+1}}{\nu(k\nu+N\lambda)}<0,\\
\zeta(N,\delta=\lambda^N,\lambda,k)
&=	
\frac{\lambda\{\lambda^N[1+(N-k)\nu]-\lambda^k\}}{\nu(k\nu+N\lambda)}\leq 0, \label{eq:zetadelNegProof1}
\end{align}	
which imply  \eref{eq:zetadelNeg}. Here $\nu=1-\lambda$ and the inequality in \eref{eq:zetadelNegProof1} follows from \lref{lem:alambda} with $a=N-k$.

Finally,  we can prove \eref{eq:zetadelMax2}. If $0<\delta \leq \lambda^N$, then \eref{eq:zetadelMax2} follows from \eref{eq:zetadelMax} and the fact that $\zeta(N,\delta, \lambda,k_*)\leq 0$.  If instead $\lambda^N<\delta \leq 1$, then we have $0\leq k_+\leq N$ and $0\leq k_-\leq N-1$; in addition, $\eta_{k_-}(\lambda)\geq \delta $ and $\eta_{1+k_+}(\lambda)< \delta $. Therefore,  $k_*\in \{0,1,\ldots,N\}$ and $k_*$ is equal to either $k_+$ or $k_-$, which implies \eref{eq:zetadelMax2} given \eref{eq:zetadelMax}. 
\end{proof}

\begin{proof}[Proof of \lref{lem:NumTestProp}]
	To prove \lref{lem:NumTestProp}, we first investigate the monotonicity of  
	$\tilde{N}(\epsilon,\delta,\lambda,k)$ defined in \eref{eq:NumTestHomoC} for $0<\epsilon,\delta \leq 1$, $0<\lambda<1$, and $k\in \nni$. 
	The partial derivative of $\tilde{N}(\epsilon,\delta,\lambda,k)$ over $\epsilon$ reads
	\begin{align}\label{eq:NumTestepDerivative}
	\frac{\partial \tilde{N}(\epsilon,\delta,\lambda,k)}{\partial \epsilon}=-\frac{\lambda^{k+1}+\delta(k\nu-\lambda)}{\lambda\nu\delta\epsilon^2}\leq 0, 
	\end{align}
	where the inequality is saturated iff  $k=0$ and $\delta=1$; cf.~\eref{eq:zetadelNproof}.  Therefore, $\tilde{N}(\epsilon,\delta,\lambda,k)$ is strictly decreasing in $\epsilon$ for $0<\epsilon\leq 1$ except when $k=0$ and $\delta=1$, in which case $\tilde{N}(\epsilon,\delta,\lambda,k)=0$.

	Next, the partial derivative of $\tilde{N}(\epsilon,\delta,\lambda,k)$ over $\delta$ reads
	\begin{align}\label{eq:NumTestdelDerivative}
	\frac{\partial \tilde{N}(\epsilon,\delta,\lambda,k)}{\partial \delta}=-\frac{\lambda^k}{\nu\delta^2\epsilon}<0. 
	\end{align}
	So  $\tilde{N}(\epsilon,\delta,\lambda,k)$ is strictly decreasing in $\delta$ for $0<\delta\leq 1$. 
	
	According to the above analysis, 
	\begin{align}\label{eq:NumTestCUB}
	\tilde{N}(\epsilon,\delta,\lambda,k)&\geq \tilde{N}(\epsilon=1,\delta=1,\lambda,k)\nonumber\\
	&=\frac{\lambda^k+k\nu-1}{\nu}\geq0. 
	\end{align}
	Here the first inequality is saturated iff $\epsilon=\delta=1$, or $\delta=1$ and $k=0$; the second inequality is saturated iff $k=0,1$ (cf.~\lref{lem:alambda}). Therefore, $\tilde{N}(\epsilon,\delta,\lambda,k)>0$, except when $\delta=1$ and $k=0$, or $\epsilon=\delta=k=1$. 
Given the assumption $0<\epsilon,\delta<1$, then 
	$\tilde{N}(\epsilon,\delta,\lambda,k)>0$ and $\tilde{N}(\epsilon,\delta,\lambda,k)$ decreases strictly monotonically with $\epsilon$ and~$\delta$.

Next, suppose $0<\epsilon,\delta<1$. If   $\delta\leq  \lambda^{k}/(F+\lambda \epsilon)$ and  $k=0$, then $\tilde{N}(\epsilon,\delta,\lambda,k)>k>k-1$ according to the first statement in \lref{lem:NumTestProp}.  
If instead  $\delta= \lambda^{k}/(F+\lambda \epsilon)$ and $k\geq 1$, then 
\begin{align}
\tilde{N}(\epsilon,\delta,\lambda,k)
=k-1+\frac{kF}{\lambda \epsilon}>k-1.
\end{align}
So $\tilde{N}(\epsilon,\delta,\lambda,k)>k-1$ whenever $\delta\leq  \lambda^{k}/(F+\lambda \epsilon)$	
given that  $\tilde{N}(\epsilon,\delta,\lambda,k)$ is monotonically deceasing in $\delta$.

Next, if $k\geq1$, then
	\begin{align}
	&\tilde{N}(\epsilon,\delta,\lambda,k)- \tilde{N}(\epsilon,\delta,\lambda,k-1)\nonumber\\
	&=\frac{\nu\delta(F\nu+\lambda)+\lambda^{k+1}-\lambda^k}{\lambda\nu\delta\epsilon}=\frac{\delta(F+\lambda\epsilon)-\lambda^k}{\lambda\delta\epsilon},\label{eq::NumTestHomoDif}
	\end{align}
	so $\tilde{N}(\epsilon,\delta,\lambda,k)\leq \tilde{N}(\epsilon,\delta,\lambda,k-1)$ iff $\delta \leq \lambda^k/(F+\lambda\epsilon)$. Consequently,  the minimum of $\tilde{N}(\epsilon,\delta,\lambda,k)$ over $k\in \nni$ is attained when $k$ is the largest integer  that satisfies the inequality $\delta\leq \lambda^{k}/(F+\lambda \epsilon)$, that is, $k=k^*$,
 which  confirms \eref{eq:NtildeMin}. 
	
In addition, we have
\begin{equation}
\frac{\lambda^{k_++1}}{F+\lambda \epsilon}<\lambda^{k_+}\leq \delta\leq \lambda^{k_-}<  \frac{\lambda^{k_-}}{F+\lambda \epsilon},
\end{equation}	
given that $\lambda<F+\lambda\epsilon<1$. So   $k^*$ in \eref{eq:NtildeMin} is equal to either $k_+$ or $k_-$, which implies \eref{eq:NtildeMin2}. Finally, \eref{eq:NtildeMin3} is an easy consequence of \eref{eq::NumTestHomoDif}. 
\end{proof}

\begin{proof}[Proof of \lref{lem:NumTestmBound}]
	The equality in \eref{eq:NumTestmBound} can be verified by straightforward calculation given the equality $F\nu+\lambda=1-\nu\epsilon$.
According to the definitions in \esref{eq:NumTestHomoC} and \eqref{eq:Ntildepm}, we have
	\begin{align}
	\tilde{N}_-(\epsilon,\delta,\lambda)&=\frac{k_-\nu^2 \delta F +\lambda^{k_-+1}+\lambda\delta(k_-\nu-1)}{\lambda\nu\delta \epsilon}\nonumber\\
	&=\frac{F\nu+\lambda}{\lambda\epsilon}k_-+ \frac{\lambda^{k_-+1}-\lambda\delta}{\lambda\nu\delta\epsilon}\nonumber\\
	&=\frac{F\nu+\lambda}{\lambda\epsilon}k_-+ \frac{\lambda^{k_--\log_\lambda\delta+1}-\lambda}{\lambda\nu\epsilon}.
	\end{align}
So the inequality in \eref{eq:NumTestmBound} is equivalent to  the following inequality
	\begin{equation}\label{eq:alambda3}
	\lambda^{1-b}-\lambda-\nu b\leq 0, 
	\end{equation}
	where $ b=\log_\lambda\delta-k_-=\log_\lambda\delta-\lfloor\log_\lambda\delta\rfloor$, which satisfies $0\leq b< 1$. \Eref{eq:alambda3} holds because the function $\lambda^{1-b}-\lambda- \nu b$ is strictly convex in $b$ (given the assumption $0<\lambda<1$) and it is equal to 0 when $b=0$ and $b=1$ (the function is well defined when $b=1$ although this value cannot be attained here). In addition, the inequality in \eref{eq:alambda3} is saturated iff $b=0$, which means  $\log_\lambda\delta$ is an integer. Alternatively, these conclusions  follow from \lref{lem:alambda} with $a=b-1$. 
Therefore, the inequality in \eref{eq:NumTestmBound} is saturated iff $\log_\lambda\delta$ is an integer. 
\end{proof}

\subsection{\label{sec:HomoProof}Proofs of \thsref{thm:FidelityHomo}-\ref{thm:NumTestHomoBounds} and \eref{eq:NumTestHomoLS2}}
\begin{proof}[Proof of \thref{thm:FidelityHomo}]
	According to  \lref{lem:equiTwoDef},  we have $F(N,\delta,\lambda)=\zeta(N,\delta,\lambda)/\delta$. If $\delta\leq \delta_\rmc=\lambda^N$, then we have $\zeta(N,\delta,\lambda)=0$ by \eref{eq:zetazero}. If  $\delta> \lambda^N$, then 
	\begin{equation}\label{eq:FidelityLP2}
	\zeta(N,\delta,\lambda)=
	\min_{0\leq k<j\leq N+1}
(c_j \zeta_j+c_k \zeta_k), 
	\end{equation}
where $\zeta_j, \zeta_k$ are shorthands for $\zeta_j(\lambda), \zeta_k(\lambda)$, and   the parameters $k,j$ are  restricted by the requirements $\eta_k\geq \delta$ and $\eta_j<  \delta$. The coefficients
	 $c_j,c_k$ are determined by the conditions  
	\begin{equation}\label{eq:cjckCondition}
	c_j+c_k=1, \quad c_j \eta_j+c_k \eta_k=\delta,
	\end{equation}
which yield
\begin{equation}
	c_{j}=\frac{\eta_{k}-\delta}{\eta_k-\eta_{j}}, \quad c_k=\frac{\delta-\eta_{j}}{\eta_k-\eta_{j}}.
	\end{equation}
Therefore,
\begin{align}
&c_j \zeta_j+c_k \zeta_k=\frac{\eta_{k}-\delta}{\eta_k-\eta_{j}}\zeta_j+\frac{\delta-\eta_{j}}{\eta_k-\eta_{j}}\zeta_k\nonumber\\
&=\zeta_j+g_{kj}(\delta-\eta_j)= \zeta_k+g_{kj}(\delta-\eta_k),
\end{align}
where $g_{kj}=g_{kj}(\lambda)$ is defined in \eref{eq:gkj}.

If $j>k+1$, then $\eta_{j-1}< \delta$ or $\eta_{k+1}\geq  \delta$, 
so the value of $c_j \zeta_j+c_k \zeta_k$ does not increase if we replace
 $j$ with $j-1$ or $k$ with $k+1$ according to \lref{lem:zetaEtaSlope}. 
Therefore, the minimum in \eref{eq:FidelityLP2} can be attained when $j=k+1$ and $\eta_{k+1}<\delta \leq \eta_k$, in which case $k=k_*$ is the largest integer that satisfies the condition $\eta_k\geq \delta$. In addition we have $c_k=c_k(\delta,\lambda)$ and  $c_j=1-c_k(\delta,\lambda)$, so that
\begin{equation}
\zeta(N,\delta,\lambda)=c_j \zeta_j+c_k \zeta_k=\zeta(N,\delta,\lambda,k_*),
\end{equation}
which confirms \eref{eq:FidelityHomo}. 
\end{proof}

\begin{proof}[Proof of \thref{thm:NumTestHomo}]
By definition $N(\epsilon, \delta,\lambda)$ is the minimum  value  of the positive integer $N$ under the condition $F(N,\delta,\lambda)\geq F$ with $F=1-\epsilon$, that is, 
\begin{equation}\label{eq:zetaCondition}
\zeta(N,\delta,\lambda)\geq F\delta, 
\end{equation}
where $F\delta>0$.
According to \crref{cor:FidHomo} in the main text, \eref{eq:zetaCondition} is equivalent to 
\begin{align}\label{eq:zetaCondition1}
\max_{k\in \nni} \zeta(N,\delta, \lambda,k)\geq F\delta.
\end{align}
Note that the maximum in the left-hand side can be attained at a finite value of $k$. 

From  the definition of $\zeta(N,\delta, \lambda,k)$ in \eref{eq:zetadel} we can deduce that the inequality  $\zeta(N,\delta, \lambda,k)\geq F\delta$ is satisfied iff
\begin{align}
\! N\geq \tilde{N}(\epsilon,\delta,\lambda,k)=\frac{k\nu^2 \delta F +\lambda^{k+1}+\lambda\delta(k\nu-1)}{\lambda\nu\delta \epsilon}.
\end{align} 
So \eref{eq:zetaCondition} is satisfied iff 
\begin{align}
N&\geq\min_{k\in \nni}\tilde{N}(\epsilon,\delta,\lambda,k), 
\end{align}
which implies  \thref{thm:NumTestHomo} given  \lref{lem:NumTestProp}. 
\end{proof}

\begin{proof}[Proof of \eref{eq:NumTestHomoLS2}]
The equality in \eref{eq:NumTestHomoLS2} follows from \thref{thm:NumTestHomo} and \crref{cor:NumTestHomoUB}, note that 
\begin{equation}
\tilde{N}(\epsilon,\delta,\lambda,1)=\frac{\nu^2 \delta F +\lambda^{2}-\lambda^2\delta}{\lambda\nu\delta \epsilon}. 
\end{equation}
To prove the lower bound in \eref{eq:NumTestHomoLS2}, we first compute the derivative of  $\tilde{N}(\epsilon,\delta,\lambda,1)$ over $\lambda$, with the result
\begin{equation}
\frac{\partial\tilde{N}(\epsilon,\delta,\lambda,1)}{\partial \lambda}=\frac{(1-\delta)\lambda^2-\delta F\nu^2}{\lambda^2\nu^2\epsilon \delta}. 
\end{equation}
The minimum of $\tilde{N}(\epsilon,\delta,\lambda,1)$ over the interval $0<\lambda<1$ is attained when  $\lambda/(1-\lambda)=\sqrt{\delta F/(1-\delta)}$, that is,
\begin{equation}
\lambda=\lambda_*:=\frac{\sqrt{\delta F}}{\sqrt{1-\delta}+\sqrt{\delta F}}. 
\end{equation}
Therefore, 
\begin{align}
N(\epsilon,\delta,\lambda)\geq \tilde{N}(\epsilon,\delta,\lambda,1)\geq \tilde{N}(\epsilon,\delta,\lambda_*,1)
=\frac{2\sqrt{(1-\delta)F}}{\epsilon\sqrt{\delta}},
\end{align}
which confirms the lower bound in \eref{eq:NumTestHomoLS2}.
\end{proof}

\begin{proof}[Proof of \thref{thm:NumTestHomoBounds}]
Let $N=k_++\bigl\lceil\frac{k_+F}{\lambda\epsilon}\bigr\rceil$. According to \crref{cor:FidHomoBounds}, we have
\begin{align}
F(N,\delta,\lambda)&\geq 	\frac{(N-k_+)\lambda}{k_++(N-k_+)\lambda}=\frac{\bigl\lceil\frac{k_+F}{\lambda\epsilon}\bigr\rceil\lambda}{k_++\bigl\lceil\frac{k_+F}{\lambda\epsilon}\bigr\rceil\lambda}\nonumber \\
&\geq
 \frac{\frac{k_+F}{\epsilon}}{k_++\frac{k_+F}{\epsilon}}=F=1-\epsilon,
\end{align}
which implies that $N(\epsilon,\delta,\lambda)\leq N$ and confirms the upper bound in \eref{eq:NumberBoundAdvHomo}.

Next,  let $N=k_-+\bigl\lceil\frac{k_-F}{\lambda\epsilon}\bigr\rceil$. If $k_-=0$, then we have $N=0<N(\epsilon,\delta,\lambda)$. If $k_-\geq 1$, then $N-1\geq k_-\geq 1 $. By virtue of  \crref{cor:FidHomoBounds} we can deduce that
\begin{align}
&F(N-1,\delta,\lambda)\leq 	\frac{(N-1-k_-)\lambda}{k_-+(N-1-k_-)\lambda}\nonumber \\
&=\frac{\bigl(\bigl\lceil\frac{k_-F}{\lambda\epsilon}\bigr\rceil-1\bigr)\lambda}{k_-+\bigl(\bigl\lceil\frac{k_-F}{\lambda\epsilon}\bigr\rceil-1\bigr)\lambda}<
\frac{\frac{k_-F}{\epsilon}}{k_-+\frac{k_-F}{\epsilon}}=1-\epsilon,
\end{align}
which implies that $N(\epsilon,\delta,\lambda)\geq N$ and confirms the lower bound in \eref{eq:NumberBoundAdvHomo}.

 If $\log_\lambda\delta$ is an integer, then $k_+=k_-$, so the lower bound and  upper bound in \eref{eq:NumberBoundAdvHomo} coincide, which means both of them are saturated. Alternatively, this fact can be verified by virtue of \thref{thm:NumTestHomo}.

Finally, let us prove \eref{eq:NumberBoundAdvHomo2}.
 \Thref{thm:NumTestHomo} in the main text and \lref{lem:NumTestmBound} imply that
\begin{align}
&N(\epsilon, \delta,\lambda)=\lceil\min\{\tilde{N}_+(\epsilon,\delta,\lambda),\tilde{N}_-(\epsilon,\delta,\lambda)\}\rceil\nonumber\\
&\leq \lceil\tilde{N}_-(\epsilon,\delta,\lambda)\rceil\leq\biggl\lceil \frac{\log_\lambda\delta}{\lambda\epsilon}-\frac{\nu k_-}{\lambda}\biggr\rceil,
\end{align}
which confirms \eref{eq:NumberBoundAdvHomo2}. If $\log_\lambda\delta$ is an integer, then both inequalities are saturated, so the bound in \eref{eq:NumberBoundAdvHomo2} is saturated. 
\end{proof}

\section{\label{sec:SingleCopyProof}Proof of \thref{thm:SingleCopy}}
\begin{proof}
	If the strategy  $\Omega$ is homogeneous, then we have
	$\zeta(N,\delta,\Omega)= \zeta(N,\delta,\beta)$, and 
	\thref{thm:SingleCopy} follows from \pref{pro:SingleCopyHomo}. In general 
	\thref{thm:SingleCopy} can be proved based on \eref{eq:zetaetaLP} and the observation that $\eta_{\bfk}(\bm{\lambda})-\zeta_{\bfk}(\bm{\lambda})=1/2$ for all $\bfk\in \scrS_1$ with $k_1=1$ given the assumption $N=1$. Here $\scrS_1$ is defined in the paragraph before \eref{eq:prhofrho} in the main text.  Geometrically, this fact means that all points $(\eta_{\bfk}(\bm{\lambda}),\zeta_{\bfk}(\bm{\lambda}))$ for  $\bfk\in \scrS_1$ with $k_1=1$ lie on a line segment.

	To be specific, recall that  $\zeta(N,\delta,\Omega)\leq \zeta(N,\delta,\beta)$. When $\beta\geq 1/2$,
	\eref{eq:SingleCopyAdvGen1} holds because the opposite inequality $\zeta(N,\delta,\Omega)\geq \zeta(N,\delta,\beta)$ also holds. In view of \lref{lem:VScompare}, to verify this claim, it suffices to prove that
	\begin{equation}\label{eq:SingleCopyProof1}
	\zeta_\bfk(\bm{\lambda})\geq \zeta(N,\delta=\eta_\bfk(\bm{\lambda}),\beta)\quad \forall \bfk\in \scrS_1. 
	\end{equation}
	The assumption $\bfk\in \scrS_1$ means  $k_j\geq 0$ and $\sum_j k_j=2$. When $k_1=2$, we have $\zeta_\bfk(\bm{\lambda})=\eta_\bfk(\bm{\lambda})=1$, so  \eref{eq:SingleCopyProof1} holds. When $k_1=0$, we have $\zeta_\bfk(\bm{\lambda})=0$, while $\eta_\bfk(\bm{\lambda})\leq \beta$ according to \lref{lem:etaf0}, so that $\zeta(N,\delta=\eta_\bfk(\bm{\lambda}),\beta)=0$ (cf.~\thref{thm:FidelityHomo}) and \eref{eq:SingleCopyProof1} also holds. When $k_1=1$, according to \eref{eq:etazeta}, we have
	\begin{equation}\label{eq:etazetaN1}
	\eta_\bfk(\bm{\lambda})=\frac{1+\lambda_j}{2},\quad \zeta_\bfk(\bm{\lambda})=\frac{\lambda_j}{2}
	\end{equation}
	for some $2\leq j\leq D$.  If $(1+\lambda_j)/2\leq \beta$, then we have $\zeta(N,\delta=\eta_\bfk(\bm{\lambda}),\beta)=0$ according to \eref{eq:SingleCopyAdvHomo}, so \eref{eq:SingleCopyProof1}  holds. If  $(1+\lambda_j)/2\geq \beta$
	(note that  $\lambda_j\leq \beta$), then 
	\begin{align}
	&\zeta_\bfk(\bm{\lambda})- \zeta(N,\delta=\eta_\bfk(\bm{\lambda}),\beta)= \frac{\lambda_j}{2} -\frac{\beta(1+\lambda_j-2\beta)}{2(1-\beta)}\nonumber\\
	&=\frac{(2\beta-1)(\beta-\lambda_j)}{2(1-\beta)}\geq 0. 
	\end{align}
Therefore,  \eref{eq:SingleCopyProof1} holds for all $\bfk\in \scrS_1$, which implies that  $\zeta(N,\delta,\Omega)\geq \zeta(N,\delta,\beta)$. In conjunction with  the opposite inequality, we can deduce the desired equality
	$\zeta(N,\delta,\Omega)= \zeta(N,\delta,\beta)$, which confirms \eref{eq:SingleCopyAdvGen1}. 
	
	Next, consider the  case  $\beta< 1/2$.  If $\tau=\beta$, then  \eref{eq:SingleCopyAdvGen2} follows from \pref{pro:SingleCopyHomo}. If $\tau<\beta$,  let $\tilde{\Omega}$ be a verification operator with three distinct eigenvalues, $1, \beta, \tau$ (the eigenvalue 1 is nondegenerate); then we have $\zeta(N,\delta,\Omega)\leq \zeta(N,\delta,\tilde{\Omega})$. In addition, it is straightforward to verify \eref{eq:SingleCopyAdvGen2} if $\Omega$ is replaced by $\tilde{\Omega}$. To prove \eref{eq:SingleCopyAdvGen2}, it suffices to prove that $\zeta(N,\delta,\Omega)\geq \zeta(N,\delta,\tilde{\Omega})$. 
	Thanks to  \lref{lem:VScompare}, this condition can be simplified to
	\begin{equation}\label{eq:SingleCopyProof2}
	\zeta_\bfk(\bm{\lambda})\geq \zeta(N,\delta=\eta_\bfk(\bm{\lambda}),\tilde{\Omega})\quad \forall \bfk\in \scrS_1. 
	\end{equation}
	When $k_1=2$, we have $\zeta_\bfk(\bm{\lambda})=\eta_\bfk(\bm{\lambda})=1$, so  \eref{eq:SingleCopyProof2} holds. When $k_1=0$, we have $\zeta_\bfk(\bm{\lambda})=0$ and  $\eta_\bfk(\bm{\lambda})\leq \beta$ according to \eref{eq:etazeta}, so 
	\begin{equation}
	\zeta(N,\delta=\eta_\bfk(\bm{\lambda}),\tilde{\Omega})\leq\zeta(N,\delta=\eta_\bfk(\bm{\lambda}),\beta)=0,
	\end{equation}
	 and \eref{eq:SingleCopyProof2} also holds. When $k_1=1$, \eref{eq:etazetaN1} and the inequality $\tau\leq \lambda_j\leq \beta$ imply that 
\begin{equation}
\zeta(N,\delta=\eta_\bfk(\bm{\lambda}),\tilde{\Omega})=\frac{\lambda_j}{2}=\zeta_\bfk(\bm{\lambda});
\end{equation}
recall that \eref{eq:SingleCopyAdvGen2} holds if $\Omega$ is replaced by $\tilde{\Omega}$. This observation confirms \eref{eq:SingleCopyProof2} and implies the inequality $\zeta(N,\delta,\Omega)\geq \zeta(N,\delta,\tilde{\Omega})$. In conjunction with the opposite inequality, we conclude that 
	$\zeta(N,\delta,\Omega)= \zeta(N,\delta,\tilde{\Omega})$, which implies \eref{eq:SingleCopyAdvGen2}. 
\end{proof}

\section{\label{sec:NIIDproof}Proofs of \lref{lem:MinFidelityUB} and \thref{thm:NIID}}
\subsection{Main body of the proofs}
\begin{proof}[Proof of \lref{lem:MinFidelityUB}]
By the definition of $F(N,\delta,\Omega)$ in \eref{eq:MinFdelta},	to prove the inequality in \eref{eq:MinFub} in the lemma, it suffices to find a permutation-invariant quantum  state $\rho$ on $\caH^{\otimes (N+1)}$ such  that	$p_\rho=\delta$ and
	\begin{equation}\label{eq:MinFidelityUBproof}
	f_\rho=p_\rho-\frac{1}{N+1}
	\end{equation}	
for each $\delta$ in the interval $1/(N+1)\leq \delta\leq \delta^*$.	
	Since $p_\rho$ and $f_\rho$ are linear in $\rho$, it suffices to find such a state in the two  cases $\delta=1/(N+1)$ and  $\delta=\delta^*$, respectively. 	When $\delta=1/(N+1)$, we can choose the state $\rho=\rho_\bfk$ with $\bfk=(N,0,\ldots, 0,1)$, in which case $p_\rho=1/(N+1)$ and $f_\rho=0$ by \eref{eq:etazeta}, so \eref{eq:MinFidelityUBproof} holds as desired; note that $\Omega$ is singular by assumption, which means  $\tau=\lambda_D=0$.

In the case $\delta=\delta^*$, we can choose the state $\rho=\rho_{\bfk_1}$ with $\bfk_1:=(N,1,0,\ldots,0)$. Then \eref{eq:etazeta} [cf.~\eref{eq:etazetaHomo}] yields
\begin{equation}\label{eq:etazetak1}
\begin{aligned}
p_\rho=\eta_{\bfk_1}(\bm{\lambda})&=\frac{1+N\beta}{N+1}=\frac{1+N(1-\nu)}{N+1}=\delta^*,\\  f_\rho=\zeta_{\bfk_1}(\bm{\lambda})&=\frac{N\beta}{N+1}=\frac{N(1-\nu)}{N+1}. 
\end{aligned}
\end{equation}
Therefore,
\begin{equation}
p_\rho-f_\rho=\eta_{\bfk_1}(\bm{\lambda})-\zeta_{\bfk_1}(\bm{\lambda})=\frac{1}{N+1},
\end{equation}
and \eref{eq:MinFidelityUBproof} holds again. 
This observation completes the proof of \lref{lem:MinFidelityUB}.
\end{proof}

\begin{proof}[Proof of \thref{thm:NIID}]
To prove the inequality in  \eref{eq:MinFidelityBound} in the theorem,	
let  $\rho=\sum_{\bfk\in \scrS_N} c_{\bfk} \rho_{\bfk}$ as in \eref{eq:prhofrho}, where $c_{\bfk}$ form a probability distribution on $\scrS_N$.  If $p_\rho=1$, then  $c_{\bfk}=\delta_{\bfk,\bfk_0}$ with  $\bfk_0:=(N+1,0,\ldots, 0)$, in which case we have   $F_\rho=f_\rho=1$ and $F(N,\delta=1,\Omega)=1$, so  \eref{eq:MinFidelityBound} holds.
If $0<p_\rho<1$, then  $c_{\bfk_0}<1$ and
\begin{align}
&\frac{1-p_\rho}{1-f_\rho}=\frac{1-\sum_{\bfk\in \scrS_N} c_\bfk\eta_\bfk(\bm{\lambda})}{1-\sum_{\bfk\in \scrS_N} c_\bfk\zeta_\bfk(\bm{\lambda})}\nonumber\\
&=\frac{1-c_{\bfk_0}-\sum_{\bfk\in \scrS_N^*} c_\bfk\eta_\bfk(\bm{\lambda})}{1-c_{\bfk_0}-\sum_{\bfk\in\scrS_N^*} c_\bfk\zeta_\bfk(\bm{\lambda})}=\frac{1-\sum_{\bfk\in \scrS_N^*} c_\bfk'\eta_\bfk(\bm{\lambda})}{1-\sum_{\bfk\in\scrS_N^*} c_\bfk'\zeta_\bfk(\bm{\lambda})}\nonumber\\
&=\frac{\sum_{\bfk\in \scrS_N^*} c_\bfk'[1-\eta_\bfk(\bm{\lambda})]}{\sum_{\bfk\in\scrS_N^*} c_\bfk'[1-\zeta_\bfk(\bm{\lambda})]},
\end{align}
where  $\scrS_N^*:=\scrS_N\setminus \{\bfk_0\}$ is  the subset of $\scrS_N$  without the vector  $\bfk_0:=(N+1,0,\ldots, 0)$,
and $c_\bfk':=c_\bfk/(1-c_{\bfk_0})$ form a probability distribution on $\scrS_N^*$. By virtue of \lref{lem:slope} below, we can deduce that
\begin{equation}
\frac{1-p_\rho}{1-f_\rho}\geq \min_{\bfk\in \scrS_N^*}\frac{1-\eta_\bfk(\bm{\lambda})}{1-\zeta_\bfk(\bm{\lambda})}=
 \frac{N\nu}{N\nu+1},
\end{equation}
so that
\begin{equation}\label{eq:frhoBound}
f_\rho\geq p_\rho-\frac{1-p_\rho}{N \nu}
\end{equation}
and 
\begin{equation}\label{eq:FrhoBound} F_\rho=\frac{f_\rho}{p_\rho}\geq 1-\frac{1-p_\rho}{N\nu p_\rho}.
\end{equation}
Here \esref{eq:frhoBound} and \eqref{eq:FrhoBound} also hold when $p_\rho=1$.
By the definition of $F(N,\delta,\Omega)$ in   \eref{eq:MinFdelta}, we conclude that
\begin{equation}\label{seq:MinFidelityBound}
F(N,\delta,\Omega)\geq 1-\frac{1-\delta}{N\nu \delta}.
\end{equation}
Incidentally,  this bound is negative and thus trivial  when $\delta< 1/(N\nu+1)$; in particular, it is negative  when $\delta\leq \beta^N$ since $\beta^N< 1/(N\nu+1)$ according to \eref{eq:lambdaN}.

Now we show that the inequality in
\eref{eq:MinFidelityBound} [same as \eref{seq:MinFidelityBound}] is saturated when $ \delta \ge \delta^*$.
Since $\delta^*=\eta_{\bfk_1}(\bm{\lambda})$ with $\bfk_1=(N,1,0,\ldots,0)$,  it suffices to show that the inequality in \eref{eq:frhoBound} can be saturated when $p_\rho \ge \eta_{\bfk_1}(\bm{\lambda})$.
When  $c_{\bfk}=\delta_{\bfk,\bfk_0}$, that is, $\rho=\rho_{\bfk_0}=(|\Psi\>\<\Psi|)^{\otimes (N+1)}$, we have $p_\rho=1$ and $f_\rho=1$, so \eref{eq:frhoBound}  is saturated.
When  $c_{\bfk}=\delta_{\bfk,\bfk_1}$, that is,  $\rho=\rho_{\bfk_1}$, we have $ p_\rho = \eta_{\bfk_1}(\bm{\lambda})=\delta^* $ and  $f_\rho = \zeta_{\bfk_1}(\bm{\lambda})$ [cf.~\eref{eq:etazetak1}], so \eref{eq:frhoBound} is also saturated.
Since both $p_\rho$ and $f_\rho$ are linear in $\rho$, it follows that  the inequality in \eref{eq:frhoBound} can be saturated by a convex combination of   $\rho_{\bfk_0}$ and  $\rho_{\bfk_1}$ whenever  $p_\rho\geq  \eta_{\bfk_1}(\bm{\lambda})$.

Next, we prove  \eref{eq:MinFidelityBound2} in the case $\nu \ge 1/2 $, that is, $\beta\leq 1/2$. To this end, note that
\begin{align}
p_\rho-f_\rho&=\sum_{\bfk\in \scrS_N} c_\bfk \eta_\bfk(\bm{\lambda})
-
\sum_{\bfk\in \scrS_N} c_\bfk \zeta_\bfk(\bm{\lambda})\nonumber\\
&=\sum_{\bfk\in \scrS_N} c_\bfk[ \eta_\bfk(\bm{\lambda})
-\zeta_\bfk(\bm{\lambda})]
\le \frac{1}{N+1},
\end{align}
where the last inequality follows from \lref{lem:etazetadif} below. Therefore,
\begin{equation}
F_\rho \geq 1-\frac{1}{(N+1)p_\rho}
\end{equation}
whenever $p_\rho>0$,
which implies that
\begin{equation}
F(N,\delta,\Omega) \geq 1-\frac{1}{(N+1)\delta}
\end{equation}
and confirms
\eref{eq:MinFidelityBound2}.
If in addition $\Omega$ is singular and $\delta$ satisfies $1/(N+1)\leq \delta\leq \delta^*$, then this bound is saturated according to \lref{lem:MinFidelityUB}.
\end{proof}

\subsection{Auxiliary lemmas}
Here we assume that $\lambda_j$ are the eigenvalues of a verification operator $\Omega$ that are arranged in decreasing order $1=\lambda_1> \lambda_2\geq\cdots \geq \lambda_D\geq 0$. In addition, $\beta=\lambda_2$ and $\tau=\lambda_D$ are the second largest and the smallest  eigenvalues; meanwhile, $\nu=1-\beta$. 
\begin{lemma}\label{lem:etazetadif}
 $\eta_\bfk(\bm{\lambda})
	-\zeta_\bfk(\bm{\lambda})\leq 1/(N+1)$ for all $\bfk\in \scrS_N$ if $\beta\leq 1/2$.
\end{lemma}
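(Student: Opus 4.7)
The plan is to rewrite $\eta_\bfk(\bm{\lambda})-\zeta_\bfk(\bm{\lambda})$ by cancelling the contribution from the index $i=1$ in the definition of $\eta_\bfk$, and then bound the residual sum directly using the hypothesis $\lambda_i\le\beta\le 1/2$ for all $i\ge 2$. Specifically, from \eref{eq:etazeta} and $\lambda_1=1$, the $i=1$ term of $(N+1)\eta_\bfk(\bm{\lambda})$ (present only when $k_1>0$) equals $k_1\prod_{j\ge 2,\,k_j>0}\lambda_j^{k_j}$, which is exactly $(N+1)\zeta_\bfk(\bm{\lambda})$. Hence
\begin{equation*}
(N+1)[\eta_\bfk(\bm{\lambda})-\zeta_\bfk(\bm{\lambda})]
= \sum_{i\ge 2,\,k_i>0} k_i\,\lambda_i^{k_i-1}\!\!\prod_{\substack{j\ge 2,\,j\neq i\\ k_j>0}}\!\!\lambda_j^{k_j}=: S,
\end{equation*}
and it suffices to show $S\le 1$.

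Next, let $n:=\sum_{i\ge 2}k_i$. If $n=0$ then $S=0$, so assume $n\ge 1$. Each summand in $S$ is nondecreasing in every $\lambda_j$ with $j\ge 2$ (because $k_i\lambda_i^{k_i-1}$ is nondecreasing on $[0,\infty)$ for $k_i\ge 1$, and each remaining factor is monotone as well), so replacing every $\lambda_j$ with its upper bound $1/2$ gives
\begin{equation*}
S\le \sum_{i\ge 2,\,k_i>0} k_i\,(1/2)^{k_i-1}(1/2)^{n-k_i}=n\,(1/2)^{n-1}.
\end{equation*}
It then remains to verify the elementary inequality $n\,(1/2)^{n-1}\le 1$ for $n\ge 1$ (equality at $n=1,2$, strict for $n\ge 3$), which is immediate by comparing consecutive terms or by direct inspection. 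This yields $S\le 1$ and hence $\eta_\bfk(\bm{\lambda})-\zeta_\bfk(\bm{\lambda})\le 1/(N+1)$, as claimed.

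There is no substantive obstacle; the argument is essentially bookkeeping once the cancellation of the $i=1$ term is noticed. The only points requiring a little care are (i)~the convention $\lambda_i^{0}=1$, which keeps the factor $\lambda_i^{k_i-1}$ well defined when $k_i=1$ and $\lambda_i=0$ (so $\tau=0$ is handled uniformly), and (ii)~the degenerate case $k_1=N+1$, in which the sum defining $S$ is empty and both $\eta_\bfk$ and $\zeta_\bfk$ equal $1$, so the bound trivially holds with equality $0\le 1/(N+1)$.
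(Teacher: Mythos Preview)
Your proof is correct and follows essentially the same approach as the paper: both cancel the $i=1$ term to obtain $(N+1)[\eta_\bfk-\zeta_\bfk]=\sum_{i\ge 2,\,k_i>0}k_i\lambda_i^{k_i-1}\prod_{j\ge 2,\,j\ne i}\lambda_j^{k_j}$, bound each $\lambda_j$ for $j\ge 2$ by $\beta\le 1/2$, and then use $n(1/2)^{n-1}\le 1$ with $n=N+1-k_1$. The only cosmetic difference is that the paper first bounds by $\beta^{n-1}$ and then invokes $\beta\le 1/2$, while you substitute $1/2$ directly.
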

\begin{proof}
	If $\bfk=\bfk_0$, then $\eta_\bfk(\bm{\lambda})
	=\zeta_\bfk(\bm{\lambda})=1$, so we have $\eta_\bfk(\bm{\lambda})
	-\zeta_\bfk(\bm{\lambda})=0\leq 1/(N+1)$.
	If $\bfk\neq \bfk_0$, then \eref{eq:etazeta} implies that
	\begin{align}
&\eta_\bfk(\bm{\lambda})
	-\zeta_\bfk(\bm{\lambda})
=\sum_{i\geq 2|k_i\geq1}\frac{k_i}{(N+1)}\lambda_i^{k_i-1}\prod_{j\neq i | k_j\geq 1} \lambda_j^{k_j}\nonumber\\
&\le\frac{N+1-k_1}{N+1}\beta^{N-k_1}\leq \frac{N+1-k_1}{N+1}\Bigl(\frac{1}{2}\Bigr)^{N-k_1}\le \frac{1}{N+1}.
	\end{align}
The first inequality follows  from the facts that $\lambda_j\leq \beta$ for $j\geq 2$ and	
	 that $0\leq N-k_1\leq N$;   the second inequality follows from the assumption  $\beta\leq 1/2$.
\end{proof}

Define
\begin{equation}\label{eq:xik}
\xi_\bfk(\bm{\lambda}):=\frac{1-\eta_\bfk(\bm{\lambda})}{1-\zeta_\bfk(\bm{\lambda})},\quad 	\bfk\in \scrS_N^*, 
\end{equation}
where  $\scrS_N^*=\scrS_N\setminus \{\bfk_0\}$ is  the subset of $\scrS_N$  without the vector  $\bfk_0=(N+1,0,\ldots, 0)$. 
\begin{lemma}\label{lem:slope}
	For each $\bfk\in \scrS_N^*$, we have	
\begin{equation}\label{eq:xikBounds}
\frac{N\nu}{N\nu+1}\leq \xi_\bfk(\bm{\lambda})\leq 1-\tau^N,
\end{equation}
where  $\nu=1-\beta$ with $\beta=\lambda_2$ and $\tau=\lambda_D$, assuming that $\lambda_1=1$ and $\lambda_j$ are arranged in decreasing order. 
\end{lemma}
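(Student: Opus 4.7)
The plan is to express both sides of \eref{eq:xikBounds} in closed form and then reduce each bound to an elementary estimate. Setting $P := \prod_{j\geq 2,\, k_j>0}\lambda_j^{k_j}$, $S := \sum_{j\geq 2,\, k_j>0} k_j/\lambda_j$, and $m := N+1-k_1$ (so $m \geq 1$ whenever $\bfk \in \scrS_N^*$), a direct expansion of \eref{eq:etazeta} using $\lambda_1 = 1$ gives
\begin{equation*}
(N+1)(\eta_\bfk - \zeta_\bfk) = PS, \qquad (N+1)(1-\zeta_\bfk) = m + k_1(1-P).
\end{equation*}
The two inequalities in \eref{eq:xikBounds} then become, respectively,
\begin{equation*}
PS(1+N\nu) \leq m + k_1(1-P) \quad\text{and}\quad PS \geq \tau^N[m + k_1(1-P)].
\end{equation*}

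The key idea is to leverage the ambient constraints $\tau \leq \lambda_j \leq \beta$ for $j \geq 2$ to obtain the one-sided estimates $\tau^m \leq P \leq \beta^m$ and $m\tau^{m-1} \leq PS \leq m\beta^{m-1}$; the $PS$ estimates follow by writing $PS = \sum_{j\geq 2,\, k_j>0} k_j (P/\lambda_j)$ and noting that each factor $P/\lambda_j$ is a product of $m-1$ eigenvalues lying in $[\tau,\beta]$. I would then use these to replace $P$ and $PS$ by their extreme values in a way that simultaneously tightens both sides of the target inequality, reducing the problem to a purely polynomial statement in the single parameter $\beta$ (or $\tau$) and $m$.

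For the lower bound, the substitutions $P \leq \beta^m$ (shrinking the right-hand side) and $PS \leq m\beta^{m-1}$ (enlarging the left-hand side) reduce the claim to $m\beta^{m-1}(1+N\nu) \leq m + k_1(1-\beta^m)$, which after using $k_1 = N+1-m$ and $1+N\nu = N+1-N\beta$ collapses to the Bernoulli-type inequality $m\beta^{m-1} - (m-1)\beta^m \leq 1$; I would verify this by observing that $1 + \beta + \cdots + \beta^{m-1} \geq m\beta^{m-1}$, since each summand dominates $\beta^{m-1}$. For the upper bound, $P \geq \tau^m$ (enlarging the right-hand side) and $PS \geq m\tau^{m-1}$ (shrinking the left-hand side) reduce the claim, after dividing by $\tau^{m-1}$ and using $N-m+1 = k_1$, to $m(\tau^{-k_1} - 1) \geq k_1(1-\tau^m)$; I would then apply \lref{lem:alambda} twice, once with $a = k_1$ and once with $a = -m$, to chain $(\tau^{-k_1}-1)/k_1 \geq 1-\tau \geq (1-\tau^m)/m$.

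The most delicate point is checking that the two substitutions in each half of the argument combine consistently, since $P$ and $PS$ are coupled through the spectrum; this is why it is essential that both substitutions move the inequality the same way, effectively collapsing the general case onto the extreme homogeneous configuration where all active $\lambda_j$ equal $\beta$ or all equal $\tau$. The saturation conditions in \eref{eq:xikBounds} are then inherited from the homogeneous analysis in \lref{lem:11slope}: the lower bound is attained at $\bfk = (N,1,0,\ldots,0)$ with the nonzero slot aligned with an eigenvalue equal to $\beta$, and the upper bound at $\bfk = (0,\ldots,0,N+1)$ aligned with $\tau$.
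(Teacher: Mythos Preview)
Your proof is correct. The route is genuinely different from the paper's, so a brief comparison is warranted.

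The paper proves \lref{lem:slope} via an auxiliary monotonicity result (\lref{lem:hkmono}): it shows that $\xi_\bfk(\bm{u})$ is nonincreasing in each coordinate $u_j$ for $j\geq 2$, by computing the partial derivative $\partial\xi_\bfk/\partial u_j$ and checking its sign. This immediately gives $\xi_\bfk(1,\tau,\ldots,\tau)\geq \xi_\bfk(\bm{\lambda})\geq \xi_\bfk(1,\beta,\ldots,\beta)$, collapsing the general case onto the two homogeneous endpoints, which are then handled by \lref{lem:11slope}.

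You achieve the same collapse without the derivative calculation: you isolate the two scalar quantities $P$ and $PS$ through which $\bm{\lambda}$ enters $\xi_\bfk$, bound each separately by $[\tau^m,\beta^m]$ and $[m\tau^{m-1},m\beta^{m-1}]$, and observe that for each inequality the two substitutions push in the same direction. The resulting reduced inequalities are precisely the homogeneous ones, which you then dispatch directly---the lower bound by the Bernoulli-type estimate $m\beta^{m-1}-(m-1)\beta^m\leq 1$, the upper bound by chaining two instances of \lref{lem:alambda}. Your approach is more elementary (no derivatives, no separate monotonicity lemma) but slightly more ad hoc; the paper's monotonicity lemma is a cleaner structural statement that makes the reduction transparent and could be reused elsewhere. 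Both arguments ultimately identify the same extremal configurations $\bfk_1=(N,1,0,\ldots,0)$ and $(0,\ldots,0,N+1)$.

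One small point: when $\tau=0$ and some active $\lambda_j$ vanishes, the shorthand $PS$ is not literally $P\cdot S$ (since $S$ is undefined); you should read $PS$ as $(N+1)(\eta_\bfk-\zeta_\bfk)$ throughout, in which case the bound $PS\leq m\beta^{m-1}$ still follows termwise and the upper bound $\xi_\bfk\leq 1$ is trivial. This is harmless but worth stating explicitly.
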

The lower bound in \eref{eq:xikBounds} can be expressed as
\begin{equation}
\frac{N\nu}{N\nu+1}=\frac{1-\eta_{\bfk_1}(\bm{\lambda})}{1-\zeta_{\bfk_1}(\bm{\lambda})},
\end{equation}
where $\bfk_1:=(N,1,0,\ldots,0)$. According to \eref{eq:lambdaN}, we have
\begin{equation}
\frac{N\nu}{N\nu+1}<1-\beta^N\leq 1-\tau^N. 
\end{equation}
 \Lref{lem:slope} implies that the region $R_{N,\Omega}$ is contained in the triangle determined by the following three lines
\begin{equation}
\begin{aligned}
f&=0,\\
1-p&=(1-\tau^N)(1-f),\\
1-p&=\frac{N\nu}{N\nu+1}(1-f). 
\end{aligned}
\end{equation}
The three vertices of the triangle are  $(1,1)$,  $(\tau^N,0)$, and $(1/(N\nu+1),0)$; the first two vertices are the extremal points of $R_{N,\Omega}$.

\begin{proof}[Proof of \lref{lem:slope}]
The assumption $\bfk\in \scrS_N^*$ implies  that  $\sum_j k_j=N+1$ and $k_1\leq N$.
Thanks to \lref{lem:hkmono} below, we have
\begin{align}
\xi_\bfk(\bm{\lambda})&\geq \xi_\bfk(1,\beta,\ldots,\beta)=\xi_{(k_1,N-k_1+1)}(1,\beta)\nonumber\\
&=\frac{1-\eta_{N-k_1+1}(\beta)}{1-\zeta_{N-k_1+1}(\beta)}\geq 
\frac{N\nu}{N\nu+1},
\end{align}
where the second inequality follows from \lref{lem:11slope} in \aref{sec:HomoApp}. 
Note that the definition of $\xi_\bfk(\bm{\lambda})$ [as well as that of $\eta_\bfk(\bm{\lambda})$ and $\zeta_\bfk(\bm{\lambda})$] can be extended as long as $\bfk$ and $\bm{\lambda}$ have the same number of components.

By the same token, we have
\begin{align}
\xi_\bfk(\bm{\lambda})&\leq \xi_\bfk(1,\tau,\ldots,\tau)=\xi_{(k_1,N-k_1+1)}(1,\tau)\nonumber\\
&=\frac{1-\eta_{N-k_1+1}(\tau)}{1-\zeta_{N-k_1+1}(\tau)}\leq 
1-\tau^N,
\end{align}
where the two inequalities follow from \lref{lem:hkmono} and \lref{lem:11slope}, respectively. 
\end{proof}

Here it is instructive to take a look at the special scenario in which $\zeta_\bfk(\bm{\lambda})=0$ (cf. the proof of \lref{lem:etaf0} in \aref{sec:zetaEtaProofs}), which means  $k_1=0$, or $\lambda_i =0$ and $k_i\geq 1$ for some $2\leq i\leq D$.  In the first case, we have $\tau^N\leq \eta_\bfk(\bm{\lambda})\leq \beta^N$ by \eref{eq:etazeta}, so that
\begin{align}
&\xi_\bfk(\bm{\lambda})=1-\eta_\bfk(\bm{\lambda})\leq 1-\tau^N,\\
&\xi_\bfk(\bm{\lambda})\geq 1- \beta^N\geq  \frac{N\nu}{N\nu+1},
\end{align}	
where the last inequality follows from \eref{eq:lambdaN}.
In the second case, we have $\tau=0$ and
\begin{equation}
\eta_\bfk(\bm{\lambda})=\frac{k_i\lambda_i^{k_i-1}}{N+1}\prod_{j\neq i, k_j>0}\lambda_j^{k_j}\leq \frac{1}{N+1},
\end{equation}
which implies that
\begin{align}
&\xi_\bfk(\bm{\lambda})=1-\eta_\bfk(\bm{\lambda})\leq 1=1-\tau^N,\\
&\xi_\bfk(\bm{\lambda})\geq \frac{N}{N+1}\geq \frac{N\nu}{N\nu+1}.
\end{align}	 
These results are compatible with \lref{lem:slope} as expected.

\begin{lemma}\label{lem:hkmono}
Suppose  $\bfk=(k_1,k_2, \ldots,k_m)$ is a sequence of $m\geq 2$ nonnegative integers that satisfies   $k_1\leq N$ and $\sum_j k_j=N+1$, where $N$ is a positive integer.
Let  $\bm{u},\bm{v}$ be two $m$-component vectors that satisfy $0\leq \bm{u}\leq \bm{v}\leq 1$ and $u_1=v_1=1$. Then we have $\xi_\bfk(\bm{u})\geq \xi_\bfk(\bm{v})$.
\end{lemma}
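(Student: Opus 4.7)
The plan is to establish the lemma via coordinatewise monotonicity: for every $i\ge 2$, the map $\lambda_i\mapsto \xi_\bfk(\bm{\lambda})$ is nonincreasing on $[0,1]$ with all other components held fixed. Since $\bm{u}\le\bm{v}$ componentwise with $u_1=v_1=1$, I can then interpolate between $\bm{u}$ and $\bm{v}$ by raising the coordinates $\lambda_2,\lambda_3,\ldots,\lambda_m$ one at a time from $u_j$ to $v_j$; each such step only decreases (or preserves) $\xi_\bfk$, which gives $\xi_\bfk(\bm{u})\ge \xi_\bfk(\bm{v})$ overall.

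To prove the coordinatewise monotonicity, I fix $i\ge 2$ and set $x:=\lambda_i$. If $k_i=0$, then $\xi_\bfk$ does not depend on $x$ at all and the claim is trivial, so assume $k_i\ge 1$. Writing $T:=\{j:k_j>0\}$ and $M:=\prod_{j\in T,\,j\ne i}\lambda_j^{k_j}$, a direct expansion of \eref{eq:etazeta} gives
\[
\zeta_\bfk=Cx^{k_i},\qquad \eta_\bfk=Ax^{k_i}+Bx^{k_i-1},
\]
with $B=k_iM/(N+1)$, $C=k_1M/(N+1)$, and
\[
A=\sum_{l\in T,\,l\ne i}\frac{k_l}{N+1}\,\lambda_l^{k_l-1}\prod_{j\in T,\,j\ne l,\,j\ne i}\lambda_j^{k_j}\ \ge\ 0.
\]
The crux is the inequality $A\ge C$. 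If $k_1=0$, then $C=0\le A$; if $k_1\ge 1$, then the $l=1$ term of the sum defining $A$ equals $\tfrac{k_1}{N+1}\prod_{j\in T\setminus\{1,i\}}\lambda_j^{k_j}=\tfrac{k_1 M}{N+1}=C$ (using $\lambda_1^{k_1}=1$), and the remaining terms are nonnegative.

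Using $A\ge C$ I then write
\[
1-\xi_\bfk\ =\ \frac{\eta_\bfk-\zeta_\bfk}{1-\zeta_\bfk}\ =\ \frac{(A-C)x^{k_i}+Bx^{k_i-1}}{1-Cx^{k_i}}.
\]
The numerator is nonnegative and nondecreasing in $x\in[0,1]$, while the denominator is strictly positive---since $\zeta_\bfk\le k_1/(N+1)\le N/(N+1)<1$, invoking the hypothesis $k_1\le N$---and nonincreasing in $x$. Hence $1-\xi_\bfk$ is nondecreasing in $x$, so $\xi_\bfk$ is nonincreasing in $\lambda_i$, and the argument is complete. The only mildly tricky step is the identity $A\ge C$, which is exactly where the assumption that the first coordinate is pinned to $\lambda_1=1$ enters: the $l=1$ contribution to $A$ matches $C$ precisely because $\lambda_1^{k_1}=1$, and without this hypothesis the desired monotonicity could fail.
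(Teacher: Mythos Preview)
Your proof is correct and follows the same overall strategy as the paper: show that $\xi_\bfk$ is nonincreasing in each coordinate $\lambda_i$ for $i\ge 2$, then pass from $\bm{u}$ to $\bm{v}$ one coordinate at a time. The paper establishes the coordinatewise monotonicity by first reducing (via continuity) to strictly positive vectors, writing $\eta_\bfk=\theta\sum_j k_j/\lambda_j$ and $\zeta_\bfk=\theta k_1$, and then computing the partial derivative $\partial\xi_\bfk/\partial u_j$ explicitly and checking its sign. Your execution is a bit more elementary: by isolating the dependence on $x=\lambda_i$ and observing $A\ge C$ (precisely because $\lambda_1=1$), you write $1-\xi_\bfk=[(A-C)x^{k_i}+Bx^{k_i-1}]/(1-Cx^{k_i})$ as a ratio of a nonnegative nondecreasing numerator and a positive nonincreasing denominator, which is nondecreasing without any differentiation. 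This variant has the minor advantage of working directly on $[0,1]$, so the continuity reduction is unnecessary.
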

The inequality $0\leq \bm{u}\leq \bm{v}\leq 1$ in the above lemma means  $0\leq u_j\leq v_j\leq 1$ for each $j=1,2,\ldots, m$.

\begin{proof}	
The assumption  $0\leq \bm{u}\leq \bm{v}\leq 1$ and \eref{eq:etazeta}, imply that $\zeta_\bfk(\bm{u})\leq \zeta_\bfk(\bm{v})\leq k_1/(N+1)<1$, so  
 $\xi_\bfk(\bm{u})$ is continuous in  $\bm{u}$ for $0\leq \bm{u}\leq 1$ by the definition in \eref{eq:xik}. Therefore, it suffices to prove the lemma when $0<\bm{u}\leq \bm{v}\leq 1$,  in which case $\eta_\bfk(\bm{u})$ and $\zeta_\bfk(\bm{u})$ can be expressed as follows,
\begin{align}
\eta_\bfk(\bm{u})=\theta \sum_j \frac{k_j}{u_j}, \quad \zeta_\bfk(\bm{u})=\theta k_1,
\end{align}
where $\theta:=\bigl(\prod_iu_i^{k_i}\bigr)/(N+1)$.

For $j\geq 2$, calculation shows that
\begin{equation}
\begin{aligned}
\frac{\partial \eta_\bfk(\bm{u})}{\partial u_j }&=\theta\biggl(\frac{k_j}{u_j}\sum_i\frac{k_i}{u_i}-\frac{k_j}{u_j^2}\biggr),\\
\frac{\partial \zeta_\bfk(\bm{u})}{\partial u_j }&=\theta\frac{k_1k_j}{u_j}.
\end{aligned}
\end{equation}
These derivatives have well-defined limits even when some components $u_i$ approach zero; this fact would be clearer if we insert the expression of $\theta$. In addition,
	\begin{align}\label{eq:xiderivative}
&\frac{\partial \xi_\bfk(\bm{u})}{\partial u_j }=-\frac{\theta k_j u_j\sum_{i>1}\frac{k_i}{u_i}-\theta k_j+\theta^2k_1k_j}{(1-\theta k_1)^2u_j^2}\nonumber\\
&=-\frac{\theta k_j \bigl[u_j\sum_{i>1,i\neq j}\frac{k_i}{u_i}+ (k_j-1)+\theta k_1\bigr]}{(1-\theta k_1)^2u_j^2}\leq 0,
\end{align}
note that $1-\theta k_1  \geq 1/(N+1)>0$.
The inequality in \eref{eq:xiderivative} is strict except when $k_j=0$, in which case $\xi_\bfk(\bm{u})$ is independent of $u_j$, and so are $\eta_\bfk(\bm{u})$ and $\zeta_\bfk(\bm{u})$ [cf.~\eref{eq:etazeta} in the main text]. 
Therefore,  $\xi_\bfk(\bm{u})$ is nonincreasing in $u_j$ for $j\geq 2$, which means $\xi_\bfk(\bm{u})\geq \xi_\bfk(\bm{v})$ whenever $0< \bm{u}\leq \bm{v}\leq 1$ and $u_1=v_1=1$. The condition $0< \bm{u}\leq \bm{v}\leq 1$ can be relaxed to $0\leq \bm{u}\leq \bm{v}\leq 1$ by continuity.
\end{proof}

\section{\label{sec:NumberTesOptProof}Proofs of \lref{lem:Fboundf} and \thref{thm:NumTestBounds}}

\subsection{Auxiliary lemmas}

Before proving \lref{lem:Fboundf} and \thref{thm:NumTestBounds}, wee need to introduce a few auxiliary notations and results. 

Denote by $\bar{\scrS_N}$ the convex hull of $\scrS_N$, then $\bar{\scrS_N}$  is composed of real vectors  $\bfk=(k_1,k_2,\ldots,k_D)$ that satisfy $\sum_{j=1}^D  k_j=N+1$ and  $k_j\geq 0$ for $j=1,2,\ldots D$. When $\Omega$ is positive definite, that is,  $\tau(\Omega)>0$, we can extend the definition of $\eta_\bfk(\bm{\lambda})$ and 
$\zeta_\bfk(\bm{\lambda})$ over $\bfk$ to $\bar{\scrS_N}$ [cf.~\eref{eq:etazeta}]. Since all eigenvalues $\lambda_j$ of $\Omega$  for $j=1,2,\ldots,D$  are positive,  we have $\eta_\bfk(\bm{\lambda})>0$ for all $\bfk\in \bar{\scrS_N}$.
The following analogs of $\zeta(N,\delta,\Omega)$ and $\eta(N,f,\Omega)$ [cf.~\eref{eq:zetaetaLP}] will play  key roles in proving \lref{lem:Fboundf} and \thref{thm:NumTestBounds}. Define
\begin{align}
\bar{\zeta}(N,\delta,\Omega)\!&:=\begin{cases}
\displaystyle{\min_{\bfk\in \bar{\scrS_N}}}
\bigl\{
\zeta_\bfk(\bm{\lambda})
\big| \eta_\bfk(\bm{\lambda})=\delta
\bigr\}, & \beta^N\leq \delta\leq 1,\\
0, & 0\leq \delta\leq \beta^N;
\end{cases}\\
\bar{\eta}(N,f,\Omega)\!&:=
\max_{\bfk\in \bar{\scrS_N}}
\bigl\{ \eta_\bfk(\bm{\lambda})
\big| \zeta_\bfk(\bm{\lambda})=f
\bigr\},\, 0\leq f\leq 1,
\end{align}
where $\beta$ is the second largest eigenvalue of $\Omega$. Incidentally,  
$\eta(N,f=0,\Omega)=\delta_\rmc=\beta^N$ since $\tau>0$; see \lref{lem:etaf0}. 
\begin{lemma}\label{lem:ZetaEtabound}
	Suppose $0\leq \delta,f\leq1$ and $\Omega$ is a positive-definite verification operator; then
	\begin{align}
	\zeta(N,\delta,\Omega)	\geq \bar{\zeta}(N,\delta,\Omega), \label{eq:zetaBound}\\
	\eta(N,f,\Omega)	\leq \bar{\eta}(N,f,\Omega). \label{eq:etaBound}
	\end{align} 
\end{lemma}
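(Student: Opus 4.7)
The plan is to derive both inequalities from Jensen's inequality via the elementary containment $\scrS_N\subset\bar{\scrS_N}$, with the real work concentrating on establishing convexity of $\bar{\zeta}(N,\cdot,\Omega)$ in $\delta$ and concavity of $\bar{\eta}(N,\cdot,\Omega)$ in $f$.

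First I observe that every integer $\bfk\in\scrS_N$ lies in $\bar{\scrS_N}$ and is thus admissible in the continuous optimization, so $\zeta_\bfk(\bm{\lambda})\geq\bar{\zeta}(N,\eta_\bfk(\bm{\lambda}),\Omega)$ for all $\bfk\in\scrS_N$. Let $\{c_\bfk\}$ be a probability distribution on $\scrS_N$ attaining the minimum in \eref{eq:zetaLP}, so that $\sum_\bfk c_\bfk\zeta_\bfk=\zeta(N,\delta,\Omega)$ and $\sum_\bfk c_\bfk\eta_\bfk=\delta$. Assuming $\bar{\zeta}(N,\cdot,\Omega)$ is convex on $[\beta^N,1]$, Jensen's inequality gives $\zeta(N,\delta,\Omega)\geq\sum_\bfk c_\bfk\bar{\zeta}(N,\eta_\bfk,\Omega)\geq\bar{\zeta}(N,\delta,\Omega)$, proving \eref{eq:zetaBound} on $[\beta^N,1]$; for $0\leq\delta<\beta^N$ it is trivial since $\bar{\zeta}\equiv 0$ there while $\zeta\geq 0$. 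A fully symmetric argument using concavity of $\bar{\eta}$ and Jensen applied to the maximizing distribution in \eref{eq:etaLP} yields \eref{eq:etaBound}.

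The crux is thus to establish convexity of $\bar{\zeta}$ (and concavity of $\bar{\eta}$). My approach is a KKT analysis of the continuous optimization. Writing the target and constraint in the compact form $\zeta_\bfk=k_1\prod_j\lambda_j^{k_j}/(N+1)$ and $\eta_\bfk=(\sum_i k_i/\lambda_i)\prod_j\lambda_j^{k_j}/(N+1)$, the stationarity conditions at an interior minimizer $\bfk^*$ under the constraints $\sum_j k_j=N+1$ and $\eta_\bfk=\delta$ force each $i\geq 2$ in the support of $\bfk^*$ to satisfy a relation of the form $A\log\lambda_i-B/\lambda_i=C$, where $A,B,C$ depend on $\bfk^*$ and the multipliers. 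Viewed as an equation in $\lambda\in(0,1]$, the function $\lambda\mapsto A\log\lambda-B/\lambda$ has at most one critical point and therefore at most two roots, so (after merging indices with equal eigenvalues) the minimizer is supported on at most three coordinates and generically on just two, namely $k_1^*$ and one $k_{j_*(\delta)}^*$. Locally, $\delta\mapsto\bar{\zeta}(N,\delta,\Omega)$ then coincides with the two-component parametric curve at $\lambda=\lambda_{j_*(\delta)}$, whose convexity follows from the monotonicity of the slopes $g_k(\lambda)$ recorded in \lref{lem:zetaEtaSlope}.

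The main obstacle I anticipate is controlling how the active index $j_*(\delta)$ switches as $\delta$ varies across $[\beta^N,1]$ and confirming that the lower envelope over $j\in\{2,\ldots,D\}$ of the individual homogeneous curves remains convex. I would order the eigenvalues $\lambda_2\geq\cdots\geq\lambda_D$, track the multipliers through the three-coordinate transition configurations that arise precisely at the breakpoints, and verify that the tangent slopes match continuously so that successive convex pieces join into a globally convex curve; the parallel analysis for $\bar{\eta}$ yields its concavity.
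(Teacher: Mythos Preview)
Your reduction via Jensen's inequality is sound: once you know that $\bar{\zeta}(N,\cdot,\Omega)$ is convex on $[\tau^N,1]$, the chain $\zeta(N,\delta,\Omega)=\sum_\bfk c_\bfk\zeta_\bfk(\bm{\lambda})\geq\sum_\bfk c_\bfk\,\bar{\zeta}(N,\eta_\bfk(\bm{\lambda}),\Omega)\geq\bar{\zeta}(N,\delta,\Omega)$ is immediate. The problem is that all the content of the lemma has been pushed into the convexity of $\bar{\zeta}$, and your argument for that has real gaps. First, \lref{lem:zetaEtaSlope} concerns the discrete slopes $g_k(\lambda)$ and gives convexity of the piecewise-linear $\zeta(N,\delta,\lambda)$, not of the smooth continuous curve $k_1\mapsto(\eta_\bfk,\zeta_\bfk)$ along a two-coordinate edge; the latter needs a separate second-derivative computation. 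Second, and more seriously, even granting that each homogeneous branch $\bar{\zeta}_j$ is convex, the pointwise minimum of convex functions is \emph{not} convex in general, so your envelope argument is exactly where the difficulty lives. You recognize this as the ``main obstacle,'' but the proposed fix---tracking KKT multipliers through three-coordinate transition configurations and checking that tangent slopes match---is only a sketch, and it is not clear that slope continuity at switch points suffices for global convexity without further monotonicity input.

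The paper avoids global convexity of $\bar{\zeta}$ altogether. Given $\delta>\beta^N$, it takes the two integer extremal points $\bfq_0,\bfq_1\in\scrS_N$ that determine $\zeta(N,\delta,\Omega)$ as a chord (via \lref{lem:ExtremalPoints}), and parametrizes the straight line $\bfk(t)=(1-t)\bfq_0+t\bfq_1$ in $\bar{\scrS_N}$. Along this single segment it computes $\zeta''(t)\eta'(t)-\eta''(t)\zeta'(t)$ explicitly and shows it factors with a definite sign, so that $\frac{d^2\zeta}{d\eta^2}\geq0$ wherever $\eta'(t)>0$; a separate monotonicity argument for $F(t)=\zeta(t)/\eta(t)$ handles the portion where $\eta'$ may change sign. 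This yields a point $\bfk(t_\delta)\in\bar{\scrS_N}$ with $\eta_{\bfk(t_\delta)}=\delta$ and $\zeta_{\bfk(t_\delta)}\leq\zeta(N,\delta,\Omega)$, which is exactly $\bar{\zeta}(N,\delta,\Omega)\leq\zeta(N,\delta,\Omega)$. The virtue of this route is that it never needs to understand the global structure of the minimizers of $\bar{\zeta}$ or how the active eigenvalue index switches; one explicit interpolation suffices. If you want to salvage your approach, note that the paper's curve computation does not use integrality of $\bfq_0,\bfq_1$, so applying it to minimizers of $\bar{\zeta}$ at two values $\delta_0<\delta_1$ would in fact prove the convexity you want---but that is essentially the paper's calculation, not the KKT-envelope argument you outlined.
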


\begin{proof}
		When $\delta$ satisfies $0\leq \delta\leq \beta^N$,  by definition we have $\bar{\zeta}(N,\delta,\Omega)=0\leq \zeta(N,\delta,\Omega)$, so \eref{eq:zetaBound} holds.

When $\delta>\beta^N$, by \lref{lem:ExtremalPoints} in \aref{sec:zetaEtaProofs}, we can find vectors $\bfq_0, \bfq_1\in \scrS_N$ such that $\beta^N\leq \eta_0<\eta_1\leq1$, 
	$\eta_0< \delta\leq \eta_1$, $0\leq \zeta_0<\zeta_1\leq 1$, and   $0\leq F_0<F_1\leq 1$, where $\eta_j=\eta_{\bfq_j}(\bm{\lambda})$,  $\zeta_j=\zeta_{\bfq_j}(\bm{\lambda})$, and  $F_j=\zeta_j/\eta_j$ for $j=0,1$. In addition,  $\zeta(N,\delta,\Omega)=c_0 \zeta_0+c_1\zeta_1$, where $c_0$ and $c_1$ are nonnegative coefficients determined by the requirements $c_0+c_1=1$ and $c_0\eta_0+c_1\eta_1=\delta$, that is,
	\begin{align}
	c_0&=\frac{\eta_1-\delta}{\eta_1-\eta_0},\quad c_1=\frac{\delta-\eta_0}{\eta_1-\eta_0}.
	\end{align}
If  $\delta=\eta_1$, then $\zeta(N,\delta,\Omega)=\zeta_1$ and  \eref{eq:zetaBound} holds because  $\scrS_N\subset \bar{\scrS_N}$. So it remains to consider the scenario $\eta_0< \delta< \eta_1$, in which case we have $0<c_0,c_1<1$.
	Geometrically, the point $(\delta,\zeta(N,\delta,\Omega))$ lies on the line segment that connects the two end points $(\eta_0,\zeta_0)$ and $(\eta_1,\zeta_1)$, which has slope $(\zeta_1-\zeta_0)/(\eta_1-\eta_0)$.

	For $0\leq t\leq 1$, let 
	\begin{align}
	\bfk(t)&=\bfq_0(1-t)+\bfq_1t=\bfq_0+(\bfq_1-\bfq_0)t, \label{eq:kt}\\
	\eta(t)&=\eta_{\bfk(t)}(\bm{\lambda}),\quad 
	\zeta(t)=\zeta_{\bfk(t)}(\bm{\lambda}).\label{eq:etazetat}
	\end{align}
Note that $\bfk(t)
\in \bar{\scrS_N}$ for $0\leq t\leq 1$; in addition, $\eta(0)=\eta_0$ and $\zeta(0)=\zeta_0$, while $\eta(1)=\eta_1$ and $\zeta(1)=\zeta_1$. So  \eref{eq:etazetat} defines a parametric curve $(\eta(t),\zeta(t))$ that connects $(\eta_0,\zeta_0)$ and $(\eta_1,\zeta_1)$.	
The explicit expressions of $\eta(t)$ and $\zeta(t)$ can be derived by virtue of \eref{eq:etazeta}, with the result
\begin{equation}\label{eq:etazetat2}
\eta(t)=\theta(t)\sum_j \frac{k_j(t)}{\lambda_j},\quad \zeta(t)=\theta(t)k_1(t),
\end{equation}		
where
\begin{align}
\theta(t)&=\frac{1}{N+1}\prod_j \lambda_j^{k_j(t)}.
\end{align}	
Let 
\begin{align}
F(t)&=\frac{\zeta(t)}{\eta(t)}=\frac{k_1(t)}{\sum_j \frac{k_j(t)}{\lambda_j}};  \label{eq:Ft}
\end{align}	
then $F(0)=F_0$ and  $F(1)=F_1$.

Let $t_\delta$ be the smallest value of $t$ such that $\eta(t)=\delta$; then $\bar{\zeta}(N,\delta,\Omega)\leq \zeta(t_\delta)$. So \eref{eq:zetaBound} would follow if we can prove that $\zeta(t_\delta)\leq \zeta(N,\delta,\Omega)$.

	To achieve our goal, we shall  prove that the parametric curve $(\eta(t),\zeta(t))$ for $0\leq t\leq t_\delta$ lies below the line segment passing through the two points  $(\eta_0,\zeta_0)$ and $(\eta_1,\zeta_1)$. To this end, we need to analyze the convexity (or concavity) property of the curve, which depends on the second derivative
	\begin{equation}
	\frac{d^2\zeta(t)}{d\eta(t)^2}=\frac{\zeta''(t)\eta'(t)-\eta''(t)\zeta'(t)}{\eta'(t)^3}. 
	\end{equation}
	Here the derivatives with respect to $t$ can be computed explicitly by virtue of \eref{eq:etazetat2}, with the result
	\begin{widetext}
	\begin{align}
	\eta'(t)&=\frac{d \eta(t)}{d t}=\eta(t)\sum_j(q_{1j} -q_{0j})\ln \lambda_j +\theta(t)\sum_j\frac{q_{1j}-q_{0j}}{\lambda_j}=\theta(t)\Bigl[\frac{\eta(t)}{\theta(t)}\ln\frac{\theta_1}{\theta_0}+\Bigl(\frac{\eta_1}{\theta_1}-\frac{\eta_0}{\theta_0}\Bigr)\Bigr],\label{eq:etaPrime}\\
	\zeta'(t)&=\frac{d \zeta(t)}{d t}=\zeta(t)\sum_j(q_{1j} -q_{0j})\ln \lambda_j +\theta(t)(q_{11}-q_{01})=\theta(t)\Bigl[k_1(t)\ln\frac{\theta_1}{\theta_0}+(q_{11}-q_{01})\Bigr], \label{eq:zetaPrime}\\
	\eta''(t)&=\frac{d^2 \eta(t)}{d t^2}=\theta(t)\Bigl(\ln\frac{\theta_1}{\theta_0}\Bigr)\Bigl[\frac{\eta(t)}{\theta(t)}\ln\frac{\theta_1}{\theta_0}+2\Bigl(\frac{\eta_1}{\theta_1}-\frac{\eta_0}{\theta_0}\Bigr)\Bigr],\\
	\zeta''(t)&=\frac{d^2 \zeta(t)}{d t^2}=\theta(t)\Bigl(\ln\frac{\theta_1}{\theta_0}\Bigr)\Bigl[k_1(t)\ln\frac{\theta_1}{\theta_0}+2(q_{11}-q_{01})\Bigr],
	\end{align}		
where 	
\begin{equation}
 \theta_0=\theta(t=0)=\frac{1}{N+1}\prod_j \lambda_j^{q_{0j}},\quad  \theta_1=\theta(t=1)=\frac{1}{N+1}\prod_j \lambda_j^{q_{1j}}.
\end{equation}
Note that
\begin{equation}
\theta'(t)=\frac{d \theta(t)}{d t}=\theta(t)\sum_j(q_{1j} -q_{0j})\ln \lambda_j =\theta(t)\ln\frac{\theta_1}{\theta_0}. 
		\end{equation}
		Therefore,
		\begin{align}
		&\zeta''(t)\eta'(t)-\eta''(t)\zeta'(t)=\theta(t)^2\Bigl(\ln\frac{\theta_1}{\theta_0}\Bigr)^2\Bigl[(q_{11}-q_{01})\frac{\eta(t)}{\theta(t)} -\Bigl(\frac{\eta_1}{\theta_1}-\frac{\eta_0}{\theta_0}\Bigr) k_1(t) \Bigr]\nonumber\\
		&=\theta(t)^2\Bigl(\ln\frac{\theta_1}{\theta_0}\Bigr)^2\biggl\{(q_{11}-q_{01})	\Bigl	[\frac{\eta_0}{\theta_0}+\Bigl(\frac{\eta_1}{\theta_1}-\frac{\eta_0}{\theta_0}\Bigr)t\Bigr]
 -\Bigl(\frac{\eta_1}{\theta_1}-\frac{\eta_0}{\theta_0}\Bigr) [q_{01 } +(q_{11}-q_{01})t] \biggr\}\nonumber\\
	&=\theta(t)^2\Bigl(\ln\frac{\theta_1}{\theta_0}\Bigr)^2\Bigl(\frac{\eta_0q_{11}}{\theta_0}-\frac{\eta_1q_{01}}{\theta_1}\Bigr)=\theta(t)^2\Bigl(\ln\frac{\theta_1}{\theta_0}\Bigr)^2\frac{\eta_0\eta_1}{\theta_0\theta_1}\Bigl(\frac{\theta_1q_{11}}{\eta_1}-\frac{\theta_0q_{01}}{\theta_0}\Bigr)\nonumber\\
		&=\theta(t)^2\Bigl(\ln\frac{\theta_1}{\theta_0}\Bigr)^2\frac{\eta_0\eta_1}{\theta_0\theta_1}(F_1-F_0)\geq 0.\label{eq:etazeta2121}
		\end{align}
	\end{widetext}
Here the inequality is strict except when $\theta_1=\theta_0$, in which case $\theta(t)$ is independent of $t$, while both $\eta(t)$ and $\zeta(t)$ are linear in  $t$.
So the derivative $\frac{d^2\zeta(t)}{d\eta(t)^2}$  has the same  sign  as $\eta'(t)$ unless it is identically zero.

	Note that $\eta(t)/\theta(t)$ is a linear function of $t$. So $\eta'(t)/\theta(t)$ is linear and thus  monotonic in $t$  according to \eref{eq:etaPrime}; actually,  $\eta'(t)/\theta(t)$ is strictly monotonic in $t$ unless it is a positive constant. When $t=0$, we have
	\begin{align}
	\eta'(0)&=\eta_0\Bigl[\ln\frac{\theta_1}{\theta_0}+\Bigl(\frac{\eta_1 \theta_0}{\theta_1 \eta_0}-1\Bigr)\Bigr]\nonumber\\
	&> \eta_0\Bigl[\ln\frac{\theta_1}{\theta_0}+\Bigl(\frac{ \theta_0}{\theta_1 }-1\Bigr)\Bigr]\geq 0\label{eq:etaPrime0}
	\end{align}
given that $\eta_1>\eta_0>0$.
Since $\theta(t)>0$, 	
it follows that $\eta'(t)$ has at most one zero  point in the interval $0\leq t\leq 1$. If $\eta'(t)>0$ in this interval, then $\frac{d^2\zeta(t)}{d\eta(t)^2}\geq 0$ and $\zeta(t)$  is a convex function of $\eta(t)$ for $0\leq t\leq 1$,
so  the parametric curve $(\eta(t),\zeta(t))$  lies below the line segment that connects the two points  $(\eta_0,\zeta_0)$ and $(\eta_1,\zeta_1)$, which implies the inequality $\zeta(t_\delta)\leq \zeta(N,\delta,\Omega)$ and  \eref{eq:zetaBound}. Here   $t_\delta$ is the smallest value of $t$ such that $\eta(t)=\delta$.
	Otherwise, $\eta'(t)$ has a unique zero point $0<t_2\leq 1$. If $t_2=1$, then the same conclusion holds. If $t_2<1$, then $\eta'(t)>0$ for $0\leq t<t_2$ and $\eta'(t)<0$ for $t_2< t\leq 1$, which implies that $\eta(t_2)>\eta_1$.
	So
	there exists a unique real number $t_3$ that satisfies the conditions $0<t_3<t_2$ and $\eta(t_3)=\eta_1$. Note that $\zeta(t)$  is  convex in $\eta(t)$
	for $0\leq t\leq t_3$ and that $t_\delta<t_3$. To prove \eref{eq:zetaBound}, it suffices to prove the inequality $\zeta(t_3)\leq \zeta_1$, that is, $F(t_3)\leq F_1$, given that $\eta(t_3)=\eta_1$. 
	
	To proceed, we compute the 
	derivative of $F(t)$ over $t$, with the result
	\begin{equation}\label{eq:Ftderivative}
	\frac{d F(t)}{d t}=\frac{\theta(t)^2 }{\eta(t)^2}  \frac{\eta_0\eta_1}{\theta_0\theta_1}(F_1-F_0)>0.
	\end{equation}
	This derivative can be derived either from \eref{eq:Ft} or from \esref{eq:etaPrime} and \eqref{eq:zetaPrime} given that $F(t)=\zeta(t)/\eta(t)$. 
	So $F(t)$ increases monotonically with $t$ for $0\leq t\leq 1$, which implies that $F(t_3)\leq F(1)=F_1$ and that
	$\zeta(t_3)\leq \zeta(1)=\zeta_1$. Therefore, the parametric curve $(\eta(t),\zeta(t))$ for $0\leq t\leq t_3$ lies below the line segment that connects the two points  $(\eta_0,\zeta_0)$ and $(\eta_1,\zeta_1)$, which implies that  $\zeta(t_\delta)\leq \zeta(N,\delta,\Omega)$ and
confirms \eref{eq:zetaBound}.

\Eref{eq:etaBound} can be proved using a similar reasoning used for proving \eref{eq:zetaBound}. When $f=0$, we have
\begin{align}
	&\bar{\eta}(N,f,\Omega)=
	\max_{\bfk\in \bar{\scrS_N}}
	\bigl\{ \eta_\bfk(\bm{\lambda})
	\big| \zeta_\bfk(\bm{\lambda})=0\}\nonumber\\
	&\geq \max_{\bfk\in \scrS_N}
	\bigl\{ \eta_\bfk(\bm{\lambda})
	\big| \zeta_\bfk(\bm{\lambda})=0\}
	=\eta(N,f,\Omega),
	\end{align}
which confirms \eref{eq:etaBound};
here the inequality follows from the fact that $\scrS_N$ is contained in $\bar{\scrS_N}$. 	When $f>0$, we can choose   $\bfq_0, \bfq_1\in \scrS_N$ and define $\eta_0, \zeta_0, \eta_1, \zeta_1, \eta(t),\zeta(t)$ in a similar way to the proof of \eref{eq:zetaBound}, but with the requirement  $\eta_0< \delta\leq \eta_1$ replaced by  $\zeta_0< f\leq \zeta_1$. Since the case $ f=\zeta_1$ is trivial, we can assume $\zeta_0< f< \zeta_1$.  Then \esref{eq:kt}-\eqref{eq:Ftderivative} still apply.  
According to  \eref{eq:etazeta2121} and the following equation
	\begin{equation}
	\frac{d^2\eta(t)}{d\zeta(t)^2}=-\frac{\zeta''(t)\eta'(t)-\eta''(t)\zeta'(t)}{\zeta'(t)^3},
	\end{equation}
the derivative $\frac{d^2\eta(t)}{d\zeta(t)^2}$  has the opposite  sign  to $\zeta'(t)$ unless it is identically zero .

 When $t=0$, we have
	\begin{align}
\zeta'(0)&=\theta_0\Bigl[q_{01}\ln\frac{\theta_1}{\theta_0}+(q_{11}-q_{01})\Bigr]\nonumber\\
&\geq \theta_0\Bigl[q_{01}\Bigl(1-\frac{\theta_0}{\theta_1}\Bigr)+(q_{11}-q_{01})\Bigr]\nonumber\\
&=\theta_0\frac{q_{11}\theta_1-q_{01}\theta_0}{\theta_1}=\theta_0\frac{\zeta_1-\zeta_0}{\theta_1}>0.
\end{align}	
 In addition,  $\theta(t)>0$, and  $\zeta'(t)/\theta(t)$ is a linear and thus monotonic function of $t$  according to \eref{eq:zetaPrime}. Therefore,  $\zeta'(t)$ has at most one zero  point in the interval $0\leq t\leq 1$ as is the case for $\eta'(t)$. Now \eref{eq:etaBound} can be proved using a similar reasoning as presented after
	\eref{eq:etaPrime0}, though "convex" is replaced by "concave".
\end{proof}

\begin{lemma}\label{lem:ConstrainedMax}
	Suppose $1>x_1\geq x_2\geq \cdots, x_m>0$ and $c\leq0$. Then 
	\begin{equation}\label{eq:ConstrainedMax}
	\max_{a_1,a_2,\ldots, a_m \geq0 }\Biggl\{\sum_j \frac{a_j}{x_j}\Bigg| \sum_j  a_j \ln x_j=c\Biggr\}=\frac{c}{y\ln y},
	\end{equation}
	where $y=x_1$ if $x_1\ln x_1^{-1}\leq x_m\ln x_m^{-1}$ and $y=x_m$ otherwise. 
\end{lemma}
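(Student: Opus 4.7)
\textbf{Proof proposal for Lemma~\ref{lem:ConstrainedMax}.} The plan is to recognize this as a linear program with a single equality constraint, reduce it to the one–variable vertices, and then exploit the concavity of $x\mapsto x\ln x^{-1}$ to collapse the discrete optimization to the two endpoints.

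First I would reparametrize. Set $\mu_j:=-\ln x_j>0$, which is well defined since $0<x_j<1$. The constraint $\sum_j a_j\ln x_j=c$ becomes $\sum_j a_j\mu_j=-c\ge 0$, which is automatically consistent with $a_j\ge 0$ and is trivial only when $c=0$ (forcing every $a_j=0$). Write the objective as
\begin{equation*}
\sum_j\frac{a_j}{x_j}=\sum_j(a_j\mu_j)\cdot\frac{1}{x_j\mu_j}.
\end{equation*}
Thus the objective is a nonnegative combination of the numbers $1/(x_j\mu_j)$ with weights $a_j\mu_j\ge 0$ summing to $-c$. It follows immediately that
\begin{equation*}
\sum_j\frac{a_j}{x_j}\le(-c)\max_j\frac{1}{x_j\mu_j}=\max_j\frac{c}{x_j\ln x_j},
\end{equation*}
and equality is attained by concentrating all weight on the maximizing index $j$, i.e.\ $a_j=-c/\mu_j$ and $a_i=0$ for $i\neq j$. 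Hence the supremum equals $\max_j c/(x_j\ln x_j)$ and is a maximum.

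Second, I would show this finite maximum equals $c/(y\ln y)$ with $y$ as stated. Since $c\le 0$ and $x_j\ln x_j<0$, maximizing $c/(x_j\ln x_j)$ over $j$ is the same as minimizing $x_j\ln x_j^{-1}$ over $j$. The function $\varphi(x):=x\ln x^{-1}$ is strictly concave on $(0,1)$ and attains its maximum $\rme^{-1}$ at $x=\rme^{-1}$. By concavity, on the closed interval $[x_m,x_1]$ the minimum of $\varphi$ is attained at one of the endpoints, so for every $j\in\{1,\dots,m\}$,
\begin{equation*}
\varphi(x_j)\ge\min\{\varphi(x_1),\varphi(x_m)\},
\end{equation*}
with equality for $j=1$ or $j=m$. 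Therefore the minimizer $y$ is $x_1$ when $\varphi(x_1)\le\varphi(x_m)$ and $x_m$ otherwise, matching the two cases in the statement.

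The argument contains no real obstacle: it is a clean linear program whose only subtlety is the sign bookkeeping (everything works because $\ln x_j<0$ and $c\le 0$, so the candidate vertex $a_j=c/\ln x_j$ is nonnegative and the objective $c/(x_j\ln x_j)$ is nonnegative). The degenerate case $c=0$ is consistent since both sides vanish, and the case of repeated $x_j$ values is harmless because the concavity/endpoint reduction still pins the minimum of $\varphi(x_j)$ to $\{\varphi(x_1),\varphi(x_m)\}$. If anything, the only point requiring mild care is confirming that no interior combination of two or more indices can beat a single–vertex solution, but this is precisely what the weighted–average inequality above rules out.
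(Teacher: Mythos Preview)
Your proof is correct and follows essentially the same approach as the paper: reduce the linear program to single-index solutions $a_j=c/\ln x_j$, then use the concavity of $x\mapsto x\ln x^{-1}$ (equivalently, the paper's convexity of $c/(x\ln x)$ for $c<0$) to pin the optimum to one of the two endpoints $x_1$ or $x_m$. The only cosmetic difference is that you prove the single-index reduction directly via the weighted-average inequality $\sum_j(a_j\mu_j)\cdot\frac{1}{x_j\mu_j}\le(-c)\max_j\frac{1}{x_j\mu_j}$, whereas the paper simply invokes that the maximum of a linear program is attained at an extremal point of the feasible region.
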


\begin{proof}
	The maximization in \eref{eq:ConstrainedMax} is a linear programming in which the feasible region is defined by the inequalities $a_1,a_2,\ldots, a_m \geq0$ and the equality $\sum_j  a_j \ln x_j=c$. If $c=0$, then $a_1=a_2\cdots=a_m =0$, so \eref{eq:ConstrainedMax} holds. 
	
	If $c<0$, then the maximum in \eref{eq:ConstrainedMax} can be attained at one of the extremal points of the feasible region, which have the form
	\begin{equation}
	a_j=\frac{c}{\ln x_j}, \quad a_i=0\quad \forall i\neq j,\quad j=1,2,\ldots, m.
	\end{equation}
	Therefore,
	\begin{align}
	&\max_{a_1,a_2,\ldots, a_m \geq0 }\Biggl\{\sum_j \frac{a_j}{x_j}\Bigg|\sum_j  a_j \ln x_j=c\Biggr\}
	=\max_j\frac{c}{x_j\ln x_j}\nonumber\\
	&=\max\biggl\{\frac{c}{x_1\ln x_1},\frac{c}{x_m\ln x_m}\biggr\}
	=\frac{c}{y\ln y}. 
	\end{align}
	Here the second equality follows from the assumption $1>x_1\geq x_2\geq \cdots x_m>0$ and the fact that the function $c/(x\ln x)$ is convex in $x$ for $0<x<1$, given that $c$ is negative.
\end{proof}

\subsection{Main body of the proofs}

Now we are ready to prove \lref{lem:Fboundf}. 
\begin{proof}[Proof of \lref{lem:Fboundf}] We shall  first prove \eref{eq:Fboundf}. 
	According to \lref{lem:ZetaEtabound},
	\begin{align}
&\caF(N,f,\Omega)=\frac{f}{\eta(N,f,\Omega)}\geq \frac{f}{\bar{\eta}(N,f,\Omega)}
	\nonumber\\
	&=\min_{\bfk\in \bar{\scrS_N}|\zeta_\bfk(\bm{\lambda})=f}\frac{\zeta_\bfk(\bm{\lambda})}{\eta_\bfk(\bm{\lambda})}=\min_{\bfk\in \bar{\scrS_N}|\zeta_\bfk(\bm{\lambda})=f}\frac{k_1}{\sum_j (k_j/\lambda_j)}\nonumber\\
	&=\min_{\bfk\in \bar{\scrS_N}|\zeta_\bfk(\bm{\lambda})=f}\frac{k_1}{k_1 +\sum_{j=2}^D(k_j/\lambda_j)}.
	\end{align}
	The condition $\zeta_\bfk(\bm{\lambda})=f$ entails the following inequality,
	\begin{equation}
	\!\!f=\zeta_\bfk(\bm{\lambda})=\frac{k_1}{N+1}\prod_j \lambda_j^{k_j}\leq \prod_{j=2}^D \lambda_j^{k_j}\leq \beta^{N+1-k_1},
	\end{equation}
	which implies that $N+1-k_1\leq \ln f/\ln \beta=\log_{\beta}f$, that is, $k_1\geq N+1-(\ln f/\ln \beta)$. In addition, the above equation implies that 
	$0\geq\sum_{j=2}^D k_j\ln \lambda_j \geq \ln f$, which in turn implies that 
	$\sum_{j=2}^D(k_j/\lambda_j)\leq \ln f/(\tilde{\beta}\ln \tilde{\beta})$ in view of \lref{lem:ConstrainedMax}. Therefore,
	\begin{align}
&\caF(N,f,\Omega)\geq \min_{\bfk\in \bar{\scrS_N}|\zeta_\bfk(\bm{\lambda})=f} \frac{k_1}{k_1+(\tilde{\beta}\ln \tilde{\beta})^{-1}\ln f} \nonumber \\
&\geq \frac{N+1-(\ln \beta)^{-1}\ln f}{N+1-(\ln \beta)^{-1}\ln f-h\ln f},\label{eq:FboundfProof}
\end{align}	
which confirms \eref{eq:Fboundf}.

Next, let us  prove \eref{eq:Fbounddel}. If $\delta\leq \beta^N$, then we have $\tau\delta\leq \beta^{N+1}$ and	$N+1-(\ln \beta)^{-1}\ln (\tau\delta)\leq0 $, so the bound in \eref{eq:Fbounddel} is either zero or negative and is thus trivial. If $\delta>\beta^N$, then  \lref{lem:ZetaEtabound} implies that 
\begin{align}
F(N,\delta,\Omega)&=\frac{\zeta(N,\delta,\Omega)}{\delta}\geq \frac{\bar{\zeta}(N,\delta,\Omega)}{\delta}
\nonumber\\
&=\min_{\bfk\in \bar{\scrS_N}|\eta_\bfk(\bm{\lambda})=\delta}\frac{k_1}{k_1 +\sum_{j=2}^D(k_j/\lambda_j)}.
\end{align}
The condition $\eta_\bfk(\bm{\lambda})=\delta$ entails the following inequality,
\begin{align}
\tau\delta&=\tau\eta_\bfk(\bm{\lambda})=\frac{\tau}{N+1}\biggl(\prod_j \lambda_j^{k_j}\biggr) \biggl(\sum_j \frac{k_j}{\lambda_j}\biggr)
\leq \prod_{j=2}^D \lambda_j^{k_j}\nonumber\\
&\leq \beta^{N+1-k_1}.
\end{align}
Now, \eref{eq:Fbounddel} can be proved using a similar reasoning that leads to \eref{eq:FboundfProof}, but with $f$ replaced by $\tau\delta$.
\end{proof}

\begin{proof}[Proof of \thref{thm:NumTestBounds}]
\Eref{eq:NumTestLB} follows from
\eref{eq:StrategyOrder} and 
 \thref{thm:NumTestHomoBounds}   in the main text. 
The lower bound in \eref{eq:NumTestLUB} follows from \eref{eq:NumTestLB}
given that  $\tilde{\beta}=\beta=\lambda_2$ or $\tilde{\beta}=\tau=\lambda_D$. 
	
To prove the  upper bounds  in \eref{eq:NumTestLUB}, let $f=F\delta$ with $F=1-\epsilon$ and 
\begin{equation}
N=\left\lceil \frac{hF\ln f^{-1}}{\epsilon} +\frac{\ln f}{\ln\beta}-1\right\rceil;
\end{equation}
then $N\geq 1$ since
\begin{align}
 \frac{hF\ln f^{-1}}{\epsilon} +\frac{\ln f}{\ln\beta}> \frac{F\ln F}{\epsilon \beta \ln \beta} +\frac{\ln F}{\ln\beta}> 1. 
\end{align}
Here the second inequality is equivalent to 
\begin{align}
F\ln F+\epsilon \beta\ln F-\epsilon \beta \ln \beta< 0.
\end{align}
To prove this inequality, note that for a given $0<F<1$, the left-hand side is maximized when $\beta=F/\rme$. So 
\begin{align}
F\ln F+\epsilon \beta\ln F-\epsilon \beta \ln \beta\leq \frac{F(1-F+\rme\ln F)}{\rme}<  0.
\end{align}

In addition, \lref{lem:Fboundf} implies that 
	\begin{align}
&\caF(N,f,\Omega)\geq \frac{N+1-(\ln \beta)^{-1}\ln f}{N+1-(\ln \beta)^{-1}\ln f-h\ln f}\nonumber\\
&\geq 
\frac{hF\epsilon^{-1}\ln f^{-1} }{ hF\epsilon^{-1}\ln f^{-1}  -h\ln f}=1-\epsilon.
\end{align}	
In conjunction with \lref{lem:MinNumTestDef} this equation implies that  $N(\epsilon,\delta,\Omega)\leq N$,  which confirms the first upper bound in \eref{eq:NumTestLUB}. Furthermore, we have $h>|1/\ln\beta|$ since $0<\beta<1$ and $|\tilde{\beta}\ln \tilde{\beta}|\leq |\beta\ln \beta|< |\ln\beta|$. Therefore,
\begin{align}
N&=\left\lceil \frac{h(1-\epsilon)\ln f^{-1}}{\epsilon} +\frac{\ln f}{\ln\beta}-1\right\rceil\nonumber\\
&<
\frac{h\ln f^{-1}}{\epsilon}-h\ln f^{-1} +\frac{\ln f}{\ln\beta}\nonumber\\
&<\frac{h\ln f^{-1}}{\epsilon}=
\frac{h\ln(F\delta)^{-1}}{\epsilon},
\end{align}
which confirms the second upper bound in \eref{eq:NumTestLUB}.

\Eref{eq:NumTestUB2}	 
can be proved using a similar reasoning used to prove the upper bounds in \eref{eq:NumTestLUB}, but with $F\delta$ replaced by $\tau\delta$ and $\caF(N,f,\Omega)$ replaced by $F(N,\delta,\Omega)$.	 	 
\end{proof}

\section{\label{sec:PTTproofs}Proofs of \lsref{lem:hpnmin}-\ref{lem:OverheadPG}}
\begin{proof}[Proof of \lref{lem:hpnmin}]By \esref{eq:pstartaumax} and \eqref{eq:hstartaumax} in the main text, it is clear that  $p_*(\nu,1-\nu)$ is nondecreasing in $\nu$, and 	$h_*(\nu,1-\nu)$ is nonincreasing in  $\nu$. 
If   $1-\rme^{-1}\leq \nu\leq 1$, then 
\begin{equation}\label{eq:nuhnuproof}
\nu h_*(\nu,1-\nu)=\rme\nu\geq  \rme(1-\rme^{-1})=\rme-1>1, 
\end{equation}
and $\nu h_*(\nu,1-\nu)$  is strictly increasing in $\nu$. On the other hand,
if  $0<\nu\leq 1-\rme^{-1}$, then 
\begin{align}
\nu	h_*(\nu,1-\nu)= \nu\bigl[(1-\nu)\ln (1-\nu)^{-1}\bigr]^{-1}, 
\end{align}
so  that 
\begin{equation}
\lim_{\nu\rightarrow 0} \nu h_*(\nu,1-\nu)=
\lim_{\nu\rightarrow 0} \nu \bigl[(1-\nu)\ln (1-\nu)^{-1}\bigr]^{-1}=1.
\end{equation}
By computing the derivative of $\nu	h_*(\nu,1-\nu)$ over $\nu$ [cf.~\eref{eq:hpnutauDer} below with $p=0$] it is straightforward to verify that  $\nu h_*(\nu,1-\nu)$  is strictly increasing in $\nu$ for  $0<\nu\leq 1-\rme^{-1}$.
In conjunction with \eref{eq:nuhnuproof}, we conclude that  $\nu	h_*(\nu,1-\nu)>1$ and it is strictly increasing in $\nu$ for 
 $0<\nu\leq 1$.

In addition,
\begin{equation}
\nu h(p,\nu,1-\nu)=\nu\bigl(\beta_p\ln \beta_p^{-1}\bigr)^{-1}, 
\end{equation}
where $\beta_p=1-\nu+p\nu$ satisfies $0<\beta_p<1$. The derivative of $\nu h(p,\nu,1-\nu)$ over $\nu$ reads
\begin{equation}\label{eq:hpnutauDer}
\frac{d}{d\nu}\biggl(\frac{\nu}{\beta_p\ln \beta_p^{-1}}\biggr)=
-\frac{(1-p)\nu+\ln(1-\nu+p\nu
	)}{[(1-\nu+p\nu)\ln(1-\nu+p\nu)]^2}>0,
\end{equation}
where the last inequality follows from the simple fact that  $\ln (1+x)< x$ when $x>-1$ and $x\neq 0$. Therefore, $\nu h(p,\nu,1-\nu)$ increases strictly  monotonically with $\nu$.
Incidentally,  the derivative in \eref{eq:hpnutauDer}    approaches $1/2$ in the limit $\nu\rightarrow 0$.  
\end{proof}

\begin{proof}[Proof of \lref{lem:hpnt}]
We shall prove the seven statements of \lref{lem:hpnt}	in the order 1, 6, 2;  3, 4; 7, 5. 

Recall that  $p_*(\nu,\tau)$ is the smallest value of $p\geq0$  that satisfies $\beta_p\geq 1/\rme$ and
$\tau_p \ln \tau_p^{-1}\geq \beta_p \ln \beta_p^{-1}$; see \eref{eq:pstardef}. Let $q=p_*(\nu,\tau)$; then $0\leq q< 1$. 
Suppose $0< \nu'< \nu$ and  let  $\beta'=1-\nu'$. Then we have $1>\beta'> \beta\geq 0$ and $1>\beta'_q>\beta_q\geq 1/\rme$, so that 
\begin{equation}
\beta'_q \ln {\beta'_q}^{-1}< \beta_q \ln \beta_q^{-1}\leq \tau_q \ln \tau_q^{-1},
\end{equation}
which implies that $p_*(\nu',\tau)\leq q= p_*(\nu,\tau)$, that is, $p_*(\nu,\tau)$ is nondecreasing in $\nu$. If $q>0$, actually we can deduce a stronger conclusion, namely, $p_*(\nu',\tau)< p_*(\nu,\tau)$.

In addition, the inequalities  $\tau_p\leq \beta_p\leq \beta'_p$ imply that
\begin{equation}
\beta_p \ln {\beta_p}^{-1}\geq \min\bigl\{\beta'_p \ln {\beta'_p}^{-1}, \tau_p \ln \tau_p^{-1}\bigr\}
\end{equation}
and that
\begin{align} &h(p,\nu',\tau)=\bigl[\min\bigl\{\beta'_p \ln {\beta'_p}^{-1}, \tau_p \ln \tau_p^{-1}\bigr\}\bigr]^{-1}\nonumber\\
&\geq\bigl[\min\bigl\{\beta_p \ln \beta_p^{-1}, \tau_p \ln \tau_p^{-1}\bigr\}\bigr]^{-1}=h(p,\nu,\tau). 
\end{align}	
So  $h(p,\nu,\tau)$ is nonincreasing in $\nu$.  When $p=p_*(\nu',\tau)$, the above equation implies that
\begin{align}
h_*(\nu',\tau)=h(p,\nu',\tau)\geq h(p,\nu,\tau)\geq h_*(\nu,\tau).
\end{align}
So $h_*(\nu,\tau)$ is also  nonincreasing in  $\nu$.

Next, suppose $ \tau\leq \tau'\leq \beta$. Then we have $\tau_q\leq \tau'_q\leq \beta_q$, $ \beta_q\geq 1/\rme$, and
\begin{equation} 
\tau'_q \ln {\tau'_q}^{-1}\geq \min\{\beta_q \ln \beta_q^{-1}, \tau_q \ln \tau_q^{-1}\}=\beta_q \ln \beta_q^{-1},
\end{equation}
which implies that $p_*(\nu,\tau')\leq q=  p_*(\nu,\tau)$. Therefore,
 $p_*(\nu,\tau)$ is nonincreasing in $\tau$, which confirms statement~1  of \lref{lem:hpnt} given that $p_*(\nu,\tau)$ is nondecreasing in $\nu$ as shown above.

In addition,  the inequalities  $\tau_p\leq \tau'_p\leq \beta_p$ imply that
\begin{equation}
\tau'_p \ln {\tau'_p}^{-1}\geq \min\bigl\{\beta_p \ln {\beta_p}^{-1}, \tau_p \ln \tau_p^{-1}\bigr\}
\end{equation}
and that
\begin{align} &h(p,\nu,\tau')=\bigl[\min\bigl\{\beta_p \ln {\beta_p}^{-1}, \tau'_p \ln {\tau'_p}^{-1}\bigr\}\bigr]^{-1}\nonumber\\
&\leq\bigl[\min\bigl\{\beta_p \ln \beta_p^{-1}, \tau_p \ln \tau_p^{-1}\bigr\}\bigr]^{-1}=h(p,\nu,\tau). \label{eq:hpvtOrderProof}
\end{align}	
So  $h(p,\nu,\tau)$ is nonincreasing in $\tau$, which confirms statement 6  of \lref{lem:hpnt} in view of the above conclusion.  When $p=p_*(\nu,\tau)$, \eref{eq:hpvtOrderProof} implies that
\begin{align}
h_*(\nu,\tau)=h(p,\nu,\tau)\geq h(p,\nu,\tau')\geq h_*(\nu,\tau').
\end{align}
So $h_*(\nu,\tau)$ is also  nonincreasing in  $\tau$,  which confirms statement 2  of \lref{lem:hpnt}. 

Next, consider statements 3 and 4 in \lref{lem:hpnt}. 
By  \lref{lem:hpnmin} and statement 2 in \lref{lem:hpnt} proved above, we have  $\nu h_*(\nu,\tau)\geq \nu h_*(\nu,1-\nu)>1$, which confirms statement 3 in \lref{lem:hpnt}. In addition, the following equations
\begin{align}
&\lim_{\nu\rightarrow 0} \nu h_*(\nu,\tau)\geq \lim_{\nu\rightarrow 0} \nu h_*(\nu,1-\nu)=1,\\
&\lim_{\nu\rightarrow 0} \nu h_*(\nu,\tau)\leq
\lim_{\nu\rightarrow 0} \nu h(\nu,\nu,\tau)
=1,
\end{align}
 imply the equality $\lim_{\nu\rightarrow 0} \nu h_*(\nu,\tau)=1$ and confirm statement~4 in \lref{lem:hpnt}.	

Finally, we can prove statements 7 and 5 in \lref{lem:hpnt}. By definition we have
\begin{equation}
\nu h(p,\nu,\tau)=\max\Bigl\{\nu\bigl(\beta_p\ln \beta_p^{-1}\bigr)^{-1}, \nu\bigl(\tau_p\ln \tau_p^{-1}\bigr)^{-1}\Bigr\},
\end{equation}
where $\beta_p=1-\nu+p\nu$. It is clear  that $\nu\bigl(\tau_p\ln \tau_p^{-1}\bigr)^{-1}$ increases strictly monotonically with $\nu$. The same conclusion holds for $\nu\bigl(\beta_p\ln \beta_p^{-1}\bigr)^{-1}$ according to the derivative in \eref{eq:hpnutauDer}. 
Therefore, $\nu h(p,\nu,\tau)$ increases strictly  monotonically with $\nu$, which confirms statement 7 in \lref{lem:hpnt}.

Suppose $0<\nu'<\nu\leq 1$. Then 
\begin{equation}
\nu' h_*(\nu',\tau)\leq \nu' h(q,\nu',\tau)< \nu h(q,\nu,\tau)=\nu h_*(\nu,\tau),
\end{equation}
where $q=p_*(\nu,\tau)$.
So $\nu h_*(\nu,\tau)$ increases strictly monotonically with $\nu$, which confirms statement 5 in \lref{lem:hpnt}.
\end{proof}

\begin{proof}[Proof of \lref{lem:OverheadPG}]
Recall that  $p_*(\nu)$ is the smallest value of $p>0$  that satisfies the conditions $\beta_p\geq 1/\rme$ and
$p \ln p= \beta_p \ln \beta_p$; see \eref{eq:pstarSingular}. Let $q=p_*(\nu)$; then $0< q\leq  1/\rme$. 
Suppose $0< \nu'< \nu$ and  let  $\beta'=1-\nu'$. Then  $1>\beta'> \beta\geq 0$ and $1>\beta'_q>\beta_q\geq 1/\rme$, so that 
\begin{equation}
\beta'_q \ln {\beta'_q}^{-1}< \beta_q \ln \beta_q^{-1}= q \ln q^{-1},
\end{equation}
which implies that $p_*(\nu')< q= p_*(\nu)$ and that $p_*(\nu)$ is strictly increasing in $\nu$. 
Consequently, 	$h_*(\nu)$ is strictly decreasing in  $\nu$ given that $h_*(\nu)=[p_*(\nu)\ln p_*(\nu)^{-1}]^{-1}$ and $0< p_*(\nu)\leq  1/\rme$. By contrast, $\nu h_*(\nu)$ is strictly  increasing in  $\nu$ according to \lref{lem:hpnt}.

Next, let us consider the monotonicity of $h(\rme^{-1}\nu,\nu)$ and $\nu h(\rme^{-1}\nu,\nu)$. 	By definition we have
\begin{align}
h(\rme^{-1}\nu,\nu)&\!=\!\Bigl[\min\Bigl\{\beta_{p_0} \ln \beta_{p_0}^{-1}, \frac{\nu}{\rme}\ln \frac{\rme}{\nu}\Bigr\}\Bigr]^{-1}, \\
\!\!\nu h(\rme^{-1}\nu,\nu)&\!=\!\max\Bigl\{ \nu\bigl(\beta_{p_0} \ln \beta_{p_0}^{-1}\bigr)^{-1}, \rme \Bigl(\ln \frac{\rme}{\nu}\Bigr)^{-1}\Bigr\}, 
\end{align}		
where $p_0=\nu/\rme$ and  $\beta_{p_0}=1-\nu+(\nu^2/\rme)$. As $\nu$ increases to 1, $\beta_{p_0}$ decreases strictly monotonically to $1/\rme$, while $\nu/\rme$ increases strictly monotonically to $1/\rme$. So $h(\rme^{-1}\nu,\nu)$ decreases strictly monotonically with $\nu$.

In addition, $\rme \bigl(\ln \frac{\rme}{\nu}\bigr)^{-1}$ is strictly increasing in  $\nu$ for $0<\nu\leq 1$. Meanwhile we have
	\begin{align}
	\frac{d [\nu(\beta_{p_0} \ln \beta_{p_0}^{-1})^{-1}]}{d\nu}=\frac{\rme \beta_{p_0}-(\rme-\nu^2)\ln(\rme \beta_{p_0})}{\rme\beta_{p_0}^2(\ln \beta_{p_0})^2 },
	\end{align}
where  the denominator is positive. The numerator is also positive according to the following equation. 
	\begin{align}
	&\rme \beta_{p_0}-(\rme-\nu^2)\ln (\rme \beta_{p_0})\nonumber\\
	&=\rme-\rme \nu+\nu^2-(\rme-\nu^2)\ln(\rme-\rme\nu+\nu^2)\nonumber\\
	&\geq \rme-\rme \nu+\nu^2-(\rme-\nu^2)(1-\nu)\nonumber\\
	&=(2-\nu)\nu^2> 0.
	\end{align}
Here the first inequality follows from the inequality below
\begin{equation}\label{eq:bpineq}
\ln (\rme-\rme\nu +\nu^2)\leq 1-\nu, 
\end{equation}
which can be proved  by inspecting the derivative. Therefore,  both $\nu (\beta_{p_0} \ln \beta_{p_0}^{-1})^{-1}$ and $\rme \bigl(\ln \frac{\rme}{\nu}\bigr)^{-1}$ are strictly increasing in $\nu$, which implies that $\nu h(\rme^{-1}\nu,\nu)$ is  strictly increasing in $\nu$.
	
Finally, we are ready to prove \eref{eq:OverheadPG2}. The first inequality there follows from the definition  of $h_*(\nu)$. To prove the rest inequalities, note that
\begin{align}
&\ln \beta_{p_0}^{-1}=-\ln (1-\nu +\rme^{-1}\nu^2)\geq\nu,\\
&\beta_{p_0} \ln \beta_{p_0}^{-1}\geq (1-\nu +\rme^{-1}\nu^2)\nu
\end{align}
by \eref{eq:bpineq}, where $p_0=\nu/\rme$. 
In addition, it is straightforward to verify the following inequality,
\begin{align}
p_0 \ln (p_0^{-1})=\frac{\nu}{\rme}\ln \frac{\rme}{\nu}\geq  (1-\nu +\rme^{-1}\nu^2)\nu.
\end{align}
Therefore,
\begin{equation}
\nu h(\rme^{-1}\nu,\nu)\leq (1-\nu +\rme^{-1}\nu^2)^{-1}\leq 1+(\rme-1)\nu\leq \rme,
\end{equation}
which confirms \eref{eq:OverheadPG2}  in \lref{lem:OverheadPG}. 
Here the second inequality follows from the inequality below
\begin{align}
& (1-\nu +\rme^{-1}\nu^2)[ 1+(\rme-1)\nu]\nonumber\\
&=1+\rme^{-1}\nu(1-\nu)(\rme^2-2\rme+\nu-\rme\nu)
\geq 1, 
\end{align}
given that $0< \nu\leq 1$. 	
\end{proof}

\section{\label{sec:MiniMeas}Proof of \pref{pro:MiniMeas}}
\begin{proof}
	First consider the bipartite case,	let $|\Psi\>$ be any bipartite entangled state shared between Alice and Bob.
	Suppose on the contrary that $|\Psi\>$ can be verified by a strategy $\Omega$ for which Alice  performs only one projective measurement. Without loss of generality, we may assume that this is a complete projective measurement associated with an orthonormal basis, say  $\{|\varphi_1\>, |\varphi_2\>, \ldots, |\varphi_d\>\}$,
	where $d$ is the dimension of the Hilbert space of Alice. Let $P_k=|\varphi_k\>\<\varphi_k|$ be the corresponding rank-1 projectors. Then any test operator necessarily has the form $E=\sum_{k=1}^d P_k\otimes Q_k$, where $Q_k$ are positive operators  on  the Hilbert space of Bob that satisfy $0\leq Q_k\leq 1$. To ensure that the target state can always pass the test, $E$ must satisfy the condition $\<\Psi|E|\Psi\>=1$. 
	
	Let  $|\tilde{\psi}_k\>=
	\<\varphi_k|\Psi\>$ be the unnormalized reduced state of Bob when Alice obtains outcome $k$ 
	and $p_k=\<\tilde{\psi}_k|\tilde{\psi}_k\>$  the corresponding probability. Let  $|\psi_k\>=|\tilde{\psi}_k\>/\sqrt{p_k}$ when   $p_k>0$. Then
	\begin{align}
	\<\Psi| E|\Psi\>=\sum_k \<\tilde{\psi}_k|Q_k|\tilde{\psi}_k\>\leq\sum_k \<\tilde{\psi}_k|\tilde{\psi}_k\>= \sum_k p_k=1.
	\end{align}
	By assumption, this inequality is saturated, which implies that $\<\psi_k|Q_k|\psi_k\>=1$ whenever $p_k>0$, in which case $|\psi_k\>$ is an eigenstate of $Q_k$ with eigenvalue 1. 
	So  all kets $|\varphi_k\>\otimes |\psi_k\>$ with $p_k>0$  belong to the pass eigenspace (corresponding to the eigenvalue 1) of each test operator $E$
	and thus the pass eigenspace of $\Omega$.
	Note that the number of outcomes with $p_k>0$ is at least equal to the Schmidt rank of $|\Psi\>$.  So
	the dimension of the pass eigenspace of $\Omega$ 
	is not smaller than the Schmidt rank of $|\Psi\>$; in particular, it is not smaller than 2 given that $|\Psi\>$ is entangled. Therefore, $|\Psi\>$ cannot be verified if Alice performs only one projective measurement; the same conclusion  holds if Bob performs only one projective measurement. 
	
In general, the proposition follows from the fact that a multipartite state can  also be considered as a bipartite state between one party and the other parties.
\end{proof}	


%

\nocite{apsrev41Control}
\bibliographystyle{apsrev4-1}
\bibliography{all_references}

\end{document}